\newcommand{\mB}{{m_B}}
\newcommand{\mS}{{m_S}}
\newcommand{\vecb}{{\vec{b}}}
\newcommand{\vecs}{{\vec{s}}}
\newcommand{\vecq}{{\vec{q}}}
\newcommand{\E}{\mathbb{E}}
\newcommand{\vareps}{\varepsilon}
\newcommand\ZZZ[3]{\ensuremath{z^{({#1})}_{({#2},{#3})}}}
\newcommand\X[3]{\ensuremath{x^{({#1})}_{({#2},{#3})}}}
\newcommand{\cout}[1]{}
\def\btr{\ensuremath{\mathrm{BTR}}}
\def\str{\ensuremath{\mathrm{STR}}}
\def\median{\ensuremath{\mathrm{MEDIAN}}}
\def\opt{\ensuremath{\mathrm{OPT}}}
\mathchardef\mhyphen="2D % Hyphen in math mode
\def\sbopt{\ensuremath{\mathrm{FEASIBLE\mhyphen OPT}}}
\Crefname{equation}{Equation}{Equations}
\newtheorem{theorem}{Theorem}[section]
\newtheorem{lemma}[theorem]{Lemma}
\newtheorem{proposition}[theorem]{Proposition}
\newtheorem*{open-problem}{Open Problem}
\def\citeN{\citet}
\def\cite{\citep}
\begin{document}

\begin{titlepage}

\title{Bulow-Klemperer-Style Results for\\Welfare Maximization in Two-Sided Markets}
\author{Moshe Babaioff%
\thanks{%
    Microsoft Research (\url{moshe@microsoft.com})}
\and Kira Goldner%
\thanks{%
    Columbia University (\url{kgoldner@cs.columbia.edu}). Work carried out in part while at the University of Washington. Supported by NSF CCF-1813135 and by a Microsoft Research PhD Fellowship. Partially supported by the European Research Council (ERC) under the European Union's Horizon 2020 research and innovation programme (grant agreement No 740282).}
\and Yannai A. Gonczarowski%
\thanks{%
        Microsoft Research (\url{yannai@gonch.name}). Work carried out in part while at the Hebrew University of Jerusalem and at Tel Aviv University.
        Supported by the Adams Fellowship Program of the Israel Academy of Sciences and Humanities; by ISF grants 1435/14, 317/17, and 1841/14 administered by the Israeli Academy of Sciences; by Israel-USA Bi-national Science Foundation (BSF) grant number 2014389; and by the European Research Council (ERC) under the European Union's Horizon 2020 research and innovation programme (grant agreement No 740282), and under the European Union's Seventh Framework Programme (FP7/2007-2013) / ERC grant agreement number 337122.
        }
}
\date{December 23, 2019}

\maketitle

\pagenumbering{roman}
\thispagestyle{empty}

\begin{abstract}
We consider the problem of welfare (and gains-from-trade) maximization in two-sided markets using simple mechanisms that are prior-independent. The seminal impossibility result of \citeN{MyersonS83} shows that even for \emph{bilateral trade}, there is no \emph{feasible} (individually rational, truthful, and budget balanced) mechanism that has welfare as high as the optimal-yet-infeasible VCG mechanism, which attains maximal welfare but runs a deficit. On the other hand, the optimal feasible mechanism needs to be carefully tailored to the Bayesian prior, and even worse, it is known to be extremely complex, eluding a precise description.

In this paper we present Bulow-Klemperer-style results to circumvent these hurdles in double-auction market settings. We suggest using the \emph{Buyer Trade Reduction (BTR)} mechanism, a variant of \citeauthor{McAfee92}'s mechanism, which is feasible and simple (in particular, it is deterministic, truthful, prior-independent, and anonymous). First, in the setting in which the values of the buyers and of the sellers are sampled independently and identically from the same distribution, we show that for any such market of any size, BTR with one additional buyer whose value is sampled from the same distribution has expected welfare at least as high as the optimal-yet-infeasible VCG mechanism in the original market.

We then move to a more general setting in which the values of the buyers are sampled from one distribution, and those of the sellers from another, focusing on the case where the buyers' distribution first-order stochastically dominates the sellers' distribution. We present both upper bounds and lower bounds on the number of buyers that, when added, guarantees that BTR in the augmented market have welfare at least as high as the optimal in the original market. Our lower bounds extend to a large class of mechanisms, and all of our positive and negative results extend to adding sellers instead of buyers. In addition, we present positive results about the usefulness of pricing at a sample for welfare maximization (and more precisely, for gains-from-trade approximation) in two-sided markets under the above two settings, which to the best of our knowledge are the first sampling results in this context.

\end{abstract}

\end{titlepage}

\pagenumbering{arabic}

\section{Introduction}

The field of Mechanism Design aims to design mechanisms to 
obtain a certain social objective
in settings in which participants act strategically.  
At the heart of Algorithmic Mechanism Design lie the questions of (1) how simple can such a designed mechanism be, and (2) how much does such a mechanism have to know about the participants. For single-item auctions, for instance, the seminal work of \citeN{Myerson81} completely characterizes the revenue-maximizing truthful mechanism given a Bayesian prior over the buyers' valuations of the item. This auction turns out to depend on the Bayesian prior even when the same prior applies for all buyers. 
Nonetheless, the seminal work of \citeN{BK} (henceforth BK) remarkably shows that in this case, 
instead of opting to use \citeauthor{Myerson81}'s revenue-maximizing mechanism that is tailored to the prior, 
the seller can use an extremely simple mechanism that requires no prior knowledge whatsoever --- a second-price auction with no reserve --- as long as the seller can recruit even just one more potential buyer from the same population (i.e., for which the same prior applies) to join in on the bidding. As BK show, the seller suffers \emph{no loss at all} in revenue compared to using \citeauthor{Myerson81}'s revenue-maximizing mechanism with the original buyers. This result is remarkable in the simplicity and robustness of the mechanism, in the minuscule amount of buyers that the seller needs to recruit, and in guaranteeing not only an approximation, but in fact no loss \emph{whatsoever}.

\paragraph{Two-sided markets:} Mechanisms for two-sided auction markets, where both sellers
and buyers
are \emph{strategic participants}, are notoriously hard to design even for the goal of welfare maximization. This is because, in addition to standard participation and truthfulness properties, two-sided auction mechanisms must maintain \emph{budget balance}: the mechanism may not pay sellers more than it charges the buyers.
This added requirement rules out the standard recipe for welfare maximization, that is, the VCG mechanism.
In fact, the seminal impossibility result of 
\citeN{MyersonS83} shows that even for \emph{bilateral trade}, that is, even for the setting where a single seller wishes to sell a single item to a single buyer, there is no \emph{feasible} (individually rational, truthful and budget balanced) mechanism that is as efficient as the VCG mechanism, which attains optimal welfare yet runs a deficit. 
Given this result, one may try to turn to the optimal \emph{feasible} mechanism --- the mechanism with the highest expected welfare among all feasible mechanisms. Unfortunately, that mechanism needs to be carefully tailored to the Bayesian prior --- far more carefully than \citeauthor{Myerson81}'s mechanism in fact --- and even worse, it is known to be extremely complex \cite{MyersonS83}, eluding a precise description.

This hardship of course carries over from bilateral trade to more general two-sided auction markets, and in particular to the double-auction setting we study in this paper: there are a number of sellers and a number of buyers, each seller holds one item, with all items interchangeable, and each buyer is interested in one item.
Both the intimate dependence of the optimal feasible mechanism on the Bayesian prior over the buyers' valuations and the Bayesian prior over the sellers' valuations (similarly to the revenue-maximization setting described above), and its overwhelming complexity (far beyond the revenue-maximization setting described above) lead us to ask whether a result similar in spirit to that of BK can be proven in this context: 
Is there a simple \emph{robust} (prior-independent and feasible) mechanism that can guarantee no loss whatsoever (with respect to the optimal feasible mechanism? with respect to the optimal-yet-infeasible VCG mechanism?) if only one more buyer can be recruited into the market?\footnote{We focus on recruiting buyers rather than sellers both because (1) this intuitively seems like a more natural task as in many markets buyers are more abundant, and (2) because this does not change the initial pre-trade welfare in the market, which means that the questions of guaranteeing the same (post-trade) social welfare or the same \emph{gains from trade} (difference between pre-trade and post-trade social welfare) coincide. Nevertheless, our results can be translated into analogous results for adding sellers rather than buyers. See \cref{app:sellers-buyers} for more details.}\textsuperscript{,}\footnote{An alternative approach may have been to search for a simple robust mechanism that in the original one-seller-one-buyer market, with no additional buyers, at least reasonably approximates the expected optimal welfare. It is well known, though, that in the bilateral trade setting no such mechanism can guarantee any approximation. }

\paragraph{A simple, robust mechanism:} To make the question we have just posed better defined, one must ask oneself what qualifies as a simple mechanism. Surely, the second-price auction of BK is simple --- what is the analogue of their mechanism for two-sided markets?\footnote{Recall that the VCG mechanism is not budget balanced --- it runs a deficit in two-sided markets, so it is not feasible.}
Even more fundamentally, what is the analogue of their mechanism for a two-sided market with only one seller and many buyers? To understand this, we note that BK assume that the seller values her item at some a fixed valuation that is less than any buyer's valuation of the item, so the seller always recovers her initial valuation for the item. Such an assumption in two-sided markets would make welfare maximization trivial, and even worse --- would completely throw away one of the key aspects of two-sided markets: the \emph{a priori} uncertainty regarding who should get the item in the welfare-maximizing outcome.

An immediate generalization of BK's mechanism that comes to mind for the case where the seller's valuation for the item may be lower than any buyer's valuation is to set the seller's valuation as a \emph{reserve price} for the second-price auction. This mechanism is budget balanced, and its expected welfare, if the seller is truthful, is in fact optimal like that of the optimal-yet-infeasible VCG mechanism. However, the problem (unsurprisingly, given \citeauthor{MyersonS83}'s impossibility) is that this mechanism is not truthful for the seller: in many cases, the seller will have a clear incentive to misrepresent her valuation to be higher than it really is. Therefore, this mechanism is not feasible.

Fortunately, it is not hard to see that this mechanism can be slightly
tweaked into a truthful mechanism. Instead of a second-price auction that starts at the seller's valuation as the reserve price, we can use what we call a \emph{second-price auction with seller veto}: first come up with a price using a second-price auction with no reserve price among the buyers, however (unlike BK's second-price auction) only perform the trade at this price if it is no less than the seller's value (so the seller in effect vetoes trades that would give her negative utility).
While maybe less natural at first glance, this mechanism shares many properties with a second-price auction with the seller's true valuation as the reserve price: it guarantees to the seller the exact same utility,
it also generalizes BK's mechanism (that is, their mechanism coincides with this one when the seller's valuation is less than any buyer's valuation), it is budget balanced, and it is prior-independent. Furthermore, unlike a second-price auction that has the seller's reported valuation as the reserve price, this mechanism is truthful and is therefore feasible and robust. By the impossibility result of \citeN{MyersonS83}, the feasibility of this mechanism comes at a price of course: second-price auction with seller veto does not trade in certain cases in which a second-price auction with the seller's true valuation as a reserve price would have traded, and therefore its welfare is lower in such cases.\footnote{This indeed happens when the price coming out of the second-price auction is lower than the seller's value, while the highest value buyer has higher value than the seller.}

\paragraph{A generalized mechanism:} Given ``second-price auction with seller veto'' as a candidate for a simple and robust mechanism for the setting of one seller and many buyers (we have still not given any evidence that it performs well, though!), how does one generalize this mechanism to the more general double-auction market
setting with multiple
sellers? Fortunately, it turns out that an established robust mechanism can be tweaked to do precisely this: the \emph{Buyer Trade Reduction} (henceforth \emph{BTR}) mechanism is inspired by the celebrated mechanism of \citeN{McAfee92}, and it is a generalization of the second-price auction with seller veto to double-auction markets.
Much like the classic mechanism of \citeN{McAfee92}, BTR is also deterministic, robust (prior-independent, individually rational, truthful, budget-balanced), and \emph{anonymous} (essentially, all sellers are treated the same and all buyers are treated the same, up to tie breaking). \citeauthor{McAfee92}'s mechanism has received attention for its good (approximate) welfare properties for, roughly speaking, large markets in which the welfare-maximizing trade size is typically large. These nice properties carry over to BTR as well. 

The Buyer Trade Reduction (BTR) mechanism proceeds as follows: it first finds the welfare-maximizing trade by pairing the highest-value buyer with the lowest-value seller, the second-highest-value buyer with the second-lowest-value seller, and so on,\ as long as such pairs can be formed where the buyer's value is at least the seller's value. Trading in these pairs would have indeed been optimal (this is what the VCG mechanism would have done, but while doing so VCG would have run a deficit), however to be truthful and budget balanced, the mechanism suggests a price at which to conduct these trades. BTR takes the value of the next buyer in line and offers it as the price to all pairs.\footnote{This is where BTR differs from the original mechanism of \citeN{McAfee92}, which proposes as price the \emph{average} of the value of the next buyer in line and the value of the next seller in line.
} If all pairs accept, this price is chosen and all these pairs trade (resulting in a welfare-maximizing trade). Otherwise, the last pair formed (with the smallest gains) is \emph{reduced}: it does not trade, and all other pairs trade with the trading buyers paying the value of the reduced buyer and the trading sellers receiving the value of the reduced seller. This mechanism indeed generalizes a second-price auction with seller veto: in case of one seller, the seller is paired with the highest-value buyer (unless the seller's value is higher than any buyer's, 
in which case it is optimal to have no trade), 
and this pair trades if both accept the value of the next buyer in line  --- this value is indeed exactly what would have been the price determined by a second-price auction with no reserve among the buyers.

Now that we have our candidate for a simple, robust mechanism in hand, we need to understand how good it is, both in absolute terms (whether it does or does not allow for a BK-type result) and in relative terms (whether some other simple, robust mechanism could conceivably outperform it). In this paper we give positive results to questions of both of these types.

\renewcommand{\floatpagefraction}{.9}

\subsection{Our Results}

\begin{table*}
	\centering
	\begin{tabular}{|c|c|c|c|c|c|c|c|}
	\hline
	\multicolumn{2}{|c|}{\textbf{Setting}} &
	\multicolumn{2}{c|}{\textbf{Sufficient \#buyers to add}} &
	\multicolumn{2}{c|}{\textbf{Insufficient \#buyers to add}} \\ 
	(\#S, \#B)& Condition& Bound & Theorem & Bound & Theorem \\
    \hline
    $\mS,\mB$ & i.i.d.\ ($F_B=F_S$) & 1 & Thm~\ref{thm:iid} & 0 & MS [\citeyear{MyersonS83}]  \\
  	\hline
	$\mS,\mB$ & arbitrary $F_B, F_S$ & impossible, by $\Rightarrow$ & & any finite number & Prop~\ref{obs-all-mech-reg-not-suff}
	\\
    \hline
    1,1 & $F_B$ FSD $F_S$ & 4  & Thm~\ref{prop:sd-1-1-ub} & 1 & Thm~\ref{thm:general-mech-lb-12} \\
    \hline
    $1, \mB$ & $F_B$ FSD $F_S$ & $4 \sqrt{\mB}$ & Prop~\ref{prop:sd-1-r-ub} & $\lfloor\log_2 \mB\rfloor$ & Thm~\ref{thm:sd-1-r-lb} \\
    \hline
   $\mS, \mB$ & $F_B$ FSD $F_S$ & $\mS(\mB + 4 \sqrt{\mB})$ & Thm~\ref{thm:sd-m-r-ub} & $\Uparrow$ & \\
    \hline
  \end{tabular}\caption{Overview of our main Bulow-Klemperer-style results. For each setting with the listed original number of sellers and buyers and the given conditions on the seller and buyer distributions, we list our upper and lower bounds on the number of buyers that, when added, guarantee welfare (and gains from trade) at least as high as the optimum of the original market.
The left ``Bound'' column states a number such that for any distributions satisfying the condition, when adding this number of additional buyers, the welfare (and gains-from-trade) of BTR with the added buyers is at least the optimum of the original market. The right ``Bound'' column states a number such that for any anonymous robust deterministic mechanism, there exist distributions satisfying the condition such that even when adding this number of additional buyers, the welfare (and gains-from-trade) of the mechanism with the added buyers is strictly less than the optimum
of the original market.}\label{table:bkresults}
\end{table*}

\begin{table*}
  \centering
  \begin{tabular}{|c|c|c|c|}
  \hline
    \multirow{2}{*}{\textbf{Condition}} & \textbf{Guaranteed} & \textbf{Approximation that some} & \multirow{2}{*}{\textbf{Theorem}} \\
    & \textbf{approximation} & \textbf{distribution does not attain} & \\
    \hline
    i.i.d.\ ($F_B=F_S$) & $1/2$ & $>1/2$ & Thm~\ref{cor:iid-sample-half} \\
    \hline
    $F_B$ FSD $F_S$ & $1/4$ & $>7/16$ & Thm~\ref{thm:FSD-11-sample}; Prop~\ref{prop:FSD-11-sample-nohalf} \\
    \hline
  \end{tabular}\caption{Overview of our results for pricing at a sample. For each setting with the given condition on the seller and buyer distributions, we list our upper and lower bounds on the guaranteed fraction of the optimum gains-from-trade obtained by pricing at a fresh sample drawn from the buyer's distribution.}\label{table:sampleresults}
\end{table*}

\paragraph{Identically distributed values for sellers and buyers:} As discussed above, in two-sided markets the optimal welfare is not attainable by any feasible mechanism, and furthermore, the optimal feasible mechanism is extremely complex and intimately prior-dependent. Our first main result shows that despite these challenges, in the case where all buyer values and seller values are drawn from the same distribution, a strong analogue of the result of BK holds for welfare in two-sided markets rather than revenue in auctions. That is, regardless of this distribution and regardless of the initial number of buyers or of the initial number of sellers,\footnote{As with the result of BK, though, of particular interest are small markets. This is because for large markets, the simple mechanisms considered by BK and by us perform fairly well on the original market, even without adding any buyers.} recruiting even one extra buyer and using BTR\footnote{As noted above, all our results can be recast as results for adding sellers rather than buyers. The appropriate simple mechanism in this case is the Seller Trade Reduction (STR) mechanism, an anonymous robust deterministic mechanism that is identical to BTR except that the trade price is taken to be the value of the next seller in line rather the value of the next buyer in line. See \cref{app:sellers-buyers} for more details.} gives higher expected social welfare (equivalently, higher expected gains from trade) than using not only the optimal feasible mechanism, but in fact the optimal-yet-infeasible VCG mechanism with the original buyers and sellers:

\begin{theorem}[See \cref{thm:iid}]\label{intro-iid}
	For any number of sellers $\mS$ and any number of buyers $\mB$, if all seller and buyer values are sampled i.i.d.,\ it holds that\footnote{Here and henceforth, for any truthful mechanism $M$, we write $M(\mS,\mB)$ to mean the expected gains from trade (difference between post-trade welfare and pre-trade welfare) of $M$ when there are $\mS$ sellers with values sampled i.i.d.\ from the sellers' value distribution, and $\mB$ buyers with values sampled i.i.d.\ from the buyers' value distribution. (In \cref{intro-iid}, these distributions coincide.) We use $\opt$ to denote the optimal-yet-infeasible VCG mechanism, $\sbopt$ to denote the optimal feasible mechanism, and $\btr$ to denote the Buyer Trade Reduction mechanism.} \[\btr(\mS,\mB+1)\geq \opt(\mS,\mB).\]
\end{theorem}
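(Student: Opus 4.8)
The plan is to compare the two mechanisms through a coupling in which the $\mB$-buyer market faced by $\opt$ is obtained from the $(\mB{+}1)$-buyer market faced by $\btr$ by deleting a uniformly random buyer, and then to reduce the claimed inequality to a family of one-dimensional inequalities that are verified using the i.i.d.\ hypothesis. First I would record closed forms for the gains from trade. Sorting the buyer values $b_1 \ge \cdots \ge b_{\mB+1}$ and the seller values $s_1 \le \cdots \le s_{\mS}$ (with $b_j := -\infty$ for $j > \mB+1$), the gains from trade of $\opt$ (i.e.\ of VCG) are $\sum_{j=1}^{\mS}(b_j - s_j)^+$, while a short case analysis of $\btr$ --- checking that its $j$-th seller trades exactly when $b_{j+1}\ge s_j$, which already forces $b_j\ge s_j$, irrespective of whether the last pair gets reduced --- shows that the gains from trade of $\btr$ are $\sum_{j=1}^{\mS}(b_j-s_j)^+\,\mathbbm{1}[b_{j+1}\ge s_j]$. (The corner case in which all $\mB+1$ buyers would trade, so that there is no ``next buyer in line,'' is subsumed by this formula under the convention $b_{\mB+2}=-\infty$; in any case a convention making $\btr$ trade more there only helps.)

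Next I would set up the coupling: let the $\mB$ buyers faced by $\opt$ be a uniformly random size-$\mB$ sub-multiset of the $\mB+1$ buyers faced by $\btr$. Since buyer values are i.i.d., this is distributionally valid, so $\opt(\mS,\mB)$ is the expectation, over the deleted buyer, of the VCG gains from trade on the remaining $\mB$ buyers. Expanding this expectation rank by rank and subtracting, using the closed forms above, the contribution of the $j$-th seller to $\btr(\mS,\mB{+}1)$ minus its contribution to $\opt(\mS,\mB)$ simplifies to $\tfrac{1}{\mB+1}\bigl(j(b_j-b_{j+1})\mathbbm{1}[b_{j+1}\ge s_j]-(\mB{+}1{-}j)(b_j-s_j)\mathbbm{1}[b_j\ge s_j> b_{j+1}]\bigr)$. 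Summing over $j$, the theorem follows once one proves, for each $j\in\{1,\dots,\mS\}$,
\[
 j\cdot\E\bigl[(b_j-b_{j+1})\,\mathbbm{1}[b_{j+1}\ge s_j]\bigr]\ \ge\ (\mB{+}1{-}j)\cdot\E\bigl[(b_j-s_j)\,\mathbbm{1}[b_j\ge s_j> b_{j+1}]\bigr],
\]
where $b_j,b_{j+1}$ are the $j$-th and $(j{+}1)$-th largest of $\mB+1$ i.i.d.\ samples from $F$ and $s_j$ is the $j$-th smallest of $\mS$ i.i.d.\ samples from the same $F$. For $\mS=1$ this is exactly a Bulow--Klemperer-type identity --- pooling the $\mB+1$ buyers with the single seller and invoking exchangeability shows it even holds with equality --- which explains why adding a single buyer precisely suffices in the i.i.d.\ regime.

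Finally I would prove the displayed inequality for each $j$. Conditioning on the buyer values and integrating out $s_j$, and then integrating by parts, it becomes equivalent to $(\mB+1)\,\E[(b_j-b_{j+1})\,G(b_{j+1})]\ \ge\ (\mB{+}1{-}j)\,\E\bigl[\int_{b_{j+1}}^{b_j}G(t)\,dt\bigr]$, where $G$ is the c.d.f.\ of the order statistic $s_j$. One then expands both sides against the joint density of $(b_j,b_{j+1})$ --- which, crucially, is built from the same $F$ as $G$ --- and verifies the inequality by a direct computation; equivalently, one pools all $\mB+1+\mS$ values (which are i.i.d.), uses that the buyer/seller labeling together with the identity of the deleted buyer is uniform and independent of the sorted values, and reduces to a finite combinatorial identity that turns out to be tight at $j=\mS$.

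I expect this last step to be the main obstacle. The per-$j$ inequality is \emph{false pointwise}, and it even fails when $s_j$ is replaced by an arbitrary fixed threshold, so the i.i.d.\ hypothesis --- linking the law of the seller's order statistic to the buyers' distribution --- is genuinely needed and cannot be localized. Carrying out the exchangeability/integration bookkeeping correctly near the trade boundary and uniformly over all $j$ is the delicate part; the remainder of the argument is a sequence of algebraic simplifications.
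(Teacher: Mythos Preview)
Your approach is correct and genuinely different from the paper's. Both hinge on exchangeability of the $\mS+\mB+1$ i.i.d.\ values, but they deploy it differently. The paper does \emph{not} decompose by seller index; instead it compares both $\btr(\mS,\mB{+}1)$ and $\opt(\mS,\mB)$ to $\opt(\mS,\mB{+}1)$ and, crucially, randomizes not only over the deleted buyer but over the entire buyer/seller role assignment. Conditioning on the sorted pooled values $z_1\ge\cdots\ge z_{\mS+\mB+1}$, both differences are nonzero with probability exactly $\mS/(\mS{+}\mB{+}1)$, and the conditional comparison collapses to ``a uniformly random one of $z_1,\dots,z_{\mS}$ is at least the minimum of a uniformly random nonempty subset of $z_1,\dots,z_{\mS}$,'' which is immediate. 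No per-$j$ inequality is needed.

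Your route fixes the buyer and seller values and only randomizes over the deleted buyer, which forces the seller-by-seller decomposition and the per-$j$ inequality. That inequality \emph{does} hold, and your sketch of how to prove it is right: pool all $n=\mS+\mB+1$ values, write everything against the gaps $\delta_r=z_r-z_{r+1}$, and use that the role assignment is uniform. Concretely, noting that $b_{j+1}\ge s_j$ is equivalent to $\mathrm{pos}(b_{j+1})\le \mS{+}1$, one finds that the coefficient of $\delta_r$ in $A_j-B_j$ vanishes for $r\notin\{j,\dots,\mS\}$ and, for $j\le r\le \mS$, equals
\[
\frac{j\,\binom{r}{j}}{\binom{n}{\mB+1}}\left[\binom{n-r}{\mB+1-j}-\binom{\mB+1}{\mB+1-j}\right]\ \ge\ 0,
\]
since $n-r\ge \mB{+}1$ for $r\le \mS$; equality holds only at $r=\mS$, which is exactly your ``tight at $j=\mS$'' observation (and also recovers the $\mS=1$ identity). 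So the obstacle you flagged is real but surmountable by precisely the exchangeability computation you propose. The trade-off: the paper's argument is shorter and sidesteps all of this combinatorics; yours yields a strictly finer, per-seller inequality at the cost of more bookkeeping.
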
 
We stress that the above result places no assumptions whatsoever on the distribution $F$, and in particular, does \emph{not} assume the regularity condition that the celebrated result of BK imposes (their result indeed does not hold for some distributions that are not regular, while our result does). While the results are analogous (for different settings and different objectives, of course), the proofs of the two results are very different and do not seem to be related.

\paragraph{Sellers and buyers of two different populations:} Can \cref{intro-iid} be extended beyond the i.i.d.\ case, to a setting where the values of the sellers  are drawn from one distribution, $F_S$, and the values of the buyers are drawn from another, $F_B$? We show that 
if we make no assumptions regarding the relation of $F_S$ and $F_B$, then the above result ceases to hold, 
and in fact, fails in the most profound way possible. 
As we show, not only is adding one buyer not enough for the expected welfare of BTR in the augmented market to beat the expected optimal welfare in the original market, 
but in fact for \emph{any number $k$} there exist two distributions $F_S$ and $F_B$ such that adding $k$ buyers is not enough for BTR to beat even the optimal \emph{feasible} mechanism.\footnote{Moreover, the result fails even for some pair of \emph{regular} distributions, a condition that is used in the proof of the BK result.}
Moreover, obtaining any constant fraction of the gains-from-trade of the optimal feasible mechanism is also not possible.
We then strengthen this result and show that this negative result holds not only for BTR, but in fact also for \emph{any other} anonymous\footnote{In the absence of any prior, it is natural to treat all agents the same, and thus anonymity is a natural assumption, and one may even claim that it is in a sense a prerequisite for simplicity.} robust deterministic mechanism, and for any combination of adding any number of sellers in addition to any number of buyers:
\begin{proposition}[See \cref{obs-all-mech-reg-not-suff}]\label{intro-no-sd-negative}
	Let $M$ be any anonymous robust deterministic mechanism. 
	For every $\varepsilon>0$ and for every $\mS$, $\mB$, $\ell$, and $k$,
	there exist two distributions $F_S$ and $F_B$ such that
	\[M(\mS+\ell,\mB+k)<\varepsilon\cdot\sbopt(\mS,\mB)=\varepsilon\cdot \opt(\mS,\mB).\]
\end{proposition}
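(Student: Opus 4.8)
My plan is to produce, given $M$ and the parameters $\varepsilon,\mS,\mB,\ell,k$, a tailored pair of ``nearly degenerate'' distributions. Write $n:=\mS+\ell$ and $m:=\mB+k$. I will let $F_S$ put probability $1-\eta$ on a large ``veto'' value $V$ and probability $\eta$ on a small value $c$, and let $F_B$ put probability $1-\delta$ on the value $0$ and probability $\delta$ on a large value $H$, with $0<c<H<V$; the values $c,H$ (and $V$) will be chosen as a function of $M$, while the tiny probabilities $\eta,\delta$ will be sent to $0$ only at the very end. The role of $V>H$ is that a seller who draws $V$ can never trade in any individually rational, budget-balanced way (she would need a buyer to pay more than $H$); symmetrically, the role of $0<c$ is that a buyer who draws $0$ can never trade. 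Consequently, in every realization the only trades that any feasible mechanism can perform (and the only beneficial trades for $\opt$) are between $H$-buyers and $c$-sellers.

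The easy half of the argument is that $\sbopt(\mS,\mB)=\opt(\mS,\mB)>0$. In a realization of the original market with $s$ sellers at $c$ and $b$ buyers at $H$, the welfare-maximizing outcome pairs the $H$-buyers with the $c$-sellers, $\min(s,b)$ trades in all, with gains $\min(s,b)(H-c)$; and the simple fixed-price-$c$ mechanism (post the price $c$ to everybody, match willing buyers with willing sellers, breaking seller ties toward trading) is individually rational, truthful, and budget balanced, and realizes exactly the same trade. So $\opt(\mS,\mB)=\sbopt(\mS,\mB)=(H-c)\,\E[\min(S,B)]\ge (H-c)\eta\delta$, where $S\sim\mathrm{Bin}(\mS,\eta)$ and $B\sim\mathrm{Bin}(\mB,\delta)$ are independent.

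The heart of the proof, and the step I expect to be hardest, is choosing $c$ and $H$ so that $M$ extracts essentially nothing on the augmented market; this reduces to the structural claim that for every anonymous robust deterministic $M$ there exist $0<c<H<V$ for which $M$ does not trade on the instance $I(c,H)$ with one seller at $c$, $n-1$ sellers at $V$, one buyer at $H$, and $m-1$ buyers at $0$. By the remark above, the only feasible trade on $I(c,H)$ is the single pair consisting of the $c$-seller with the $H$-buyer. Suppose, for contradiction, that $M$ trades on $I(c,H)$ for every $0<c<H<V$. Applying the standard threshold characterization of deterministic dominant-strategy mechanisms and holding the remaining reports fixed: the $H$-buyer pays her critical value $\theta_B$ --- which depends on $c$ but not on $H$ --- and trades exactly when her value is $\ge\theta_B$; symmetrically the $c$-seller is paid her critical value $\theta_S$ --- which depends on $H$ but not on $c$ --- and trades exactly when her cost is $\le\theta_S$. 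Letting $H\downarrow c$ in the supposition forces $\theta_B\le c$, and letting $c\uparrow H$ forces $\theta_S\ge H$; but budget balance of the realized payments forces $\theta_B\ge\theta_S$ (individual rationality keeps the idle $0$-buyers and $V$-sellers from subsidizing the trade), so $c\ge\theta_B\ge\theta_S\ge H>c$, a contradiction. (Anonymity is what lets this one instance stand in for every realization of the augmented market having exactly one $c$-seller and one $H$-buyer.)

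It then remains to assemble the estimate. On the augmented market $M$'s gains are $0$ whenever there is no $c$-seller or no $H$-buyer (no feasible beneficial trade exists), and, by the structural claim, also $0$ whenever there is exactly one of each; on every other realization $M$'s gains are at most the optimal gains there, namely $(H-c)\min(S',B')$ with $S'\sim\mathrm{Bin}(n,\eta)$ and $B'\sim\mathrm{Bin}(m,\delta)$ independent. Hence $M(\mS+\ell,\mB+k)\le (H-c)\bigl(\E[\min(S',B')]-\Pr[S'=B'=1]\bigr)$, and a short computation shows the right-hand side is $(H-c)\cdot O(\eta^2\delta+\eta\delta^2)$ (implicit constant depending on $n,m$), whereas $\opt(\mS,\mB)\ge (H-c)\eta\delta$. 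Choosing $\eta$ and $\delta$ small enough therefore drives the ratio below $\varepsilon$, which is exactly the claimed inequality, together with the equality $\sbopt(\mS,\mB)=\opt(\mS,\mB)$ established above. The only genuinely delicate point is the structural claim; the rest is setup and bookkeeping.
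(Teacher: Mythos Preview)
Your proof is correct and follows the same overall strategy as the paper: pick two-point distributions concentrated on a ``hard'' profile (one low-cost seller, one high-value buyer, all other agents inert), show that the mechanism must fail to trade on that profile, and then send the small probabilities to zero so that the single-pair realizations dominate $\opt$ while contributing nothing to $M$. The bookkeeping at the end is the same as the paper's (the paper bounds the probability of ``$\ge 2$ active buyers or $\ge 2$ active sellers'' by a union bound, which is exactly your $O(\eta^2\delta+\eta\delta^2)$ estimate), and your observation that a fixed price of $c$ realizes $\opt$, so $\sbopt=\opt$, matches the paper's.

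The one place where your argument differs in presentation is the structural claim. The paper proceeds via a case split: either $M$ already fails to trade on the concrete profile $(X_0,X_1,X_2,X_3)=(0,1,2,3)$, in which case those values are used directly, or $M$ does trade there, in which case \cref{lem:anon-problem-many} uses truthfulness to lower the seller to $X_0$ and then drop the buyer below her (now-pinned) price, producing a second concrete profile on which $M$ fails. You instead prove non-constructively, by contradiction, that \emph{some} pair $(c,H)$ with no trade must exist: if $M$ traded on $I(c,H)$ for all $c<H$, the buyer threshold would be $\le c$ and the seller threshold $\ge H$ simultaneously, violating budget balance. This is the same threshold/BB/IR reasoning as in the paper's \cref{lem:anon-problem-many}, just packaged as a one-shot existence argument rather than an explicit two-case construction; it is slightly slicker but buys nothing extra, while the paper's formulation of \cref{lem:anon-problem-many} has the advantage of being reusable (it is invoked again for the FSD lower bounds in \cref{thm:general-mech-lb-12,thm:sd-1-r-lb}).

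One small remark: your line ``individual rationality keeps the idle $0$-buyers and $V$-sellers from subsidizing the trade'' is using IR in the standard sense that a non-trading seller cannot be charged (her participation utility must weakly exceed her outside option of keeping the item). The paper relies on exactly the same reading when it asserts $p_s\le p_b$ in the proof of \cref{lem:anon-problem-many}, so you are on the same footing.
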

This strong negative result is essentially rooted at the same reason that the BK result fails when the value of the buyer is rarely above the value of the seller.
To overcome this problem, on top of regularity,
BK also assumes that the buyers are ``serious bidders,'' that is, that the valuation of each of them for the item is at least the value of the seller (which they normalize to $0$). We generalize this assumption to the case in which the sellers' values are private and sampled from a distribution, by assuming that $F_B$ \emph{first-order stochastically dominates} (\emph{FSD}, or simply \emph{stochastically dominates}) $F_S$. Indeed, in case the seller cost is fixed, this assumption precisely becomes the ``serious bidders'' assumption of BK. 
Thus, in the remainder of our analysis we move to study settings in which the buyers' values are sampled from a distribution $F_B$ that (first-order) stochastically dominates $F_S$, the distribution from which the sellers' values are sampled.

\paragraph{One seller stochastically dominated by one buyer:} We start with the case of bilateral trade, i.e., one seller with one item, and one buyer, where the distribution $F_S$ from which the seller's value is drawn is
stochastically dominated by the distribution $F_B$ from which the buyer's value is drawn. We ask whether our result from the i.i.d.\ case can be extended to the stochastic dominance case. That is, with stochastic dominance, is it enough to add one buyer so that BTR in the augmented market will achieve higher expected welfare than the optimal-yet-infeasible VCG mechanism in the original market. We give a negative result to this hope:  we prove that adding one buyer is not enough for BTR to beat the optimum. At this point the reader may wonder whether some other anonymous robust deterministic mechanism could do better, and we once again give a negative answer to this hope, showing that this impossibility holds also for any other such mechanism:\footnote{We emphasize once again that all our results, whether positive or negative, for adding buyers can be recast as analogous results for adding sellers instead. For instance,  recasting \cref{intro-11-negative} this way shows that for any anonymous and robust deterministic mechanism $M$ there exist $F_S$ and $F_B$ such that $F_B$ stochastically dominates $F_S$ and for which $M(2,1)<\sbopt(1,1)=\opt(1,1)$. See \cref{app:sellers-buyers} for the full details.}\textsuperscript{,}\footnote{Readers who are familiar with the \emph{Median Mechanism} of \citeN{McAfee08} may wonder whether a natural variant of that mechanism can obtain in the one-seller-two-buyer market expected welfare that beats the optimum in the original market.  This mechanism, while not robust, requires only limited knowledge of the distributions, in the form of the medians of the distributions, and provides good approximation guarantees under a condition implied by stochastic dominance. Without going into further detail about this mechanism at this point, we note that in \cref{app:mcafee} we give a negative answer even to this hope.}

\begin{theorem}[See \cref{thm:general-mech-lb-12}]\label{intro-11-negative}
Let $M$ be any anonymous robust deterministic mechanism.
	There exist two distributions, $F_S$ and $F_B$,
	such that
	$F_B$ stochastically dominates $F_S$ and 
	for which 
	\[M(1,2)<\sbopt(1,1)=\opt(1,1).\]
\end{theorem}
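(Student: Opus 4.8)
The plan is to first pin down the shape of an arbitrary anonymous robust deterministic mechanism $M$ on the one‑seller--two‑buyer market, and then to build, as a function of that shape, a pair of distributions that works. Using only dominant‑strategy truthfulness, individual rationality, no‑deficit (weak) budget balance, anonymity and determinism, I would establish the following structure of $M$ with one seller and two buyers: (i) if a trade happens it is between the seller and the higher‑bidding buyer (if instead the lower bidder could win, a buyer whose value exceeds the other's bid could profitably bid just below it, become the lower bidder, and win at that buyer's price); (ii) fixing the unordered buyer bids $\{b_{hi},b_{lo}\}$, the seller trades iff her report is at most a threshold $P(b_{hi},b_{lo})$, and she then receives exactly $P(b_{hi},b_{lo})$ (the single‑parameter payment characterization applied to the seller's side); (iii) fixing $b_{lo}$ and the seller's report $s$, the winning buyer pays a price $\theta(b_{lo},s)$ that is independent of the winner's own bid and satisfies $\theta(b_{lo},s)\ge b_{lo}$ (otherwise the lower buyer would overbid); and (iv) budget balance forces $\theta(b_{lo},s)\ge P(b_{hi},b_{lo})$ whenever a trade occurs on $(\{b_{hi},b_{lo}\},s)$. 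Because $M$ is prior‑independent, $P$ and $\theta$ are fixed functions, which is precisely what lets us choose distributions against them.

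For the construction, put the low value at $1$ and let $c:=\theta(1,1)\in[1,\infty]$. If $c=\infty$ --- i.e.\ $M$ never trades when some buyer bids $1$ and the seller bids $1$, hence, by the seller‑side monotonicity in (ii), never when the seller bids at least $1$ --- take $F_B$ equal to $H$ with probability $q$ and to $1$ otherwise and $F_S$ equal to $s_1$ with probability $q$ and to $1$ otherwise, for any $1<s_1<H$ and any $q\in(0,1)$; then $F_B$ trivially first‑order stochastically dominates $F_S$, posting the price $s_1$ is fully efficient on the $(1,1)$ market so $\sbopt(1,1)=\opt(1,1)$, and $M$ can only trade when both buyers draw $H$, whence $M(1,2)\le q\cdot\opt(1,1)<\opt(1,1)$. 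Otherwise $c<\infty$; fix $H=c+1$, $s_1=c+\tfrac12$, $r=\tfrac{2c+1}{2c+2}$, $q=\tfrac{1+r}{2}$, and take $F_B$ to be $H$ with probability $q$ and $1$ with probability $1-q$, and $F_S$ to be $s_1$ with probability $r$ and $1$ with probability $1-r$. A direct check gives $q\ge r$, which is exactly the condition for $F_B$ to first‑order stochastically dominate $F_S$; and since $v_S\le s_1$ always and efficient trade in the $(1,1)$ market occurs exactly when $v_B=H\ge s_1$, posting the fixed price $s_1$ is a fully efficient feasible mechanism, so $\sbopt(1,1)=\opt(1,1)=q\bigl[(1-r)(H-1)+r(H-s_1)\bigr]$.

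The crux of the main case is that $M$ earns zero gains from trade on the realized profile in which one buyer has value $H$, one has value $1$, and the seller has value $s_1$: the $1$‑buyer cannot be the trading party (that would have a buyer pay at most $1$ while the seller receives at least $s_1>1$, violating budget balance), so a trade here would be with the $H$‑buyer; by the seller‑side monotonicity a trade would then also occur when the seller has value $1$, where the $H$‑buyer pays $\theta(1,1)=c$, and then budget balance forces the seller's threshold $P(H,1)\le c<s_1$, contradicting that the seller with value $s_1$ trades. Bounding $M$'s gains from trade on every other realized profile by the efficient value (and noting it is $0$ when both buyers have value $1$ and the seller has value $s_1$), summing over the six realized profiles and simplifying reduces $M(1,2)<\opt(1,1)$ to $(1-r)(H-1)<r(H-s_1)$, i.e.\ (with $1-r=\tfrac{1}{2c+2}$) to $\tfrac{c}{2c+2}<\tfrac{2c+1}{4c+4}$, which holds. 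Hence $M(1,2)<\sbopt(1,1)=\opt(1,1)$ in all cases.

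I expect the main obstacle to be the structural step: making the characterization in (i)--(iv) fully rigorous under only \emph{weak} budget balance (so $M$ may burn money), with ties in the two buyer bids, and with the degenerate configurations in which $M$ declines to trade. Once that characterization is in hand, the choice of distributions and the verification in the last two steps are an elementary, if slightly fiddly, computation.
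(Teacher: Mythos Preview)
Your proof is correct, and the structural characterization (i)--(iv) is in fact easier to make rigorous than you fear: the key point is that the single-parameter threshold $w(x,s)$ for a buyer facing the other buyer's bid $x$ and seller bid $s$ must satisfy $w(x,s)\ge x$ (otherwise on the tied profile $(x,x,s)$ both buyers strictly exceed their threshold and both win), and from this (i) follows immediately, since the low buyer's threshold $w(b_{hi},s)\ge b_{hi}>b_{lo}$. Points (ii)--(iv) are then routine single-parameter facts.

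Your route, however, is genuinely different from the paper's. The paper proves the theorem via a dichotomy that pivots through $\btr$: it first shows (Proposition~\ref{prop:sd-1-1-lb}) a concrete pair of FSD distributions on which $\btr(1,2)<\opt(1,1)$, and then shows (Lemma~\ref{obs-all-mech-reg-not-suff-12}, via Lemma~\ref{lem:anon-problem-many}) that any $M$ whose welfare strictly exceeds $\btr$'s on \emph{some} profile must fail arbitrarily badly on some other FSD pair. The theorem follows because either $M\le\btr$ pointwise (use the first distributions) or not (use the second). You instead read off a single numerical parameter $c=\theta(1,1)$ of $M$ and build the adversarial distributions directly from $c$, with no reference to $\btr$ at all. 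Your construction is more self-contained and arguably more transparent for the $(1,1)\to(1,2)$ case; the paper's argument is more modular, since Lemma~\ref{lem:anon-problem-many} is stated for arbitrary $(\mS,\mB)$ and is reused verbatim to prove the analogous lower bounds for larger markets (Theorem~\ref{thm:sd-1-r-lb} and Proposition~\ref{obs-all-mech-reg-not-suff}), and the dichotomy also singles out $\btr$ as the extremal anonymous robust mechanism in this family of impossibility results.
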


Having shown that this impossibility result is a general one, we continue focusing on BTR, and now ask whether adding not one but a small fixed number of buyers can guarantee that BTR in the augmented market beats (in the same sense of expected welfare) the optimal-yet-infeasible VCG mechanism in the original market.  To better develop an intuition for this question, we take a quick detour through a related item-pricing question, which is also interesting in its own right. (It also motivates looking at bilateral trade settings separately, and, as we will see, at the BTR mechanism).

\paragraph{An aside: pricing at a fresh sample:} Arguably the most celebrated of the corollaries of the BK result, within the economics and computation community, is that in a scenario with one seller and one buyer (whose value is drawn from a distribution satisfying BK's assumptions), if the seller can obtain one fresh (independent) sample drawn from the buyer's value distribution, then pricing the item at this sample (that is, giving the buyer a take-it-or-leave-it offer to buy at this sample) yields as expected revenue at least half of the expected revenue of the optimal mechanism that uses full knowledge of the buyer's value distribution \cite{DhangwatnotaiRY10}, which by the analysis of \citeN{Myerson81} prices the item at a price carefully optimized for that distribution.

Inspired by this conclusion of BK's result, we give the first results that we know of for sample-pricing in two-sided markets. We first show via a reduction similar to that of \citeN{DhangwatnotaiRY10} that \cref{intro-iid} implies an analogous result in a bilateral trade setting: if the buyer's value and the seller's value are \emph{drawn i.i.d.},\ and if a fresh sample can be obtained from the same distribution, then giving both of them a take-it-or-leave-it offer (which they must both take for trade to happen) to trade at the price of the sample attains in expectation at least half of the gains-from-trade of the optimal-yet-infeasible VCG mechanism:\footnote{We emphasize that attaining a $C$ approximation in terms of the gains-from-trade (the difference between post-trade and pre-trade social welfare) is strictly more challenging than attaining the same $C$ approximation in terms of the social welfare.} 

\begin{theorem}[See \cref{cor:iid-sample-half}]\label{intro-sample-iid}
For a single buyer and a single seller, whose values are drawn i.i.d.,\ pricing at a single fresh sample drawn from the same distribution obtains expected gains-from-trade at least $\frac{1}{2}\opt(1,1)$. Furthermore, there exist distributions for which it obtains expected gains-from-trade exactly $\frac{1}{2}\opt(1,1)$ (and no more than that).
\end{theorem}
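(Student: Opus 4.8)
The plan is to deduce this theorem from \cref{intro-iid} through a reduction in the spirit of \citeN{DhangwatnotaiRY10}. Write $F$ for the common distribution and let $b$, $s$, and $q$ be the buyer's value, the seller's value, and the fresh sample, all drawn i.i.d.\ from $F$. Pricing at the sample offers the price $q$ to both agents, so trade occurs exactly when $b\ge q\ge s$, and in that case the gains from trade are $b-s$; hence the expected gains from trade of this mechanism equal $\E\bigl[(b-s)\cdot\mathbbm{1}[b\ge q\ge s]\bigr]$. I would next record how \btr{} behaves on a market with one seller and two buyers with values $b_{(1)}\ge b_{(2)}$: the unique candidate pair matches $b_{(1)}$ with the seller (when $b_{(1)}\ge s$), the proposed price is $b_{(2)}$, the buyer always accepts it and the seller accepts it iff $b_{(2)}\ge s$, so $\btr(1,2)=\E\bigl[(b_{(1)}-s)\cdot\mathbbm{1}[b_{(2)}\ge s]\bigr]$, where $b_{(1)}=\max$ and $b_{(2)}=\min$ of the two buyer values.

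The heart of the argument is to view the unordered pair $\{b,q\}$ as the two buyers of a virtual i.i.d.\ $(1,2)$-market with seller value $s$ --- a legitimate sample of that market, since $b,q,s$ are i.i.d. Whenever pricing at the sample trades (i.e.\ $b\ge q\ge s$) we have $b_{(1)}=b$ and $b_{(2)}=q\ge s$, so \btr{} trades the same pair and produces the same gains from trade $b-s$. The factor of $\frac12$, however, must \emph{not} come from a pathwise comparison --- pathwise \btr{} is only better, as it also trades when $q>b\ge s$ --- but from symmetry: since $b$ and $q$ are exchangeable, $\E\bigl[(b-s)\mathbbm{1}[b\ge q\ge s]\bigr]=\E\bigl[(q-s)\mathbbm{1}[q\ge b\ge s]\bigr]$, and summing these two equal quantities and observing that on the union of the two events the integrand equals $(\max(b,q)-s)\cdot\mathbbm{1}[\min(b,q)\ge s]$ --- with the diagonal $b=q\ge s$ contributing an extra nonnegative term --- gives
\[
2\,\E\bigl[(b-s)\mathbbm{1}[b\ge q\ge s]\bigr] = \btr(1,2) + \E\bigl[(b-s)\mathbbm{1}[b=q\ge s]\bigr] \ge \btr(1,2).
\]
Since $\btr(1,2)=\btr(1,1+1)\ge\opt(1,1)$ by \cref{intro-iid}, the expected gains from trade of pricing at a sample are at least $\tfrac12\opt(1,1)$.

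For the tightness claim I would take $F$ to be the uniform distribution on $[0,1]$. A short direct computation gives $\opt(1,1)=\E\bigl[(b-s)^+\bigr]=\tfrac16$ and $\E\bigl[(b-s)\mathbbm{1}[b\ge q\ge s]\bigr]=\int_0^1 (1-s)^3/3\,ds=\tfrac1{12}=\tfrac12\opt(1,1)$, so on this instance pricing at a sample achieves exactly half of $\opt(1,1)$ and no more; the same computation also confirms that $\btr(1,1+1)=\opt(1,1)$ here (the diagonal term vanishes for atomless $F$), so both inequalities in the displayed chain are tight simultaneously.

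The only delicate point --- and hence the step I would be most careful with --- is extracting the factor of $2$ from the exchangeability of $b$ and $q$ rather than from a (wrong-direction) dominance argument; once that is set up, the remaining bookkeeping is just checking that the diagonal event $b=q$ contributes a nonnegative quantity, so it can only help the bound.
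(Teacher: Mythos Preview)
Your proof is correct. It is essentially the reduction approach the paper sketches in the body (via \cref{obs:one-sample} and \cref{thm:iid}): you use the exchangeability of the buyer value $b$ and the sample $q$ to show $2\cdot\textsc{Sample}=\btr(1,2)+\E[(b-s)\mathbbm{1}[b=q\ge s]]\ge\btr(1,2)$, and then invoke $\btr(1,2)\ge\opt(1,1)$. Your handling of the diagonal term and the tightness example are both fine.

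The paper, however, writes out a \emph{different} proof in the appendix: a direct three-way permutation argument that does not appeal to \cref{thm:iid} at all. It draws three i.i.d.\ values, assigns them to $(b,s,p)$ by a uniformly random permutation, decomposes $\opt(1,1)$ into the three events $\{b\ge p\ge s\}$, $\{p>b>s\}$, $\{b>s>p\}$, and shows that the first of these (which is $\textsc{Sample}$) dominates the sum of the other two by telescoping $x^{(1)}-x^{(3)}=(x^{(1)}-x^{(2)})+(x^{(2)}-x^{(3)})$. The payoff is that this direct argument yields the sharper tightness statement that \emph{every} atomless $F$ attains exactly $\tfrac12\opt(1,1)$, whereas your route establishes tightness only via the single uniform example (which, to be fair, is all the introduction-level statement asks for). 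Conversely, your approach is more modular and makes explicit the connection to the Bulow--Klemperer-style result, which is the conceptual point the paper is advertising.
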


What about the non-i.i.d.\ case where we only assume that the buyer's value distribution stochastically dominates the seller's? Can we still get a similar result despite \cref{intro-iid} breaking in this setting? We give an affirmative answer, showing that a qualitatively similar result (though with a provably smaller constant, and with a different proof) still holds: if a fresh sample can be obtained from the \emph{buyer's} value distribution, then giving the seller and the buyer a take-it-or-leave-it offer to trade at the price of the sample attains in expectation at least \emph{one quarter} of the gains-from-trade of the optimal-yet-infeasible VCG mechanism:\footnote{It is interesting to compare this mechanism with the \emph{Median Mechanism} of \citeN{McAfee08}, which under a condition implied by (i.e., weaker than) first-order stochastic dominance obtains in expectation \emph{one half} (in contrast to the one quarter from \cref{intro-pricing-sd}) of the optimal gains-from-trade. Pricing at a sample, as we show, gives a qualitatively similar guarantee of a small constant factor to the optimal gains-from-trade (in \cref{app:mcafee} we show that the one-half guarantee that \citeN{McAfee08} proves indeed cannot be improved upon to prove a better guaranteed constant for that mechanism) under a stronger assumption (stochastic dominance) but while requiring far less prior knowledge: a single sample rather than the precise median. This can be viewed as demonstrating a tradeoff of sorts between the strength of the assumption on the properties of the unknown distribution and the amount of concrete numeric knowledge regarding this distribution that is needed.}

\begin{theorem}[See \cref{thm:FSD-11-sample,prop:FSD-11-sample-nohalf}]\label{intro-pricing-sd}
For a single buyer whose value is drawn from $F_B$ and a single seller whose value is drawn from $F_S$ where 
$F_B$ stochastically dominates $F_S$,
pricing at a single fresh sample drawn from $F_B$ obtains expected gains-from-trade at least $\frac{1}{4}\opt(1,1)$. Furthermore, the fraction $\frac{1}{4}$ in this statement cannot be improved to any fraction higher than $\frac{7}{16}$ (and in particular, cannot be improved to~$\frac{1}{2}$, the fraction that is obtained for i.i.d.\ agents), even if $\opt(1,1)$ is replaced with $\sbopt(1,1)$.
\end{theorem}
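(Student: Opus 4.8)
The statement has two independent halves: the $\tfrac14$\nobreakdash-guarantee, and the impossibility of improving the constant beyond $\tfrac7{16}$ (which, being asserted against $\sbopt(1,1)$, already subsumes the ``in particular'' clause about $\tfrac12$). Write $b\sim F_B$ for the buyer's value, $s\sim F_S$ for the seller's value, and $p\sim F_B$ for the independent fresh sample; the mechanism trades exactly when both parties accept, i.e.\ when $s\le p\le b$, so its gains from trade is $\mathrm{GFT}:=\E\bigl[(b-s)\,\mathbf{1}[s\le p\le b]\bigr]$ while $\opt(1,1)=\E[(b-s)^+]$. Since $\sbopt(1,1)\le\opt(1,1)$ always holds, for the lower bound it suffices to prove $\mathrm{GFT}\ge\tfrac14\opt(1,1)$.

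For the lower bound I would decompose $\opt(1,1)$ by the position of the sample relative to $[s,b]$. The events $\{s\le p\le b\}$, $\{p<s\}$, $\{p>b\}$ partition the probability space on $\{b\ge s\}$ (and $(b-s)^+$ vanishes otherwise), so $\opt(1,1)=\mathrm{GFT}+\mathrm I+\mathrm{II}$ with $\mathrm I:=\E[(b-s)^+\mathbf{1}[p<s]]$ (sample too low, the seller vetoes) and $\mathrm{II}:=\E[(b-s)^+\mathbf{1}[p>b]]$ (sample too high, the buyer declines). The bound $\mathrm{II}\le\mathrm{GFT}$ is immediate: since $b$ and $p$ are i.i.d.\ and independent of $s$, swapping them gives $\mathrm{II}=\E[(p-s)\mathbf{1}[s\le p<b]]$, and on that event the mechanism \emph{does} trade, with value $b-s\ge p-s\ge 0$. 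The crux is $\mathrm I\le 2\,\mathrm{GFT}$, and this is exactly where first-order stochastic dominance is indispensable — the inequality fails conditionally on a fixed seller value and genuinely requires averaging over $F_S$. I would first reduce to a continuous $F_B$ (smoothing each atom of $F_B$ slightly upward only lowers the mechanism's trade probability, through lost ties, up to a vanishing term, while changing $\opt$ negligibly and preserving that $F_B$ stochastically dominates $F_S$), then pass to quantile coordinates: with $u=F_B(b)\sim U[0,1]$, $g=F_B^{-1}$, and $\phi$ the c.d.f.\ of the random variable $F_B(s)$, one has $\mathrm{GFT}=\E_{u,v}\bigl[(g(u)-g(v))(u-v)\mathbf{1}[u\ge v]\bigr]$ and $\mathrm I=\E_{u,v}\bigl[(g(u)-g(v))\,v\,\mathbf{1}[u\ge v]\bigr]$, where $v$ has c.d.f.\ $\phi$, and stochastic dominance says precisely $\phi(w)\ge w$ for all $w$. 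Writing $g(u)-g(v)=\int_v^u g'$ and exchanging orders of integration reduces $\mathrm I-2\,\mathrm{GFT}\le 0$ to the pointwise claim $\phi(w)(2w-1)\le 3\int_0^w\phi$ for every $w\in[0,1]$, which is trivial for $w\le\tfrac12$ and for $w>\tfrac12$ follows from $\phi\le 1$, $\phi(v)\ge v$, and the elementary inequality $3w^2-4w+2>0$. Combining the three pieces yields $\opt(1,1)=\mathrm{GFT}+\mathrm I+\mathrm{II}\le 4\,\mathrm{GFT}$.

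For the impossibility, I would exhibit an explicit family. Let $F_S$ put mass $\tfrac14$ on $0$ and mass $\tfrac34$ on $1$, and let $F_B$ put mass $\tfrac14$ on $0$ and mass $\tfrac34$ on $H$, with $H\to\infty$ (one checks directly that $F_B$ stochastically dominates $F_S$). Computing the mechanism's gains from trade, with the atoms slightly smoothed so that a sample equal to the buyer's value results in trade with probability $\tfrac12$, gives $\mathrm{GFT}\sim\tfrac{21}{64}H$, while $\opt(1,1)\sim\tfrac34H$, a ratio tending to $\tfrac{21}{48}=\tfrac7{16}$. It then remains to verify $\sbopt(1,1)=\opt(1,1)$ for this family: the efficient outcome is simply ``trade if and only if the buyer's value is $H$'', which depends on the buyer's report alone, and is implemented by the deterministic mechanism that, upon the buyer reporting $H$, transfers the item at a fixed buyer-price $p_B$ and a fixed seller-price $p_S$; this is individually rational, (trivially) truthful for both sides, and budget-balanced whenever $1\le p_S\le p_B\le H$, which holds for $H$ large. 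Hence $\mathrm{GFT}/\sbopt(1,1)\to\tfrac7{16}$, so no constant exceeding $\tfrac7{16}$ can replace $\tfrac14$ even against $\sbopt(1,1)$, and a fortiori $\tfrac12$ cannot.

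The main obstacle is the inequality $\mathrm I\le 2\,\mathrm{GFT}$: unlike $\mathrm{II}\le\mathrm{GFT}$ it admits no symmetry proof and is false pointwise in $s$, so it forces both the reduction to a continuous $F_B$ — which must be argued with some care, since atoms of $F_B$ create ties that can only help the sample-pricing mechanism — and a genuinely quantitative use of first-order stochastic dominance via $\phi\ge\mathrm{id}$.
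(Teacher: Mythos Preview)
Your decomposition $\opt(1,1)=\mathrm{GFT}+\mathrm I+\mathrm{II}$ and the symmetry argument $\mathrm{II}\le\mathrm{GFT}$ (swap $b$ and $p$) are exactly what the paper does. For the impossibility half, your construction is essentially identical to the paper's: the paper takes the seller to be $0$ with probability~$\varepsilon$ and $1$ otherwise, the buyer to be $0$ with probability~$\varepsilon$ and (a slight smoothing of) $1+\gamma$ otherwise, then sets $\varepsilon=\tfrac14$ and sends $\gamma\to\infty$ to get the ratio $\tfrac7{16}$; your $H$ is their $1+\gamma$ and your ``atoms slightly smoothed'' is their uniform window of width~$2\delta$.

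The genuine divergence is in how you bound $\mathrm I$. The paper does \emph{not} prove $\mathrm I\le 2\,\mathrm{GFT}$ directly; instead it shows $\mathrm I\le\mathrm{GFT}+\mathrm{II}$ via a quantile-\emph{permutation} coupling: draw three i.i.d.\ uniform quantiles, order them as $q^{(1)}\ge q^{(2)}\ge q^{(3)}$, and assign them to $(b,s,p)$ by a uniformly random permutation, using $v_{F_B}$ for $b,p$ and $v_{F_S}$ for $s$. On the event $b>s>p$ one must have the price at the lowest quantile; swapping the quantiles of $s$ and $p$ (and using $v_{F_S}(q^{(3)})\le v_{F_B}(q^{(3)})$ by FSD) lands in $\{b\ge p\ge s\}\cup\{p>b>s\}$ with at least as much $b-s$, and permutation symmetry finishes. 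This argument works for arbitrary $F_B,F_S$ with no approximation.

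Your route instead passes to ``quantile coordinates'' $u=F_B(b)$, $v=F_B(s)$ and asserts $\mathrm{GFT}=\E[(g(u)-g(v))(u-v)\mathbf 1[u\ge v]]$ and $\mathrm I=\E[(g(u)-g(v))\,v\,\mathbf 1[u\ge v]]$. These identities require $s=g(v)=F_B^{-1}(F_B(s))$, which fails whenever $s$ lies in a gap of the support of $F_B$ (or below it): FSD does not prevent this, and smoothing atoms of $F_B$ does not eliminate gaps. When $s\ne g(v)$ one has $g(v)<s$, so your formulas compute quantities $\mathrm I'\ge\mathrm I$ and $\mathrm{GFT}'\ge\mathrm{GFT}$, and $\mathrm I'\le 2\,\mathrm{GFT}'$ no longer yields $\mathrm I\le 2\,\mathrm{GFT}$. (Concretely: the correction term $\E[(g(v)-s)(3v-2u)\mathbf 1[u>v]]=\E_s[(g(v)-s)(1-v)(2v-1)]$ has no definite sign.) The gap is plausibly fixable by a further approximation giving $F_B$ full support while preserving FSD --- e.g.\ replacing $F_B$ by $(1-\delta)F_B+\delta F_S$ and re-smoothing --- but this step is missing from your sketch, whereas the paper's permutation coupling sidesteps the issue entirely.
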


\paragraph{Pricing using fresh samples vs.\ adding buyers:} Pricing at a sample, like adding buyers, seems like an intuitive thing to do. Indeed, returning to our question above, it may intuitively seem that adding, say, $10$, or to be on the safe side, say, $100$, more buyers should ``surely suffice'' for BTR in the augmented market to beat the optimal welfare in the original one-seller-one-buyer market. To emphasize the elusiveness of this intuition, we note that this ``intuitively surely working'' claim whereby adding, say, $k=100$ more buyers suffices to beat the optimal mechanism in the original market, implies a far less intuitive claim: that taking $k$ fresh independent samples and giving a take-it-or-leave-it offer to trade at the \emph{highest} of the $k$ samples attains a $\frac{1}{1+k}$ approximation to the gains-from-trade of the optimal-yet-infeasible mechanism in the original one-seller-one-buyer market. (Taking $k=1$, incidentally, derives \cref{intro-sample-iid} from \cref{intro-iid}.)

As the former claim implies the latter,
if the former ``surely holds'' (as it may intuitively seem), then the latter should ``surely hold'' as well, however intuition for the latter is more elusive: on the one hand, a $\frac{1}{k+1}$ approximation may not seem very challenging, but on the other hand, pricing at the maximum of $k$ samples seems like a completely absurd thing to do! In fact, at a first glance at the statement of the latter claim it is not even clear that increasing $k$ makes things any easier (while for the former claim this is completely obvious). Indeed, intuition may be misleading, and the former claim is not as straightforward as it may seem at first glance. (Indeed, recall that without stochastic dominance we have shown that there is no fixed finite number of buyers that if added, allows any anonymous robust deterministic mechanism to beat the optimal in the original market.) Nonetheless, we do manage to show that with one seller and one buyer (under stochastic dominance), if we can recruit not one but \emph{four more} buyers into the market, then BTR in the augmented market would beat the optimal-yet-infeasible mechanism in the original market:
\begin{theorem}[See \cref{prop:sd-1-1-ub}]
	For any seller distribution $F_S$ and any buyer distribution $F_B$ such that 
	$F_B$ stochastically dominates $F_S$,
	it holds that \[\btr(1,1+4)\geq\opt(1,1).\]
\end{theorem}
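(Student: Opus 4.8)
The plan is to reduce the statement to an elementary one‑variable inequality by computing both $\btr(1,5)$ and $\opt(1,1)$ in closed form in terms of the buyer CDF $G := F_B$. First I would note that in a one‑seller market BTR with $n$ buyers is exactly ``second‑price auction among the buyers with seller veto'': it trades precisely when the seller's value $s$ is at most the second‑highest buyer value $b_2$, with gains from trade $b_1-s$, where $b_1$ is the highest buyer value. Hence $\btr(1,n) = \E[(b_1-s)\mathbb{1}[s\le b_2]]$, while $\opt(1,1) = \E_s\bigl[\int_s^\infty(1-G(t))\,dt\bigr]$. It is convenient to reduce up front to the case where $F_B$ and $F_S$ are atomless — convolving both with a vanishing uniform noise preserves stochastic dominance, and the two quantities above are continuous in the distributions — so that ties can be ignored.

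Next I would condition on $s$ and on $K := \#\{\text{buyers with value} \ge s\}$: a trade occurs iff $K\ge 2$, the ``missed'' event is $K=1$, and on $\{K=1\}$ the single buyer above $s$ is exactly $b_1$, with law $G$ truncated to $[s,\infty)$, so the conditional expected loss there factors cleanly as $n\,G(s)^{n-1}\int_s^\infty(1-G(t))\,dt$. Combined with $\E[(b_1-s)^+\mid s]=\int_s^\infty(1-G(t)^n)\,dt$ and taking $n=5$, this yields
\[
\btr(1,5)-\opt(1,1) = \E_s\!\left[\int_s^\infty G(t)\bigl(1-G(t)^4\bigr)\,dt\right] - 5\,\E_s\!\left[G(s)^4\int_s^\infty(1-G(t))\,dt\right],
\]
and the task becomes to show that the right‑hand side is nonnegative.

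The heart of the argument is to estimate these two expectations using first‑order stochastic dominance in its pointwise form $F_S(t)\ge G(t)$. By Tonelli the first expectation equals $\int G(t)(1-G(t)^4)F_S(t)\,dt \ge \int G(t)^2(1-G(t)^4)\,dt$. For the second, Tonelli gives $5\int(1-G(t))\bigl(\int_{s<t}G(s)^4\,dF_S(s)\bigr)\,dt$, and a Stieltjes integration by parts combined with $F_S\ge G$ and $F_S\le 1$ bounds the inner integral by $G(t)^4-\tfrac45 G(t)^5$, so the second expectation is at most $\int(1-G(t))\,G(t)^4(5-4G(t))\,dt$. It then suffices to verify the pointwise inequality $x^2(1-x^4)\ge(1-x)x^4(5-4x)$ on $[0,1]$, i.e.\ (after dividing by $x^2(1-x)$) that $5x^3-4x^2+x+1\ge 0$ there — an elementary check, since this cubic equals $1$ at $0$ and $3$ at $1$ and its only interior critical points are a local maximum at $x=\tfrac15$ and a local minimum at $x=\tfrac13$ with value $\tfrac{29}{27}>0$.

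I expect the delicate step to be the second of the two estimates — bounding $5\,\E_s\bigl[G(s)^4\int_s^\infty(1-G(t))\,dt\bigr]$ tightly enough. Stochastic dominance has to be invoked there with essentially no slack: the integration‑by‑parts bound is precisely what preserves the factor $G(t)^4$, which is what leaves the final polynomial inequality with room to spare, whereas cruder substitutions (for instance $G(s)^4\le F_S(s)^4$ inside the integral against $dF_S$, yielding $\tfrac15 F_S(t)^5$) lose too much and make the final inequality fail when $G(t)$ is small relative to $F_S(t)$. The closed‑form computation of the $K=1$ loss term and the final one‑variable inequality are routine once the right form is in hand, as is the reduction to atomless distributions.
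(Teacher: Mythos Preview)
Your argument is correct and takes a genuinely different route from the paper. The paper does not prove the bilateral case separately: it derives it as the special case $m_B=1$ of \cref{prop:sd-1-r-ub}, whose proof is a quantile-coupling argument. There one draws $m_B+k+1$ uniform quantiles, assigns one at random to the seller and $k$ to ``new'' buyers, and shows that both $\opt(1,m_B+k)-\opt(1,m_B)$ and $\opt(1,m_B+k)-\btr(1,m_B+k)$ can be bounded by multiples of the single quantity $\E\bigl[b^{q^{(1)}}-b^{q^{(3)}}\bigr]$; comparing the two multiples yields the sufficient condition $k(k-1)/(m_B+k)\ge 2$, which for $m_B=1$ first holds at $k=4$.

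Your approach instead writes $\btr(1,5)-\opt(1,1)$ explicitly as a difference of two $s$-expectations of integrals in $G=F_B$, applies Tonelli, and then uses $F_S\ge G$ twice---once directly on the first term, and once through the Stieltjes integration-by-parts bound $\int_{(-\infty,t)}G^4\,dF_S\le G(t)^4-\tfrac{4}{5}G(t)^5$ on the second---to reduce everything to the pointwise polynomial inequality $5x^3-4x^2+x+1\ge 0$ on $[0,1]$. This is more elementary and self-contained for the bilateral case, and in fact strictly sharper: running the identical computation with $n=4$ buyers (i.e., $k=3$) leads to the inequality $4x^2-3x+1\ge 0$, which also holds on $[0,1]$, so your method shows $\btr(1,4)\ge\opt(1,1)$, beating the paper's constant for $m_B=1$. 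The price is that the paper's coupling lemma handles all $m_B$ uniformly and feeds directly into the many-seller reduction of \cref{thm:sd-m-r-ub}, whereas it is less clear how to push your CDF computation to general $m_B$ without the analysis becoming unwieldy.
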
 

\paragraph{One seller stochastically dominated by many buyers:} We move to consider the more general (but not yet most general) case in which a single seller with value drawn from $F_S$ is facing $\mB$ buyers, all of whose values are drawn from $F_B$, which stochastically dominates $F_S$. Will adding a constant number of extra buyers be enough for BTR (or any other anonymous robust deterministic mechanism) to beat the optimum? We show that this is not the case: if the number of buyers we need to add so that BTR, or any other anonymous robust deterministic mechanism, would beat even the optimal \emph{feasible} mechanism is finite, then it must grow with $\mB$, the number of buyers in the original market:
\begin{theorem}[See \cref{thm:sd-1-r-lb}]\label{intro-1r-negative} Let $M$ be any anonymous robust deterministic mechanism.
	For any $k$ there exists a number $N$ such that for any number of buyers $\mB>N$ there exist two distributions, $F_B$ and $F_S$, such that    
	$F_B$ stochastically dominates $F_S$ and
	for which 
	\[M(1,\mB+k)<\sbopt(1,\mB)=\opt(1,\mB).\]
\end{theorem}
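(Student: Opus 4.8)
The plan is to establish, for every fixed $k$, the statement with a threshold $N=N(k)$ of order $k^2$, by constructing for each $\mB>N(k)$ an explicit three‑point pair of distributions (which we are free to let depend on $M$, since the quantifier ``let $M$ be any\ldots'' precedes the existential). The construction is a ``diluted'' version of the bilateral‑trade obstruction, and it splits into three parts: extracting from $M$'s prior‑ignorance one efficient trade that $M$ must decline; embedding that trade sparsely into the $\mB$‑buyer market so as to keep first‑order stochastic dominance and $\sbopt=\opt$; and balancing probabilities so that the gains from trade $M$ forgoes on the declined trade outweigh the benefit of the $k$ added buyers.

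The first part is a short budget‑balance argument that uses only prior‑freeness. In the $(1,\mB+k)$‑market, consider profiles of the shape ``one seller at value $s$, one buyer at value $b>s$, all remaining $\mB+k-1$ buyers at value $0$''. If $M$ carried out the efficient trade on \emph{every} such profile, then dominant‑strategy truthfulness would force the seller's critical value on the report vector $(b,0,\dots,0)$ to be exactly $b$, and the high buyer's critical value on $(s,0,\dots,0)$ to be exactly $s$ (neither threshold can exceed the obvious bound without violating individual rationality or causing a deficit against a value‑$0$ buyer); so the high buyer would pay $s$, the seller would receive $b$, and the inert buyers nothing — a deficit of $b-s>0$, contradicting budget balance. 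Hence some pair $0<s_0<b_0$ exists for which $M$ performs no trade on this profile, forgoing gains from trade $c:=b_0-s_0$ there while $\opt$ attains $c$; pushing the same argument with $b$ much larger than $s$ shows one can moreover take $b_0/c$ bounded by an absolute constant.

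For the second and third parts, put $F_B=(1-\mu)\delta_0+\mu\delta_{b_0}$ and $F_S=(1-\mu)\delta_0+\mu\delta_{s_0}$, for a dilution parameter $\mu$ to be chosen. First‑order stochastic dominance holds because the only positive value that can appear on the seller side, namely $s_0$, is strictly below $b_0$ and carries the same mass that $b_0$ carries on the buyer side. And $\sbopt(1,\mB)=\opt(1,\mB)$ holds because this two‑point buyer distribution admits no configuration of two seller values interleaved with two buyer values, so there is no Myerson--Satterthwaite obstruction; concretely the efficient allocation is implemented without deficit by a prior‑tailored posted‑price‑with‑veto mechanism using any reserve strictly between $s_0$ and $b_0$ — exactly the knowledge $M$ lacks. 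One then computes $\opt(1,n)=\bigl(1-(1-\mu)^n\bigr)(b_0-\mu s_0)$, so the marginal value of the $k$ added buyers is at most $k\mu\,b_0\,(1-\mu)^{\mB}$, whereas $M$'s loss is at least $\Pr[\text{declined profile}]\cdot c=(\mB+k)\mu^2(1-\mu)^{\mB+k-1}\,c$. Choosing $\mu$ a suitable constant multiple of $k\,b_0/(c\mB)$ makes the second quantity strictly exceed the first, so $M(1,\mB+k)<\opt(1,\mB)=\sbopt(1,\mB)$; keeping the exponential estimates honest requires only $\mB=\Omega(k^2 b_0/c)=\Omega(k^2)$, which is the source of $N(k)$. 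In particular, taking $k=\lfloor\log_2\mB\rfloor$ (legitimate once $\mB$ exceeds a polynomial in $\log\mB$) recovers the sharper form quoted in \cref{table:bkresults}.

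I expect the real work to be the calibration in the third part: the loss and the benefit differ only by a factor of order $\mB\mu/(k\,b_0/c)$, so one must drive $\mu$ up to roughly $k\,b_0/(c\mB)$, but no higher than consistency of $\mB\mu$ with the approximations $(1-\mu)^{\mB+k}\approx e^{-\mB\mu}$ permits — and this window closes precisely when $k$ grows like $\sqrt{\mB}$, reassuringly consistent with the positive result (\cref{prop:sd-1-r-ub}) that $4\sqrt{\mB}$ added buyers already suffice. Secondary care is needed to bound $M$'s behaviour on profiles with two or more high buyers (these can only add to $M$'s loss, never offset it) and to write the prior‑tailored optimal feasible mechanism of the second part down in full, including its tie‑breaking, to certify $\sbopt(1,\mB)=\opt(1,\mB)$.
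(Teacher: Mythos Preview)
Your approach is correct and genuinely different from the paper's. The paper proves the statement by first establishing it for BTR on the fixed distributions $b\in\{0,2\}$, $s\in\{0,1\}$ each with probability $\tfrac12$, obtaining the condition $2^k<\mB+k+1$ (hence $N=2^{\Theta(k)}$), and then handles an arbitrary anonymous robust deterministic $M$ by a dichotomy: either $M$ never beats BTR on any profile supported on $\{0,1,2\}$, in which case $M$ inherits BTR's failure on those same distributions, or $M$ does beat BTR somewhere, in which case a separate lemma (\cref{obs-all-mech-reg-not-suff-12} generalized, i.e.\ \cref{lem:anon-problem}) manufactures distributions on which $M$ fails badly. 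Your route is more direct: you extract a single ``declined'' profile $(s_0;b_0,0,\dots,0)$ with $b_0/(b_0-s_0)$ bounded straight from budget balance (no detour through BTR), then tune the dilution parameter $\mu$ rather than fixing it at $\tfrac12$. The payoff is quantitative: balancing $k\mu b_0(1-\mu)^{\mB}$ against $(\mB+k)\mu^2(1-\mu)^{\mB+k-1}c$ with $\mu\asymp 1/k$ yields $N=O(k^2)$, i.e.\ an $\Omega(\sqrt{\mB})$ lower bound on the number of added buyers --- which actually matches the $O(\sqrt{\mB})$ upper bound of \cref{prop:sd-1-r-ub} up to constants, going beyond the $\Omega(\log\mB)$ the paper records in \cref{table:bkresults}. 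The paper's reduction-to-BTR strategy, on the other hand, is more modular: the same dichotomy lemma drives several of its other impossibility results, so it buys reusability at the cost of the sharper constant here.
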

For a given fixed $\mB$, though, is adding a finite number of buyers sufficient? (Recall that we have shown in \cref{intro-no-sd-negative} that if the stochastic dominance assumption is dropped, then no finite number of buyers suffices.) Here we give an affirmative answer --- our second main result, once again justifying our focus on BTR: adding an order of $\sqrt{\mB}$ buyers allows BTR in the augmented market to beat the \emph{optimal-yet-infeasible} mechanism in the original market:
\begin{theorem}[See \cref{prop:sd-1-r-ub}]\label{intro-1r-positive}	
	For any number of buyers $\mB$, any seller distribution $F_S$, and any buyer distribution $F_B$ such that $F_B$ stochastically dominates $F_S$, 
	it holds that \[\btr(1,\mB+4\sqrt{\mB})\geq\opt(1,\mB).\]
\end{theorem}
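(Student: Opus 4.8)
The plan is to reduce the claim to a one‑dimensional closed‑form comparison. Set $m:=\lceil 4\sqrt{\mB}\,\rceil$ and $n:=\mB+m$; since with a single seller BTR on $n$ buyers is exactly a second‑price auction with seller veto, if $b_{(1)}\ge b_{(2)}$ denote the top two of the $n$ i.i.d.\ buyer draws and $s\sim F_S$, then $\btr(1,n)=\E\bigl[(b_{(1)}-s)\,\mathbbm{1}[b_{(2)}\ge s]\bigr]$, whereas $\opt(1,\mB)=\E\bigl[(\max_{i\le\mB}b_i-s)^+\bigr]$ depends only on the first $\mB$ buyers. Assuming (as usual) atomless distributions, the layer‑cake identity gives $\opt(1,k)=\int_0^\infty F_S(t)\bigl(1-F_B(t)^k\bigr)\,dt$, and the event on which BTR loses relative to the efficient outcome on the \emph{same} $n$ buyers is precisely ``exactly one buyer lies above $s$,'' so $\opt(1,n)-\btr(1,n)=n\,\E_{s\sim F_S,\,b\sim F_B}\bigl[F_B(s)^{n-1}(b-s)^+\bigr]$. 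I would then write
\[
\btr(1,n)-\opt(1,\mB)=\bigl(\opt(1,n)-\opt(1,\mB)\bigr)-\bigl(\opt(1,n)-\btr(1,n)\bigr),
\]
i.e.\ ``gain from the extra buyers in the efficient benchmark, minus the reduction loss of BTR,'' and show the gain dominates the loss.

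The role of stochastic dominance is to bound the loss term. Put $\phi_n(t):=\E_{s\sim F_S}\bigl[F_B(s)^{n-1}\mathbbm{1}[s\le t]\bigr]$, so that $\opt(1,n)-\btr(1,n)=n\int_0^\infty\phi_n(t)\bigl(1-F_B(t)\bigr)\,dt$. The key estimate is $\phi_n(t)\le F_B(t)^{n-1}\bigl(F_S(t)-\tfrac{n-1}{n}F_B(t)\bigr)$ for every $t$, which follows by applying the defining FSD inequality $\E_{F_S}[\psi]\ge\E_{F_B}[\psi]$ (valid for bounded non‑increasing $\psi$) to $\psi(s):=\bigl(F_B(t)^{n-1}-F_B(s)^{n-1}\bigr)^+$, using $\E_{F_B}[\psi]=\tfrac{n-1}{n}F_B(t)^n$ and $\E_{F_S}[\psi]=F_B(t)^{n-1}F_S(t)-\phi_n(t)$. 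Substituting this into the decomposition above and simplifying, $\btr(1,n)\ge\opt(1,\mB)$ reduces to the pointwise inequality, with $u:=F_B(t)\in[0,1]$ and $v:=F_S(t)\in[0,1]$,
\[
v\Bigl(\textstyle\sum_{j=0}^{m-1}u^j-n\,u^{m-1}\Bigr)+(n-1)\,u^m\;\ge\;0 .
\]

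It remains to verify this inequality, and this is where $m\approx\sqrt{\mB}$ (and the constant $4$) enters: a cruder bound on $\phi_n$ would only force $m=\Theta(\mB)$. If $\sum_{j=0}^{m-1}u^j\ge n\,u^{m-1}$ the inequality is immediate; otherwise its left side is decreasing in $v$, so it suffices to treat $v=1$, where it rearranges to $\sum_{j=0}^{m-2}u^j\ge(n-1)u^{m-1}(1-u)$, equivalently $1-u^{m-1}\ge(n-1)u^{m-1}(1-u)^2$. Bounding $u^{m-1}\le e^{-(m-1)(1-u)}$ and then $e^{y}\ge 1+y+\tfrac{y^2}{2}$ for $y\ge 0$, this is implied by the single arithmetic fact $(m-1)^2\ge 2(n-1)$; with $n=\mB+m$ and $m=\lceil 4\sqrt{\mB}\,\rceil$ it amounts to $14\mB-16\sqrt{\mB}+3\ge 0$, which holds for every integer $\mB\ge 1$, and monotonicity of $(m-1)(m-3)$ in $m$ absorbs the ceiling.

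The step I expect to be the real obstacle is the FSD bound on $\phi_n$: it has to be essentially tight, since losing even a constant factor there degrades $4\sqrt{\mB}$ to $\Theta(\mB)$, so the right choice of $\psi$ together with the exponential estimate in the last step is exactly what buys back the $\sqrt{\mB}$ dependence. As a consistency check, in the i.i.d.\ case $\phi_n(t)=\tfrac1n F_B(t)^n$ exactly, the pointwise inequality becomes $u^{\mB+1}\ge u^{n}$, and taking $m=1$ makes it an equality — recovering $\btr(1,\mB+1)=\opt(1,\mB)$ consistently with \cref{thm:iid}.
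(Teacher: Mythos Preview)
Your argument is correct and takes a genuinely different route from the paper's. The paper (see \cref{lem:1-sqrt-ub}) proves the result by a quantile-coupling argument: it draws $\mB+k+1$ uniform quantiles, uniformly assigns one to the seller and $k$ to the added buyers, and bounds both $\opt(1,\mB+k)-\opt(1,\mB)$ and $\opt(1,\mB+k)-\btr(1,\mB+k)$ by constant multiples of the single quantity $\E_{\vecq}\bigl[b^{q^{(1)}}-b^{q^{(3)}}\bigr]$, arriving at the sufficient condition $k(k-1)\ge 2(\mB+k)$, i.e., $k(k-3)\ge 2\mB$. You instead work purely with layer-cake integrals and reduce everything to a pointwise inequality in $u=F_B(t)$; your key step is the FSD bound $\phi_n(t)\le F_B(t)^{n-1}\bigl(F_S(t)-\tfrac{n-1}{n}F_B(t)\bigr)$, obtained by testing dominance against $\psi(s)=(F_B(t)^{n-1}-F_B(s)^{n-1})^+$, and as you correctly observe this bound must be essentially tight to get $\sqrt{\mB}$ rather than $\Theta(\mB)$. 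The two approaches land on nearly identical arithmetic: your sufficient condition is $(m-1)(m-3)\ge 2\mB$ versus the paper's $k(k-3)\ge 2\mB$. The paper's coupling is closer in spirit to the i.i.d.\ proof of \cref{thm:iid} and makes the use of FSD very concrete (replace $s$ by the buyer value at the same quantile), whereas your analytic route is self-contained, yields an explicit integrand to be checked pointwise, and makes the i.i.d.\ atomless equality $\btr(1,\mB+1)=\opt(1,\mB)$ drop out immediately. One minor caveat: your integral identities use atomlessness, which the paper's coupling does not need; this is easily removed by a standard limiting argument.
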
 

\cref{intro-1r-negative,intro-1r-positive} together imply that the more buyers there are in the original market, while it is still possible to add buyers so that BTR in the augmented mechanism beats the optimal-yet-infeasible mechanism in the original market, the more buyers we need to accomplish this. This is quite striking given that we show that in fact, the more buyers there are in the original market, the better the approximation that BTR guarantees to the gains-from-trade of the optimal-yet-infeasible mechanism without adding any buyers to begin with, and moreover, the guaranteed GFT of BTR approaches the optimal GFT in the limit when $\mB$ grows large:
\begin{theorem}[See \cref{btr-1r-converge}]\label{intro-1r-approx}
\[\left(\inf_{\substack{F_S,F_B~\text{s.t.}\\ F_B~\text{FSD}~F_S}}\frac{\btr(1,\mB)}{\opt(1,\mB)}\right)\xrightarrow[~~\mB\rightarrow\infty~~]{}1.\]
\end{theorem}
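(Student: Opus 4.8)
The plan is to exploit the explicit form that both mechanisms take when there is a single seller. Write $S\sim F_S$ for the seller's value and $b_{(1)}\ge b_{(2)}\ge\dots$ for the order statistics of the $\mB$ i.i.d.\ buyer values. Then $\opt(1,\mB)=\E\bigl[(b_{(1)}-S)^{+}\bigr]$, while BTR conducts the welfare-maximizing trade at price $b_{(2)}$ unless the seller vetoes it (i.e.\ unless $S>b_{(2)}$), so $\btr(1,\mB)=\E\bigl[(b_{(1)}-S)\,\mathbf 1[b_{(2)}\ge S]\bigr]$; subtracting,
\[\opt(1,\mB)-\btr(1,\mB)=\E\bigl[(b_{(1)}-S)\,\mathbf 1[b_{(2)}<S\le b_{(1)}]\bigr].\]
Since $\btr\le\opt$ always (VCG is welfare-optimal), it suffices to show that this loss --- the expected top-buyer surplus on the event that $S$ falls strictly between the top buyer and all the others --- is a fraction of $\opt(1,\mB)$ that tends to $0$ \emph{uniformly} over all pairs with $F_B$ FSD $F_S$.

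First I would condition on $S=s$. Put $p(s):=\Pr_{b\sim F_B}[b\ge s]$, let $N(s)\sim\mathrm{Bin}(\mB,p(s))$ count the buyers with value at least $s$, and write $o(s):=\E[(b_{(1)}-s)^{+}]$ for OPT's conditional surplus and $\ell(s):=\E[(b_{(1)}-s)\,\mathbf 1[N(s)=1]]$ for the conditional loss, so that $\opt(1,\mB)=\int o(s)\,dF_S(s)$ and the loss is $\int \ell(s)\,dF_S(s)$. The key elementary fact is that, conditioned on $N(s)=n\ge1$, the value $\E[b_{(1)}-s\mid N(s)=n]$ equals the mean of the maximum of $n$ i.i.d.\ copies of $(b-s)\mid(b\ge s)$ and is hence non-decreasing in $n$; since $o(s)=\sum_{n\ge1}\Pr[N(s)=n]\,\E[b_{(1)}-s\mid N(s)=n]$ and $\ell(s)=\Pr[N(s)=1]\,\E[b_{(1)}-s\mid N(s)=1]$, this yields the pointwise bound
\[\ell(s)\;\le\;o(s)\cdot\Pr\bigl[N(s)=1\mid N(s)\ge1\bigr]\;=\;o(s)\cdot\frac{\mB\,p(s)(1-p(s))^{\mB-1}}{1-(1-p(s))^{\mB}},\]
together with the trivial bound $\ell(s)\le o(s)$. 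Fix $\eps>0$ and, with $\tau=\tau(\eps)\ge2$ chosen so that $(1+\tau)e^{-\tau/2}\le\eps$, call $s$ \emph{good} if $p(s)\ge\tau/\mB$ and \emph{bad} otherwise. On good $s$, using $1-(1-p)^{\mB}\ge1-e^{-\mB p}\ge\tfrac{\mB p}{1+\mB p}$ (which is just $e^{x}\ge1+x$) and $(1-p)^{\mB-1}\le e^{-\mB p/2}$ for $\mB\ge2$, the bracket above is $\le(1+\mB p)e^{-\mB p/2}\le\eps$, so $\ell(s)\le\eps\,o(s)$. Since $p(\cdot)$ is non-increasing, the bad set is a tail $(s_{\mathrm{bad}},\infty)$ with $s_{\mathrm{bad}}:=\inf\{s:p(s)<\tau/\mB\}$, and since $o(\cdot)$ is non-increasing, on this tail $\ell(s)\le o(s)\le o(s_{\mathrm{bad}})$.

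The last step, where FSD does its work, is the heart of the argument. Integrating the two regime bounds against $F_S$,
\[\opt(1,\mB)-\btr(1,\mB)\;\le\;\eps\!\int o(s)\,dF_S(s)\;+\;o(s_{\mathrm{bad}})\cdot\Pr\nolimits_{F_S}[\,s>s_{\mathrm{bad}}\,]\;=\;\eps\,\opt(1,\mB)+o(s_{\mathrm{bad}})\cdot\Pr\nolimits_{F_S}[\,s>s_{\mathrm{bad}}\,].\]
Now $F_B$ FSD $F_S$ means $F_B(x)\le F_S(x)$ for all $x$. Hence, for every $s'>s_{\mathrm{bad}}$ we have $\Pr_{F_S}[\,s>s'\,]\le\Pr_{F_B}[\,b>s'\,]\le p(s')<\tau/\mB$, and letting $s'\downarrow s_{\mathrm{bad}}$ gives $\Pr_{F_S}[\,s>s_{\mathrm{bad}}\,]\le\tau/\mB$; the very same inequality also forces $\opt(1,\mB)=\int o(s)\,dF_S(s)\ge o(s_{\mathrm{bad}})\,\Pr_{F_S}[\,s\le s_{\mathrm{bad}}\,]\ge o(s_{\mathrm{bad}})(1-\tau/\mB)$ because $o(\cdot)$ is non-increasing. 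Plugging in, $\opt(1,\mB)-\btr(1,\mB)\le\bigl(\eps+\tfrac{\tau}{\mB-\tau}\bigr)\opt(1,\mB)$, i.e.\ $\btr(1,\mB)/\opt(1,\mB)\ge1-\eps-\tfrac{\tau}{\mB-\tau}$ for every admissible pair $F_S,F_B$ and every $\mB>\tau$; taking $\mB\to\infty$ and then $\eps\to0$, and recalling $\btr\le\opt$, yields the claimed limit. (Degenerate pairs with $\opt(1,\mB)=0$ are excluded, finiteness of the expectations above is the standing assumption of the model, and the case $\mB=1$, where ``the next buyer in line'' is undefined, is irrelevant for the $\mB\to\infty$ statement.)

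The step I expect to be the main obstacle is precisely the control of the bad region, and it is exactly what distinguishes this setting from \cref{obs-all-mech-reg-not-suff}: without stochastic dominance $F_S$ could concentrate on values sitting above essentially all of $F_B$, where BTR's seller veto destroys almost all the gains. The leverage FSD provides is twofold and must be made both quantitative and uniform (not merely pointwise in the pair): it simultaneously caps the $F_S$-probability of that region and, because $F_S(x)\ge F_B(x)$ combined with the monotonicity of $o(\cdot)$ forces $o(s_{\mathrm{bad}})$ itself to be comparable to $\opt(1,\mB)$, caps the OPT-weighted contribution of that region. A secondary, more routine worry is the exact behavior of BTR in edge configurations (ties among the top buyers, and the like); the argument above sidesteps these, as on the bad region it uses only that BTR's conditional surplus lies between $0$ and $o(s)$, and on the good region only the Binomial identity. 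An alternative derivation --- from \cref{prop:sd-1-r-ub} together with monotonicity of $\btr(1,\cdot)$ in the number of buyers and the estimate $\opt(1,\mB-\Theta(\sqrt{\mB}))/\opt(1,\mB)\to1$ --- is also plausible, but the direct route above avoids having to establish that monotonicity.
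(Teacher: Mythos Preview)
Your argument is correct, but it follows a genuinely different route from the paper's.

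The paper proves the quantitatively sharper bound $\btr(1,\mB)\ge\frac{\mB-1}{\mB+1}\,\opt(1,\mB)$ via a quantile-coupling trick: draw $\mB+1$ i.i.d.\ uniform quantiles, pick one uniformly at random to be the seller, and observe that the loss event $b^{(1)}>s>b^{(2)}$ forces the seller's quantile to be among the top two (probability $2/(\mB+1)$). On that event the loss is at most $b^{q^{(1)}}-b^{q^{(3)}}$; the same quantity, reweighted to the complementary event $\{i>2\}$ (probability $(\mB-1)/(\mB+1)$), lower-bounds the realised BTR gain, because then the two highest quantiles are buyers and FSD makes $b^{(2)}\ge s$. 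This yields directly $\opt-\btr\le\frac{2}{\mB-1}\btr$.

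Your approach instead conditions on $S=s$, controls the conditional loss by the binomial ratio $\Pr[N(s)=1\mid N(s)\ge1]$ using the monotonicity of $\E[b_{(1)}\mid N=n]$, and handles the troublesome tail $\{p(s)<\tau/\mB\}$ with the two-sided use of FSD you describe. This is sound and arguably more ``hands-on,'' but it produces only the qualitative convergence with an implicit rate $1-\eps-\tau(\eps)/(\mB-\tau(\eps))$ rather than the clean $(\mB-1)/(\mB+1)$. The paper's coupling buys both brevity and a sharp constant by comparing the loss and the gain against the \emph{same} auxiliary random variable $b^{q^{(1)}}-b^{q^{(3)}}$; your proof buys modularity, since the good/bad split would adapt more readily to variants where such a symmetric coupling is unavailable. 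Your closing remark about the alternative derivation via \cref{prop:sd-1-r-ub} is apt: that route would indeed require a uniform estimate on $\opt(1,\mB-\Theta(\sqrt{\mB}))/\opt(1,\mB)$, which the paper does not establish.
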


It is worth emphasizing that the approximation guarantee given in \cref{intro-1r-approx} departs from known approximation guarantees for mechanisms similar to BTR \cite{McAfee92} as it holds with a \emph{single} seller (under the stochastic domination assumption), a setting in which the size of trade does not grow large.

\paragraph{Many sellers stochastically dominated by many buyers:} Finally, we return to the setting of a double-auction
market with arbitrary numbers of buyers and sellers that we first analyzed in \cref{intro-iid}, however with the buyers' values drawn from a distribution that stochastically dominates (rather than equals as in \cref{intro-iid}) that from which the sellers' values are drawn. For this setting, we ask once again whether there is any finite number that is a function of the number of buyers and the number of sellers (but not of the distributions) such that adding so many buyers to the market guarantees that BTR in the augmented market beats the optimal-yet-infeasible mechanism in the original market. We give an affirmative answer, showing that such a finite number exists:

\begin{theorem}[See \cref{thm:sd-m-r-ub}]\label{intro-mr}
	For any number of sellers $\mS$, any number of buyers $\mB$, any seller distribution $F_S$, and any buyer distribution $F_B$ such that 	$F_B$ stochastically dominates $F_S$,
	it holds that
	\[\btr(\mS,\mB+\mS\cdot (\mB+4\sqrt{\mB}))\geq\opt(\mS,\mB).\]
\end{theorem}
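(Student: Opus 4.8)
The plan is to bootstrap the single-seller bound $\btr(1,\mB+4\sqrt{\mB})\geq\opt(1,\mB)$ (\cref{prop:sd-1-r-ub}) up to $\mS$ sellers. Write $m:=\mB+4\sqrt{\mB}$, so the target market has $\mS$ sellers and $\mB+\mS m$ buyers. I would establish two inequalities and chain them with \cref{prop:sd-1-r-ub}: a \emph{splitting} upper bound on the benchmark, $\opt(\mS,\mB)\leq\mS\cdot\opt(1,\mB)$, and a \emph{pooling} lower bound on the mechanism, $\btr(\mS,\mB+\mS m)\geq\mS\cdot\btr(1,m)$; together they give $\btr(\mS,\mB+\mS m)\geq\mS\cdot\btr(1,m)\geq\mS\cdot\opt(1,\mB)\geq\opt(\mS,\mB)$. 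Note that only the invocation of \cref{prop:sd-1-r-ub} uses stochastic dominance; everything else holds for arbitrary $F_S$ and $F_B$.

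The splitting bound is the easy half. On any realization the efficient gains from trade equal $\sum_i(b_{(i)}-s_{(i)})^+$, with the $b_{(i)}$ sorted decreasingly and the $s_{(i)}$ increasingly; bounding $b_{(i)}\le b_{(1)}$ term by term and summing the resulting bounds over all $\mS$ sellers (where the ordering of the sellers is now immaterial) yields, after taking expectations, $\opt(\mS,\mB)\le\sum_{j=1}^{\mS}\E[(b_{(1)}-S_j)^+]$, where $S_j$ denotes the $j$-th seller's value. Since $b_{(1)}$ is the maximum of $\mB$ i.i.d.\ draws from $F_B$ and is independent of $S_j\sim F_S$, each summand equals $\opt(1,\mB)$, which gives the bound.

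For the pooling bound I would couple and use one structural fact about BTR on a fixed realization. Realize the $(\mS,\mB+\mS m)$ market by drawing sellers $S_1,\dots,S_{\mS}$, then for each $j$ an independent block $\mathcal B_j$ of $m$ buyers, and finally $\mB$ further buyers; let $\mathcal M_j$ be the single-seller market $(S_j;\mathcal B_j)$, which is distributed as a genuine $(1,m)$ market, so $\E[\text{BTR on }\mathcal M_j]=\btr(1,m)$. The structural fact is \emph{buyer-monotonicity}: inserting one more buyer into any market can only weakly increase BTR's realized gains from trade. (Proof sketch: inserting a value into the sorted buyer list weakly raises every coordinate of that list, hence weakly raises the efficient trade size, and a short case split on whether the trade-reduction step fires in the two markets finishes it.) By buyer-monotonicity we may discard the $\mB$ extra buyers, and then all buyers other than the two highest of each block, so it suffices to show that BTR run on $\{S_j\}_j$ together with the two highest buyers of each $\mathcal B_j$ --- a market with $\mS$ sellers and $2\mS$ buyers --- has gains at least $\sum_j(\text{BTR on }\mathcal M_j)=:W$. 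This \emph{pooling inequality} is the crux. The pairs that trade in the $\mathcal M_j$'s form a profitable matching in this market, so its \emph{efficient} trade already has gains $\ge W$; the only loss is BTR's single trade-reduction step, which drops the lowest-gain efficient pair. Here the per-seller ``$+4\sqrt{\mB}$'' budget is being spent: a block $\mathcal M_j$ that trades at all supplies \emph{two} buyers exceeding $S_j$, so every active seller comes with a spare high buyer, and since BTR matches the highest remaining buyer to the lowest remaining seller these spares either keep the pooled price above every active seller (no reduction, so BTR attains the full efficient gain $\ge W$) or force the reduced pair's gain to be at most the slack the cross-block matching already created.

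Making the pooling inequality precise is the step I expect to be the main obstacle. I anticipate a finite case analysis: blocks that do not trade contribute $0$ to $W$ and can be set aside; among trading blocks, if BTR's efficient trade size in the pooled market strictly exceeds the number of trading blocks, one gets $\ge W$ immediately from the inequality $\sum_{(b,s)\in\mu}(b-s)\le(\text{sum of top }|\mu|\text{ buyers})-(\text{sum of bottom }|\mu|\text{ sellers})$ applied to the matching $\mu$ that pairs each active seller with its block's top buyer; in the tight case --- efficient trade size equal to the number of trading blocks, with a reduction --- one charges the reduced pair against the spare buyers explicitly, and already the two-block, one-seller-per-block instance exhibits this cancellation. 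Finally, although the theorem asks for a market with $\mB+\mS m$ buyers, the argument above already delivers $\btr(\mS,\mS m)\ge\opt(\mS,\mB)$, so the stated bound (which leaves the original $\mB$ buyers in place and is thus slightly more generous than necessary) follows a fortiori from buyer-monotonicity.
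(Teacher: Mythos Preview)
Your overall architecture is exactly the paper's: split $\opt$ across sellers, apply the single-seller bound, then pool the $\mS$ single-seller BTR markets into one. Your splitting bound $\opt(\mS,\mB)\le\mS\cdot\opt(1,\mB)$ is correct and is essentially the paper's \cref{lem:opt-m-1}. The gap is entirely in the pooling step, which you yourself flag as ``the main obstacle.''

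The paper handles the pooling step via a single clean lemma (\cref{lem:btr-m-sum-unified}): for \emph{any} realization and any partition of the agents into sub-markets, $\sum_i\btr(\vecs_i,\vecb_i)\le\btr(\vecs,\vecb)$. The proof is a short pointwise argument, not a case enumeration over trade sizes. If no trade is reduced in the unified market the inequality is immediate from optimality. If a trade \emph{is} reduced, the paper observes (via \cref{lem:reduce}) that the $(m{+}1)$st highest value in the unified market is a seller $\tilde s$; by tracking whether $\tilde s$ trades in her small market, one locates a buyer $\tilde b$ among the top $m$ unified values who does \emph{not} trade in her small market. Then the post-BTR unified welfare is the top $m$ values after removing the reduced buyer $b^-$, while the sum of post-BTR small-market welfares is some $m$ values after removing $\tilde b\ge b^-$, which yields the inequality directly.

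Your route---first invoke buyer-monotonicity, then strip each block down to its top two buyers, then attempt a case analysis on whether the pooled efficient trade size exceeds the number of active blocks---is more roundabout and, as written, is not a proof: the ``tight case'' where you ``charge the reduced pair against the spare buyers explicitly'' is asserted rather than carried out, and the two-block illustration does not substitute for the general argument. Note also that once you strip to the top two buyers per block, what you are trying to prove is precisely the special case $m_i=1$, $r_i=2$ of the paper's pooling lemma; so your detour buys nothing, and you still owe the same inequality. I would recommend abandoning the reduction and proving the pointwise pooling inequality directly as the paper does---the $\tilde b$ argument is short and avoids all the case splitting you anticipate. (Your final remark that $\btr(\mS,\mS m)\ge\opt(\mS,\mB)$ already suffices, with the extra $\mB$ buyers being slack, is correct and matches the paper's actual statement of \cref{thm:sd-m-r-ub}.)
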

The proof of \cref{intro-mr} is based on our result for a single seller (\cref{intro-1r-positive}), combined with some new observations, which may also be of independent interest.

The upper bound on the required number of added buyers that we present in \cref{intro-mr} is rather high, and, as noted above this theorem, should first and foremost be viewed as a qualitative result --- that some \emph{finite} number of additional buyers suffices uniformly over all distributions (given stochastic domination).  It may also be viewed as the first step in quantifying the number of added buyers that is necessary and 
sufficient to beat the optimum for any pair of distributions (under stochastic dominance). We thus leave the problem of lowering this upper bound and getting tight quantitative results as our main open problem:

\begin{open-problem}
Tighten our understanding of the necessary and sufficient number of added buyers,
\[
k(\mS,\mB)=\min\bigl\{k\in\mathbb{N}~\big|~\forall F_S,F_B~\text{s.t.}~F_B~\text{FSD}~F_S:\btr(\mS,\mB+k)\ge\opt(\mS,\mB)\bigr\}.
\]
(\cref{intro-mr} shows that the set on the right-hand side is nonempty and therefore that $k(\mS,\mB)$ is a well defined finite number.)
\end{open-problem}

\noindent
The main BK-style results of this paper are summarized in \cref{table:bkresults} in the beginning of this section and the sampling results are summarized in \cref{table:sampleresults} there. Proofs are relegated to the appendix.

\subsection{Additional Related Work} \label{sec:related-wrok}

\paragraph{Bulow-Klemperer (BK) style results:} There are many examples of BK-style results for revenue in one-sided markets.  Results in single-dimensional settings include \citeN{HartlineR09,DRS09}.  \citet{HartlineR09} show that for non-identical regular distributions, recruiting one extra bidder per distribution guarantees a 3-approximation for downward-closed feasibility constraints.
\citet{sivan2013vickrey} extend this to non-identical irregular distributions, deconstructing each irregular distribution into a convex combination of regular distributions, and adding an additional bidder from each underlying regular distribution.  Running VCG with these additional bidders achieves a 2-approximation.  Recently, \citet{fu2019vickrey} show that in the non-identical regular setting, with a small amount of information about each distribution (e.g., a sample, or the median), one can then pick a single distribution to duplicate and run VCG with to achieve a $10$-approximation to the optimal revenue in the original setting.
Some recent work present results for a single item in a dynamic setting \cite{LiuP16}.  A recent trend has been extending the BK result to multi-dimensional one-sided auction settings, where a seller auctions multiple heterogeneous items to multiple buyers.  The first result of this form was by \citeN{RoughgardenTY12} for unit-demand bidders over independent, regular items.  The work by \citeN{EdenFFTW16b} found the first such result that beats the optimal revenue
for the additive setting, while  \citeN{FFR18} improve the bounds on the number of added buyers drastically, but they only recover a $(1-\vareps)$-fraction of the optimal revenue. Very recent work of \citeN{BW19} match these bounds in the small market regime and drastically improve the bounds of \citeN{EdenFFTW16b} in the large market regime, even though the \cite{BW19} results are for precise coverage of the optimal revenue.
While all the above papers presents BK-style results for revenue maximization in \emph{one-sided} markets, our work presents BK-style results for welfare (and gains from trade) in \emph{two-sided} markets, which to the best of our knowledge was not suggested or studied in any prior paper.

\paragraph{Approximation Mechanisms:} A prominent approach in the Algorithmic Game Theory community to circumvent the \citeN{MyersonS83} impossibility result is to aim for welfare or GFT \emph{approximation} by truthful mechanisms.  
Several recent papers \cite{BlumrosenD16, BlumrosenM16, BrustleCFM17,BCGZ18,Colini-Baldeschi16, Colini-Baldeschi17, Colini-BaldeschiGKSRT17} have indeed taken this approach of designing mechanisms that obtain welfare or GFT approximations in both bilateral trade settings and in generalizations of it. Yet, in sharp contrast to our paper, the mechanisms in all of these papers are tailored to the prior distributions (as a robust mechanism must be a posted-price mechanism and thus cannot given any approximation guarantee), while we design robust mechanisms that have no access to the distributions yet are able to beat the optimal welfare. We are able to obtain this at  the cost of assuming that we can recruit some additional buyers from the buyer distribution (in the spirit of BK).

\paragraph{Trade Reduction (TR) Mechanisms:} The seminal work of \citeN{McAfee92} introduced a double-auction mechanism that is IR, truthful, and budget-balanced, and has high efficiency in markets with large trade size. That mechanism and its Trade Reduction variant (in which one trade is always reduced) were generalized to various domains, including supply chains \cite{BabaioffN04,BabaioffW05},  spatially-distributed markets \cite{BabaioffNP09} and matching markets \cite{BCGZ18}. \citeN{DuttingRT14} presented a modular approach to the design of double 
 auctions and used the Trade Reduction mechanism within it. \citeN{Bredin05} have generalized the Trade Reduction mechanism to dynamic settings.  

\paragraph{Non-Truthful Double Auctions:} 
Classic work in economics \cite{SatterthwaiteW89, RustichiniSW94, SatterthwaiteW2002} has considered the equilibrium outcome of non-truthful mechanisms.  
These papers studied the rate of convergence to efficiency in any Bayesian equilibrium of non-truthful double-auction mechanisms\footnote{The mechanisms considered  set a clearing price in the interval of prices that would result in efficient trade with respect to the reports.} when distributions are fixed and the market grows large. 
In contrast, we are interested in truthful mechanisms and in obtaining results even for small markets.

\paragraph{Prior-Independent Mechanisms:} 
The assumption that the mechanism designer knows the distributions from which values are sampled is a strong one and commonly not realistic.  
The Wilson doctrine \cite{Wilson85} advocates mechanisms that use as little information as possible about the priors. 
The prior-independent framework assumes that a prior exists, yet is not known to the designer, and measures the performance of a mechanism with respect to the (unknown) prior.  
Following early results on prior-independent mechanisms such as \citeN{segal2003optimal}, the systematic study of such mechanisms was initiated by \citeN{HartlineR09}. For results on prior-independent mechanisms see, for example, Chapter 5 of \citeN{HartlineBook}, as well as \citeN{devanur2011prior}. All of the above results are for monopolistic revenue maximization.
A ``hybrid'' approach still very much in line with the Wilson doctrine is to restrict the dependence of the auction mechanism on the full details of the prior, by having it depend only on certain statistical measures of the valuation distribution, such as its mean, its variance, or its median. The most relevant to our paper is the work of \citeN{McAfee08} (see discussions in \cref{sec-11}). For revenue-maximization results with this approach, see \citeN{AzarDMW13,AzarM13}. The mechanism we use (BTR) is completely prior-independent and our results are squarely within the prior-independent framework. 

\paragraph{Prior-Independent Double Auctions:} 
\citet{deshmukh2002truthful} study \emph{prior-free} profit maximization by an  intermediary in a double auction, where no underlying prior is even assumed to exist.  
They reduce this problem to prior-free profit maximization by a monopolist seller, and also extend the techniques known for that setting directly to the double auction setting.  
Following this, \citet{baliga2003market} study prior-independent profit-maximization by an intermediary between a number of i.i.d.\ buyers and i.i.d.\ sellers.  Asymptotically as the market size grows to infinity, a sampling approach gives an estimate of the distributions and can be used to maximize profit in the limit, while still maintaining incentive compatibility.

\paragraph{Single-Sample-Based Mechanisms:} 
Relaxing the assumption that no information at all is available about the prior, but still requiring less information than precise knowledge of statistical measures such as its mean, variance, or median, recent papers on revenue maximization in one-sided markets have considered using one sample (or more) from the buyers distribution by the mechanisms. \citeN{DhangwatnotaiRY10} initiated this literature, showing that in a single-item environment, a single sample from a regular distribution (that is used as the price) is enough to get a 2-approximation to the optimal revenue.  They also established the connection between BK results and single-sample results. \citeN{HuangMR15} proved that the approximation of \citeN{DhangwatnotaiRY10} is optimal for deterministic mechanisms, and improved it for MHR distributions. \citeN{FuILS15} show that the approximation of $2$ for regular distributions can be improved by using randomization. \citeN{goldner2016prior} build on \citeN{DhangwatnotaiRY10} to give constant-approximation single-sample results for additive multi-item environments. \citeN{azar2014prophet} study prophet inequalities under a single-sample setting. \citeN{BGMM18} show that the approximation of $2$ can be improved if two samples, rather than one, are used. Papers at the interface of algorithmic mechanism design and learning theory that use polynomially many samples for revenue approximation in single- and multi-dimensional environments include \citeN{cr14,mr15,mr16,dhp16,rs16,ht16,bsv16,bsv17,s17,gn17,cai2017learning,abgmmy17,gw18}.
As far as we know, all prior such results are for monopolist revenue maximization, while our paper is the first to show that samples can be useful in approximating welfare (and GFT) in two-sided markets.
\section{Preliminaries}

\subsection{Model}

\paragraph{Setting:}
We consider double-auction market settings with buyers and sellers of identical indivisible items. 
Let $S$ denote the set of sellers, and $B$ the set of buyers. We denote the number of sellers by $\mS$ and the number of buyers by $\mB$.
Each seller $i\in S$ can sell one item, has a value $s_i\geq 0 $ for keeping the item, and has a value of $0$ if selling the item. Each buyer $i\in B$ is interested in one item, has value $b_i \geq 0$ for obtaining an item, and has a value of $0$ if not obtaining any item. 
The value $s_i$ of seller $i$ is sampled independently from a distribution $F_S$, and the value $b_i$ of buyer  $i$ is sampled independently from a distribution $F_B$. 
In Section~\ref{sec:iid} we will be interested in the special case where $F_S=F_B$ (i.e., the valuations of the sellers and the buyers are sampled from the same distribution), the \emph{i.i.d.\ case}.

We denote the vector of realized buyers values by $\vecb$, and the vector of realized sellers values by $\vecs$. 
Throughout this paper we break ties in favor of buyers (so if there is a tie between a buyer value and a seller value, the buyer will be considered as having the higher value) and then lexicographically by ID. 
Let $b^{(j)}$ be the value of the $j$\textsuperscript{th} \emph{highest}-order statistic out of $\vecb$, and we use $s^{(j)}$ to denote the value of the $j$\textsuperscript{th} \emph{lowest}-order statistic out of $\vecs$.
We use $x^{(j)}$ to denote the value of the $j$\textsuperscript{th} highest-order statistic out of the union of $\vecb$ and $\vecs$. 
We sometimes abuse notation and use the same notation to also refer to the agent and not only to her value.  When it is not clear what market the order statistics come from, we denote the number of sellers and buyers in the market by writing the order statistic as $\X{j}{\mS}{\mB}$.

\paragraph{Allocations, Welfare, and Gains from Trade:} An \emph{allocation} in a market with $\mS$ sellers and $\mB$ buyers is a specification of $\mS$ agents who hold/receive items.
We say that a buyer who holds an item \emph{trades} in the allocation, and similarly, a seller who does not hold an item \emph{trades} in the allocation. The \emph{trade size} in an allocation is the number of buyers who trade in it (equivalently, the number of sellers who trade in it).
The \emph{(social) welfare} of an allocation is the sum of the values of the $\mS$ agents who hold/receive items. Thus, the \emph{pre-trade welfare} --- the welfare of the \emph{pre-trade allocation} in which each seller holds her item --- is $\sum_{i\in S} s_i$. The \emph{optimal (post-trade) welfare} is therefore $\sum_{j=1}^{\mS} x^{(j)}$. The gains from trade (GFT) of an allocation is the difference between the welfare of that allocation and the pre-trade welfare.
The \emph{optimal gains from trade (optimal GFT)}, denoted by $\opt(\vecb,\vecs)$, therefore equals
$\opt(\vecb,\vecs)= \sum_{j=1}^{\mS} x^{(j)}- \sum_{i\in S} s_i$. 
The \emph{optimal trade size} $q=q(\vecs, \vecb)$
is defined as the trade size in the maximal-trade-size allocation that gives the optimal welfare, and therefore equals the maximal number of buyers such that each of the $q$ highest-value buyers has a value that is at least as high as each of the $q$ lowest value sellers. Equivalently, the optimal trade size is the number of buyers among $x^{(1)},\ldots,x^{(\mS)}$.

\paragraph{Mechanisms:}
We assume that each agent
has a \emph{quasi-linear utility function} --- her utility equals her value for the item that she holds (or $0$ if she holds no item) minus the \emph{payment} that she pays (which is negative if she is in fact paid) --- and that each agent aims to maximize her own utility. 
In this paper we consider only dominant-strategy incentive compatible mechanisms, so by the revelation principle we can, without loss of generality, restrict ourselves to 
direct-revelation mechanisms. 
We focus on \emph{deterministic} mechanisms.\footnote{One could also consider \emph{randomized} mechanisms. 
	All the mechanisms we discuss will be deterministic and for our positive results we will achieve our goal of competing with any mechanism, even those that are not restricted to be deterministic. Our lower bounds are proven only for deterministic mechanisms and we leave open the question of the power of randomized mechanisms.}  
Such a mechanism defines an allocation and a payment for each agent, as functions of the (reported) agents values and of the distributions $F_S$ and $F_B$.
For any mechanism $M$, we use the notation $M(\mS, \mB)$ to denote the expect GFT of $M$ when all agents are truthful, keeping the distributions $F_S$ and $F_B$, over which the expectation is taken, implicit in the notation as they will be clear from the context.

A mechanism is \emph{truthful} (or \emph{incentive compatible (IC)}) if it is a dominant strategy for every agent to report her valuation truthfully.\footnote{As all mechanism we consider are truthful, we always assume that the reported values are the true values, and consider the performance of truthful mechanisms under truthful reporting. } 
A mechanism is \emph{individually rational (IR)} if every agent, when truthful, ends up with nonnegative utility at every outcome of the mechanism.
A mechanism is \emph{(weakly) budget-balanced (BB)} if the sum of all payments from the agents to the mechanism is nonnegative at every outcome of the mechanism.
A mechanism is \emph{feasible} if it is individually rational, truthful, and budget-balanced.
A mechanism is \emph{optimal}
(or \emph{efficient}, or \emph{welfare maximizing}) if it maximizes the welfare (and thus the gains from trade) for every realization of the agents values.
A mechanism is \emph{prior-independent} if its allocation and payment depend only on the reported values and not on $F_S$ or $F_B$.
A mechanism is \emph{robust} if it is feasible and prior-independent.
A truthful mechanism is \emph{anonymous} if applying a permutation to the values of all agents, such that buyer values remain buyer values and seller values remain seller values, causes the same permutation to be applied to the agents' critical values for winning.
In particular, the only property of anonymous mechanisms that we will use is that if no agent trades in a given anonymous mechanism for some value profile,
then if we permute the buyers values among the buyers,
and permute the sellers values among the sellers, 
still no agent trades in this mechanism for the resulting value profile.

\subsection{Specific Mechanisms}

We now turn to define a few specific mechanisms that will be of special interest in our analysis. In each of these mechanisms, agents who do not trade pay 0. 

\paragraph{The optimal-yet-infeasible VCG mechanism:} The Vickrey-Clarke-Groves (VCG) mechanism
is a mechanism that first computes the allocation that maximizes the GFT for the realized values (recall that this GFT is denoted $\opt(\vecs,\vecb)$),  and then charges trading agents their critical value for trading, with non-trading agents paying 0. 
This mechanism is deterministic, IR, truthful, optimal, prior-independent, and anonymous, yet it has a budget deficit\footnote{As an illustration, consider a seller with value $2$ and a buyer with value $3$. To maximize welfare, the critical value for the buyer to trade is $2$, so she should pay $2$, while the critical value for the seller to trade is $3$, so she should receive $3$. Hence, the VCG mechanism runs a deficit of $1$!} and is therefore not feasible and not robust. By slight abuse of notation, we also denote this optimal-yet-infeasible mechanism by $\opt$. (So $\opt(\mS,\mB)$ denotes the expected GFT of the optimal-yet-infeasible VCG mechanism in a market with $\mS$ sellers whose values are drawn from $F_S$ and $\mB$ buyers whose values are drawn from $F_B$.)

\paragraph{The optimal feasible mechanism:} As noted in the introduction, the seminal impossibility result of
\citeN{MyersonS83} implies, in particular,  that for bilateral-trade, that is, for the setting where a single seller ($\mS=1$) wishes to sell a single item to a single buyer ($\mB=1$), there is no mechanism that is IR, truthful, BB, and optimal.\footnote{The actual result is in fact stronger and holds when significantly relaxing the discussed properties as long as IR holds at least interim. For example, it holds for Bayesian IC and not only for dominant strategies IC, and for ex-ante BB and not only ex-post BB. }
The mechanism that maximizes the expected GFT subject to these constraints (of being IR, truthful, BB) is called the \emph{optimal feasible} mechanism (it is also common to refer to this mechanism as the \emph{second-best} mechanism). As noted in the introduction, such a mechanism is known to be very complex even for bilateral trade, and the specification of the mechanism requires detailed information about the exact value distributions (so it is not prior-independent, and therefore not robust).
We denote this feasible-yet-not-robust mechanism by $\sbopt$. (So $\sbopt(\mS,\mB)$ denotes the expected GFT of the optimal feasible mechanism in a market with $\mS$ sellers whose values are drawn from $F_S$ and $\mB$ buyers whose values are drawn from $F_B$.)

\paragraph{The BTR mechanism:}
We next describe the \emph{Buyer Trade Reduction (BTR)} mechanism, which is a variant of the celebrated
mechanism of \citeN{McAfee92}.
Given a realization of the  values $(\vecs,\vecb)$, this mechanism computes $q = q(\vecs, \vecb)$, the size of the optimal trade, and checks whether $b^{(q+1)}\geq s^{(q)}$. If so, then the $q$  sellers with lowest values trade with the $q$ buyers with highest values at the price $b^{(q+1)}$; that is, for every $1\le i\le q$, seller $s^{(i)}$ trades with buyer $b^{(i)}$ for a price of $b^{(q+1)}$. (Recall that by definition, in this case $b^{(i)}\ge b^{(q+1)}\ge s^{(q)}\ge s^{(i)}$, so the mechanism is IR and no deficit arises.) Otherwise, i.e., if $b^{(q+1)}<s^{(q)}$, then the $q\!-\!1$ sellers with lowest values trade with the $q\!-\!1$ buyers with the highest values, where each of these $q\!-\!1$ sellers receives a payment of $s^{(q)}$ and each of these $q\!-\!1$ buyers pays $b^{(q)}$. (Recall that by definition of $q$, we have that $b^{(q)}\ge s^{(q)}$, so no deficit arises.)
While this mechanisms has high efficiency when the trade size in the efficient trade is large, it does not give any guarantee regarding the achieved efficiency when this trade size is very small.\footnote{This is the case when $q=1$, implying that the GFT of BTR is 0 while the optimal GFT might be positive. It happens, in particular, in the bilateral trade setting, in which it is always the case that $q\leq 1$, so there is never trade under BTR, and the GFT obtained is always 0. The GFT of BTR is also 0 in settings with more agents, when $q=1$ and the second-highest-value buyer has lower value than the lowest-value seller.} We denote this anonymous and robust\footnote{See \cref{sec:btr-ic} for a proof of the truthfulness of BTR.} deterministic mechanism by $\btr$. (So $\btr(\mS,\mB)$ denotes the expected GFT of the BTR mechanism in a market with $\mS$ sellers whose values are drawn from $F_S$ and $\mB$ buyers whose values are drawn from $F_B$.)

When there is only one seller, the BTR mechanism can be even more naturally described as follows: 
the mechanism runs a second price auction with no reserve price between the buyers, and if the price determined by the auction is acceptable to (i.e., at least the value of) the seller, then trade happens between the seller and the winning buyer of the auction (i.e., the highest-value buyer), at the price determined by the auction (i.e., at the second-highest buyer value as the price). Note that this indeed coincides with BTR for the case of a single seller: in BTR there is trade between the highest-value buyer and the seller, priced at the second-highest buyer value, if and only if both of these agents accept this price, i.e., if and only if the second-highest buyer value is at least the seller value (by definition, the highest buyer value is at least the second-highest buyer value, so the highest-value buyer always accepts this price).

\subsection{Additional Notation}\label{sec:prelim-quantiles}

\paragraph{Quantiles and Stochastic Dominance:}
For a distribution $F$ and \emph{quantile} $q \in [0,1]$, we define the \emph{value} corresponding to the quantile $q$ as the unique value
\[v_F(q) = \inf\bigl\{v \mid \Pr_{w\sim F}[w \leq v] \geq q\bigr\}.\]
Note that for any distribution, $v_F$ is monotone nondecreasing, that is, higher quantiles correspond to (weakly) higher values.
It is well known that for any distribution $F$, drawing a quantile uniformly at random from $(0,1)$ and then taking the value that corresponds to that quantile results in a value distributed according to $F$.
A distribution $F_B$ is said to \emph{first-order stochastically dominate} (\emph{FSD}, or simply \emph{stochastically dominate}) a distribution $F_S$, denoted by $F_B~\text{FSD}~F_S$, if $v_{F_B}(q) \geq v_{F_S}(q)$ for every $q\in[0,1]$.
\section{Independently and Identically Distributed Buyers and Sellers}\label{sec:iid}

\subsection{A General Positive BK-Style Result}

We start with our first main positive result, showing that if the values of all agents (buyers and sellers) are sampled i.i.d.\ from the same distribution, then adding a single buyer is enough: in expectation, the welfare (resp.\ GFT) of the anonymous and robust Buyer Trade Reduction mechanism with one more buyer is at least as high as the expected optimum welfare (resp.\ GFT), achievable by the optimal-yet-infeasible (not budget balanced) VCG mechanism.

\begin{theorem} \label{thm:iid}
Consider any setting with $\mS$ sellers and $\mB$ buyers whose values are sampled identically and independently from some distribution $F$ (that is, $F=F_S=F_B$).
Then the Buyer Trade Reduction (BTR) mechanism with one additional buyer has expected GFT that is at least as large as the expected optimal GFT before adding the extra buyer. \\
That is, for any setting with i.i.d.\ buyers and sellers it holds that  \[\btr(\mS,\mB+1)\geq \opt(\mS,\mB).\]
\end{theorem}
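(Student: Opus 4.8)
The plan is to exploit the i.i.d.\ assumption to rewrite the two sides over a common pool of draws and compare them coefficient by coefficient. Two realization-wise observations get us started. First, as recorded in the preliminaries, the optimal welfare of any market is the sum of the $\mS$ largest values in the pooled multiset of all buyer and seller values, so $\opt(\vecs,\vecb)=\sum_{j=1}^{\mS}x^{(j)}-\sum_{i\in S}s_i$. Second, since payments do not affect gains from trade, the GFT of BTR on the augmented $(\mS,\mB+1)$ market equals the optimal GFT of that market minus a \emph{reduction loss}: writing $q_1$ for the optimal trade size with $\mB+1$ buyers, BTR loses exactly $b^{(q_1)}-s^{(q_1)}$ on the event that it reduces (i.e., $b^{(q_1+1)}<s^{(q_1)}$) and loses nothing otherwise. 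Hence $\btr(\mS,\mB+1)=\opt(\mS,\mB+1)-\E[\text{reduction loss}]$, and it suffices to prove
\[\E[\text{reduction loss}]\ \le\ \opt(\mS,\mB+1)-\opt(\mS,\mB).\]

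To evaluate both sides I would pass to a \emph{random-role} description, which is the only place $F_S=F_B$ is used: drawing $n:=\mS+\mB+1$ i.i.d.\ values and designating a uniformly random $\mS$-subset of them as the sellers (and the rest as buyers) reproduces the augmented market exactly, and additionally deleting a uniformly random buyer reproduces the original market exactly. Write $x^{(1)}\ge\dots\ge x^{(n)}$ for the pooled order statistics and $\Delta_j:=x^{(j)}-x^{(j+1)}\ge0$. For the right-hand side, subtracting the (common) expected pre-trade welfares and applying the standard order-statistic identity --- the $j$-th largest of $n$ i.i.d.\ draws exceeds the $j$-th largest of $n-1$ i.i.d.\ draws by $\tfrac{j}{n}\,\E[\Delta_j]$ in expectation --- gives $\opt(\mS,\mB+1)-\opt(\mS,\mB)=\sum_{j=1}^{\mS}\tfrac{j}{n}\,\E[\Delta_j]$. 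For the left-hand side, one checks from the definition of $q_1$ that BTR reduces precisely when $x^{(\mS+1)}$ is a seller, in which case the reduced pair is $(b^{(q_1)},s^{(q_1)})=(x^{(m^\ast)},x^{(\mS+1)})$, where $m^\ast\le\mS$ is the rank of the lowest-ranked buyer among the top $\mS$ values; thus the reduction loss is $\sum_{j=m^\ast}^{\mS}\Delta_j$, and since the role assignment is independent of the values, $\E[\text{reduction loss}]=\sum_{j=1}^{\mS}p_j\,\E[\Delta_j]$ with $p_j=\Pr[\text{ranks }j{+}1,\dots,\mS{+}1\text{ are all sellers}]=\prod_{i=0}^{\mS-j}\tfrac{\mS-i}{\,n-i\,}$.

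Comparing the two expressions term by term (legitimate because $\E[\Delta_j]\ge0$), the whole statement reduces to the elementary inequality $p_j\le\tfrac{j}{n}$ for each $j\in\{1,\dots,\mS\}$. With $k:=\mS-j$ this is equivalent, after cancelling the common factor $\mS-k$, to $\mS(\mS-1)\cdots(\mS-k+1)\le(n-1)(n-2)\cdots(n-k)$, which holds factor by factor since $n=\mS+\mB+1\ge\mS+1$ forces $n-i\ge\mS-i+1$. Summing then yields $\E[\text{reduction loss}]\le\opt(\mS,\mB+1)-\opt(\mS,\mB)$, i.e., $\btr(\mS,\mB+1)\ge\opt(\mS,\mB)$.

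I expect the bulk of the work to lie not in any single step but in the combined-pool bookkeeping: verifying the clean descriptions of $q_1$, of exactly when a reduction occurs, and of the reduced pair purely in terms of the pooled order statistics and the random role assignment, and confirming that the role assignment is indeed independent of the pooled order statistics. Two technical points also deserve care: ties among values --- handled by the buyer-favoring tie-break, which is precisely what makes ``reduction $\iff x^{(\mS+1)}$ is a seller'' hold on the nose --- and distributions of infinite expectation, handled by truncating $F$ at $T\to\infty$ (monotone convergence for $\opt$ together with an upper-semicontinuity and uniform-integrability argument for $\btr$).
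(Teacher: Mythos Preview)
Your proposal is correct and follows essentially the same approach as the paper: both arguments use the identical random-role coupling (draw $n=\mS+\mB+1$ i.i.d.\ values, assign roles uniformly), both reduce the claim to $\opt(\mS,\mB+1)-\opt(\mS,\mB)\ge\opt(\mS,\mB+1)-\btr(\mS,\mB+1)$, and both compare the two sides conditionally on the pooled order statistics using \cref{lem:reduce}. The only cosmetic difference is in how the final comparison is written out---the paper observes directly that a uniformly random top-$\mS$ value dominates the minimum of $q$ such values, whereas you telescope both sides into $\sum_j c_j\,\E[\Delta_j]$ and verify $p_j\le j/n$ coefficientwise; these are two phrasings of the same inequality (one recovers yours from the paper's by expanding $x^{(i)}-x^{(\mS+1)}=\sum_{\ell=i}^{\mS}\Delta_\ell$).
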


We stress that \cref{thm:iid} places no assumptions whatsoever on the distribution $F$, and in particular, does \emph{not} assume the regularity condition. This is in contrast to the celebrated result of \citeN{BK} that shows that for buyers with values sampled i.i.d.\ from a \emph{regular} distribution and a seller with no cost,  
the \emph{revenue} of a second-price auction (with no reserve) with one additional buyer independently sampled from the same distribution is at least as high the revenue of the optimal \citeN{Myerson81} mechanism before adding that buyer. That result indeed does not hold for some distributions that are not regular, while our result does. While the results are analogous (for different settings and different objectives, of course), the proofs of the two results are very different and do not seem to be related.

Many of our proofs regarding BTR (including the proof of \cref{thm:iid}) use a characterization of the GFT of BTR compared to the optimum captured by the next lemma. It essentially says (and when there are no ties, precisely says) that the only realizations for which BTR does not achieve the optimal welfare (equivalently, does not achieve the optimal GFT), are those in which a seller has the $(\mS\!+\!1)^{\text{st}}$ realized highest value. 

\begin{lemma}
\label{lem:reduce}
In a double-auction market with $\mS$ sellers and $\mB$ buyers, the Buyer Trade Reduction (BTR) mechanism achieves optimal GFT if and only if some buyer's value equals $x^{(\mS+1)}$, the $(\mS\!+\!1)^{\text{st}}$ realized highest value.\footnote{We emphasize that in the case of ties between values, the condition is that one of the agents whose realized values equal the $(\mS\!+\!1)^{\text{st}}$ realized highest value is a buyer.} Furthermore, if BTR does not achieve optimal GFT, then the efficient trade size $q$ is positive, and the GFT of BTR are lower than optimal by $b^{(q)}-x^{(\mS+1)}$.
\end{lemma}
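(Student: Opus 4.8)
The plan is to phrase everything in terms of the efficient trade size $q=q(\vecs,\vecb)$ and the order statistics $b^{(q)},b^{(q+1)},s^{(q)}$, reducing both ``BTR achieves the optimal GFT'' and ``some buyer attains $x^{(\mS+1)}$'' to elementary statements about these. The first step is to pin down the efficient allocation and the identity of $x^{(\mS+1)}$. By definition of $q$ we have $b^{(q)}\ge s^{(q)}$, while by maximality of $q$ (together with the convention that ties are broken in favor of buyers) we have $b^{(q+1)}<s^{(q+1)}$ whenever the latter is defined. Combined with the monotonicities $b^{(1)}\ge\cdots\ge b^{(\mB)}$ and $s^{(1)}\le\cdots\le s^{(\mS)}$, these inequalities show that every value in the size-$\mS$ multiset $\{b^{(1)},\dots,b^{(q)}\}\cup\{s^{(q+1)},\dots,s^{(\mS)}\}$ is at least as large as every value in its complementary multiset $\{b^{(q+1)},\dots,b^{(\mB)}\}\cup\{s^{(1)},\dots,s^{(q)}\}$; hence the former multiset equals $\{x^{(1)},\dots,x^{(\mS)}\}$ and the latter equals $\{x^{(\mS+1)},\dots,x^{(\mS+\mB)}\}$. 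In particular $x^{(\mS+1)}=\max\{b^{(q+1)},s^{(q)}\}$ (with the natural conventions in the degenerate cases $q\in\{0,\mB\}$ where one of the two is undefined; e.g.\ for $q=0$ one gets $x^{(\mS+1)}=b^{(1)}$, a buyer value), and moreover $\opt(\vecb,\vecs)=\sum_{i=1}^{q}\bigl(b^{(i)}-s^{(i)}\bigr)$, which drops straight out of this decomposition.

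The second step reads off the GFT of BTR and compares. Straight from the definition of BTR: if $b^{(q+1)}\ge s^{(q)}$ then BTR performs the full size-$q$ efficient trade and its GFT equals $\opt(\vecb,\vecs)$; otherwise $q\ge1$ and BTR performs the size-$(q\!-\!1)$ trade (buyers $b^{(1)},\dots,b^{(q-1)}$ receive items and sellers $s^{(1)},\dots,s^{(q-1)}$ sell), whose GFT is $\sum_{i=1}^{q-1}(b^{(i)}-s^{(i)})=\opt(\vecb,\vecs)-\bigl(b^{(q)}-s^{(q)}\bigr)$. Thus BTR's shortfall from the optimum is $0$ when $b^{(q+1)}\ge s^{(q)}$ and the nonnegative quantity $b^{(q)}-s^{(q)}$ otherwise. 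Now I compare case by case with the first step. If $b^{(q+1)}\ge s^{(q)}$: BTR is optimal and $x^{(\mS+1)}=b^{(q+1)}$ is a buyer's value, so both sides of the claimed equivalence hold. If $b^{(q+1)}<s^{(q)}$: then $x^{(\mS+1)}=s^{(q)}$, so BTR's shortfall is $b^{(q)}-s^{(q)}=b^{(q)}-x^{(\mS+1)}$; and since every buyer value is either some $b^{(j)}$ with $j\le q$ (hence $\ge b^{(q)}\ge s^{(q)}$) or some $b^{(j)}$ with $j\ge q+1$ (hence $\le b^{(q+1)}<s^{(q)}$), a buyer's value equals $x^{(\mS+1)}=s^{(q)}$ if and only if $b^{(q)}=s^{(q)}$, i.e.\ if and only if BTR's shortfall is $0$. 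Combining the two cases yields the ``if and only if,'' and the ``furthermore'' clause is exactly the second case, where $b^{(q+1)}<s^{(q)}$ forces $q\ge1$ and the shortfall equals $b^{(q)}-x^{(\mS+1)}$.

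The argument is essentially bookkeeping, so the one place that needs genuine care is the observation that ``BTR achieves the optimal GFT'' is \emph{strictly weaker} than ``BTR performs the full efficient trade of size $q$'': when $b^{(q+1)}<s^{(q)}$ the reduced size-$(q\!-\!1)$ trade is still welfare-optimal precisely when $b^{(q)}=s^{(q)}$ (the reduced pair then contributing nothing), and it is exactly this boundary that gives rise to the tie clause in the statement's footnote. Beyond tracking this, and handling the degenerate conventions for $q\in\{0,\mB,\mS\}$ (where $b^{(q+1)}$ or $s^{(q+1)}$ may be undefined), I do not anticipate any real difficulty.
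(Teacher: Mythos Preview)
Your proposal is correct and follows essentially the same approach as the paper's proof: both identify $x^{(\mS+1)}=\max\{b^{(q+1)},s^{(q)}\}$ via the decomposition of the top $\mS$ values, then case-split on whether $b^{(q+1)}\ge s^{(q)}$, handling the boundary case $b^{(q)}=s^{(q)}$ (where a trade is reduced but with zero gain) to complete the equivalence. Your write-up is slightly more explicit about the degenerate endpoints $q\in\{0,\mB,\mS\}$ and about why no buyer can hit $s^{(q)}$ unless $b^{(q)}=s^{(q)}$, but the substance is identical.
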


We briefly outline the coupling argument at the heart of the proof of \cref{thm:iid} (for the complete proof see \cref{app:iid}). We prove the claim by comparing each side of the inequality in the \lcnamecref{thm:iid} statement to the optimal gains from trade in the augmented market with $\mS$ sellers and $\mB+1$ buyers, and showing that 
\begin{equation}\label{thm:iid:inequality}
\opt(\mS,\mB+1) - \opt(\mS,\mB) \geq \opt(\mS,\mB +1) - \btr(\mS,\mB+1),
\end{equation}
which clearly implies the claim.
We relate the loss from $\opt(m_S, m_B+1)$ in both sides of \cref{thm:iid:inequality} by a coupling argument that hinges on the fact that as agent values are drawn i.i.d., any permutation of the values of the different agents is equally likely. Concretely, we couple the two random markets as follows: first draw $m_S+ m_B+1$ values, then---only for construction of the smaller market---remove one value uniformly at random, and then (in each market) uniformly at random assign $m_S$ of the remaining values to be seller values, and the rest of the remaining values to be buyer values. Having defined this coupling of the randomness of the two sides of \cref{thm:iid:inequality}, we prove this \lcnamecref{thm:iid:inequality} for any realization of the draw of $m_S+m_B+1$ values, in expectation only over the uniformly random assignment of these values to the various agents.\footnote{So we henceforth slightly abuse notation by considering \cref{thm:iid:inequality} to be defined for some fixed realization of these values, with the expectations taken only over the random assignment of values to agents in each market.} Indeed, for any given realization of the $\mS+\mB+1$ values, we show that the two sides of \cref{thm:iid:inequality} are non-zero with the same probability, and that conditioned upon being non-zero, the difference on the l.h.s.\ is larger.

For the l.h.s.\ of \cref{thm:iid:inequality}, first note that since the expected pre-trade welfare is the same in both markets (the expected sum of a random selection of $m_S$ of the values), the difference in expected optimal GFTs is the difference between the expected post-trade welfares. To compare these between the two markets, compare the above random process without the removal step and with that step. The post-trade welfare in each market is the sum of the top $m_S$ agent (seller or buyer) values in that market. These two sums differ precisely when the removed value is selected to be one of the top $m_S$ values, which occurs with probability $m_S/(m_S+m_B+1)$, and then the difference is the removed value minus the original $(\mS\!+\!1)$\textsuperscript{st} value.

For the r.h.s.\ of \cref{thm:iid:inequality}, consider the above random process without the removal step. By \cref{lem:reduce}, BTR loses compared to $\opt$ precisely when the $(\mS\!+\!1)\textsuperscript{st}$ highest-valued agent is a seller. This too occurs with probability $m_S/(m_S+m_B+1)$, and then the difference is, by the same \lcnamecref{lem:reduce}, the minimum of the top $q$ buyer values, again minus the original $(\mS\!+\!1)$\textsuperscript{th} value. Canceling out the $(\mS\!+\!1)$\textsuperscript{st} value on both sides, we have that conditioned upon the l.h.s.\ being non-zero, in that side we have a \emph{uniformly} random value selected from the top $m_S$ agent values, while conditioned upon the r.h.s.\ being non-zero, in that side we have the \emph{minimum} of the top $q\ge1$ buyer values, which are \emph{uniformly} distributed among the top $m_S$ agent values, so the latter is of course smaller, which is what we set out to prove.  For the full details, see the proof in the appendix.

\subsection{Beyond I.I.D.\ Sellers and Buyers: A Negative Result in the Absence of Additional Assumptions}

We next consider weakening the assumption made in \cref{thm:iid} that the valuations of both buyers and sellers are sampled from the same distribution (that is, that $F_S=F_B$).
We next show that if we make no assumptions whatsoever about the relation of $F_B$ and $F_S$, then the result of \cref{thm:iid} ceases to hold, 
and in fact fails in the most profound way possible. This happen even if the distributions are regular, 
a condition that was used in the result of \citeN{BK}.
We show that not only adding one buyer is not enough for the expected welfare of BTR in the augmented market to beat the expected optimal welfare,
but in fact for \emph{any number $k$} there exist two distributions $F_S$ and $F_B$ such that adding $k$ buyers is not enough for BTR to beat even the optimal \emph{feasible} mechanism. Moreover, obtaining any constant fraction of the GFT of the optimal feasible mechanism is also not possible, and this holds even if we also allow to add any number of sellers.

\begin{proposition} \label{obs-reg-not-suff}For every $\varepsilon>0$ and every positive integers $\mS$, $\mB$, $\ell$, and $k$,
	there exist two regular distributions $F_S$ and $F_B$ such that
	\[\btr(\mS+\ell,\mB+k)<\varepsilon\cdot \sbopt(\mS,\mB)=\varepsilon\cdot  \opt(\mS,\mB).\]
\end{proposition}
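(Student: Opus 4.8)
The plan is to exhibit a pair of regular distributions in which a profitable trade is an extremely rare event, tuned so that the optimal mechanism in the original market already captures the (small) expected gain while BTR in the augmented market essentially never sees it. Concretely, I would take $F_S$ to be the point mass at $1$ (every seller is known to value her item at exactly $1$; this is regular, with constant virtual value) and take $F_B$ to be the two-band mixture that, with probability $1-p$, is uniform on $[0,\tfrac12]$ and, with probability $p$, is uniform on $[2,\tfrac94]$, where $p\in(0,\tfrac12)$ is a small parameter to be pinned down at the very end as a function of $\eps,\mS,\mB,\ell,k$. The first thing to verify is that $F_B$ is regular: a direct computation gives virtual value $2v-\tfrac{1}{2(1-p)}$ on the low band (increasing there, and staying strictly below $\tfrac12$) and $2v-\tfrac94$ on the high band (increasing, and at least $\tfrac74$ there), so the virtual value is nondecreasing on the whole support despite the gap $(\tfrac12,2)$. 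Call a buyer whose realized value lies in the upper band a \emph{high} buyer; note that every high buyer strictly exceeds the common seller value $1$, while every other (``low'') buyer is strictly below it — this placement is exactly what makes the rest of the argument work.

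Next I would establish $\sbopt(\mS,\mB)=\opt(\mS,\mB)$. Since all sellers have the commonly known value $1$, the efficient allocation is realized by the following feasible mechanism: run an $\mS$-unit Vickrey auction among the buyers with reserve price $1$, and pay each seller who is thereby ``drafted'' to sell exactly $1$. This is truthful (buyers face a Vickrey-with-reserve rule, and each seller's allocation and payment are independent of her report), individually rational, budget-balanced (each winning buyer pays at least $1$, and there are exactly as many winning buyers as drafted sellers, each paid $1$), and it maximizes welfare on every profile; since no feasible mechanism can exceed $\opt$, the two are equal. I would then lower-bound $\opt(\mS,\mB)$: with probability $1-(1-p)^{\mB}\ge p$ at least one of the $\mB$ buyers is high, and then the efficient trade contains a pair with gain at least $2-1=1$, so $\opt(\mS,\mB)\ge p$.

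It remains to upper-bound $\btr(\mS+\ell,\mB+k)$, which is where \cref{lem:reduce} does the work. If the augmented market has no high buyer there is no efficient trade and BTR's GFT is $0$; if it has exactly one high buyer, then the $(\mS{+}\ell{+}1)^{\text{st}}$ highest value in that market equals $1$ and is attained only by sellers (the lone high buyer being the unique agent above $1$), so by \cref{lem:reduce} BTR's loss is the entire optimal GFT and its GFT is again $0$; only when at least two of the $\mB+k$ buyers are high can BTR earn anything, and even then its GFT is at most the optimal GFT of the augmented market, hence at most $\tfrac54(\mS+\ell)$. The number of high buyers is binomial with parameters $\mB+k$ and $p$, so the probability of at least two is at most $\binom{\mB+k}{2}p^2$, giving $\btr(\mS+\ell,\mB+k)\le\tfrac54(\mS+\ell)\binom{\mB+k}{2}p^2$. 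Combining the bounds, $\btr(\mS+\ell,\mB+k)/\opt(\mS,\mB)\le\tfrac54(\mS+\ell)\binom{\mB+k}{2}p$, which drops below $\eps$ once $p$ is taken small enough (and still $<\tfrac12$); this choice of $p$ is the last step. The only part that is not a short computation or a one-line invocation of \cref{lem:reduce} is the combined requirement on $F_B$ — that it be genuinely regular (across the gap in its support) while its low band sits below and its high band above the common seller value — so that is where I would be most careful.
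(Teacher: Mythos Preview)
Your proof is correct and follows essentially the same strategy as the paper: take $F_S$ to be a point mass so that a posted-price mechanism achieves $\sbopt=\opt$, make the event ``buyer value exceeds seller value'' rare (probability $\Theta(p)$), lower-bound $\opt$ by $\Theta(p)$, and observe that BTR needs \emph{two} buyers above the seller value to earn anything, giving an upper bound of order $p^2$. The only difference is cosmetic: the paper uses the simpler choice $F_S=\delta_{1/\tilde\varepsilon-2}$ and $F_B=U[0,1/\tilde\varepsilon]$ (a single uniform, trivially regular), whereas you build a two-band mixture and must check regularity across the gap---extra work that buys nothing beyond what the paper's construction already gives.
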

One may wonder if this strong negative result is only due to the fact that we are using BTR and not some other mechanism. 
We give a negative answer by showing that a similar result (without the regularity condition, though) holds for any robust deterministic mechanism that is furthermore anonymous.\footnote{In the absence of any prior, it is natural to treat all agents the same, and thus anonymity is a natural assumption, and one may even claim that it is in a sense a prerequisite for simplicity.}
\begin{proposition} \label{obs-all-mech-reg-not-suff}
	For any prior-independent mechanism $M$ that is deterministic, IR, truthful, weakly budget-balanced, and anonymous, the following holds. 
	For every $\varepsilon>0$ and for every positive integers $\mS$, $\mB$, $\ell$, and $k$,
	there exist two distributions $F_S$ and $F_B$ such that
	\[M(\mS+\ell,\mB+k)<\varepsilon\cdot \sbopt(\mS,\mB)= \varepsilon\cdot \opt(\mS,\mB).\]
\end{proposition}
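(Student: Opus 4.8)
The plan is to reduce the statement to a single structural fact about $M$ on a family of ``one-rare-agent-on-each-side'' profiles, and then to place almost all of the probability mass on a profile where that fact forces $M$ not to trade. Throughout fix $n_S:=\mS+\ell\ge2$ and $n_B:=\mB+k\ge2$. The key lemma I would prove first is: \emph{there exist values $0<d<a<b<c$ such that $M$ conducts no trade on the profile $\pi^\ast$ having $n_S$ sellers with values $(a,c,\dots,c)$ and $n_B$ buyers with values $(b,d,\dots,d)$.} Given this, I would let $F_S$ put probability $p$ on $a$ and $1-p$ on $c$, let $F_B$ put probability $p$ on $b$ and $1-p$ on $d$, and take $p\to0$.

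For the lemma, observe that on $\pi^\ast$ (indeed on any profile whose seller-value multiset is $\{a,c^{n_S-1}\}$ and buyer-value multiset is $\{b,d^{n_B-1}\}$) the only pair that \emph{can} trade under an IR mechanism is the $a$-seller with the $b$-buyer, since every other cross pair has the buyer's value strictly below the seller's ($d<a$, $b<c$, $d<c$). Suppose toward a contradiction that, with $c,d$ fixed (say $c=2$, $d=1$), $M$ trades on this profile for \emph{every} $d<a<b<c$. Then the $a$-seller trades with the $b$-buyer; by truthfulness the seller is paid her critical value $\tau_S(b)$, which depends only on the other reports (hence on $b,c,d$ but not on $a$), and the buyer pays her critical value $\tau_B(a)$ (depending on $a,c,d$ but not on $b$); trading requires $a\le\tau_S(b)$ and $\tau_B(a)\le b$, IR gives $\tau_S(b)\ge a$, and weak budget balance (all other agents pay $0$) gives $\tau_B(a)\ge\tau_S(b)$. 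Thus $a\le\tau_S(b)\le\tau_B(a)\le b$ for all $d<a<b<c$. Letting $a\uparrow b$ in $a\le\tau_S(b)$ forces $\tau_S(b)\ge b$, and with $\tau_S(b)\le b$ we get $\tau_S(b)=b$ for every $b\in(d,c)$; then $\tau_B(a)\ge\tau_S(b)=b$ for all $b\in(a,c)$, so letting $b\uparrow c$ yields $\tau_B(a)\ge c$, contradicting $\tau_B(a)\le b<c$. Hence some $d<a<b<c$ has $M$ not trading on $\pi^\ast$, and by anonymity the same holds for every profile with those two value multisets, so $M$'s gains from trade there are $0$.

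With these $a,b,c,d$ fixed, the only positive-gain trade under $F_S$ on $\{a,c\}$ and $F_B$ on $\{d,b\}$ is an $a$-seller with a $b$-buyer (gain $b-a$, since $b<c$), so posting any price $\rho\in(a,b)$ is IR, weakly budget balanced, truthful, and welfare-optimal in \emph{every} market drawn from these distributions; hence $\sbopt(\mS,\mB)=\opt(\mS,\mB)=(b-a)\,\E\big[\min(X_S,X_B)\big]$ with $X_S\sim\mathrm{Bin}(\mS,p)$ and $X_B\sim\mathrm{Bin}(\mB,p)$, which is $\Theta(p^2)$ as $p\to0$ and in particular positive. In the augmented market $M$ earns positive gains from trade only on realizations with at least one $a$-seller and at least one $b$-buyer; those with \emph{exactly} one of each (total probability $\Theta(p^2)$) are, up to permuting within sellers and within buyers, the profile $\pi^\ast$ of the lemma, on which $M$ earns $0$; the remaining such realizations have total probability $O(p^3)$ and bounded gains from trade, so $M(\mS+\ell,\mB+k)=O(p^3)$. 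Therefore $M(\mS+\ell,\mB+k)/\opt(\mS,\mB)=O(p)\to0$, and choosing $p$ small enough makes it below $\varepsilon$. The heart of the argument is the lemma: the apparent freedom to ``just post a price in the gap'' is unavailable to $M$ because $M$ is prior-independent, and it is precisely weak budget balance — tying the seller's and the buyer's critical values together across the whole family $d<a<b<c$ — that prevents $M$ from cashing in on the rare profitable trade. The remaining steps ($\sbopt=\opt$ via the posted price, and the $O(p^3)$ versus $\Theta(p^2)$ comparison) are routine.
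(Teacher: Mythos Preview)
Your proposal is correct and reaches the same endpoint as the paper---a ``one-rare-agent-on-each-side'' profile on which $M$ cannot trade, then distributions that concentrate mass there---but you find that profile by a genuinely different argument. The paper proceeds by dichotomy through its Lemma~D.1: it fixes the concrete values $X_0<X_1<X_2<X_3$ and argues that if $M$ \emph{does} trade on the profile $(s^{(1)}=X_1,\,s^{(j\ge2)}=X_3,\,b^{(1)}=X_2,\,b^{(j\ge2)}=X_0)$, then monotonicity lets one lower the trading seller to $X_0$ and then the trading buyer to $(X_0+X_1)/2$, producing a new no-trade profile around which a second pair of distributions is built. Your route instead keeps the background values $c,d$ fixed and varies $(a,b)$ over the whole open rectangle $d<a<b<c$, extracting the threshold identities $\tau_S(b)=b$ and then $\tau_B(a)\ge c$ by taking limits, which is impossible. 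This is cleaner and self-contained for this proposition; the paper's version has the advantage that Lemma~D.1 is stated once and reused for several other lower bounds (\cref{obs-all-mech-reg-not-suff-12}, \cref{thm:sd-1-r-lb}).

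Two small expository points. First, your sentence ``the only pair that can trade under an IR mechanism'' is not quite right as stated: IR alone does not forbid a $c$-seller from trading (she could be paid $\ge c$); what rules this out is IR \emph{together with} weak budget balance, since any trading set other than $\{a\text{-seller},\,b\text{-buyer}\}$ forces total receipts to exceed total payments. You do use BB immediately afterward for $\tau_B(a)\ge\tau_S(b)$, so the argument is fine---just tighten the justification for the ``only one possible trade'' claim. Second, in the chain $a\le\tau_S(b)\le\tau_B(a)\le b$ the step ``IR gives $\tau_S(b)\ge a$'' is redundant with ``trading requires $a\le\tau_S(b)$''; you can drop one of them.
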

So we see that the result of \cref{thm:iid} falls very far from extending without any assumptions on the relation of the two distributions $F_S$ and $F_B$, 
not only for BTR but for a large class of robust mechanisms.
This strong negative result is essentially rooted at the same reason that the result of \citeN{BK} fails when the value of the buyer is rarely above the value of the seller (even for a regular distribution). 
Indeed, it can be shown that their result would also fail on the distributions used in the proof of \cref{obs-reg-not-suff}, as with these distributions the value of any buyer is rarely above the value of the seller. 
To overcome this problem, as noted in the introduction, \citeauthor{BK} also assume that the bidders are ``serious bidders,'' that is, that the valuation of each of them is nonnegative, i.e., at least the value of the seller that they normalized to $0$. We generalize this assumption to the case in which the sellers' values are private and sampled from a distribution, by assuming that $F_B$ (first-order) stochastically dominates $F_S$. Indeed, in case the seller cost is fixed, this assumption precisely becomes the ``serious bidders'' assumption of \citeN{BK}.
Thus, in the remaining sections we move to study settings in which the buyers' values are sampled from a distribution $F_B$ that (first-order) stochastically dominates $F_S$, the distribution from which the sellers' values are sampled.
\section{One Seller and One Buyer (Bilateral Trade)}\label{sec-11}

In this section we consider the bilateral trade setting, i.e., the setting of one seller with one item, and one buyer.

\subsection{Negative Results}\label{sec:11-negative}

Surprisingly for us, even in the simple case of bilateral trade, the result presented in \cref{thm:iid} for the setting of all agents values are sampled i.i.d., 
does not extend even to the setting where 
the distribution of the buyer's values $F_B$  stochastically dominates the distribution of the seller's value $F_S$.
As we show, not only does the expected welfare of BTR with one more buyer not beat the expected optimal welfare, but it also does not even beat the expected welfare of the optimal feasible  mechanism (which may depend on the distributions).
As we show, there exist two distributions, a buyer distribution $F_B$ that  stochastically dominates some seller distribution $F_S$, such that 
for bilateral trade, adding one buyer sampled from $F_B$ is not enough for BTR to beat the welfare of the optimal feasible mechanism (and thus the optimal welfare). 
 
\begin{proposition} \label{prop:sd-1-1-lb}	
There exist two distributions, $F_S$ and $F_B$, such that $F_B$ stochastically dominates $F_S$ and
for which \[\btr(1,2)<\sbopt(1,1)= \opt(1,1).\]
\end{proposition}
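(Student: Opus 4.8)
The plan is to exhibit explicit two‑point distributions $F_S,F_B$ witnessing the claim, and then verify the three required facts (stochastic dominance, the equality $\sbopt(1,1)=\opt(1,1)$, and the strict inequality) by elementary casework. The guiding tension is the following. Since $F_B$ must stochastically dominate $F_S$, the two distributions are forced to share the same infimum, and on every realization in which the seller draws that common infimum, BTR on the two‑buyer market always trades and moreover extracts $\max(b_1,b_2)$ rather than a single buyer's value --- which can only help it relative to $\opt(1,1)$. To make BTR lose overall, I therefore want the seller to usually draw a value high enough that the second‑highest of the two buyer values (the price BTR posts) falls below it, so that BTR frequently fails to trade at all; at the same time $\opt(1,1)$ must stay large, which it does because in the original one‑buyer market trade still happens whenever the lone buyer's value exceeds the seller's. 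Finally, to make $\sbopt(1,1)=\opt(1,1)$ hold despite overlapping supports (where \citeauthor{MyersonS83} would otherwise bite), I arrange that the efficient trade decision depends only on the buyer's draw, so that it is implementable by a fixed posted price.

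Concretely, I would take $F_S$ to place probability $1/3$ on $0$ and $2/3$ on $1$, and $F_B$ to place probability $1/3$ on $0$ and $2/3$ on $3$. (More generally one may take $F_S=(1-\delta)\,\mathbf{1}_{\{0\}}+\delta\,\mathbf{1}_{\{\eta\}}$ and $F_B=(1-\delta)\,\mathbf{1}_{\{0\}}+\delta\,\mathbf{1}_{\{H\}}$ for any $0<\eta<H$ and $\delta>\tfrac{H}{2H-\eta}$; the displayed choice is $\delta=\tfrac23,\ \eta=1,\ H=3$.) Stochastic dominance is immediate from the quantile functions: both are $0$ on $(0,1-\delta]$, and on $(1-\delta,1]$ one has $H\ge\eta$. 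Enumerating the four value profiles gives $\opt(1,1)=\E[(b-s)^+]=\delta(1-\delta)H+\delta^2(H-\eta)=\delta(H-\delta\eta)=\tfrac{14}{9}$.

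For the equality $\sbopt(1,1)=\opt(1,1)$, I would exhibit the feasible mechanism that posts a fixed price $p$ with $\eta<p<H$ (e.g.\ $p=2$): it is deterministic, individually rational, truthful, and budget balanced, and under truthful play it trades precisely when the buyer draws $H$. Since $H$ strictly exceeds every value in the support of $F_S$, this is exactly the efficient trade --- the only profile on which this mechanism and the efficient outcome could differ is $(b,s)=(0,0)$, where both produce gains from trade $0$ --- so this feasible mechanism attains $\opt(1,1)$; as $\sbopt(1,1)\le\opt(1,1)$ always, equality follows. For $\btr(1,2)$ I would use the description of BTR with a single seller as a no‑reserve second‑price auction among the buyers with a seller veto: trade occurs iff $\min(b_1,b_2)\ge s$, and then the gains from trade are $\max(b_1,b_2)-s$. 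Conditioning on the seller's value: if $s=0$ (probability $1-\delta$) BTR always trades, with expected gains from trade $\E[\max(b_1,b_2)]=(2\delta-\delta^2)H$; if $s=\eta$ (probability $\delta$) BTR trades only when both buyers draw $H$ (probability $\delta^2$), with gains from trade $H-\eta$. Hence $\btr(1,2)=(1-\delta)(2\delta-\delta^2)H+\delta^3(H-\eta)=\tfrac{40}{27}$, and $\btr(1,2)<\opt(1,1)$ simplifies (after dividing through by $\delta(1-\delta)$) to $(2\delta-1)H>\delta\eta$, i.e.\ $\delta>\tfrac{H}{2H-\eta}$, which holds for our parameters ($\tfrac{40}{27}<\tfrac{42}{27}$).

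The main obstacle is not any individual computation but identifying the construction: the stochastic‑dominance constraint makes the two obvious moves --- fattening the buyer's upper tail and raising the seller's value --- work against each other, and one has to observe that a slice‑by‑slice comparison of $\btr(1,2)$ with $\opt(1,1)$ over the seller's value favors BTR on low‑seller slices but favors the optimum, by a larger margin, on high‑seller slices, and then choose the mass $\delta$ large enough that the high‑seller slices dominate. A secondary point requiring care is securing $\sbopt(1,1)=\opt(1,1)$ even though the supports overlap; this is precisely what forces the atomic, ``buyer‑separating'' shape of $F_S$ and $F_B$.
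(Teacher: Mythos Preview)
Your proof is correct and follows essentially the same approach as the paper: both use two-point distributions sharing an atom at $0$ with distinct high values (the paper takes $H=2,\ \eta=1$ and low-value probability $\varepsilon<1/3$, which is exactly your family with $\delta=1-\varepsilon$), verify that a posted price strictly between the two high values attains the optimum so that $\sbopt(1,1)=\opt(1,1)$, and then compute $\btr(1,2)$ by casework to obtain the strict inequality. Your parametrized condition $\delta>H/(2H-\eta)$ is a mild generalization that specializes to the paper's $\varepsilon<1/3$ when $H=2,\ \eta=1$.
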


Given the failure of BTR to beat the optimum with one more buyer, one may wonder whether some other anonymous robust deterministic mechanism for one seller and two buyers could do better than BTR, and succeeds in beating the optimal welfare for one seller and one buyer. We give a negative answer to this hope in \cref{thm:general-mech-lb-12} below. The proof of that theorem utilizes \cref{prop:sd-1-1-lb} above, and relates the welfare guarantee of any anonymous robust deterministic, mechanism with that of BTR via the following key lemma, which shows that with or without stochastic dominance, if any anonymous robust deterministic mechanism has welfare higher than BTR on any profile, then it must fail miserably for some distributions.

\begin{lemma} \label{obs-all-mech-reg-not-suff-12}
	For any prior-independent mechanism $M$ for the setting of one seller and two buyers,  
	that is deterministic, IR, truthful, weakly budget-balanced, and anonymous, the following holds. 
	If there is any value profile for which the welfare of $M$ is higher than the welfare of $BTR$, then 
	for every $\varepsilon>0$ there exist two distributions, $F_S$ and $F_B$, such that  $F_B$ stochastically dominates $F_S$ 
	and for which
\[M(1,2)<\varepsilon\cdot \sbopt(1,1)= \varepsilon\cdot \opt(1,1).\]
\end{lemma}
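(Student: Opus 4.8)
The plan is to first extract structural information about $M$ from the hypothesis that it beats BTR somewhere, and then to design a pair of distributions (with $F_B$ stochastically dominating $F_S$) that forces $M$ to almost never capture any gains from trade while keeping $\opt(1,1)$ — hence $\sbopt(1,1)$ — bounded away from zero.

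First I would pin down a witnessing profile. In a one-seller--two-buyer market, BTR's welfare on a profile with seller value $s$ and buyer values $b_1\ge b_2$ is $b_1$ when $b_2\ge s$ and is $s$ otherwise, while any feasible $M$ has welfare at most the optimum $\max(b_1,s)$. Hence, after checking the tie-breaking conventions, the only profiles on which $M$ can \emph{strictly} beat BTR are those with $b_2<s<b_1$ (in particular $b_1>b_2$), and on any such witnessing profile $P^*=(s^*,b_1^*,b_2^*)$ the mechanism $M$ must transfer the item to the higher buyer; call that buyer $A$ and the other buyer $B$. Note $s^*>b_2^*\ge0$, so $s^*>0$.

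Next I would establish the key structural fact: on the modified profile $(0,b_1^*,b_2^*)$ the mechanism $M$ still trades, and the critical value for a buyer to win there is at least $s^*$. Truthfulness gives the seller a monotone (threshold) allocation in her own value and forces her payment, when she sells, to equal her critical value; since she sells at $P^*$, her critical value given buyer values $(b_1^*,b_2^*)$ is $\ge s^*$, so she also sells at $(0,b_1^*,b_2^*)$ and is paid $\ge s^*$ there. Budget balance then forces the winning buyer at $(0,b_1^*,b_2^*)$ to pay $\ge s^*$, so individual rationality makes the winner's value $\ge s^*$; since $b_2^*<s^*$, the winner must be $A$, and $A$'s critical value — which, by the threshold characterization, depends only on the other agents' reports, here ``other buyer $=b_2^*$, seller $=0$'' — equals this payment and is therefore $\ge s^*$. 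By anonymity, the critical value of \emph{either} buyer to win, given that the other buyer reports $b_2^*$ and the seller reports $0$, is the same number $c\ge s^*$.

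Finally I would fix a small $\delta>0$, pick any $x_0\in(b_2^*,s^*)$ (a nonempty interval, with $x_0>0$), take $F_S$ to be the point mass at $0$, and let $F_B$ put probability $1-\delta$ on $b_2^*$ and probability $\delta$ on $x_0$; then $F_B$ trivially stochastically dominates $F_S$, and since $F_S$ is a point mass, posting the seller's value $0$ is feasible and attains the optimum, so $\sbopt(1,1)=\opt(1,1)=(1-\delta)b_2^*+\delta x_0>0$. In the two-buyer market the seller always has value $0$, and using the fact above: on $(0,b_2^*,b_2^*)$ (probability $(1-\delta)^2$) no buyer reaches the critical value $c\ge s^*>b_2^*$, so there is no trade and GFT is $0$; on a profile with one buyer valued $x_0$ and one valued $b_2^*$ (probability $2\delta(1-\delta)$) the $x_0$-buyer cannot win since $x_0<s^*\le c$, so only the $b_2^*$-buyer can trade and GFT is at most $b_2^*$; and on $(0,x_0,x_0)$ (probability $\delta^2$) GFT is at most $x_0$. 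Thus $M(1,2)\le 2\delta(1-\delta)b_2^*+\delta^2 x_0$, and dividing by $\sbopt(1,1)=(1-\delta)b_2^*+\delta x_0$ gives a ratio that tends to $0$ as $\delta\to0$ (it equals $\delta$ when $b_2^*=0$ and is $O(\delta)$ when $b_2^*>0$), so choosing $\delta$ small enough yields $M(1,2)<\varepsilon\cdot\sbopt(1,1)=\varepsilon\cdot\opt(1,1)$. I expect the main obstacle to be the middle step: carefully deducing the critical-value lower bound $c\ge s^*$ by combining the threshold characterization of truthful single-parameter mechanisms (for the seller and for the buyers), the pointwise budget-balance and individual-rationality constraints, and anonymity (to move the bound from buyer $A$ to both buyers on the symmetric profile $(0,b_2^*,b_2^*)$), along with the tie-breaking bookkeeping needed to guarantee a witnessing profile with $b_2^*<s^*<b_1^*$.
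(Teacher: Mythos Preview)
Your proof is correct and follows essentially the same approach as the paper: identify a witnessing profile with $b_{2}^{*}<s^{*}<b_{1}^{*}$, use the threshold characterization of truthfulness together with budget balance and individual rationality to show that (after lowering the seller's value) the winning buyer's critical value must be at least $s^{*}$, invoke anonymity to transfer this threshold to both buyer slots, and then build a point-mass seller distribution and a two-point buyer distribution supported below $s^{*}$ on which $M$ almost never captures GFT while $\opt(1,1)=\sbopt(1,1)$ stays bounded away from zero. The paper packages the threshold step into a more general lemma (\cref{lem:anon-problem-many}) and drops the seller to $X_{0}=b_{2}^{*}$ rather than to $0$, which lets it conclude GFT is exactly zero on the one-high-buyer profile (since the low buyer and the seller then share the same value) instead of merely bounded by $b_{2}^{*}$ as in your argument; but this is a cosmetic difference and your bound $M(1,2)\le 2\delta(1-\delta)b_{2}^{*}+\delta^{2}x_{0}$ yields the same conclusion.
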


Since by \cref{prop:sd-1-1-lb}, BTR with one more buyer does not beat the optimum in the original market, and by \cref{obs-all-mech-reg-not-suff-12} any other robust and anonymous deterministic mechanism $M$ can never guarantee more that BTR across all distributions for the setting with one seller and two buyers, the following theorem immediately follows.
\begin{theorem}\label{thm:general-mech-lb-12}
	For any prior-independent mechanism $M$ for the setting of one seller and two buyers,
	that is deterministic, IR, truthful, weakly budget-balanced, and anonymous, the following holds. 
	There exist two distributions, $F_S$ and $F_B$, such that $F_B$ stochastically dominates $F_S$ 
	and for which \[M(1,2)<\sbopt(1,1)= \opt(1,1).\]
\end{theorem}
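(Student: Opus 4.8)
The plan is to prove the statement by a simple dichotomy on how $M$ compares to $\btr$ in the one-seller-two-buyer market, reducing each branch to one of the two results established just above.

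First suppose that there is some value profile $(\vecs,\vecb)$ with one seller and two buyers on which the post-trade welfare of $M$ strictly exceeds that of $\btr$. Then I would invoke \cref{obs-all-mech-reg-not-suff-12} with any fixed $\vareps\in(0,1]$ (for concreteness $\vareps=1$): it produces distributions $F_S,F_B$ with $F_B$ stochastically dominating $F_S$ for which $M(1,2)<\vareps\cdot\sbopt(1,1)=\vareps\cdot\opt(1,1)$, and since $\vareps\le 1$ this gives $M(1,2)<\sbopt(1,1)=\opt(1,1)$, which is exactly the desired conclusion (and in fact more, for $\vareps<1$).

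In the complementary case, on every one-seller-two-buyer profile the post-trade welfare of $M$ is at most that of $\btr$. Since the market has a single seller, the pre-trade welfare equals that seller's value regardless of the mechanism, so the per-profile welfare inequality is equivalent to the same inequality for gains from trade; taking expectations over any pair of distributions $F_S,F_B$ yields $M(1,2)\le\btr(1,2)$. Now apply \cref{prop:sd-1-1-lb}, which furnishes distributions $F_S,F_B$ with $F_B$ stochastically dominating $F_S$ and with $\btr(1,2)<\sbopt(1,1)=\opt(1,1)$; for these very distributions we conclude $M(1,2)\le\btr(1,2)<\sbopt(1,1)=\opt(1,1)$. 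As the two cases are exhaustive, this proves the theorem.

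I do not expect any real obstacle in assembling this argument, since all the substance has already been isolated in the two cited results: \cref{prop:sd-1-1-lb}, which constructs an explicit stochastically-ordered pair of distributions on which $\btr$ with one added buyer fails to reach the optimum of the original market, and \cref{obs-all-mech-reg-not-suff-12}, which is the structural statement that any anonymous robust deterministic mechanism improving on $\btr$ at even a single profile can be driven below any constant fraction of $\opt$ by a suitable stochastically-ordered pair. The only points worth a sentence of care are the welfare-to-gains-from-trade translation in the second case (immediate because adding buyers to a single-seller market leaves the pre-trade welfare unchanged) and the observation that the distributions produced by \cref{prop:sd-1-1-lb} are chosen so that $\sbopt(1,1)=\opt(1,1)$, which is what lets us write the final equality in the conclusion.
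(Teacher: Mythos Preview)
Your proposal is correct and follows essentially the same approach as the paper: the paper also derives the theorem immediately from the dichotomy between \cref{prop:sd-1-1-lb} (for mechanisms that never beat $\btr$ on any profile) and \cref{obs-all-mech-reg-not-suff-12} (for mechanisms that do on some profile). Your write-up is slightly more explicit about the welfare-to-GFT translation and the choice of $\vareps$, but the argument is the same.
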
 

%%%%%%%%%%

\citeN{McAfee08} has presented a mechanism\footnote{Note that this mechanism is unrelated to \citeauthor{McAfee92}'s mechanism for double-auction markets \cite{McAfee92} that we discussed earlier.} for GFT approximation for bilateral trade settings in which the median of the buyer's distribution is at least as high as the median of the seller's distribution. This clearly is the case when $F_B$ stochastically dominates $F_S$. \citeauthor{McAfee08}'s \emph{Median Mechanism} is not prior-independent (and hence not robust), however it requires only limited numeric knowledge regarding $F_B$ and $F_S$, in the form of their medians, or even only the median of one of them. This mechanism posts some price between these two medians, and if both agents accept the price, trade occurs. \citeauthor{McAfee08} proved that this mechanism has expected GFT that is at least half of the optimum.

One might be tempted to try using (a variant of) \citeauthor{McAfee08}'s median mechanism when there are two buyers and a seller, 
aiming  to get a result similar to the one we presented in \cref{thm:iid} for BTR, even only for the special case of bilateral trade.
What would a reasonable interpretation of the median mechanism with two buyers in the i.i.d.\ case be? 
We suggest that the most natural variant posts the median as the price, and if the seller and \emph{at least one} of the two buyers agree to that price, 
then the seller trades with the \emph{higher-value} buyer. This mechanism is indeed truthful and strongly budget balanced as one may desire, and while not prior-independent, requires only limited numeric knowledge regarding $F_B$ and $F_S$.
Yet, we show that even with this extra knowledge, the ``median mechanism with one additional buyer'' does not beat the optimal GFT even for the special case 
that all values are sampled i.i.d. ($F_S=F_B$).
Contrast this with \cref{thm:iid}, where in the same setting, we have shown that BTR with one additional buyer does beat the optimal GFT in a the corresponding one-buyer-one-seller market.
As we show in \cref{app:mcafee}, there exists a distribution such that when the seller's value and two buyers' values are sampled i.i.d.\ from that distribution,
the ``median mechanism with one additional buyer'' has GFT that is less than the optimum GFT with only one buyer and one seller. 
Alternatively stated, even for bilateral trade and i.i.d.\ agents, the median mechanism with one additional buyer does \emph{not} beat the optimum, while BTR does. 

Having shown that the impossibility of beating the optimum with only one added buyer (\cref{thm:general-mech-lb-12}) is general to all anonymous robust deterministic mechanisms and is not a unique drawback of BTR, we continue focusing on BTR, and now move toward asking whether adding not one but a small fixed number of buyers can guarantee that BTR in the augmented market beats the optimum in the original market.  To better develop an intuition for this question, we take a quick detour through a related item-pricing question, which is also interesting in its own right (and also motivates looking at bilateral trade settings separately, and, as we will see, at the BTR mechanism).

\subsection{Pricing using Fresh Samples}

Arguably the most celebrated corollaries of BK's result, within the economics and computation community, is that in a scenario with one seller and one buyer (whose value is drawn from a distribution satisfying BK's assumptions), if the seller can obtain one fresh (independent) sample drawn from the buyer's value distribution, then pricing the item at this sample (that is, giving the buyer a take-it-or-leave-it offer to buy at this sample) yields expected revenue at least half of the expected revenue of the optimal mechanism that uses full knowledge of the buyer's value distribution \cite{DhangwatnotaiRY10}, which by the analysis of \citeN{Myerson81} prices the item at a price carefully optimized for that distribution.

Inspired by this conclusion of BK's result, we first show, via a reduction similar to that of \citeN{DhangwatnotaiRY10}, a causal relation between the ability of BTR with two buyers and one seller to beat the optimum in a one-seller-one-buyer, and the ability of pricing at a sample from $F_B$ in bilateral trade to obtain expected GFT at least half of the optimum:\footnote{We emphasize that attaining a $C$ approximation in terms of the gains-from-trade (the difference between post-trade and pre-trade social welfare) is strictly more challenging than attaining the same $C$ approximation in terms of the social welfare.} 

\begin{lemma}\label{obs:one-sample}
	For any seller distribution $F_S$ and any buyer distribution $F_B$, if the following holds:
	\begin{itemize}
	\item
	$\btr(1, 2) \geq \opt(1,1)$,
	\end{itemize}
	then the following also holds:
	\begin{itemize}
	\item
	In a setting with one seller and one buyer, pricing at a single sample drawn independently from $F_B$ obtains expected GFT at least $\frac{1}{2} \opt(1,1)$.
	\end{itemize}
	Furthermore, if $F_B$ is atomless, then  the converse implication is also true: if the latter statement holds, then so does the former.\footnote{\label{footnote-sampling}We note that for some distributions with atoms the converse implication is actually false. Indeed, for seller with value $1$ and buyers with value uniform in $\{0,2\}$ it holds that $\opt(1,1)=1/2>1/4=\btr(1,2)$ violating the first condition, yet pricing at the a sample (maximum of $1$ samples) drawn independently from $F_B$ obtains expected GFT of $1/4$ which equals $\frac{1}{2} \opt(1,1)$, satisfying the second condition. 	}
 \end{lemma}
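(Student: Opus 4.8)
The plan is to exhibit a direct coupling between the ``pricing at a sample'' mechanism on a one-seller/one-buyer market and the BTR mechanism on a one-seller/two-buyer market, so that the GFT of the former is, in expectation over the draw of the extra sample, exactly one half of the GFT of the latter. First I would set up the following: let the seller value be $s\sim F_S$ and the real buyer value be $b\sim F_B$, and draw an independent fresh sample $p\sim F_B$ to be used as the posted price. Trade happens under the sample-pricing mechanism precisely when $b\ge p\ge s$, and in that case the realized GFT is $b-s$. Now consider the two-buyer market with the same seller value $s$ and two buyer values $b$ and $p$, both i.i.d.-looking from the mechanism's point of view (they are both drawn from $F_B$, though $p$ is the ``fresh sample''). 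By the single-seller description of BTR recalled just before this lemma, BTR here runs a second-price auction among $\{b,p\}$ and trades the high buyer with the seller at the low buyer's value, provided the seller accepts that price; i.e.\ trade happens iff $\min(b,p)\ge s$, and then the realized GFT is $\max(b,p)-s$.

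The key observation is the symmetry between $b$ and $p$: conditioned on the unordered pair $\{b,p\}$ and on $s$, the event ``$b\ge p\ge s$'' (sample-pricing trades) and the event ``$p\ge b\ge s$'' each have probability $\tfrac12$ of the event ``$\min(b,p)\ge s$'' (BTR trades), and on \emph{either} of these two events the sample-pricing GFT equals $\max(b,p)-s$, which is exactly the BTR GFT. Hence the expected sample-pricing GFT, taken over the draw of the extra sample $p$ as well, equals $\tfrac12\,\btr(1,2)$, where on the right-hand side the ``second buyer'' value is drawn from $F_B$ exactly as $p$ is. Therefore $\btr(1,2)\ge\opt(1,1)$ immediately gives that sample-pricing obtains expected GFT at least $\tfrac12\opt(1,1)$, proving the forward direction. (One small point to handle carefully in the ties-in-favor-of-buyers convention: when $b=p$, both buyers have the same value, and the ``high'' buyer is determined by the lexicographic tie-break; but since $b=p$ the realized GFT is the same regardless, so the argument is unaffected.)

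For the converse under the assumption that $F_B$ is atomless, I would run the same coupling in reverse. Atomlessness guarantees that $b=p$ with probability zero, so conditioned on $\{b,p\}$ and $s$ the two events ``$b\ge p\ge s$'' and ``$p\ge b\ge s$'' partition (up to a null set) the event ``$\min(b,p)\ge s$'' into two events of equal probability $\tfrac12$, and the sample-pricing GFT on each equals the BTR GFT. Thus the identity ``expected sample-pricing GFT $=\tfrac12\btr(1,2)$'' holds as an exact equality, and the hypothesis that sample-pricing obtains at least $\tfrac12\opt(1,1)$ rearranges to $\btr(1,2)\ge\opt(1,1)$. The footnote example with an atom (seller at $1$, buyer uniform on $\{0,2\}$) shows exactly where atomlessness is needed: there the event $b=p=2$ has positive mass, and on that event sample-pricing \emph{does} trade (the offer $p=2$ is accepted by a buyer with value $2$ and by the seller), yet BTR with $\min(b,p)=2\ge 1=s$ also trades with GFT $2-1=1$, so there is no discrepancy in \emph{that} direction — rather the discrepancy is that sample-pricing trades with positive probability even though the strict inequality $\btr(1,2)<\opt(1,1)$ fails the forward hypothesis, which is precisely why the converse can fail.

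The main obstacle I anticipate is bookkeeping around ties and the tie-breaking convention, rather than any deep mathematical difficulty: one must be scrupulous that ``BTR trades iff $\min(b,p)\ge s$'' is literally correct under the convention that ties are broken in favor of buyers (so that e.g.\ $b=p=s$ is handled as the buyers weakly beating the seller, hence trade occurs with GFT $0$), and that the event-partition argument is stated for the \emph{unordered} pair so the factor of exactly $\tfrac12$ is transparent. Once the conventions are pinned down, both directions follow from a single line of conditional expectation, so I would keep the exposition short and front-load the precise statement of when each of the two mechanisms trades.
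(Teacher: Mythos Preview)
Your approach is the same as the paper's (which proves the general $k$-sample version by an exchangeability argument that for $k=1$ reduces to exactly your coupling). However, one sentence in your argument is wrong and, if taken literally, contradicts your own conclusion.

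You write that ``on \emph{either} of these two events the sample-pricing GFT equals $\max(b,p)-s$.'' This is false on the event $p>b\ge s$: there the posted price $p$ exceeds the buyer's value $b$, so sample-pricing does \emph{not} trade and its GFT is $0$, not $\max(b,p)-s$. Indeed, if your claim were correct, you would conclude $\textsc{Sample}=\btr(1,2)$, not $\tfrac12\btr(1,2)$. The fix is immediate: conditioned on the unordered pair $\{b,p\}$ (with distinct values) and on $\min(b,p)\ge s$, sample-pricing trades with GFT $\max(b,p)-s$ on exactly one of the two equally likely labelings (namely $b\ge p\ge s$) and has GFT $0$ on the other; hence the conditional expected sample GFT is $\tfrac12(\max(b,p)-s)=\tfrac12\cdot(\text{BTR GFT})$. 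When $b=p\ge s$, sample-pricing trades on both labelings with GFT $b-s=\text{BTR GFT}$, so in general $\textsc{Sample}\ge\tfrac12\btr(1,2)$, with equality precisely when $F_B$ is atomless. This is exactly the inequality the paper derives (in one line via exchangeability of $b$ and $p$), and both directions of the lemma follow immediately from it. Your closing discussion of the footnote example is also somewhat garbled; the clean statement is simply that the tie event $b=p\ge s$ can have positive mass, making $2\cdot\textsc{Sample}>\btr(1,2)$ strict, so $\textsc{Sample}\ge\tfrac12\opt(1,1)$ no longer forces $\btr(1,2)\ge\opt(1,1)$.
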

\cref{obs:one-sample} is a special case of a more general result that we prove later (\cref{obs:k-samples}) and we thus omit the proof for this special case.

From \cref{obs:one-sample,thm:iid} we immediately derive the first result, to the best of our knowledge, on sample pricing in two-sided markets: that if the buyer's value and the seller's value are drawn i.i.d.,\ and if a fresh sample can be obtained from the same distribution, then giving both of them a take-it-or-leave-it offer (which they must both take for trade to happen) to trade at the price of the sample attains in expectation at least half of the GFT of the optimal-yet-infeasible mechanism, and we furthermore show that the constant one half cannot be improved upon:

\begin{theorem}\label{cor:iid-sample-half}
	For a single buyer and a single seller with values drawn i.i.d.\ from $F=F_S=F_B$, pricing at a single independent sample drawn from $F$ obtains expected GFT at least $\frac{1}{2} \opt(1,1)$. Furthermore, if $F$ is atomless, then the obtained expected GFT is exactly $\frac{1}{2} \opt(1,1)$ (and no more than that).
\end{theorem}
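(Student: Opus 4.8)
The plan is to split the statement into the lower bound and the matching upper bound for atomless $F$. The lower bound is immediate: taking $\mS=\mB=1$ in \cref{thm:iid} gives $\btr(1,2)\ge\opt(1,1)$, and since in the i.i.d.\ setting $F_S=F_B=F$, \cref{obs:one-sample} then yields directly that pricing at a single fresh sample from $F$ obtains expected GFT at least $\tfrac12\opt(1,1)$. So the real content is in showing that, when $F$ is atomless, this is tight, and I would do this by computing the expected GFT of sample pricing exactly.

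Let $b,s,p$ denote the buyer's value, the seller's value, and the sample, all drawn i.i.d.\ from $F$. Under sample pricing, trade occurs exactly when $s\le p\le b$, and the realized GFT in that case is $b-s\ge 0$; hence the expected GFT is $\E\bigl[(b-s)\,\mathbf 1[s\le p\le b]\bigr]$. The key step is to exploit that $(b,s,p)$ is exchangeable, so this expectation is unchanged under averaging over all $3!$ relabelings of the three coordinates. Since $F$ is atomless, the three values are distinct with probability one, and then exactly one of the six relabelings puts the ``sample'' coordinate strictly between the other two (thereby making the indicator equal $1$), contributing $\max(b,s,p)-\min(b,s,p)$, while the other five contribute $0$. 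Therefore the expected GFT of sample pricing equals $\tfrac16\,\E[R_3]$, where $R_n$ denotes the range of $n$ i.i.d.\ draws from $F$. On the other side, $\opt(1,1)=\E[(b-s)^+]=\tfrac12\,\E[\,|b-s|\,]=\tfrac12\,\E[R_2]$ by the symmetry of $b$ and $s$.

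It then remains to relate $\E[R_3]$ and $\E[R_2]$, which is a one-line order-statistics computation: for $n$ i.i.d.\ nonnegative draws, $\E[R_n]=\E[X_{(n)}]-\E[X_{(1)}]=\int_0^\infty\!\bigl((1-F(t)^n)-(1-F(t))^n\bigr)\,dt$, and expanding the integrand gives $2F(t)(1-F(t))$ for $n=2$ and $3F(t)(1-F(t))$ for $n=3$, so $\E[R_3]=\tfrac32\E[R_2]$. Combining the three displays, the expected GFT of sample pricing is $\tfrac16\cdot\tfrac32\,\E[R_2]=\tfrac14\,\E[R_2]=\tfrac12\opt(1,1)$, as claimed.

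I expect no serious obstacle here; the only points needing care are (i) the exchangeability/relabeling bookkeeping --- one must check that under atomlessness ties have probability zero (this is also why the distinction between $s\le p\le b$ and $s<p<b$ is immaterial) and that the single surviving relabeling produces the full range rather than a spacing between consecutive order statistics --- and (ii) the harmless finiteness caveat that $\opt(1,1)<\infty$ (equivalently $F$ has finite mean) in order to split $\E[R_n]=\E[X_{(n)}]-\E[X_{(1)}]$; if $\opt(1,1)=\infty$ both sides are infinite and nothing needs to be proved.
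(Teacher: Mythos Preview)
Your argument is correct. For the lower bound you follow exactly the route the paper itself flags (the $\mS=\mB=1$ case of \cref{thm:iid} together with \cref{obs:one-sample}). For the tightness in the atomless case, both you and the paper hinge on the same exchangeability idea---draw three i.i.d.\ values and average over the assignment to $(b,s,p)$---but you finish differently. The paper decomposes $\opt(1,1)$ into three pieces according to whether $p$ lands between $b$ and $s$, above both, or below both, and then shows via the telescoping $x^{(1)}-x^{(3)}=(x^{(1)}-x^{(2)})+(x^{(2)}-x^{(3)})$ that the ``good'' piece (which is \textsc{Sample}) equals the sum of the two ``missed'' pieces; no auxiliary identity about ranges is needed. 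You instead compute \textsc{Sample}$=\tfrac16\E[R_3]$ and $\opt(1,1)=\tfrac12\E[R_2]$ separately and close the loop with the order-statistics identity $\E[R_3]=\tfrac32\E[R_2]$. That identity is pleasant but is an extra step; the paper's spacing decomposition is more self-contained and sidesteps your finiteness caveat entirely (indeed, you could avoid it yourself by computing $\opt(1,1)=\tfrac13\E[R_3]$ directly from the same three-sample symmetrization, without passing through $R_2$).

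One small expository wrinkle: your sentence ``exactly one of the six relabelings puts the `sample' coordinate strictly between the other two (thereby making the indicator equal $1$)'' is slightly off---two relabelings put $p$ in the middle, but only the one that also assigns the maximum to $b$ and the minimum to $s$ makes $\mathbf 1[s\le p\le b]=1$. Your conclusion (one surviving relabeling, contributing the full range) is right; just tighten the wording.
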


What about the non-i.i.d.\ case where we only assume that the buyer's value distribution stochastically dominates the seller's?
Can we still get a qualitatively similar result to \cref{cor:iid-sample-half} for the stochastic dominance setting, though? We give an affirmative answer, showing that while pricing at a fresh sample from $F_B$ does not always attain in expectation at least half of the GFT of even the optimal infeasible mechanism, it does attain in expectation at least one quarter of the GFT of even the optimal-yet-infeasible mechanism:

\begin{theorem} \label{thm:FSD-11-sample}
	For a single buyer with value drawn from $F_B$ and a single seller with value drawn from $F_S$ where $F_B~\text{FSD}~F_S$, pricing at a single independent sample drawn from $F_B$ obtains expected GFT at least $\frac{1}{4} \opt(1,1)$.\end{theorem}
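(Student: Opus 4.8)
The plan is to reduce the theorem to the single inequality $\btr(1,2)\ge\tfrac12\opt(1,1)$ and then prove that inequality, isolating along the way the one place where stochastic dominance is genuinely used. Write $s\sim F_S$ for the seller's value, $b\sim F_B$ for the buyer's value, and $p\sim F_B$ for the fresh sample, all independent, and let $\mathrm{SP}:=\E[(b-s)\,\mathbbm{1}[s\le p\le b]]$ be the expected gains-from-trade of pricing at the sample; recall $\opt(1,1)=\E[(b-s)^+]$. Since $b$ and $p$ are i.i.d., it is natural to pass to $m:=\max\{b,p\}$ and $n:=\min\{b,p\}$. Conditioning on the unordered pair $\{b,p\}$ and on $s$, and noting that the event $\{s\le p\le b\}$ forces $b$ to be the larger of the two (which, given $\{b,p\}$, happens with probability $\tfrac12$), one gets $\mathrm{SP}\ge\tfrac12\,\E\bigl[(m-s)^+\mathbbm{1}[n\ge s]\bigr]$, with equality when $F_B$ is atomless (the tie event $\{b=p\}$, present when $F_B$ has atoms, only contributes an extra nonnegative term); this is exactly the reduction behind \cref{obs:one-sample}, and the right-hand side is $\tfrac12\,\btr(1,2)$, since $\btr(1,2)=\E[(m-s)^+\mathbbm{1}[n\ge s]]$ by definition of BTR with one seller and two buyers. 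Moreover, because $b$ and $p$ are i.i.d.\ copies, $\opt(1,1)=\E[(b-s)^+]=\tfrac12\E\bigl[(m-s)^++(n-s)^+\bigr]$. Hence it suffices to prove $\btr(1,2)=\E\bigl[(m-s)^+\mathbbm{1}[n\ge s]\bigr]\ge\tfrac12\opt(1,1)$.

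To prove this, I would split on whether $n\ge s$. On $\{n\ge s\}$ we have $m-s\ge\tfrac12\bigl((m-s)+(n-s)\bigr)=\tfrac12\bigl((m-s)^++(n-s)^+\bigr)$, and on $\{n<s\}$ we have $(n-s)^+=0$; combining with $\E[(m-s)^++(n-s)^+]=2\opt(1,1)$ yields
\[
\E\bigl[(m-s)^+\mathbbm{1}[n\ge s]\bigr]\ \ge\ \tfrac12\,\E\bigl[\bigl((m-s)^++(n-s)^+\bigr)\mathbbm{1}[n\ge s]\bigr]\ =\ \opt(1,1)-\tfrac12\,\E\bigl[(m-s)^+\mathbbm{1}[n<s]\bigr].
\]
So the entire theorem reduces to the estimate $\E\bigl[(m-s)^+\mathbbm{1}[n<s]\bigr]\le\opt(1,1)$.

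For this estimate, note that the event $\{(m-s)^+>0\}\cap\{n<s\}$ forces exactly one of $b,p$ to exceed $s$ and the other to lie strictly below $s$, and that then $m$ is whichever of the two exceeds $s$; conditioning on the value below $s$ and using exchangeability of $b,p$ gives
\[
\E\bigl[(m-s)^+\mathbbm{1}[n<s]\bigr]\ =\ 2\,\E_{b,s}\bigl[(b-s)^+\,F_B(s^-)\bigr],\qquad F_B(s^-):=\Pr_{b'\sim F_B}[b'<s].
\]
Now stochastic dominance enters: $F_B~\text{FSD}~F_S$ is equivalent to $F_B(t)\le F_S(t)$ for all $t$, hence $F_B(s^-)\le F_S(s^-)$, so it is enough to show $\E_{b,s}\bigl[(b-s)^+(1-2F_S(s^-))\bigr]\ge 0$. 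Fixing $b$ and integrating over $s\sim F_S$, the maps $s\mapsto(b-s)^+$ and $s\mapsto 1-2F_S(s^-)$ are both nonincreasing, so the Chebyshev correlation inequality gives $\E_s\bigl[(b-s)^+(1-2F_S(s^-))\bigr]\ge\E_s[(b-s)^+]\cdot\E_s[1-2F_S(s^-)]$, and $\E_{s\sim F_S}[1-2F_S(s^-)]=\Pr_{s,s'\sim F_S}[s=s']\ge 0$; averaging over $b$ completes the proof.

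I expect the main obstacle to be exactly this last estimate, $\E[(m-s)^+\mathbbm{1}[n<s]]\le\opt(1,1)$: it is the only point at which stochastic dominance is invoked, and getting there requires setting up the $\{n\ge s\}$ split and the exchangeability computation that rewrites the ``loss'' term as $2\E_{b,s}[(b-s)^+F_B(s^-)]$ in just the right form, so that $F_B\le F_S$ together with a one-dimensional correlation inequality closes it. The only other thing to watch is tie bookkeeping --- using the left-continuous $F_B(s^-)$ rather than $F_B(s)$, and checking that with atoms the reduction $\mathrm{SP}\ge\tfrac12\btr(1,2)$ remains valid as an inequality --- which is why no atomlessness hypothesis is needed anywhere.
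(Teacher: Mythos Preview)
Your proof is correct and takes a genuinely different route from the paper's.

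The paper works directly with $\textsc{Sample}=\E[(b-s)\mathbbm{1}[b\ge p\ge s]]$: it decomposes $\opt(1,1)$ into the three regions $b\ge p\ge s$, $p>b>s$, and $b>s>p$, and proves the two comparison inequalities
\[
\E[(b-s)\mathbbm{1}[p>b>s]]\le\textsc{Sample}
\quad\text{and}\quad
\E[(b-s)\mathbbm{1}[b>s>p]]\le\textsc{Sample}+\E[(b-s)\mathbbm{1}[p>b>s]].
\]
The first is pure exchangeability of $b,p$; the second is where FSD enters, via a \emph{quantile coupling}: draw three ordered uniform quantiles, assign them by a random permutation to $b,s,p$, and use $v_{F_B}(q)\ge v_{F_S}(q)$ to swap the seller's and price's quantiles. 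Combining the two gives $\opt(1,1)\le 4\,\textsc{Sample}$.

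You instead factor through the intermediate inequality $\btr(1,2)\ge\tfrac12\opt(1,1)$, which is not stated in the paper and is of independent interest (it is strictly stronger than what \cref{btr-1r-converge} yields at $\mB=2$, namely $\btr(1,2)\ge\tfrac13\opt(1,2)\ge\tfrac13\opt(1,1)$). Your use of FSD is also different in flavor: rather than coupling quantiles, you compare CDFs pointwise, $F_B(s^-)\le F_S(s^-)$, and close with Chebyshev's correlation inequality applied to the two nonincreasing functions $s\mapsto(b-s)^+$ and $s\mapsto 1-2F_S(s^-)$. (An equivalent way to finish, avoiding the named inequality: write $2\E[(b-s)^+F_S(s^-)]=\E_{b,s,s'}[(b-\max\{s,s'\})^+\mathbbm{1}[s\ne s']]\le\E_{b,s}[(b-s)^+]$.) What the paper's approach buys is that the quantile-coupling idea is exactly the engine behind the many-buyer result \cref{prop:sd-1-r-ub}; what your approach buys is a more elementary, self-contained argument and a new intermediate bound on $\btr(1,2)$.
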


\begin{proposition}\label{prop:FSD-11-sample-nohalf}
	For every $\alpha>\frac{7}{16}$ there exist two distributions, a seller distribution $F_S$ and a buyer distribution $F_B$, such that $F_B~\text{FSD}~F_S$ and for which for a single buyer with value drawn from $F_B$ and a single seller with value drawn from $F_S$, pricing at a single independent sample drawn from $F_B$ obtains expected GFT of less than $\alpha\cdot\sbopt(1,1)=\alpha\cdot\opt(1,1)$.
\end{proposition}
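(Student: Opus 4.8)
The plan is to produce, for every $\alpha>\tfrac{7}{16}$, an explicit pair $(F_S,F_B)$ with $F_B$ stochastically dominating $F_S$ for which pricing at a fresh sample from $F_B$ has expected gains from trade strictly below $\alpha\cdot\opt(1,1)$; in fact I would exhibit a small-dimensional family of such pairs along which the ratio $\mathrm{GFT}_{\mathrm{sample}}/\opt(1,1)$ tends to $\tfrac{7}{16}$, so that for any $\alpha>\tfrac{7}{16}$ a sufficiently extreme member of the family does the job, and I would choose the family so that additionally $\sbopt(1,1)=\opt(1,1)$ (so the negative result is against the second-best as well). The first step is to set up the two quantities in closed form: with a single buyer $b\sim F_B$ and single seller $s\sim F_S$, trade under sample pricing at $p\sim F_B$ occurs exactly when both parties accept the take-it-or-leave-it price, i.e.\ when $s\le p\le b$, so the expected GFT equals $\E_{s,b,p}\bigl[(b-s)\,\mathbbm{1}[s\le p\le b]\bigr]$, while $\opt(1,1)=\E_{s,b}\bigl[(b-s)^+\bigr]$. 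It is cleanest to pass to quantiles, writing $s=v_{F_S}(U_S)$, $b=v_{F_B}(U_B)$, $p=v_{F_B}(U_P)$ for i.i.d.\ uniform $U_S,U_B,U_P$: then stochastic dominance is just the pointwise inequality $v_{F_B}\ge v_{F_S}$, trivial to check for an explicit family, and both quantities become finite sums (for a discrete family) or triple integrals over the unit cube, with the event $\{s\le p\le b\}$ a union of products of intervals in $(U_S,U_B,U_P)$.

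The second step is to pin down the family. I would search for the extremal instance by treating the construction as having a small number of atoms on each side — the worst case for such a ratio is typically finitely supported, or a limit of finitely supported pairs — parametrizing the atom locations and masses, imposing $v_{F_B}\ge v_{F_S}$ and the requirement that the efficient outcome be feasibly implementable (which, as in the Myerson–Satterthwaite analysis, pins down how the buyer support must sit relative to the seller support and gives the identity $\sbopt(1,1)=\opt(1,1)$), then writing $\mathrm{GFT}_{\mathrm{sample}}/\opt(1,1)$ as an explicit function of the remaining parameters and minimizing it. I expect the minimizer to be a two- or three-point pair, very possibly reached in a degenerate limit in which two nearby atoms coalesce, and I expect the minimum value $\tfrac{7}{16}$ to drop out of a one-variable optimization (something of the shape $\max_t t(1-t)^2$ or $\max_t \tfrac{t(1-t)}{c-t}$, evaluated at an interior critical point).

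The third step is to verify the claimed behavior on that chosen family: check stochastic dominance and $\sbopt(1,1)=\opt(1,1)$ directly, evaluate $\opt(1,1)$ and $\mathrm{GFT}_{\mathrm{sample}}$ in closed form, and confirm that the ratio approaches $\tfrac{7}{16}$ and is $<\alpha$ for the appropriate members — which establishes that the constant $\tfrac14$ of \cref{thm:FSD-11-sample} cannot be improved beyond $\tfrac{7}{16}$.

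The main obstacle is the second step. The ratio is a somewhat delicate function of several parameters (the masses of the atoms and the relative sizes of the gaps between their values), and naive families tend to get stuck at ratios of $\tfrac12$ or larger: to push it down to $\tfrac{7}{16}$ one needs the two sides to interact in just the right way — some buyer values must lie below the typical sampled price (so the buyer is frequently priced out of a trade that is efficient), and, simultaneously, some seller values must lie above the typical sampled price (so the seller is frequently priced out), while keeping the efficient trade itself large. Once the right family and the parameter relation forced by $\sbopt=\opt$ are identified, the remaining work — the closed-form evaluation of the double sum / triple integral with the nested constraints $s\le p\le b$, the single-variable optimization, and the checks of stochastic dominance and of second-best optimality — is routine.
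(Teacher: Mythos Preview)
Your outline matches the paper's approach in spirit --- the paper also picks an explicit low-dimensional family, computes $\opt(1,1)$ and the sample-pricing GFT in closed form, and shows the ratio tends to $\tfrac{7}{16}$ along a limit --- but the plan as written has a real gap that would prevent you from reaching $\tfrac{7}{16}$.

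The paper's family is: seller $=0$ w.p.\ $\varepsilon$ and $=1$ w.p.\ $1-\varepsilon$; buyer $=0$ w.p.\ $\varepsilon$ and uniform on $[1+\gamma-\delta,\,1+\gamma+\delta]$ w.p.\ $1-\varepsilon$. The crucial move is the \emph{smoothing} of the buyer's high atom into a tiny interval. With the tie-breaking convention that trade occurs when $s\le p\le b$ (weak inequalities), a purely atomic buyer distribution gives trade whenever both $b$ and $p$ land on the high atom, so the dominant term in the ratio is $(1-\varepsilon)$ and one cannot push the ratio below $\tfrac34$ (minimize $1-\varepsilon+\varepsilon^2$ over $\varepsilon$). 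With the atom smoothed out, when both $b$ and $p$ are ``high'' the event $\{b\ge p\}$ has probability only $\tfrac12$; the dominant term becomes $\tfrac{1-\varepsilon}{2}$, the ratio tends to $\tfrac{1-\varepsilon}{2}+\varepsilon^2$ as $\gamma\to\infty$, and this is minimized at $\varepsilon=\tfrac14$ to give $\tfrac{7}{16}$. Your ``two nearby atoms coalesce'' intuition gestures at this but is not the same thing: with two atoms at $H\pm\delta$ one still gets trade in three of four high--high pairings, not one of two, and the ratio stays well above $\tfrac{7}{16}$. A genuinely continuous smear (or a limit with increasingly many atoms) is essential, and nothing in your search over finitely-supported pairs would discover it.

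Separately, your diagnosis that ``some seller values must lie above the typical sampled price'' is off for this construction: the seller is always at most $1$, the typical sample is near $1+\gamma$, and the entire loss comes from the \emph{buyer} being priced out by a sample that happens to be marginally higher. The $\sbopt=\opt$ check is, as you anticipated, immediate: a fixed posted price between $1$ and $1+\gamma-\delta$ realizes the efficient allocation.
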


It is interesting to compare this ``pricing at a fresh sample'' mechanism with the \emph{Median Mechanism} of \citeN{McAfee08}, which, as noted above, under a condition that is implied by (i.e., is weaker than) stochastic dominance, obtains in expectation \emph{one half} of the optimal GFT, in contrast to the one quarter from \cref{thm:FSD-11-sample}.  (In \cref{app:mcafee} we show that the one-half guarantee that \citeN{McAfee08} proves indeed cannot be improved upon to prove a better guaranteed constant for that mechanism.)  Pricing at a sample, as we show, gives a qualitatively similar guarantee of a small constant factor to the optimal GFT  under a stronger assumption (stochastic dominance) but while requiring far less prior knowledge: a single sample rather than the precise median. This can be viewed as demonstrating a tradeoff of sorts between the strength of the assumption on the properties of the unknown distribution and the amount of concrete numeric knowledge regarding this distribution that is needed.

Pricing at a sample, like adding buyers, seems like an intuitive thing to do. Indeed, returning to our question posed at the end of \cref{sec:11-negative}, it may intuitively seem that adding, say, $10$, or to be on the safe side, say, $100$, more buyers should ``surely suffice'' for BTR in the augmented market to beat the optimal welfare in the original one-seller-one-buyer market. To emphasize the elusiveness of this intuition, we note that this ``intuitively surely working'' claim whereby adding, say, $k=100$ more buyers suffices to beat the optimal mechanism in the original market, implies a far less intuitive claim: that taking $k$ fresh independent samples and giving a take-it-or-leave-it offer to trade at the \emph{highest} of the $k$ samples attains a $\frac{1}{1+k}$ approximation to the optimal GFT in the original one-seller-one-buyer market:

\begin{lemma} \label{obs:k-samples}
	For any seller distribution $F_S$ and any buyer distribution $F_B$, if the following holds:
	\begin{itemize}
	\item
	$\btr(1, 1+k) \geq \opt(1,1)$,
	\end{itemize}
	then the following also holds:
	\begin{itemize}
	\item
	In a setting with one seller and one buyer, pricing at the maximum of $k$ samples drawn independently from $F_B$ obtains expected GFT at least $\frac{1}{1+k} \opt(1,1)$.
	\end{itemize}
	Furthermore, if $F_B$ is atomless, then the converse implication is also true: if the latter statement holds, then so does the former.\footnote{Yet, for some distributions with atoms, the converse implication is actually false, see \cref{footnote-sampling}. 	}
 \end{lemma}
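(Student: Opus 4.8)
The plan is to realize both sides of the claimed equivalence on a single probability space, exploiting the exchangeability of the $1+k$ i.i.d.\ buyer draws in the BTR$(1,1+k)$ market. I would draw $v_1,\dots,v_{1+k}$ i.i.d.\ from $F_B$ and $s$ from $F_S$, and, independently, draw $J$ uniformly from $\{1,\dots,1+k\}$; set $b:=v_J$ and $p:=\max_{i\ne J}v_i$. Since $J$ is independent of the values, $v_J$ is an $F_B$-sample independent of the remaining $k$ values (themselves i.i.d.\ $F_B$), so $(b,p,s)$ is distributed exactly as (buyer value, maximum of $k$ independent samples from $F_B$, seller value) in the sample-pricing experiment. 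Writing $G_{SP}:=(b-s)\cdot\mathbbm{1}[s\le p\le b]$ for the realized gains from trade of offering both agents the price $p$ in the bilateral-trade market, $\E[G_{SP}]$ equals the sample-pricing GFT in the lemma. On the BTR side, I would first record the elementary fact that with one seller, BTR trades exactly when the second-highest buyer value is at least $s$, and then its GFT equals the highest buyer value minus $s$; with $v^{(1)}\ge v^{(2)}$ the two largest among $v_1,\dots,v_{1+k}$, write $G_{BTR}=(v^{(1)}-s)\cdot\mathbbm{1}[v^{(2)}\ge s]$, a function of the multiset $\{v_1,\dots,v_{1+k}\}$ and of $s$ alone, with $\E[G_{BTR}]=\btr(1,1+k)$.

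The heart of the argument is the pointwise identity $G_{SP}=G_{BTR}\cdot\mathbbm{1}\big[v_J\ge\max_{i\ne J}v_i\big]$. On the event that $J$ attains the top value we have $v_J=v^{(1)}$ and $p=\max_{i\ne J}v_i=v^{(2)}$ (this holds whether or not the top value is attained uniquely), so the trade condition $s\le p\le b$ becomes $v^{(2)}\ge s$ and the realized GFT is $v^{(1)}-s$, exactly $G_{BTR}$; off that event $p>v_J=b$, so no trade happens and $G_{SP}=0$ while $G_{BTR}\ge0$. Conditioning on the values and $s$ and taking expectation over $J$ alone, the conditional probability that $J$ attains the top value is $m/(1+k)$, where $m\ge1$ is the number of buyers attaining it; hence $\E[G_{SP}\mid\text{values},s]=\tfrac{m}{1+k}G_{BTR}\ge\tfrac{1}{1+k}G_{BTR}$. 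Taking total expectations gives $\E[G_{SP}]\ge\tfrac{1}{1+k}\btr(1,1+k)$, and under the hypothesis $\btr(1,1+k)\ge\opt(1,1)$ this yields the claimed $\tfrac{1}{1+k}\opt(1,1)$ bound.

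For the converse under atomless $F_B$: ties among the $v_i$ occur with probability zero, so $m=1$ almost surely and the displayed inequality is in fact the equality $\E[G_{SP}]=\tfrac{1}{1+k}\btr(1,1+k)$. Hence if the sample-pricing GFT $\E[G_{SP}]$ is at least $\tfrac{1}{1+k}\opt(1,1)$, then $\btr(1,1+k)\ge\opt(1,1)$.

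I expect the only delicate point — and it is minor — to be the bookkeeping of ties: one must verify that the pointwise identity $G_{SP}=G_{BTR}\cdot\mathbbm{1}[v_J\ge\max_{i\ne J}v_i]$ survives the case of several buyers tied for the maximum value (it does, since then $p=v^{(1)}=v^{(2)}$), and it is precisely this tie slack, of size $m/(1+k)$ versus $1/(1+k)$, that is responsible for the converse failing without atomlessness, consistent with the footnote's counterexample (seller value $1$, buyers uniform on $\{0,2\}$, $k=1$, where the first condition fails yet the second holds with equality). Everything else is a direct unwinding of definitions together with exchangeability of i.i.d.\ draws.
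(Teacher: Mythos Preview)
Your proof is correct and follows essentially the same approach as the paper: both establish the inequality $(1+k)\cdot\textsc{Sample}_k \ge \btr(1,1+k)$ (with equality when $F_B$ is atomless) via the exchangeability of the $1+k$ i.i.d.\ buyer draws, and then combine with the hypothesis. Your formulation with a uniformly random index $J$ is equivalent to the paper's sum over indices $i$, and your explicit tracking of the trade conditions $s\le p\le b$ and $v^{(2)}\ge s$ and of the tie multiplicity $m$ makes the argument slightly cleaner.
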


As the first statement in \cref{obs:k-samples} implies the second, if the first statement ``surely holds'' (as it may intuitively seem), then the second statement should ``surely hold'' as well, however intuition for the latter is more elusive: on the one hand a $\frac{1}{k+1}$ approximation may not seem very challenging, but on the other hand pricing at the maximum of $k$ samples seems like a completely absurd thing to do! In fact, at a first glance at the second statement, it is not even clear that increasing $k$ makes things any easier (while for the first statement this is completely obvious). Indeed, intuition may be misleading, and the first statement is not as straightforward as may seem at first glance. (Indeed, recall that without stochastic dominance we have shown in \cref{obs-all-mech-reg-not-suff} that there is no fixed finite number of buyers that if added, allows any anonymous robust deterministic mechanism to beat the optimal in the original market.) Nonetheless, we proceed as planned to study the question of whether, and for which $k$, the first statement holds.

\subsection{A Positive Result}

Despite \cref{prop:sd-1-1-lb} (and more generally, \cref{obs-all-mech-reg-not-suff-12}) showing that adding one buyer is not enough for BTR in the augmented market to beat the optimum in the original market, and despite the intuitive uncertainty raised by the equivalence that we have proven in \cref{obs:k-samples}, we next show that for any buyer distribution $F_B$ that stochastically dominates the seller distribution $F_S$,
adding not one buyer but \emph{four more buyers} to the market is enough for BTR in the augmented market to beat the optimum in the original market:

\begin{sloppypar}
\begin{proposition} \label{prop:sd-1-1-ub}
	For any seller distribution $F_S$ and any buyer distribution $F_B$ such that $F_B~\text{FSD}~F_S$ it holds that \[\btr(1,1+4)\geq\opt(1,1).\]
\end{proposition}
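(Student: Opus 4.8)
The plan is to reduce both sides to integrals of per-seller-value conditional expectations and then collapse everything with a single change of order of integration. Conditioning on the seller value $s$ and using $\E[(Y-s)^+]=\int_s^\infty\Pr[Y>t]\,dt$ for nonnegative $Y$, we have $\opt(1,1)=\E_{s\sim F_S}\bigl[\int_s^\infty\bigl(1-F_B(t)\bigr)\,dt\bigr]$ and $\opt(1,5)=\E_{s\sim F_S}\bigl[\int_s^\infty\bigl(1-F_B(t)^5\bigr)\,dt\bigr]$. For $\btr(1,5)$ I would invoke \cref{lem:reduce}: with one seller, BTR is suboptimal exactly on the event that the seller's value is the (unique) second-highest realized value, i.e.\ that exactly one of the five i.i.d.\ buyer values strictly exceeds $s$ while the other four are strictly below $s$, and on that event the loss relative to the optimum is $b^{(1)}-s$. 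Summing over which buyer is the high one gives
\[
\btr(1,5)=\opt(1,5)-5\,\E_{s\sim F_S}\Bigl[F_B(s^-)^4\int_s^\infty\bigl(1-F_B(t)\bigr)\,dt\Bigr].
\]

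Subtracting, $\btr(1,5)-\opt(1,1)=\E_{s\sim F_S}\bigl[\int_s^\infty\bigl((F_B(t)-F_B(t)^5)-5F_B(s^-)^4(1-F_B(t))\bigr)\,dt\bigr]$. Next I would swap the $s$-expectation and the $t$-integral over the region $\{s\le t\}$, which reduces the theorem to the pointwise inequality
\[
I(t):=F_B(t)\bigl(1-F_B(t)^4\bigr)F_S(t)\;-\;5\bigl(1-F_B(t)\bigr)\!\int_{[0,t]}\!F_B(s^-)^4\,dF_S(s)\;\ge\;0
\]
for every $t$ (integrating $I$ over $t$ then gives $\btr(1,5)-\opt(1,1)\ge 0$). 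The quantity to control is $K(t):=\int_{[0,t]}F_B(s^-)^4\,dF_S(s)$.

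The heart of the argument is to bound $K(t)$ in two complementary ways and split on the value of $u:=F_B(t)$. First, by monotonicity of $F_B$ we have $F_B(s^-)\le u$ for $s\le t$, so $K(t)\le u^4F_S(t)$; plugging this in when $u\le\tfrac34$ gives $I(t)\ge u\,F_S(t)(1-u)\bigl(1+u+u^2-4u^3\bigr)\ge 0$, since $1+u+u^2-4u^3>0$ on $[0,\tfrac34]$. This bound degrades as $u\to1$, and that is exactly where stochastic dominance enters: $F_B~\text{FSD}~F_S$ means $F_B\le F_S$ pointwise as CDFs, hence $F_B(s^-)\le F_S(s^-)$, so $K(t)\le\int_{[0,t]}F_S(s^-)^4\,dF_S(s)\le\tfrac15F_S(t)^5$ (the last step is the elementary fact that $v\mapsto v^4$ is increasing, and it remains valid when $F_S$ has atoms precisely because we used $F_S(s^-)$). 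Plugging this in when $u\ge\tfrac34$ gives $I(t)\ge(1-u)F_S(t)\bigl((u+u^2+u^3+u^4)-F_S(t)^4\bigr)\ge 0$, since $u+u^2+u^3+u^4\ge\tfrac34+\tfrac9{16}+\tfrac{27}{64}+\tfrac{81}{256}>1\ge F_S(t)^4$ for $u\ge\tfrac34$. Either way $I(t)\ge 0$, and the proof is complete.

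I expect the main obstacle to be exactly this dichotomy — in particular, recognizing that one cannot hope to prove $I(t)\ge 0$ (nor even the conditional comparison $\btr(1,5)\ge\opt(1,1)$ for a fixed seller value) from monotonicity alone: BTR's trade-reduction loss can locally outweigh the gain from the four extra buyers precisely when $s$ lies in the upper tail of $F_B$, and indeed without stochastic dominance the statement is false (cf.\ \cref{obs-all-mech-reg-not-suff}); FSD is the hypothesis that forbids $F_S$ from placing mass there, and the second bound on $K(t)$ is where this is cashed in. Two further points require care: the tie-breaking convention is what forces $F_B(s^-)$ rather than $F_B(s)$ into the loss term (which in turn is what keeps the FSD-based bound on $K$ valid in the presence of atoms), and one should record the standing assumption $\E_{b\sim F_B}[b]<\infty$ so that all integrals and the Fubini step are legitimate — otherwise $\opt(1,1)=\infty$ and there is nothing to prove.
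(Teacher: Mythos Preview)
Your proof is correct and genuinely different from the paper's. The paper does not argue via CDF calculus at all: it treats \cref{prop:sd-1-1-ub} as the special case $\mB=1$ of \cref{prop:sd-1-r-ub}, whose proof proceeds by a quantile-coupling argument. One draws $\mB+k+1$ i.i.d.\ uniform quantiles, assigns one uniformly to the seller and $k$ uniformly to the ``new'' buyers, and then compares $\opt(1,\mB+k)-\opt(1,\mB)$ to $\opt(1,\mB+k)-\btr(1,\mB+k)$; both differences are bounded in terms of the single quantity $\E\bigl[b^{q^{(1)}}-b^{q^{(3)}}\bigr]$, and comparing the coefficients yields the sufficient condition $\tfrac{k(k-1)}{\mB+k}\ge 2$, which for $\mB=1$ is met by $k=4$. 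Stochastic dominance enters through the coupling inequality $s=s^{q^{(i)}}\le b^{q^{(i)}}$, which is what lets the paper pass from an event on values to an event on quantile ranks.

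Your route---Fubini followed by a pointwise inequality $I(t)\ge 0$, established via two complementary upper bounds on $K(t)$ and a split at $u=F_B(t)=\tfrac34$---is more analytic and entirely self-contained for the bilateral case. Its appeal is that it makes completely transparent \emph{where} FSD is used (only in the bound $K(t)\le\tfrac15 F_S(t)^5$, needed precisely when $u$ is close to $1$) and \emph{why} the monotonicity bound alone cannot close the argument. The paper's approach, by contrast, buys generality: the same coupling gives the $4\sqrt{\mB}$ bound for arbitrary $\mB$, which your CDF computation does not obviously extend to. Your handling of ties (the $F_B(s^-)$ in the loss term) and of the finite-mean caveat is also more explicit than the paper's.
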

\end{sloppypar}

\cref{prop:sd-1-1-ub} is a special case of a more general result that we prove later for the case of many buyers (\cref{prop:sd-1-r-ub}) and we thus omit the proof for this special case.

The qualitative message of \cref{prop:sd-1-1-ub,obs-all-mech-reg-not-suff-12} is that with stochastic dominance, adding some constant number of buyers is enough for guaranteeing that BTR beats the optimum in the original market, but that this constant is greater than $1$. We leave open the question of nailing down the exact constant --- be it $2$, $3$, or $4$.
\section{One Seller and Many Buyers}\label{sec:1r}

In \cref{prop:sd-1-1-ub}, we saw that for the case of bilateral trade where the buyer distribution stochastically dominates the seller distribution, adding a constant number of buyers is enough for BTR in the augmented market to beat the optimum in the original market. Can this result be generalized to any number of buyers and sellers as in the i.i.d.\ case (\cref{thm:iid})? We start by showing that this is not possible already for the case of a single seller and $\mB$ buyers, showing that in this case the number of buyers we need to add must grow with the initial number of buyers $\mB$, and in particular, it is not constant. Specifically, we show that for some distributions we must add at least $\lfloor\log_2 \mB\rfloor$ more buyers for BTR in the augmented market to beat the optimum in the original market. This shows that, in particular, the number of buyers that we have to add cannot be independent of the number of initial buyers, and moreover, it cannot simply be a function only of the number of sellers. Like with our other lower bounds, we once again show that is not merely a unique shortcoming of the BTR mechanism, but in fact a fundamental limitation of any anonymous robust deterministic mechanism, by showing that any mechanism that beats BTR on the distributions on which BTR underperforms must miserably fail on some other distributions.

\begin{theorem} \label{thm:sd-1-r-lb}
For any  prior-independent mechanism $M$ that is deterministic, IR, truthful, weakly budget-balanced, and anonymous the following holds. 
For any positive number $\mB$ there exist two distributions, $F_B$ and $F_S$, such that 
$F_B$ stochastically dominates  $F_S$ and for which 
\[M(1,\mB+\lfloor\log_2 \mB\rfloor)<\sbopt(1,\mB)= \opt(1,\mB).\]
\end{theorem}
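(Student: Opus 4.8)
The plan is to follow the template by which \cref{thm:general-mech-lb-12} was obtained from \cref{prop:sd-1-1-lb} and \cref{obs-all-mech-reg-not-suff-12}, now with one seller and $\mB+k$ buyers, where $k:=\lfloor\log_2\mB\rfloor$; write $n:=\mB+k$. The proof will have two ingredients: a construction showing that \emph{BTR} itself cannot beat the optimum with only $k$ added buyers, and a dichotomy lemma showing that no anonymous robust deterministic mechanism can do better than BTR on the relevant profiles without failing badly on others.

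\emph{The BTR lower bound.} For every $\mB$ I would exhibit a seller distribution $F_S$ and a buyer distribution $F_B$ with $F_B~\text{FSD}~F_S$ such that $\sbopt(1,\mB)=\opt(1,\mB)$ and $\btr(1,n)<\opt(1,\mB)$. By \cref{lem:reduce}, with a single seller BTR is efficient on every realization \emph{except} those in which the seller's value is the second-highest of the $n+1$ values --- equivalently, in which exactly one buyer outbids the seller --- and there its gains from trade fall short of the optimum by $b^{(1)}-s$. So the target is a pair of distributions for which, even after the buyer population grows from $\mB$ to $n$, the event ``exactly one buyer exceeds $s$'' retains enough probability, and enough conditional shortfall, that the aggregate BTR loss in the $n$-buyer market still exceeds the marginal gain $\opt(1,n)-\opt(1,\mB)$ that the $k$ extra buyers give the optimum. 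The construction I would use is a multiplicatively spaced ``staircase'' for $F_B$: its quantile function $v_{F_B}$ is flat on a nested family of quantile intervals and roughly doubles between consecutive rungs, calibrated so that the top order statistic of a size-$\mB$ sample typically sits at rung $\approx\log_2\mB$ with its runner-up a rung or more lower, and so that enlarging the sample to size $n$ raises the typical top rung by only about $\log_2(1+k/\mB)=O(1)$ rungs; $F_S$ is then placed, in an FSD-compatible manner, so that its value frequently lands strictly inside a gap between two of the top rungs. The equality $\sbopt(1,\mB)=\opt(1,\mB)$ I would secure by arranging the $\mB$-buyer market so that its efficient trade is exactly implementable by a feasible (truthful, individually rational, budget-balanced) mechanism --- this is where the choice of $F_S$, and its interaction with $F_B$ on the support, must be made carefully. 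Reconciling all three demands at once --- $F_B~\text{FSD}~F_S$, $\sbopt(1,\mB)=\opt(1,\mB)$, and BTR still losing after exactly $k=\lfloor\log_2\mB\rfloor$ additions --- while keeping the bound a clean $\lfloor\log_2\mB\rfloor$ rather than some constant multiple of $\log\mB$, is the main obstacle, and the bulk of the work lies in the probabilistic estimates of the top two order statistics of the staircase distribution under both sample sizes.

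\emph{From BTR to an arbitrary mechanism.} The second ingredient is a lemma in the spirit of \cref{obs-all-mech-reg-not-suff-12}, now for one seller and $n$ buyers: if $M$ is anonymous, deterministic, individually rational, truthful, and weakly budget-balanced, and if there is even one value profile on which the welfare of $M$ strictly exceeds that of BTR, then for every $\varepsilon>0$ there exist $F_S,F_B$ with $F_B~\text{FSD}~F_S$ and $M(1,n)<\varepsilon\cdot\sbopt(1,\mB)=\varepsilon\cdot\opt(1,\mB)$. The argument is the same as in the two-buyer case: with one seller, BTR already attains the efficient welfare on every profile with $b^{(2)}\ge s$, so any profile witnessing that $M$ beats BTR must have $b^{(2)}<s\le b^{(1)}$ with $M$ trading between the seller and the unique buyer above her; then, concentrating both distributions near an FSD-compatible perturbation of that profile and invoking truthfulness, individual rationality, budget balance, and anonymity, $M$ is driven into a near-zero-welfare outcome on almost every realization, while $\sbopt(1,\mB)$ stays bounded away from zero, so that $M(1,n)$ is a vanishing fraction of $\sbopt(1,\mB)=\opt(1,\mB)$.

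\emph{Combining.} Given $M$: if $M$'s welfare never strictly exceeds BTR's on any profile, then on the distributions from the first ingredient we get $M(1,n)\le\btr(1,n)<\opt(1,\mB)=\sbopt(1,\mB)$ (since $M$'s gains from trade are then pointwise dominated by BTR's, having the same pre-trade welfare); otherwise the second ingredient supplies FSD distributions with $M(1,n)<\sbopt(1,\mB)=\opt(1,\mB)$ directly. In either case the desired $F_S,F_B$ exist, which is exactly the statement of the theorem.
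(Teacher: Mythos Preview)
Your overall architecture---first a BTR-specific lower bound, then a dichotomy lemma extending it to all anonymous robust deterministic mechanisms, then combining---is exactly the paper's. The dichotomy step and the combining step are essentially identical to what the paper does (the paper invokes its \cref{lem:anon-problem}, which is the $n$-buyer analogue of \cref{obs-all-mech-reg-not-suff-12} you allude to).

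Where you diverge is in the BTR lower bound itself, and here you are making the problem much harder than it is. The paper uses no staircase at all: $F_B$ puts mass $\tfrac12$ on $0$ and $\tfrac12$ on $2$, and $F_S$ puts mass $\tfrac12$ on $0$ and $\tfrac12$ on $1$. Stochastic dominance and $\sbopt=\opt$ are then immediate (post the price $1.5$). With one seller, BTR trades iff at least \emph{two} buyers have value $2$, while $\opt$ trades iff at least \emph{one} does; in both cases the conditional GFT is the same constant, so the comparison reduces to a pure coin-flip inequality: $1-2^{-\mB} > 1-2^{-(\mB+k)}-(\mB+k)2^{-(\mB+k)}$, i.e.\ $2^k<\mB+k+1$, which holds at $k=\lfloor\log_2\mB\rfloor$. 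All three of the ``demands'' you flag as the main obstacle fall out for free.

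Your staircase construction, as sketched, is vague on exactly the points that matter: you do not say how $F_S$ is placed to keep $\sbopt(1,\mB)=\opt(1,\mB)$ while simultaneously landing ``strictly inside a gap'' often enough, and you concede that hitting precisely $\lfloor\log_2\mB\rfloor$ rather than $\Theta(\log\mB)$ is delicate. It may be salvageable, but there is no need---the two-point construction does the job in a few lines, and it also dovetails with the dichotomy lemma, whose hypothesis is framed in terms of three-valued supports $X_0<X_1<X_2$.
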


\cref{thm:sd-1-r-lb} implies that if a finite number of buyers can be added so that BTR in the augmented market beats the optimum in the original market, then this number of buyers grows with $\mB$ (implying \cref{intro-1r-negative}). But for a given fixed $\mB$, is adding a finite number of buyers actually sufficient? (Recall that we have shown in \cref{obs-all-mech-reg-not-suff} that if the stochastic dominance assumption is dropped, then no finite number of buyers suffices.)  We now complement \cref{thm:sd-1-r-lb} with our second main result, this time for stochastic dominance, showing that for the case of a single seller, there is some function of the initial number of buyers $\mB$, but not of the distributions, that bounds the number of additional buyers that is sufficient to add for BTR in the augmented market to beat the optimum in the original market.  Specifically, we show that adding $4\sqrt{\mB}$  more buyers suffices.

\begin{proposition} \label{prop:sd-1-r-ub}	
	For any number of buyers $\mB$, any seller distribution $F_S$ and any buyer distribution $F_B$ that stochastically dominates  $F_S$, it holds that \[\btr(1,\mB+4\sqrt{\mB})\geq\opt(1,\mB).\]
\end{proposition}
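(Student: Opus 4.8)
The plan is to reduce the claim to a one-dimensional inequality between two explicit integrals, and then to control a ``seller's value is very high'' regime using the stochastic-dominance hypothesis.

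\emph{Reduction.} By the single-seller description of BTR (equivalently, by \cref{lem:reduce} with $\mS=1$), writing $n=\mB+4\sqrt{\mB}$, letting $b^{(j)}_k$ denote the $j$-th highest of $k$ i.i.d.\ draws from $F_B$ and $s\sim F_S$ independently, we have
\[
\btr(1,n)=\E\bigl[(b^{(1)}_n-s)\cdot\mathbbm{1}[b^{(2)}_n\ge s]\bigr]
\qquad\text{and}\qquad
\opt(1,\mB)=\E\bigl[(b^{(1)}_{\mB}-s)^+\bigr],
\]
so the goal is $\btr(1,n)-\opt(1,\mB)\ge 0$. I would couple the two markets by drawing the $n$ buyer values once and declaring the first $\mB$ of them to be the original market, and I would use the FSD hypothesis in the form $F_S\ge F_B$ pointwise, i.e.\ $v_{F_B}\ge v_{F_S}$: a buyer whose quantile lies above the seller's quantile has value at least $s$, which is a convenient sufficient event for BTR to trade (and lets us keep the ``gain'' terms manifestly nonnegative).

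\emph{A layer-cake identity.} Conditioning on $s$ and integrating over a threshold $y\ge 0$, one checks that, with $a=a(s,y)=\Pr_{b\sim F_B}[b>s+y]$ and $c=c(s)=\Pr_{b\sim F_B}[b\ge s]\ge a$,
\[
\btr(1,n)=\E_s\!\int_0^\infty\!\!\bigl(1-(1-a)^n-na(1-c)^{\,n-1}\bigr)\,dy,
\qquad
\opt(1,\mB)=\E_s\!\int_0^\infty\!\!\bigl(1-(1-a)^{\mB}\bigr)\,dy,
\]
so the claim is equivalent to
\[
\E_s\!\int_0^\infty\!\!(1-a)^{\mB}\bigl(1-(1-a)^{4\sqrt{\mB}}\bigr)\,dy\ \ \ge\ \ \E_s\!\int_0^\infty\!\!na\,(1-c)^{\,n-1}\,dy .
\]
Integrating the left-hand integrand over $y$ yields exactly the (nonnegative) gain $\opt(1,n)-\opt(1,\mB)$ from the $4\sqrt{\mB}$ extra buyers, while integrating the right-hand integrand over $y$ yields $n\,(1-c)^{\,n-1}\cdot\E_{b\sim F_B}[(b-s)^+]$, which is precisely the BTR trade-reduction loss --- the contribution of the event that exactly one of the $n$ buyers beats $s$ (cf.\ \cref{lem:reduce}). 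So the statement reduces to showing that the gain from the added buyers covers the trade-reduction loss.

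\emph{The crux: integrating over $s$.} The pointwise-in-$s$ version of the last inequality is \emph{false} when $c(s)$ is of order $1/\mB$ (the seller's value sits near the very top of $F_B$): there the loss is a constant fraction of $\E_{b}[\,b-s\mid b\ge s\,]$, while the per-$s$ gain from $4\sqrt{\mB}$ extra buyers is only an $O(1/\sqrt{\mB})$ fraction of it. One must therefore split the $s$-integral at a threshold $\tau=\tau(\mB)$ on $c(s)$. On the \emph{bulk} $\{c(s)\ge\tau\}$, the factor $(1-c)^{\,n-1}$ is exponentially small (at most $(1-\tau)^{\,n-1}$) while $1-(1-c)^{4\sqrt{\mB}}$ is bounded below, so the per-$s$ inequality holds with a large multiplicative margin --- aided by the extra slack of BTR capturing $b^{(1)}_n$ rather than only $b^{(2)}_n$ whenever it trades. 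On the \emph{tail} $\{c(s)<\tau\}$, which has $F_S$-measure at most $\tau$ by FSD (since $\{c(s)<\tau\}\subseteq\{s>v_{F_B}(1-\tau)\}$ and $\Pr_{s\sim F_S}[s>v_{F_B}(1-\tau)]\le 1-F_B(v_{F_B}(1-\tau))=\tau$), one bounds the per-$s$ loss by a constant times $\E_{b}[(b-s)^+]$ and argues that the total tail loss is dominated by the surplus accumulated on the bulk. Balancing these two estimates is exactly what fixes the constant and the exponent in $4\sqrt{\mB}$; this aggregate bookkeeping --- trading a rare-but-possibly-large tail loss against the bulk surplus, with enough uniform control on $\E_b[(b-s)^+]$ through the layer-cake representation and standard binomial-tail estimates --- is the main obstacle, since neither the gain nor the loss can be compared to the other pointwise in $s$. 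Specializing to $\mB=1$ (so $n=5$) recovers \cref{prop:sd-1-1-ub}, and atoms of $F_B$ are handled by a routine perturbation argument.
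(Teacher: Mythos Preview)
Your reduction is correct and in fact coincides with the paper's: both arguments show that it suffices to prove $\opt(1,n)-\opt(1,\mB)\ge\opt(1,n)-\btr(1,n)$ where $n=\mB+4\sqrt{\mB}$, and your layer-cake identities for the two sides are valid. The divergence is in how this inequality is established. You condition on $s$, obtain per-$s$ quantities $L(s)$ and $R(s)$, correctly note that $L(s)\ge R(s)$ fails when $c(s)\sim 1/\mB$, and propose to repair this via a bulk/tail split on $c(s)$.

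The gap is that the ``aggregate bookkeeping'' you flag as the main obstacle is never carried out, and it is not clear it can be without a further idea. Concretely: on the tail $\{c(s)<\tau\}$ your bound on $R(s)$ is of the form $\text{const}\cdot\E_b[(b-s)^+]$, and while the tail has $F_S$-measure at most $\tau$ by FSD, the integrand $\E_b[(b-s)^+]$ depends on $F_B$ and on $s$ in a way you do not control. The bulk surplus $\int_{\text{bulk}}(L-R)\,dF_S$ likewise depends on $\E_b[(b-s)^+]$ for \emph{different} values of $s$, and you give no mechanism to compare the two in a distribution-free way. ``Uniform control on $\E_b[(b-s)^+]$'' is exactly what is missing, and your sketch does not supply it. The paper sidesteps this entirely via a quantile coupling: draw $\mB+k+1$ quantiles uniformly, randomly label one as the seller and $k$ of the rest as the new buyers. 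Then FSD is used not merely to bound the tail probability, but to show that the trade-reduction event $\{b^{(2)}<s\}$ forces the seller's \emph{quantile} to be among the top two (probability $2/(n+1)$), and that conditional on this the loss is at most $b^{q^{(1)}}-b^{q^{(3)}}$; symmetrically, the gain from new buyers is bounded below by a multiple of the \emph{same} quantity $b^{q^{(1)}}-b^{q^{(3)}}$. This common factor cancels, leaving a purely combinatorial inequality $\frac{k(k-1)}{\mB+k}\ge 2$, with no need to compare $\E_b[(b-s)^+]$ across different $s$. Your bulk/tail route would need an analogue of this cancellation, and as written it does not have one.
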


Now that the agents are not all sampled i.i.d. (the seller comes from a different distribution) we cannot use the same idea used in the proof of \cref{thm:iid}, that the assignment of \emph{values} to agents is uniformly random. Instead, we argue that we can first draw quantiles and then assign the \emph{quantiles} to the agents uniformly at random; we then use stochastic dominance, from which follows that for the same quantile the buyer's value always exceeds the seller's, to prove the claim. For details see the proof in the appendix.

A gap still remains between the lower bound of \cref{thm:sd-1-r-lb} and the upper bound of \cref{prop:sd-1-r-ub}, and we leave the problem of closing this gap as an interesting open problem.

These two bounds together imply that the more buyers there are in the original market, while it is still possible to add buyers so that BTR in the augmented mechanism beats the optimum in the original market, the more buyers we need to accomplish this. This is quite striking given that we now show that in fact, the more buyers there are in the original market, the better the approximation that BTR guarantees to the expected GFT\footnote{Thus also to the optimal welfare as GFT approximation implies welfare approximation.} of the optimum to begin with, and moreover, as the number of buyers tends to infinity this approximation tends to $1$ (vanishing loss):

\begin{theorem}\label{btr-1r-converge}
	For any number of buyers $\mB$, any seller distribution $F_S$, and any buyer distribution $F_B$ such that $F_B$ stochastically dominates $F_S$
	it holds that 
	\[\btr(1,\mB)\ge\frac{\mB-1}{\mB+1} \cdot \opt(1,\mB).\]
\end{theorem}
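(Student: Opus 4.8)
The plan is to establish the equivalent bound $\opt(1,\mB)-\btr(1,\mB)\le\tfrac{2}{\mB+1}\opt(1,\mB)$ by a quantile-coupling argument in the spirit of the one used for \cref{prop:sd-1-r-ub}. First I would generate the market as follows: draw $\mB+1$ quantiles $Q_1,\dots,Q_{\mB+1}$ i.i.d.\ uniformly on $(0,1)$ (so a.s.\ distinct), pick an index $\sigma$ uniformly at random, declare the seller's value to be $v_{F_S}(Q_\sigma)$, and declare the $\mB$ buyers' values to be $\{v_{F_B}(Q_j):j\ne\sigma\}$. By exchangeability this is exactly the market $(1,\mB)$, so it suffices to prove the inequality after conditioning on the sorted quantiles $q_1>q_2>\dots>q_{\mB+1}$, with the residual randomness being only the position $r\in\{1,\dots,\mB+1\}$ of the seller's quantile, which is uniform. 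I would write $\opt_r$ for the optimal gains from trade in the realization in which the seller holds quantile $q_r$ (so $\opt(1,\mB)$ conditioned on the quantiles equals $\tfrac{1}{\mB+1}\sum_r\opt_r$), and $\ell_r$ for the loss $\opt_r-(\text{GFT of BTR})$ in that realization.

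The heart of the argument consists of two observations about a fixed sorted quantile vector. The first is that the $\opt_r$ are sorted: since $v_{F_S},v_{F_B}$ are monotone, the highest-quantile buyer always has the highest buyer value, so for $r\ge2$ we have $b^{(1)}=v_{F_B}(q_1)$ and $\opt_r=v_{F_B}(q_1)-v_{F_S}(q_r)$, which is nonnegative by stochastic dominance and nondecreasing in $r$; moreover $\opt_1=\bigl(v_{F_B}(q_2)-v_{F_S}(q_1)\bigr)^+\le\bigl(v_{F_B}(q_1)-v_{F_S}(q_2)\bigr)^+=\opt_2$, so $\opt_1\le\opt_2\le\dots\le\opt_{\mB+1}$. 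The second observation, which is where stochastic dominance is essential, is that BTR can lose only on the two cheapest realizations: by \cref{lem:reduce} (with $\mS=1$), BTR fails to be optimal exactly when the seller is the second-highest-valued agent overall, i.e.\ when exactly one buyer is ranked above the seller; but every buyer whose quantile exceeds $q_r$ has value $v_{F_B}(\cdot)\ge v_{F_S}(\cdot)\ge v_{F_S}(q_r)$ and is therefore ranked above the seller, so if $r\ge3$ there are at least two buyers above the seller and BTR is optimal, giving $\ell_r=0$. A short case check for $r=1$ and $r=2$ using \cref{lem:reduce} shows that whenever BTR does lose, the loss equals $b^{(1)}$ minus the seller's value, which in both cases is exactly $\opt_r$; hence $\ell_r\le\opt_r$ for every $r$, and $\ell_r=0$ for $r\ge3$.

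Combining these, conditioned on the quantiles the expected loss of BTR is $\tfrac{1}{\mB+1}(\ell_1+\ell_2)\le\tfrac{1}{\mB+1}(\opt_1+\opt_2)$, while the conditional expected optimal GFT is $\tfrac{1}{\mB+1}\sum_{r=1}^{\mB+1}\opt_r\ge\tfrac{1}{\mB+1}\bigl(\opt_1+\opt_2+(\mB-1)\opt_2\bigr)\ge\tfrac{1}{\mB+1}\cdot\tfrac{\mB+1}{2}(\opt_1+\opt_2)$, using $\opt_r\ge\opt_2\ge\tfrac12(\opt_1+\opt_2)$ for all $r\ge2$. Thus the conditional expected loss is at most $\tfrac{2}{\mB+1}$ times the conditional expected optimum; averaging over the quantiles and rearranging gives $\btr(1,\mB)\ge\tfrac{\mB-1}{\mB+1}\opt(1,\mB)$. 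I expect the only delicate point to be the bookkeeping around ties among values (the quantiles are distinct but $F_S$ or $F_B$ may have atoms, so several agents can share a value), which is handled by the buyer-favoring tie-breaking rule together with the tie clause of \cref{lem:reduce}; the conceptual content lies entirely in the two observations above --- that BTR can lose on at most two of the $\mB+1$ quantile-assignments, and that those two are precisely the ones on which $\opt$ is smallest.
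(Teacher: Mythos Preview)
Your argument is correct and shares the quantile-coupling setup with the paper, but completes it along a different line. Both proofs draw $\mB+1$ uniform quantiles, assign one uniformly to the seller, and exploit that under FSD the loss of $\btr$ can only occur when the seller's quantile is among the top two. From there the paper bounds the loss by the intermediate quantity $\E_{\vecq}[b^{q^{(1)}}-b^{q^{(3)}}]$ and then relates this quantity back to $\btr$ via the event $i>2$ (where $\btr$ necessarily trades with GFT at least $b^{q^{(1)}}-b^{q^{(3)}}$), obtaining $\opt-\btr\le\tfrac{2}{\mB-1}\btr$. You instead stay on the $\opt$ side: your monotonicity observation $\opt_1\le\opt_2\le\cdots\le\opt_{\mB+1}$ lets you argue directly that the two loss-bearing realizations are exactly the two with smallest $\opt_r$, whence $\ell_1+\ell_2\le\opt_1+\opt_2\le\tfrac{2}{\mB+1}\sum_r\opt_r$, giving $\opt-\btr\le\tfrac{2}{\mB+1}\opt$. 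Your route avoids the auxiliary quantity $b^{q^{(1)}}-b^{q^{(3)}}$ and is arguably more transparent about \emph{why} the factor $\tfrac{2}{\mB+1}$ appears (two bad positions out of $\mB+1$, and they are the cheapest ones); the paper's route has the mild advantage of bounding the loss against $\btr$ itself, which is the form one would want if one were after a multiplicative comparison to $\btr$ rather than to $\opt$.
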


\noindent \cref{intro-1r-approx} directly follows from \cref{btr-1r-converge} (for the proof of the theorem see \cref{app:btr-converge}.)

It is worth emphasizing that the approximation guarantee given in \cref{btr-1r-converge} departs from known approximation guarantees in the CS literature for mechanisms similar to BTR (\cite{McAfee92}'s mechanism and the Trade Reduction mechanism \cite{BabaioffN04}). For these mechanisms, it was observed that for any realization $(\vecs,\vecb)$, these mechanisms get at least a $\frac{q-1}{q}$ fraction of the optimum, where $q=q(\vecs,\vecb)$ is the efficient trade size. 
With a single seller, $q\leq 1$, so $q$ does not grow large even in a ``large market'' as even though the number of buyers $\mB$ grows large there is only one seller, 
and the above fraction is always zero, providing no approximation guarantee at all. 
In contrast, the approximation we present is a function of the size of the input ($\mB$) and holds when there is only one seller (the case that $q\leq 1$),
as long as $F_B$ stochastically dominates $F_S$. Moreover, it converges to the optimum as the number of buyers grows large, 
while any approximation that is a function of $q$ does not converge to the optimum when the number of sellers does not grow (as $q\leq \min \{\mS,\mB\}$).
\section{Many Sellers and Many Buyers}\label{sec:mr}

In this section we consider the  general setting of a double-auction market with arbitrary numbers of buyers and sellers that we first analyzed in \cref{sec:iid}, however with the buyers' values drawn from a distribution $F_B$ that stochastically dominates (rather than equals as in \cref{sec:iid}) the distribution $F_S$ from which the sellers' values are drawn. We ask whether \cref{prop:sd-1-r-ub} can be generalized to this setting: whether there is any finite number that is a function of the number of buyers and the number of sellers (but not of the distributions) such that adding so many buyers to the market guarantees that BTR in the augmented market beats the optimum in the original market. We give an affirmative result, showing that such a finite number indeed exists:

\begin{theorem} \label{thm:sd-m-r-ub}
	For any number of sellers $\mS$, any number of buyers $\mB$, any seller distribution $F_S$ and any buyer distribution $F_B$ that stochastically dominates  $F_S$, it holds that
	\[\btr(\mS,\mS\cdot (\mB+4\sqrt{\mB}))\geq\opt(\mS,\mB).\]
\end{theorem}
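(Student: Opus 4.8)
The plan is to reduce this statement to the single-seller result \cref{prop:sd-1-r-ub} by means of two observations about $\opt$ and $\btr$ that may be of independent interest. Write $N = \mB + 4\sqrt{\mB}$ (rounding up to an integer number of added buyers if needed), so that the augmented market has $\mS N$ buyers. I would prove the chain
\[
\btr(\mS,\mS N)\ \ge\ \mS\cdot\btr(1,N)\ \ge\ \mS\cdot\opt(1,\mB)\ \ge\ \opt(\mS,\mB),
\]
in which the middle inequality is exactly \cref{prop:sd-1-r-ub} applied to the one-seller market (the only place where stochastic dominance is used), the left inequality is a superadditivity property of $\btr$, and the right inequality is an upper bound on $\opt$.

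I would first prove the upper bound $\opt(\mS,\mB)\le \mS\cdot\opt(1,\mB)$, which is purely a statement about welfare (so stochastic dominance plays no role). Fix a realization of all values and let $b^{(1)}$ be the maximum of the $\mB$ buyer values. The optimal GFT of the $\mS$-seller market equals $\sum_{j=1}^{\min\{\mS,\mB\}}(b^{(j)}-s^{(j)})^+$; since $b^{(j)}\le b^{(1)}$ and every summand is nonnegative, this is at most $\sum_{i=1}^{\mS}(b^{(1)}-s_i)^+$, where $s_1,\dots,s_{\mS}$ are the i.i.d.\ seller values. Taking expectations over all the draws, linearity together with the fact that the seller values are identically distributed yields $\opt(\mS,\mB)\le \mS\cdot\E\bigl[(b^{(1)}-s_1)^+\bigr]=\mS\cdot\opt(1,\mB)$.

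It then remains to establish $\btr(\mS,\mS N)\ge \mS\cdot\btr(1,N)$. I would couple the two sides: draw $\mS$ i.i.d.\ seller values and $\mS N$ i.i.d.\ buyer values, split the buyers arbitrarily into $\mS$ groups of $N$, and pair group $i$ with seller $i$ to get $\mS$ one-seller/$N$-buyer sub-markets, each distributed exactly as the market in \cref{prop:sd-1-r-ub}; by linearity it then suffices to show that for \emph{every} realization, the GFT of BTR on the combined market is at least the sum of the GFTs of BTR on these sub-markets. Equivalently, I want BTR's realized GFT to be superadditive under merging markets, which (merging being associative) reduces to $\btr(M_1\cup M_2)\ge\btr(M_1)+\btr(M_2)$ for any two markets $M_1,M_2$. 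By \cref{lem:reduce}, the realized GFT of BTR on any market is its optimal GFT minus a single ``reduction-loss'' term that vanishes unless a seller realizes the relevant order statistic; and since the union of the efficient allocations of $M_1$ and $M_2$ is feasible in $M_1\cup M_2$, the optimal GFT of $M_1\cup M_2$ already dominates $\opt(M_1)+\opt(M_2)$. The main obstacle --- the heart of the proof --- is to show that this ``OPT slack'' together with the reduction losses of $M_1$ and $M_2$ always covers the single reduction loss of $M_1\cup M_2$; the delicate case is when $M_1\cup M_2$ reduces while neither $M_1$ nor $M_2$ does, where one must argue that the \emph{strict} gap $\opt(M_1\cup M_2)-\opt(M_1)-\opt(M_2)$ is by itself at least the merged reduction loss. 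I expect to prove this by a case analysis driven by \cref{lem:reduce}, using that the reduced agents lie at the $(\cdot\!+\!1)$-st highest value, that the seller realizing this order statistic does not trade in the efficient allocation of $M_1\cup M_2$, and that the high-order statistics of $M_1\cup M_2$ interleave those of $M_1$ and $M_2$, all while keeping tie-breaking consistent. Chaining the two observations with \cref{prop:sd-1-r-ub} then gives $\btr(\mS,\mS N)\ge \mS\cdot\btr(1,N)\ge \mS\cdot\opt(1,\mB)\ge\opt(\mS,\mB)$, as claimed.
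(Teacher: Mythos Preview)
Your plan is exactly the paper's: the same three-link chain
\[
\btr(\mS,\mS N)\ \ge\ \mS\cdot\btr(1,N)\ \ge\ \mS\cdot\opt(1,\mB)\ \ge\ \opt(\mS,\mB),
\]
with the middle link being \cref{prop:sd-1-r-ub}, the right link an $\opt$ upper bound, and the left link a pointwise superadditivity lemma for $\btr$ under merging markets (the paper's \cref{lem:btr-m-sum-unified}). Your argument for the $\opt$ bound is complete and matches the paper's (the paper actually states and proves the slightly more general $\opt(\sum m_i,r)\le\sum_i\opt(m_i,r)$, but the content is the same).

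The only place you diverge is in the sketch of the superadditivity lemma. You frame it as an accounting identity, ``OPT slack plus small-market reduction losses must cover the merged reduction loss,'' and flag the hard case where the merged market reduces but neither sub-market does. The paper instead argues directly on post-trade \emph{welfares}: when the merged market reduces, it locates a buyer $\tilde b$ whose value is among the top $m$ in the merged market yet who does \emph{not} trade in her own sub-market (via a short pigeonhole/case argument on whether the reduction-causing seller $\tilde s$ trades in her sub-market), and then compares the top $m$ values in $(\vecs,\vecb)\setminus\{b^-\}$ with ``some $m$ values'' in $(\vecs,\vecb)\setminus\{\tilde b\}$. This sidesteps the case analysis you anticipate and handles all tie-breaking uniformly. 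Your accounting approach is equivalent in the end and would also go through, but you have not actually carried out the delicate case; if you pursue it, the cleanest way to close it is precisely the paper's $\tilde b$ trick rather than tracking interleavings of order statistics.
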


As noted in the introduction, \cref{thm:sd-m-r-ub} should indeed first and foremost be viewed as a qualitative result --- that some \emph{finite} number of additional buyers suffices uniformly over all distributions (given stochastic domination) --- and as the first step in quantifying the number of added buyers that is necessary and sufficient to beat the optimum for any pair of distributions (under first-order stochastic dominance). As noted in the introduction, we thus leave the problem of lowering this bound and getting tight quantitative results as our main open problem.

We  prove \cref{thm:sd-m-r-ub} by carefully reducing to the single-seller case of \cref{prop:sd-1-r-ub}. Before we can spell out this reduction, we will need to develop a property of $\opt$ and a property of $\btr$. 
We believe both of these properties, and especially the latter one (the property of $\btr$), to be also of independent interest, and thus present it below (for the other see the appendix). 
For BTR we show that if there are multiple markets running BTR, unifying them into a single BTR market will never reduce the expected GFT.   

\begin{lemma}\label{lem:btr-m-sum-unified}
	For every $F_B$ and $F_S$ it holds that for every positive integers $m_1,m_2,\ldots,m_t$ and $r_1,r_2,\ldots,r_t$:
	\[\sum_{i=1}^t \btr(m_i,r_i) \leq  \btr\left(\sum_{i=1}^t m_i,\sum_{i=1}^t r_i\right).\]
\end{lemma}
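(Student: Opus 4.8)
The plan is to reduce this inequality between expectations to a pointwise inequality. Couple the two sides by drawing $\sum_i m_i$ seller values i.i.d.\ from $F_S$ and $\sum_i r_i$ buyer values i.i.d.\ from $F_B$ and partitioning them among the $t$ markets in a fixed way (say, the first $m_1$ sellers and first $r_1$ buyers form market $1$, the next $m_2$ sellers and $r_2$ buyers form market $2$, and so on). Under this coupling the left-hand side is the expectation of the total GFT that $\btr$ obtains on the $t$ individual markets, and the right-hand side is the expectation of the GFT of $\btr$ on the single market that is the union of all of them; so it suffices to prove, for every fixed realization of all these values, that the total GFT of $\btr$ over the $t$ markets is at most the GFT of $\btr$ on the combined market.

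Fix a realization and recall from the description of $\btr$ (or from \cref{lem:reduce}) that on any market with optimal trade size $q$, $\btr$ trades the $q^\ast$ highest-value buyers with the $q^\ast$ lowest-value sellers, where $q^\ast\in\{q-1,q\}$ and $q^\ast=q-1$ exactly when a reduction occurs; hence the GFT of $\btr$ on that market equals $\sum_{j=1}^{q^\ast}(b^{(j)}-s^{(j)})$, a sum of nonnegative terms since $b^{(j)}\ge s^{(j)}$ for every $j\le q$. Let $\mathcal M$ be the union, over the $t$ markets, of the trades $\btr$ performs in each of them; it is a valid partial matching in the combined market, all of whose pairs have nonnegative gain, and the total GFT of $\btr$ over the $t$ markets is exactly the total gain of $\mathcal M$. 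Write $Q,Q^\ast$ for the optimal and the $\btr$ trade sizes of the combined market, $q_i,q_i^\ast$ for those of market $i$, $m=\sum_i m_i$, and $P=\sum_i q_i^\ast=|\mathcal M|$. Since the optimal trade size of a market is the maximum number of (distinct) buyer--seller pairs with buyer value at least seller value that can be formed in it, and the union of the $t$ optimal sub-market matchings is one such collection in the combined market, we get $P\le\sum_i q_i\le Q$; together with $Q-1\le Q^\ast\le Q$ this yields $P\le Q^\ast+1$.

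If $P\le Q^\ast$, then since $\mathcal M$ uses $P$ buyers and $P$ sellers, its gain is at most $\sum_{j=1}^{P}b^{(j)}-\sum_{j=1}^{P}s^{(j)}=\sum_{j=1}^{P}(b^{(j)}-s^{(j)})$, which is at most $\sum_{j=1}^{Q^\ast}(b^{(j)}-s^{(j)})$ (the omitted terms, of indices $P+1,\dots,Q^\ast\le Q$, are nonnegative), and the latter is exactly the GFT of $\btr$ on the combined market, as needed. The remaining case $P=Q^\ast+1$ is the crux and is, I expect, the main obstacle. Here $P\le Q$ forces $Q^\ast=Q-1$ (so the combined market reduces) and $P=Q$, and then $\sum_i q_i^\ast=Q\ge\sum_i q_i\ge\sum_i q_i^\ast$ forces $q_i^\ast=q_i$ for every $i$, i.e.\ no sub-market reduces. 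In the combined reduction, by \cref{lem:reduce} a seller attains $x^{(m+1)}$; since fewer than $Q$ combined sellers have value below $x^{(m+1)}$ while $\mathcal M$ uses $Q$ sellers, some seller $\sigma^\ast$ in $\mathcal M$, belonging to some market $i^\ast$, has value at least $x^{(m+1)}$. As $\sigma^\ast$ is among the $q_{i^\ast}$ lowest-value sellers of market $i^\ast$, the $q_{i^\ast}$-th lowest such value is at least $x^{(m+1)}$, and since market $i^\ast$ does not reduce, its $(q_{i^\ast}\!+\!1)$-st highest buyer has value at least $x^{(m+1)}$, hence strictly above the $(Q\!+\!1)$-st highest combined buyer value. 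That buyer is therefore among the $Q$ highest-value combined buyers yet is not used by $\mathcal M$, so $\mathcal M$ must use some buyer of value at most the $(Q\!+\!1)$-st highest combined buyer value; hence the sum of $\mathcal M$'s buyer values is at most $\sum_{j=1}^{Q-1}b^{(j)}+b^{(Q+1)}\le\sum_{j=1}^{Q-1}b^{(j)}+x^{(m+1)}$. Combining this with the lower bound $\sum_{j=1}^{Q}s^{(j)}$ on the sum of $\mathcal M$'s seller values, the gain of $\mathcal M$ is at most $\sum_{j=1}^{Q-1}b^{(j)}+x^{(m+1)}-\sum_{j=1}^{Q}s^{(j)}=\sum_{j=1}^{Q-1}(b^{(j)}-s^{(j)})$, which is precisely the GFT of $\btr$ on the combined market.

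Throughout, ties between values are resolved according to the global tie-breaking order fixed in the preliminaries, which keeps all of the inequalities above consistent (and in particular keeps ``a reduction occurs'' equivalent to ``the value forced out is a seller's''); taking expectations over the coupled draw then gives the lemma. The only genuinely non-routine ingredient is the last case, where one must exploit that a sub-market which itself does not reduce always contains an ``extra'' high buyer just below its highest traded seller, and that this buyer is crowded out of $\mathcal M$ exactly when the combined market is forced to reduce.
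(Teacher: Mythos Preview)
Your proof is correct and shares the same pointwise--coupling skeleton as the paper's, but the case split is organised differently and, in the hard case, is a bit cleaner. The paper splits on whether the \emph{unified} market reduces: if not, $\btr=\opt$ there and the bound is immediate; if it does, the paper locates a buyer $\tilde b$ among the top $m$ values who does not trade in its small market, by a short case analysis (does the seller sitting at $x^{(m+1)}$ trade in its own small market? if not, pigeonhole on who holds items post-small-market-BTR, with a further buyer/seller sub-case). You instead split on $P\le Q^\ast$ versus $P=Q^\ast+1$. Your first case subsumes the paper's ``no reduction'' case and also absorbs part of its ``reduction'' case, and is dispatched by a one-line counting bound. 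Your second case is strictly narrower than the paper's reduction case: it forces $Q^\ast=Q-1$, $P=Q$, and crucially $q_i^\ast=q_i$ for every $i$, so no sub-market reduces; this extra structure lets you pinpoint the spare buyer directly as the price-setter $b_{i^\ast}^{(q_{i^\ast}+1)}$ in the sub-market of $\sigma^\ast$, without the paper's sub-cases. The endgame (bounding the buyer and seller sums of $\mathcal M$ and cancelling $x^{(m+1)}=s^{(Q)}$) is then essentially the same welfare-accounting step the paper performs. One minor point worth making explicit in your write-up: $q_{i^\ast}^\ast=q_{i^\ast}\ge 1$ forces $q_{i^\ast}<r_{i^\ast}$ (otherwise there is no $(q_{i^\ast}{+}1)$-st buyer and BTR would reduce), and summing gives $Q=P<r$ so that $b^{(Q+1)}$ indeed exists; you use both facts implicitly.
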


\noindent
Using \cref{lem:btr-m-sum-unified} 
we can now prove \cref{thm:sd-m-r-ub} by reducing to the single-seller case of \cref{prop:sd-1-r-ub}:

\begin{proof}[Proof of \cref{thm:sd-m-r-ub}]
		By \cref{prop:sd-1-r-ub} it holds that $\btr(1,\mB+4\sqrt{\mB})\geq\opt(1,\mB)$. 
		We combine this with \cref{lem:btr-m-sum-unified} (for
		$m_i=1, r_i=\mB+4\sqrt{\mB}$ for every $i$) and with $\opt(\mS,\mB)\leq \mS\cdot \opt(1,\mB)$ (property of $\opt$ that we prove in the appendix)
		to prove the theorem:
	\[\opt(\mS,\mB)\leq \mS\cdot \opt(1,\mB)\leq \mS\cdot \btr(1,\mB+4\sqrt{\mB}) \leq  \btr(\mS,\mS(\mB+4\sqrt{\mB})).\tag*{\qedhere}\]
\end{proof}

\section{Concluding Remarks}
In this paper, we applied Bulow-Klemperer-style analysis beyond monopolist revenue maximization, and have presented BK-style results for welfare in double-auction markets. We have suggested using the robust Buyer Trade Reduction mechanism and running it with additional buyers to beat the optimum welfare in the original market.  
We have shown that when the values of all agents are sampled i.i.d., BTR with just one more buyer beats the optimum in the original market, irrespective of the value distribution and of the original numbers of buyers and sellers. A possible direction for future research is extending these results beyond the double-auction setting. We then moved to study the more challenging setting in which the values of the buyers are sampled from one distribution, while the values of the sellers are sampled from another. We have shown that with no assumptions about the two distributions, for any anonymous robust deterministic mechanism, no matter how many buyers we add, the mechanism will fail to beat the optimum for some distributions. We thus assumed that the buyers' distribution stochastically dominates the sellers' distribution.

Focusing on the case of one buyer and one seller (with stochastic dominance), we have shown that adding one more buyer is not enough for any anonymous robust deterministic mechanism to beat the optimum, yet BTR with a constant number of additional buyers beats the optimum for any distribution. We have also related BTR to pricing using samples. We then studied the case of many buyers and one seller, and have shown that for any anonymous robust deterministic mechanism, the number of buyers that we have to add must grow with the number of initial buyers.  However, when running BTR, there is some finite number of buyers, which moreover grows significantly slower than the number of original buyers, that is sufficient to add, so that BTR in the augmented market beats the optimum welfare in the original market. This is somewhat surprising, as we also show that as the number of buyers grows, BTR with no additional buyers converges to the optimum. Finally, we have shown that for general double-auction markets, with any number of sellers and any number of buyers, a BK-style result for BTR with stochastic dominance is indeed possible, showing once again that it is sufficient to add some finite number of additional buyers which does not depend on the value distributions. We leave open the problem of pinning-down the exact number of buyers that need to be added.

\appendix
\section{Adding Sellers}\label{app:sellers-buyers}

The following proposition allows us to to recast each of our results for adding buyers to an analogous result for adding sellers, and vice versa for any result for adding sellers.

\begin{proposition}\label{sellers-buyers}
There is a one-to-one and onto correspondence $i$ between signed vectors of realized values $(\vecs, \vecb)$ and signed vectors of realized values $i(\vecs,\vecb)=(i(\vecb),i(\vecs))$ (so the original seller values determine the new buyer values and vice versa) such that all of the following hold:
\begin{itemize}
\item
For every vector of realized values $(\vecs, \vecb)$, all of the following hold:
\begin{itemize}
\item
$\opt(i(\vecs,\vecb))=\opt(\vecs, \vecb)$,
\item
$\sbopt(i(\vecs,\vecb))=\sbopt(\vecs,\vecb)$,
\item
$\str(i(\vecs,\vecb))=\btr(\vecs,\vecb)$.
\end{itemize}
More generally, for every mechanism $M$ for $m_S$ sellers and $m_B$ buyers there exists a mechanism $M'$ for $m_B$ sellers and $m_S$ buyers such that $M'(i(\vecs,\vecb))=M(\vecs,\vecb)$ for every vector of realized values $(\vecs, \vecb)$. Furthermore, whichever of the following properties that is satisfied by~$M$ is also respectively satisfied by $M'$: determinism, IR, truthfulness, weak budget balance, anonymity.
\item
Applying $i$ pointwise, letting $F'_S=i(F_B)$ (recall that this is the distribution of the new \emph{seller} values) and $F'_B=i(F_S)$ (recall this is the distribution of the new \emph{buyer} values), we have that $F_B~\text{FSD}~F_S$ if and only if $F'_B~\text{FSD}~F'_S$.
\end{itemize}
\end{proposition}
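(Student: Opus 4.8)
The plan is to take $i$ to be a \emph{reflection of values}: fix a constant $H$ and set $i(v)=H-v$, so that $(\vecs,\vecb)$ is sent to the signed vector whose seller entries are $H$ minus the old buyer values and whose buyer entries are $H$ minus the old seller values. Intuitively, a seller holding an item she values at $s$ behaves exactly like a buyer who values acquiring an item at $H-s$ and already holds it, so interchanging the roles of buyers and sellers is the same as reflecting all values around $H$. Since only finitely many finite values occur in a given realization, one may take $H$ larger than all of them; and because gains from trade, utilities, the pricing rules of $\btr$ and $\str$, and first-order stochastic dominance all depend on values only through comparisons and differences, the precise value of $H$ (and the handling of ties, which the paper's tie-breaking rule already pins down) is immaterial. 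The first things to record are that $i$ is an order-reversing involution and that it reverses the quantile function, $v_{i(F)}(q)=H-v_F(1-q)$; the last bullet is then immediate, since $F'_B~\text{FSD}~F'_S$ unwinds to $H-v_{F_S}(1-q)\ge H-v_{F_B}(1-q)$ for all $q$, i.e.\ to $v_{F_B}(p)\ge v_{F_S}(p)$ for all $p$, i.e.\ to $F_B~\text{FSD}~F_S$.

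Next, for gains from trade, under $i$ the $j$-th highest value among all agents becomes the $j$-th lowest, so the efficient matching (highest buyer with lowest seller, and so on) becomes the efficient matching of the reflected market with buyer and seller roles swapped; a pair with $b\ge s$ reflects to a pair with $i(s)\ge i(b)$; and that pair's contribution $b-s$ equals $i(s)-i(b)$. Hence \emph{every} allocation has exactly the same gains from trade in a market and in its reflection, which gives $\opt(i(\vecs,\vecb))=\opt(\vecs,\vecb)$ realization by realization (and therefore in expectation once the distributions are matched by $i$). Taking the allocation-and-price rule of $\btr$, one checks that its reflection is precisely $\str$: the reduction test $b^{(q+1)}\ge s^{(q)}$ reflects to the corresponding reduction test of $\str$ on the new market, and the $\btr$ price $b^{(q+1)}$ --- the value of the next buyer in line --- reflects to the value of the next seller in line, which is exactly the $\str$ price. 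This gives the third bullet, $\str(i(\vecs,\vecb))=\btr(\vecs,\vecb)$.

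For the ``more general'' statement, given $M$ on $\mS$ sellers and $\mB$ buyers I would define $M'$ on $\mB$ sellers and $\mS$ buyers by reflecting the reported values, running $M$, reflecting the allocation back (old seller $j$ keeps $\Leftrightarrow$ new buyer $j$ does not hold an item; old buyer $k$ wins $\Leftrightarrow$ new seller $k$ no longer holds its item), and translating payments so that the total money the mechanism collects is unchanged (concretely, a new buyer pays $H$ minus what the corresponding old seller received, a new seller receives $H$ minus what the corresponding old buyer paid, and non-traders still pay $0$). Determinism and anonymity are immediate, since reflection commutes with relabeling agents and carries thresholds to thresholds. Weak budget balance transfers because, by the choice of payment translation, the mechanism's net revenue in the new market equals its net revenue in the old one. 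Truthfulness transfers because $M$, being a deterministic truthful single-parameter mechanism, is described by a monotone allocation threshold for each agent together with a threshold payment, and reflection carries a seller's ``keep'' threshold $\sigma$ to a buyer's ``win'' threshold $H-\sigma$ and conversely, i.e.\ carries monotone rules to monotone rules. Individual rationality transfers using the standard fact that in a truthful IR mechanism a trading seller is paid at least her own value and a winning buyer pays at most her own value --- exactly the inequalities needed to see that a reflected agent's utility stays nonnegative. Since $M\mapsto M'$ is then a bijection between feasible mechanisms for the two markets that preserves gains from trade realization by realization, optimizing over it yields $\sbopt(i(\vecs,\vecb))=\sbopt(\vecs,\vecb)$, completing the second bullet.

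The step I expect to be the main obstacle is precisely getting the payment translation in the construction of $M'$ to respect \emph{all} of truthfulness, individual rationality, and weak budget balance at once in full generality: the naive ``subtract every payment from $H$'' rule clearly preserves net revenue and truthfulness, but its preservation of individual rationality relies on the trading-seller-is-paid-at-least-her-value fact, which can fail for pathological mechanisms that force a seller to trade regardless of her report, so the general argument is cleanest if routed through the single-parameter characterization (threshold plus threshold-payment) rather than through the raw payment vector. Beyond that, the only mild technicalities --- the choice of $H$ for distributions with unbounded support, and tie-breaking between equal buyer and seller values --- are harmless, since every quantity in the proposition is invariant under a common shift of all values and the paper's tie-breaking rule is itself symmetric under the role swap.
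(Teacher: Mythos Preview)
Your approach is essentially the paper's: the paper defines $i$ by pure negation, $i(v)=-v$ (this is why the statement speaks of \emph{signed} vectors), and then simply declares that ``all of the claims follow immediately from definitions,'' noting that for bounded distributions one may afterwards shift by a constant---exactly your $H$---to restore nonnegativity. Your write-up is considerably more detailed than the paper's one-line proof, and your worry about IR in the payment translation is unnecessary: with your rule each agent's utility is preserved \emph{exactly} (a trading old seller with value $s$ receiving $p_s$ has utility $p_s-s$, and the corresponding new buyer with value $H-s$ paying $H-p_s$ has utility $(H-s)-(H-p_s)=p_s-s$), so IR, truthfulness, and budget balance all transfer without appealing to any threshold characterization.
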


We emphasize that the correspondence in \cref{sellers-buyers} is between \emph{signed} vectors of realized values and \emph{signed} vectors of realized values. In particular, positive buyer values are mapped to negative seller values and vice versa. For bounded distributions this can be ``remedied'' by shifting all resulting values (of sellers and of buyers) by a constant (which does not affect the GFT). For unbounded distributions we note that in fact none of our proofs rely on the sign of any value, and so they carry through this correspondence. The proof of \cref{sellers-buyers} is immediate: simply define the new \emph{seller} values as ${i(\vecb)=-\vecb}$ and the new \emph{buyer} values as $i(\vecs)=-\vecs$, and all of the claims in \cref{sellers-buyers} follow immediately from definitions. (In particular, since we both flip the signs \emph{and} flip the roles, stochastic dominance of the buyers' distribution over the sellers' is maintained.) As added intuition, one may think of the items in the new market as ``not owning an item of the old market''; this is indeed what buyers ``bring to the market'' and what sellers ``buy from buyers'' in trade (and indeed, if the original items are goods, i.e., have positive value, then the new items are bads, i.e., have negative value, and vice versa).

\section{McAfee's Median mechanism} \label{app:mcafee}
\citeN{McAfee08} has presented a mechanism for GFT approximation for bilateral trade when the median of the buyer's distribution is at least as high as the median of the seller's distribution (which is implied by our assumption that the buyer's distribution FSD the seller's.) 
The mechanism posts any price between the two medians, and if both agents accept the price, trade happens.
When the seller and the buyer distributions are identical, it simply posts the median of that distribution.\footnote{We remark that the mechanism $M'$ that corresponds to the median mechanism via the correspondence from \cref{app:sellers-buyers} is the median mechanism itself, and so the impossibility result below also holds for adding a seller rather than a buyer.}
\citeauthor{McAfee08} proved that this mechanism has expected GFT that is at least half of the optimum. We show below 1)~that this guarantee is the best that could be given for this mechanism, and 2)~that the natural extension of this mechanism for one seller and two buyers, which we call ``median mechanism with one additional buyer'' --- where the median is posted as the price, and if the seller and at least one of the two buyers agree to that price, the seller trades with the higher valued buyer --- does not beat the optimum in the original market even in i.i.d.\ settings (while as we have shown, BTR does). See the discussions in \cref{sec-11} for additional context.

\begin{lemma}
	For distribution $F$, denoting the expected GFT of the median mechanism with one seller and one buyer by $\median(1,1)$, and the expected GFT for the median mechanism with one additional buyer  by $\median(1,2)$.  it holds that:
	\[\opt(1,1) \geq \delta  - O(\delta^2),\]
	\[\median(1,1) \leq  \frac{\delta}{2} + O(\delta^2),\]
	\[\median(1,2) \leq \frac{7}{8}\cdot \delta + O(\delta^2).\]	
	Thus, for small enough $\delta$: 
	\[\median(1,2)<\opt(1,1).\]
	Additionally, for every $c>1/2$, for small enough $\delta$:
	\[\median(1,1)<c\cdot  \opt(1,1).\]
\end{lemma}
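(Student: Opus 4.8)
The plan is to exhibit an explicit one‑parameter family of distributions $F=F_\delta$ (with an auxiliary $\eta=\eta(\delta)\to 0$, say $\eta=\delta^2$) and to evaluate $\opt(1,1)$, $\median(1,1)$, and $\median(1,2)$ on it up to an $O(\delta^2)$ error. I would take $F_\delta$ to place mass $\delta/2$ at the value $0$, mass $\delta/2$ at the value $2$, and the remaining mass $1-\delta$ spread uniformly on the tiny interval $[1-\eta,1+\eta]$. The single feature that makes the whole argument work is that the \emph{median} of $F_\delta$ is exactly $1$ and $1$ is \emph{not} an atom, so $\Pr[X<1]=\Pr[X>1]=\tfrac12$; consequently, when the median mechanism posts the price $1$, the seller accepts with probability exactly $\tfrac12$ and each buyer accepts with probability exactly $\tfrac12$. (Had the entire middle mass sat on a single atom at the median, both sides would accept with probability $\approx 1$ and the median mechanism would essentially match $\opt$; so spreading this mass over an infinitesimal interval straddling the median is the crucial modeling choice.) Note that $F_S=F_B=F_\delta$, so this is an i.i.d.\ instance, as the statement requires.

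Given $F_\delta$, each of the three quantities is a short computation via the decompositions $b-s=(b-1)+(1-s)$ and, for two buyers, $\max(b_1,b_2)-s=(\max(b_1,b_2)-1)+(1-s)$, together with independence. For $\opt(1,1)=\E[(b-s)^+]$, only the pairs with $b$ in the top cluster and $s$ in the bottom cluster, $b$ in the top cluster and $s$ in the middle cluster, and $b$ in the middle cluster and $s=0$ contribute at order $\delta$, yielding $\opt(1,1)=\delta-O(\delta^2)$. For the median mechanism with one buyer, trade occurs iff $s\le 1\le b$, so $\median(1,1)=\Pr[s\le1]\,\E[(b-1)^+]+\Pr[b\ge1]\,\E[(1-s)^+]=\tfrac12\bigl(\tfrac\delta2+O(\eta)\bigr)+\tfrac12\bigl(\tfrac\delta2+O(\eta)\bigr)=\tfrac\delta2+O(\delta^2)$, since $(b-1)^+$ and $(1-s)^+$ equal $1$ with probability $\delta/2$ (the far clusters) and are $O(\eta)$ otherwise. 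For the median mechanism with one extra buyer, trade occurs iff $s\le1$ (probability $\tfrac12$) and $\max(b_1,b_2)\ge1$ (probability $1-(\tfrac12)^2=\tfrac34$), and the realized gains are $\max(b_1,b_2)-s$; the same decomposition gives $\median(1,2)=\Pr[s\le1]\,\E[(\max(b_1,b_2)-1)^+]+\Pr[\max(b_1,b_2)\ge1]\,\E[(1-s)^+]$, where $\E[(\max(b_1,b_2)-1)^+]=\E[\max((b_1-1)^+,(b_2-1)^+)]=\delta+O(\delta^2)$ (at least one of the two buyers lands in the top cluster). Hence $\median(1,2)=\tfrac12\cdot\delta+\tfrac34\cdot\tfrac\delta2+O(\delta^2)=\tfrac{7}{8}\delta+O(\delta^2)$.

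The two concluding inequalities then follow at once. Since $\opt(1,1)=\delta-O(\delta^2)$ and $\median(1,2)=\tfrac78\delta+O(\delta^2)$, the difference $\opt(1,1)-\median(1,2)=\tfrac18\delta-O(\delta^2)$ is positive for all sufficiently small $\delta$, so $\median(1,2)<\opt(1,1)$; and $\median(1,1)/\opt(1,1)$ equals $(\tfrac12+O(\delta))/(1-O(\delta))$, which tends to $\tfrac12$, so for any fixed $c>\tfrac12$ this ratio drops below $c$ once $\delta$ is small enough, giving $\median(1,1)<c\cdot\opt(1,1)$. I expect the only genuine obstacle to be choosing the distribution correctly — in particular, recognizing that the median must be a non‑atom point carrying exactly half the mass on each side, which is what forces the median mechanism to "lose" a factor near $2$; once that is in place the three expectations are routine, provided one keeps the $O(\eta)$ error from the smeared middle cluster subordinate to $O(\delta^2)$ (which is why I would fix $\eta=\delta^2$ at the outset rather than leave it free).
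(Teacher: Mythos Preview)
Your proposal is correct and follows essentially the same approach as the paper: both construct a distribution with mass $\delta/2$ at each of $0$ and $2$ and the remaining $1-\delta$ mass symmetrically straddling the median $1$ (the paper uses two atoms at $1\pm\delta^2$ plus a tiny atom at $1$, you use a uniform smear on $[1-\delta^2,1+\delta^2]$), so that each side accepts the median price with probability exactly~$\tfrac12$. Your decomposition $b-s=(b-1)+(1-s)$ together with independence packages the case analysis more cleanly than the paper's direct enumeration, but the substance and the resulting constants are identical.
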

\begin{proof}
The distribution $F$ that we will use is the following distribution:  
\[v = \begin{cases} 0 & w.p.\ \delta/2   \\ 1-\delta^{2} & w.p.\ 1/2 - \delta/2 - \delta^{10}  \\ 1& w.p.\ 2\delta^{10}  \\ 1+ \delta^{2}& w.p. \ 1/2 - \delta/2- \delta^{10} \\ 2& w.p.\ \delta/2  \end{cases}\]
Note that the median of this distribution is at $1$. 

	With two agents, any event for which none of them has value in $\{1-\delta^2, 1+\delta^2\}$ has negligible probability ($O(\delta^2)$) and therefore negligible contribution ($O(\delta^2)$) to any of the above expectations. Additionally, the event that none have value in $\{0,2\}$ has negligible gains from trade ($O(\delta^2)$) and therefore negligible contribution ($O(\delta^2)$) to any of the above expectations.

	We first compute the optimal GFT with one seller and one buyer. The first term considers the case that the GFT is  $1+\delta^2$ (obtained either when the buyer has value of 2 and seller has value of $1-\delta^2$, or the buyer has value of $1+\delta^2$ and the seller has value $0$.)
	while the second is the case that the GFT is $1-\delta^2$ (obtained either when the buyer has value of 2 and seller has value of $1+\delta^2$, or the buyer has value of $1-\delta^2$ and the seller has value $0$.). For each we have two symmetric cases:
	\[\opt(1,1) \geq 2 (1+\delta^2) \cdot \frac{\delta}{2}\cdot \frac{1-\delta}{2} + 2(1-\delta^2) \frac{\delta}{2}\cdot \frac{1-\delta}{2} - O(\delta^2)=  \delta- O(\delta^2).\]
	
	For the median mechanism with one seller and one buyer, there is trade only if the buyer has value at least 1, while the seller at most 1. So in this case the only non-negligible  case is the one in which  the GFT is  $1+\delta^2$. There are two such events with equal probability. 
	We get: 
	\[\median(1,1) \leq 2 (1+\delta^2) \cdot \frac{\delta}{2}\cdot \frac{1-\delta}{2} + O(\delta^2)=  \frac{\delta}{2} + O(\delta^2).\]
	
	For the median mechanism with one additional buyer,	there is trade only if the seller has value at most 1, while the highest seller has value at least 1, in any such case trade happens with the highest buyer. Again, the only  non-negligible  case is the one in which  the GFT is  $1+\delta^2$.
	Up to low order terms, with probability $\delta$ we have at least one buyer with value $2$, and we get the contribution when the seller has value $1-\delta^2$, with probability about half. The other  non-negligible  case is the one that at least one buyer has value $1+\delta^2$, and the seller has value 0. This happens with probability about $\frac{3}{4}\cdot\frac{\delta}{2}$. 
	We get 
	\[\median(1,2) \leq (1+\delta^2)\cdot ((1-(1-\frac{\delta}{2})^2)\cdot \frac{1-\delta}{2}   + \frac{\delta}{2}\cdot (1-(1-(\frac{1-\delta}{2}))^2) ) +
	O(\delta^2) = \frac{7}{8}\cdot \delta + O(\delta^2).\tag*{\qedhere}
	\]
\end{proof}

\section{Truthfulness of Buyer Trade Reduction}\label{sec:btr-ic}

\begin{proposition}
The BTR mechanism is dominant-strategy incentive compatible.
\end{proposition}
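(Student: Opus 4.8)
The plan is to reduce dominant-strategy incentive compatibility to the standard fact that an agent who effectively faces a \emph{posted price} has truthful reporting as a dominant strategy. Fix an agent, say a buyer $i$, and fix the reported values $\vec s$ of all sellers and $b_{-i}$ of all other buyers. I would show that, as a function of buyer $i$'s own report $v$, the outcome for buyer $i$ is exactly that of a take-it-or-leave-it offer: there is a price $p=p(b_{-i},\vec s)\in[0,\infty]$ such that buyer $i$ trades and pays $p$ whenever $v>p$, does not trade (and, per the convention in the text, pays $0$) whenever $v<p$, and in the boundary case $v=p$ either outcome may occur. Granting this, buyer $i$'s utility from reporting $v$ is $v-p$ if she is allocated an item and $0$ otherwise, and the truthful report $v^\ast$ is weakly optimal: it opts her in precisely when $v^\ast\ge p$ (equivalently $v^\ast-p\ge 0$), which is exactly when opting in is weakly better than opting out. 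The case of a seller $j$ is entirely analogous, with a threshold $p'=p'(s_{-j},\vec b)$ such that seller $j$ trades and \emph{receives} $p'$ whenever $s_j<p'$ and does not trade whenever $s_j>p'$; I would spell out the buyer case and indicate the symmetric modifications for sellers.

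To establish the posted-price structure for buyer $i$, let $W_i\subseteq[0,\infty)$ be the set of reports $v$ for which buyer $i$ trades against the fixed $b_{-i},\vec s$. If $W_i=\emptyset$ then buyer $i$ never trades, her utility is identically $0$, and truthfulness is trivially dominant, so assume $W_i\neq\emptyset$ and fix some $w\in W_i$. Inspecting the definition of BTR on input $(w,b_{-i},\vec s)$, the trade size is some $q$ and buyer $i$ trades in one of two ways: either the ``no reduction'' branch applies and she is among the $q$ highest buyers, in which case she pays $b^{(q+1)}$ (which is then the $q$-th largest value among the \emph{other} buyers, since buyer $i$ occupies one of the top $q$ slots); or the ``reduction'' branch applies and she is among the $q-1$ highest buyers, in which case she pays $b^{(q)}$ (which is then the $(q-1)$-th largest value among the other buyers). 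In either case the price does not depend on $v=w$, so call it $p$. I would then verify two claims. (i) \emph{Every $w\in W_i$ satisfies $w\ge p$:} a trading buyer ranks strictly above the position whose value is $p$, so her value is at least $p$ --- this is just the individual rationality of BTR already noted in the text --- hence $W_i\subseteq[p,\infty)$. (ii) \emph{Every $v>p$ lies in $W_i$, with price again $p$:} since $p$ is a specific order statistic of the other buyers' values, any report $v>p$ places buyer $i$ strictly above that order statistic, hence among the top $q$ (resp.\ top $q-1$) buyers of the combined population; using this one checks that on input $(v,b_{-i},\vec s)$ the trade size is still $q$ and the same branch is selected (the governing comparison is between $b^{(q+1)}$ --- a value of a buyer other than $i$, and therefore unchanged --- and the relevant seller value, and these stay on the same side), so buyer $i$ trades at price $p$. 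Claims (i) and (ii) together give $W_i=[p,\infty)$ or $W_i=(p,\infty)$ with price identically $p$ on $W_i$; in particular $p=\inf W_i$, so $p$ does not depend on the choice of $w$. (Here \cref{lem:reduce} is a convenient bookkeeping companion, but the argument is really a direct manipulation of order statistics.)

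The seller side is symmetric: fixing $s_{-j},\vec b$, the set $V_j$ of reports for which seller $j$ trades is \emph{down}-closed, and when she trades she is paid a fixed amount $p'=\inf\bigl([0,\infty)\setminus V_j\bigr)$; the two branches give her $b^{(q+1)}$ (no reduction) or $s^{(q)}$ (reduction), each of which is an order statistic not involving seller $j$'s own report, and lowering her report keeps her trading at the same payment by the mirror image of claim (ii). (Alternatively, one may invoke the buyer/seller correspondence of \cref{sellers-buyers}.) Combined with the individual rationality and weak budget balance already observed for BTR, this shows BTR is robust.

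The step I expect to be the main obstacle is claim (ii): verifying that raising buyer $i$'s report above $p$ can neither change the trade size $q$ nor switch the reduction branch on or off while she is among the trading buyers, and hence cannot move the price. This requires a short case split (reduction vs.\ no reduction at the starting report $w$) together with careful bookkeeping of how the order statistics $b^{(q)},b^{(q+1)}$ of the combined population relate to those of the other buyers alone, honoring the paper's tie-breaking convention (favoring buyers, then by ID) and the degenerate cases in which $q$ equals $\mB$ or $\mS$, so that $b^{(q+1)}$ or $s^{(q+1)}$ must be read as $-\infty$ or $+\infty$ respectively. These checks are elementary but fiddly, and getting the tie cases exactly right is where most of the care goes.
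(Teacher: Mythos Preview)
Your approach is essentially the same as the paper's: both arguments reduce incentive compatibility to showing that, holding others' reports fixed, each agent faces a threshold (critical-value) price, and both verify this by a case split on the two BTR branches (no-reduction vs.\ reduction) together with a check that moving the agent's report on the trading side of the threshold preserves both the efficient trade size $q$ and the branch. The paper spells out all four sub-cases (buyer/seller $\times$ two branches) explicitly and, for each, separately argues what happens when the bid moves above and below the threshold; your sketch is organized around the posted-price characterization $W_i=[p,\infty)$ or $(p,\infty)$, but the actual work---your claim (ii)---is the same monotonicity-plus-invariance verification the paper performs.

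One small wrinkle in your write-up: as stated, claim (i) (``every $w\in W_i$ satisfies $w\ge p$'') is justified by IR, but IR at an arbitrary $w'\in W_i$ only gives $w'\ge p(w')$, not $w'\ge p$, until you know the price is the same at every trading report. The clean fix is to prove claim (ii) first (that every $v>p$ trades at price $p$), then argue by contradiction that no $w'<p$ can lie in $W_i$ (else apply claim (ii) with $w'$ in place of $w$ to get that the price at $w$ equals $p(w')\le w'<p$). This is a presentational point rather than a real gap, and the paper sidesteps it by directly exhibiting, in each sub-case, both directions of the threshold behavior.
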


\begin{proof}
To observe that BTR is dominant-strategy incentive compatible it is enough to show that it is monotone in the bids (that is, a trading agent cannot be removed from the trade by bidding more competitively, i.e., bidding higher if she is a buyer, or lower if she is a seller) and that the payments are by critical values.
Recall that $q$ denotes the efficient trade size. 

We first consider any trading buyer with a bid $b$ and consider two cases, when $b^{(q+1)}\geq s^{(q)}$ holds and when it does not. In either case we denote by $q$ the efficient trade size when the buyer bids $b$ and by $\bar{q}$ the efficient trade size when the buyer bids a different bid $\bar{b}$. In the latter case we denote the $j$\textsuperscript{th} ordered statistic of the bids of the buyers by $\bar{b}^{(j)}$ and of the sellers by $\bar{s}^{(j)}=s^{(j)}$.
If $b^{(q+1)}\geq s^{(q)}$ holds, then the trade size in BTR is $q$, so since the buyer is trading it must be the case that $b=b^{(i)}$ for some $i\in \{1,2,\ldots,q\}$, so $b\geq  b^{(q+1)}$. Consider any bid $\bar{b}> b^{(q+1)}$ by the buyer. It would still be one of the $q$ highest buyer bids among the new bids. It would still hold that $\bar{b}^{(q+1)}=b^{(q+1)}<s^{(q+1)}=\bar{s}^{(q+1)}$ as well as that $\bar{b}^{(q+1)}=b^{(q+1)}\geq s^{(q)}=\bar{s}^{(q)}$, so $\bar{q}=q$ and the new trade size in BTR is $\bar{q}=q$. Therefore, by bidding $\bar{b}> b^{(q+1)}$, the buyer would still trade. On the other hand, any bid $\bar{b}< b^{(q+1)}$ would no longer be one of the $q$ highest buyer bids among the new bids, and the new trade size in BTR would not be greater than $\bar{q}\le q$. As the new trade size is not greater than $q$ and the buyer bidding $\bar{b}$ is not one of the $q$ highest bidding buyers, the buyer would not be trading with any bid $\bar{b}< b^{(q+1)}$. 
We conclude that in the case that $b^{(q+1)}\geq s^{(q)}$ holds, $b^{(q+1)}$ is the critical value for the buyer to trade, and equal to her payment.

We turn to the other case, when  $b^{(q+1)}< s^{(q)}$. In this case the trade size in BTR is $q-1$, so since the buyer is trading it must be the case that $b=b^{(i)}$ for some $i\in \{1,2,\ldots,q-1\}$, so $b\geq b^{(q)}$. We argue that $b^{(q)}$ is the critical value for the buyer to trade. Indeed, bidding any $\bar{b}>b^{(q)}$ would still be one of the highest $q-1$ buyer bids among the new bids. It would still hold that $\bar{b}^{(q+1)}=b^{(q+1)}<s^{(q)}=\bar{s}^{(q)}$ as well as that $\bar{b}^{(q)}=b^{(q)}\ge s^{(q)}=\bar{s}^{(q)}$, so $\bar{q}=q$ and the new trade size in BTR is $\bar{q}-1=q-1$, and we conclude that the buyer would still trade with any bid $\bar{b}>b^{(q)}$. On the other hand, any bid $\bar{b}<b^{(q)}$ would no longer be one of the $q-1$ highest buyer bids among the new bids. In this case, if either $\bar{q}<q$ or this bid is not one of the $q$ highest buyer bids among the new bids, then the buyer would not be trading. Otherwise, $\bar{q}=q$ and this is the $q$ highest buyer bid among the new bids, so since $\bar{b}^{(\bar{q}+1)}=b^{(q+1)}< s^{(q)}=\bar{s}^{(\bar{q})}$, we have that the new trade size in BTR is $\bar{q}-1=q-1$, so the buyer would not be trading in this case as well. 
We conclude that in the case that $b^{(q+1)}< s^{(q)}$ holds, $b^{(q)}$ is the critical value for the buyer to trade, and equal to her payment. This completes the proof that for buyers, payments are by critical values. 

Next, consider any trading seller with a bid $s$, and once again consider the same two cases. In either case, we denote by $q$ the efficient trade size when the seller bids $s$ and by $\bar{q}$ the efficient trade size when the seller bids a different bid $\bar{s}$. In the latter case we denote the $j$\textsuperscript{th} ordered statistic of the bids of the sellers by $\bar{s}^{(j)}$ and of the buyers by $\bar{b}^{(j)}=b^{(j)}$.
If $b^{(q+1)}\geq s^{(q)}$ holds, then the trade size in BTR is $q$, so since the seller is trading it must be the case that $s=s^{(i)}$ for some $i\in \{1,2,\ldots,q\}$, so $s\leq s^{(q)}\le b^{(q+1)}$.
Consider any bid $\bar{s}<b^{(q+1)}$ by the seller. It would still be one of the $q$ lowest seller bids among the new bids since $b^{(q+1)}<s^{(q+1)}$ by definition of $q$. It would still hold that $\bar{b}^{(q+1)}=b^{(q+1)}<s^{(q+1)}=\bar{s}^{(q+1)}$ and that $\bar{s}^{(q)}\le\max\{s^{(q)},\bar{s}\}\le b^{(q+1)}=\bar{b}^{(q+1)}$, so $\bar{q}=q$ and the new trade size in BTR is $\bar{q}=q$. Therefore, by bidding $\bar{s}<b^{(q+1)}$, the seller would still trade. On the other hand, any bid $\bar{s}>b^{(q+1)}$ would no longer be one of the $q-1$ lowest seller bids among the new bids. In this case, if either $\bar{q}<q$ or this bid is not one of the $q$ lowest seller bids among the new bids, then the seller would not be trading. Otherwise, $\bar{q}=q$ and this is the $q$ lowest seller bid among the new bids, so since $\bar{s}^{(\bar{q})}=\bar{s}>b^{(q+1)}=\bar{b}^{(\bar{q}+1)}$, we have that the new trade size in BTR is $\bar{q}-1=q-1$, so the seller would not be trading in this case as well. We conclude that in the case that $b^{(q+1)}\geq s^{(q)}$ holds, $b^{(q+1)}$ is the critical value for the seller to trade, and equal to her payment.

Finally, assume that $b^{(q+1)}< s^{(q)}$. In this case the trade size in BTR is $q-1$, so since the seller is trading it must be the case that $s=s^{(i)}$ for some $i\in \{1,2,\ldots,q-1\}$, so $s\leq s^{(q)}$. We argue that $ s^{(q)}$ is the critical value for the seller to trade. Indeed, bidding any $\bar{s}<s^{(q)}$ would still be one of the lowest $q-1$ seller bids among the new bids. It would still hold that $\bar{b}^{(q+1)}=b^{(q+1)}<s^{(q)}=\bar{s}^{(q)}$ as well as that $\bar{b}^{(q)}=b^{(q)}\ge s^{(q)}=\bar{s}^{(q)}$, so $\bar{q}=q$ and the new trade size in BTR is $\bar{q}-1=q-1$, and we conclude that the seller would still trade with any bid $\bar{s}<s^{(q)}$. On the other hand, any bid $\bar{s}> s^{(q)}$ would no longer be one of the $q-1$ lowest seller bids among the new bids. In this case, if either $\bar{q}<q$ or this bid is not one of the $q$ lowest seller bids among the new bids, then the seller would not be trading. Otherwise, $\bar{q}=q$ and this is the $q$ lowest seller bid among the new bids, so since $\bar{s}^{(\bar{q})}=\bar{s}>s^{(q)}>b^{(q+1)}=\bar{b}^{(\bar{q}+1)}$, we have that the new trade size in BTR is $\bar{q}-1=q-1$, so  the seller would not be trading in this case as well.
We conclude that in the case that $b^{(q+1)}< s^{(q)}$ holds, $s^{(q)}$ is the critical value for the seller to trade, and equal to her payment. This completes the proof that for sellers, payments are by critical values. 
\end{proof}

\section{Lemmas used in Proofs of Lower Bounds}\label{app:lower-bounds}

We next prove a lemma that will enable us to extend all the lower bounds that we prove for BTR to also hold for any other anonymous robust deterministic mechanism.
Recall that for buyer distribution $F_B$ and seller distribution $F_S$, we denote by $M(\mS,\mB)$ the expected GFT of a mechanism $M$ for $\mS$ sellers sampled i.i.d. from $F_S$ and  $\mB$ buyers sampled i.i.d. from $F_B$, when all agents report truthfully.

\begin{lemma}\label{lem:anon-problem-many}
	Let $\mS$ and $\mB$ be any numbers and assume that a prior-independent mechanism $M$ for $\mS$ sellers and $\mB$ buyers is deterministic, IR,  truthful, weakly budget-balanced and anonymous. Let $X_3>X_2>X_1>X_0\geq 0$. 
	If there is positive GFT in $M$ for some value profile where $s^{(1)}=X_1$ and\footnote{We emphasize that if $\mS=1$, then the assumption is that $s^{(1)}=X_1$, and it is not assumed that any agent has value $X_3$ (and therefore $X_3$ is of no consequence.).} $s^{(2)}=s^{(3)}=\cdots=s^{(\mS)}=X_3$, and $b^{(1)}=X_2$ and $b^{(2)}=b^{(3)}=\ldots b^{(\mB)}=X_0$, then the following holds. For every $\varepsilon>0$ there exist two distributions $F_S$ and $F_B$ for which 
	\[M(\mS,\mB)<\varepsilon\cdot \opt(1,1),\]
	and additionally for these distributions $\opt(\mS',\mB')= \sbopt(\mS',\mB')$ for every $\mS',\mB'$. 
	Furthermore, if $\mS=1$, then the result holds for some $F_B$ and $F_S$ such that $F_B~\text{FSD}~F_S$.
\end{lemma}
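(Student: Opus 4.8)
The plan is to convert the hypothesis into structural constraints on $M$ and then to tailor $F_S$ and $F_B$ so that these constraints choke off essentially all of $M$'s gains in the $(\mS,\mB)$-market while leaving $\opt(1,1)$ large.

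First I would read off what the hypothesis forces. On the hypothesized profile, because $X_3$ is strictly above every buyer value and $X_0$ is strictly below $X_1$, IR together with weak budget balance rule out any trade by a seller of value $X_3$ (she would need a payment exceeding some buyer's value) and any trade by a buyer of value $X_0$ (she would pay less than the value of any seller who trades); hence the only allocation with positive gains from trade pairs the value-$X_1$ seller with the value-$X_2$ buyer, so $M$ must make exactly this trade, with both transfers in the interval $[X_1,X_2]$. By truthfulness the allocation is monotone and the two trading agents pay their critical values, so, in the ``context'' consisting of the remaining agents, the distinguished seller's critical value $\theta_s$ lies in $[X_1,X_2)$ and the distinguished buyer's critical value $\theta_b$ lies in $[\theta_s,X_2]$. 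Applying single-agent monotonicity to the seller (lower her value, keeping the buyers fixed) and to the buyer (raise her value, keeping the rest fixed), and re-running the IR/BB argument exactly as for $\btr$, I would then show that $M$ \emph{fails to realize the efficient gains from trade} on a whole family of profiles in which the other sellers are at least $X_3$, the other buyers are at most $X_0$, and the distinguished seller/buyer pair straddles the narrow window around $[\theta_s,\theta_b]$ — precisely the window where, just like $\btr$, $M$ either has no trade or trades at a ``reduced'' price.

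Next I would build the distributions. Take $F_S$ to be essentially a point mass (at $X_3$ when $\mS\ge 2$, and at a value below $X_0$ when $\mS=1$, to leave room for stochastic dominance), perturbed by a small amount of probability, and take $F_B$ to place the bulk of its mass at a value below that of the sellers and a small, non-negligible atom at a value $V$ that we push to $\infty$. The point mass in $F_S$ makes posting a single price optimal and feasible in every market, which — together with the otherwise ordered supports — gives $\opt(\mS',\mB')=\sbopt(\mS',\mB')$ for all $\mS',\mB'$, while the way the perturbation is added keeps $F_B~\text{FSD}~F_S$ when $\mS=1$. With these choices, every realization of the $(\mS,\mB)$-market that occurs with non-negligible probability falls, up to permuting equal-valued agents (this is where the stated anonymity property is used), into the family from the previous paragraph on which $M$ realizes essentially no gains from trade, so $M(\mS,\mB)$ is bounded by a constant independent of $V$; on the other hand $\opt(1,1)$ is at least the probability that the single seller draws the low value and the single buyer draws $V$, times $V$ minus a bounded quantity, which grows without bound as $V\to\infty$. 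Choosing $V$ large enough then yields $M(\mS,\mB)<\varepsilon\cdot\opt(1,1)$ for the given $\varepsilon$. For $\mS\ge 2$ one may, alternatively, reduce to the one-seller analysis (as elsewhere in the paper for upper bounds) by observing that the extra value-$X_3$ sellers never trade in any relevant realization.

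The crux, and the step I expect to be most delicate, is the second point: the hypothesis pins $M$ down on a single profile, and monotonicity propagates this only to profiles that are pointwise at least as favorable, so the distributions must be engineered so that every high-probability realization of the constructed $(\mS,\mB)$-market lands in the region where feasibility, anonymity, and the hypothesized trade jointly force $M$ to (all but) no trade, while simultaneously $\opt(1,1)$ stays dominated by the far-above-value event, $\opt=\sbopt$ holds throughout, and — when $\mS=1$ — $F_B~\text{FSD}~F_S$. Balancing these competing demands — in particular making the ``one far-above buyer, rest low'' configuration rare enough in the large market that $M$ cannot exploit it via monotonicity, yet making the analogous one-seller-one-buyer event the dominant term of $\opt(1,1)$ — is where the real work of the proof lies.
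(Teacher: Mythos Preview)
Your plan has a genuine gap at the heart of the distribution construction. You propose to place a small atom in $F_B$ at a value $V$ that you send to $\infty$, and you assert that ``$M(\mS,\mB)$ is bounded by a constant independent of $V$.'' But the hypothesis gives you control over $M$ only on the specific profile with values in $\{X_0,X_1,X_2,X_3\}$, and monotonicity propagates this control only to \emph{less favorable} profiles: lowering buyer values or raising seller values. It tells you nothing about what $M$ does when a buyer has value $V\gg X_2$; indeed, monotonicity from the hypothesized trade says that the distinguished buyer \emph{still trades} when her value is raised, so on the realization with a single buyer at $V$ and the rest of the context matching the hypothesized profile, $M$ may well capture gains of order $V$. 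Hence $M(\mS,\mB)$ is not bounded independently of $V$, and the intended inequality $M(\mS,\mB)<\varepsilon\cdot\opt(1,1)$ does not follow from your construction. Your closing paragraph anticipates this tension (``rare enough \ldots\ that $M$ cannot exploit it''), but rarity alone cannot offset an unbounded per-event gain.

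The paper's argument avoids this by moving in the opposite direction: it \emph{lowers} values via monotonicity to manufacture a concrete \emph{no-trade} profile. From the hypothesized trade one deduces the buyer pays at least $X_1$; lowering the seller to $X_0$ keeps her trading, and then lowering the buyer to $Y=(X_0+X_1)/2<X_1$ forces no trade. By anonymity, $M$ has no trade on \emph{every} profile with exactly one buyer at $Y$, one seller at $X_0$, the other sellers at $X_3$, and the other buyers at $X_0$. The distributions are then taken with \emph{all values bounded} (supported in $\{X_0,Y,X_3\}$) and a small probability $\tilde\varepsilon$ of the ``good'' values, so that $\opt(1,1)$ scales like $\tilde\varepsilon$ (one good buyer suffices) while $M(\mS,\mB)$ scales like $\tilde\varepsilon^2$ (it needs at least two good agents on the same side to escape the no-trade profile). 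The separation thus comes from the $\tilde\varepsilon$ versus $\tilde\varepsilon^2$ gap, not from a large value; this is the idea your proposal is missing.
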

\begin{proof}
	Assume w.l.o.g. that the bid $b^{(1)}=X_2$ is from buyer $1$ and that the bid $s^{(1)}$ is from seller $1$. For that profile to have positive GFT, trade must be between buyer $1$ and seller $1$. 
	Let $p_b$ be the price paid by buyer $1$ and $p_s$ be the price paid to seller $1$. 
	By BB and IR it holds that $X_1\leq p_s\leq p_b\leq X_2$.
	Truthfulness for our single-parameter domain is characterized by monotonicity \cite{Myerson81}.  
	So by truthfulness, if seller $1$'s value were $X_0$ she would still trade at price $p_s$, 
	while buyer $1$'s price might change to $p'_b$, but by BB and IR, $p'_b$ also satisfies 
	$X_1\leq p_s\leq p'_b\leq X_2$. Now, again by truthfulness, if buyer $1$'s value drops to $Y=(X_0+X_1)/2$,
	there must be no trade as $(X_0+X_1)/2<X_1 \leq p'_b$.
	We conclude that for the profile in which buyer $1$'s value is $b^{(1)}=(X_0+X_1)/2$, all sellers except seller~$1$ have value $s^{(2)}=s^{(3)}=\cdots=s^{(\mS)}=X_3$, and all other agents (seller $1$ and all buyers except buyer $1$) have value $s=b^{(2)}=b^{(3)}=\ldots b^{(\mB)}=X_0$, there is no trade in $M$. 
	We next show that this implies that for any $\mS$  sellers and $\mB$ buyers, for some distributions $F_B$ and $F_S$ with the desired properties, mechanism $M$ does not even beat $\varepsilon\cdot\opt(1,1)$.   
	To prove this result we use a weakness of $M$: 
	as it is prior-independent, it will have no trade in the above profile regardless of the distributions $F_B$ and $F_S$.
	Now, by anonymity, as $M$ has no trade in one profile in which precisely one buyer has value $Y$ and one seller has value $X_0$, it has no trade in any such profile (when swapping the value of $Y$ between that buyer and some other buyer, and similarly for the value $X_0$ between sellers if there is more then one).

	We consider the following distributions $F_S$ and $F_B$: a seller has value $X_0$ with probability $1$ when $\mS=1$, and with probability  
	$\tilde{\varepsilon}$  when $\mS>1$ (for some $\tilde{\varepsilon}>0$ to be set later), and otherwise has value~$X_3$,
	while a buyer has value $Y=(X_0+X_1)/2$ with probability $\tilde{\varepsilon}$ 
	and value $X_0$ otherwise.
	Clearly $\opt(\mS,\mB)= \sbopt(\mS,\mB)$ for any $\mS$ and $\mB$, as trading at the fixed price of $X_0$ (that is, for every realization of values, for every $i\le q$ offering that $b^{(i)}$ trade with $s^{(i)}$ for a price of $X_0$, and trading if both agents accept) is optimal.
	Moreover, for $\mS=1$ the buyer distribution $F_B$ FSD the seller distribution $F_S$, as in this case the seller distribution is a point mass at $X_0$.
	
	We first consider the case that $\mS=1$. For the distributions $F_S$ and $F_B$, we observe that 
	$OPT(1,1)= (Y-X_0)\cdot \tilde{\varepsilon}$, as the probability that the buyer has value $Y$ is $\tilde{\varepsilon}$, the seller has value $X_0$ for sure, and if there is such a positive-GFT trade, the GFT is $Y-X_0$.
	
	Now we use the weakness of $M$ that we have observed above: as there is only one seller of value $X_0$, the mechanism $M$ has no trade (and hence no GFT) if only one buyer has value $Y$.  
 	So $M$ has positive GFT only if more than one buyer has value $Y$, and the probability that at least two buyers have value $Y$ is at most $\mB^2\cdot \tilde{\varepsilon}^2$ by a simple union bound. If there is such a positive-GFT trade, the GFT is  $Y-X_0$.
	So,  $M(1,\mB)\leq  (Y-X_0)\cdot (\mB^2\cdot \tilde{\varepsilon}^2) $ and thus for small enough $\tilde{\varepsilon}$ satisfying
	$\tilde{\varepsilon}<\frac{\varepsilon}{\mB^2}$	
	we get  
	\[M(1,\mB)<\varepsilon\cdot \opt(1,1),\]
	as needed.  
	
	Having handled the case in which $\mS=1$, we next consider the case in which $\mS>1$. 
	For the distributions $F_S$ and $F_B$, we observe that 
	$OPT(1,1)= (Y-X_0)\cdot \tilde{\varepsilon}^2$ as the probability that the buyer has value $Y$ is $\tilde{\varepsilon}$, and this is also the probability that the seller has value $X_0$. If there is such a positive-GFT trade, the GFT is $Y-X_0$.
	
	Now we again use the weakness of $M$ that we have observed above, yet now, we use its implications for $\mS>1$ sellers:
	$M$ has positive GFT only if either more than one buyer has value $Y$ and at least one seller has value $X_0$, or more than one seller has value $X_0$ and at least one buyer has value $Y$.
	Using simple union bounds,
	the probability that at least two buyers have value $Y$ is at most $\mB^2\cdot \tilde{\varepsilon}^2$,
	the probability that at least one seller has value $X_0$ is at most $\mS \cdot\tilde{\varepsilon}$,
	the probability that at least two sellers have value $X_0$ is at most $\mS^2 \cdot \tilde{\varepsilon}^2$,
	and the probability that at least one buyer has value $Y$ is at most $\mB\cdot  \tilde{\varepsilon}$.
	If one of the two events happen the GFT is at most $\mS\cdot (Y-X_0)$.
	So,  $M(\mS,\mB)\leq  \mS\cdot (Y-X_0) \cdot ((\mB^2\cdot \tilde{\varepsilon}^2)\cdot(\mS\cdot\tilde{\varepsilon}) + (\mS^2 \cdot \tilde{\varepsilon}^2)\cdot (\mB\cdot  \tilde{\varepsilon})))  =  (Y-X_0) \cdot (\mS^2\cdot\mB\cdot(\mB+\mS)\cdot \tilde{\varepsilon}^3)$. 

	We conclude that
	\[M(\mS,\mB)\leq (Y-X_0) \cdot (\mS^2\cdot\mB\cdot(\mB+\mS)\cdot \tilde{\varepsilon}^3).\]
	Recall that we have also observed  that
	\[OPT(1,1)= (Y-X_0)\cdot \tilde{\varepsilon}^2.\]
	Thus, for small enough $\tilde{\varepsilon}$ satisfying
	\[\tilde{\varepsilon}<\frac{\varepsilon}{\mS^2\cdot \mB\cdot(\mB+\mS)}\]
	we get  
		\[M(\mS,\mB)<\varepsilon\cdot \opt(1,1),\]
	as needed. 
\end{proof}

\begin{lemma}\label{lem:anon-problem}
	Let $\mB$ be any number and assume that a prior-independent mechanism $M$ for one seller and $\mB$ buyers is deterministic, IR,  truthful, weakly budget-balanced and anonymous.
	Let $X_2>X_1>X_0\geq 0$. 
	If for any setting with one seller with value supported in $\{X_0,X_1\}$ and $\mB$ buyers each with value supported in $\{X_0,X_2\}$ for some profile of values the mechanism $M$ has higher GFT than BTR, then the following holds. For every $\varepsilon>0$ there exist two distributions $F_S$ and $F_B$ that FSD $F_S$ for which 
	\[M(1,\mB)<\varepsilon\cdot \opt(1,1),\] and additionally for these distributions $\opt(1,\mB')= \sbopt(1,\mB')$ for every $\mB'$.
\end{lemma}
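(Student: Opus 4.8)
The plan is to reduce this statement to \cref{lem:anon-problem-many} with $\mS=1$. Concretely, I would first show that the hypothesis here --- that $M$ has strictly higher realized GFT than BTR on \emph{some} profile whose seller value lies in $\{X_0,X_1\}$ and whose buyer values all lie in $\{X_0,X_2\}$ --- forces $M$ to have strictly positive GFT on the single specific profile in which $s^{(1)}=X_1$, exactly one buyer has value $X_2$, and every other buyer has value $X_0$. Granting this, I would fix an arbitrary $X_3>X_2$ (so that $X_3>X_2>X_1>X_0\ge0$) and invoke \cref{lem:anon-problem-many} at $\mS=1$; since for a single seller $X_3$ is immaterial, the profile just identified is precisely the one that lemma requires, and its conclusion for $\mS=1$ --- distributions $F_S,F_B$ with $F_B~\text{FSD}~F_S$ for which $M(1,\mB)<\varepsilon\cdot\opt(1,1)$ and $\opt(1,\mB')=\sbopt(1,\mB')$ for all $\mB'$ --- is verbatim the conclusion we want.

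To establish the reduction, I would enumerate the finitely many relevant realizations. Since both $M$ (by anonymity) and BTR have realized GFT invariant under permuting the buyers, a profile on the stated support is determined by the seller value $s\in\{X_0,X_1\}$ and the number $k\in\{0,1,\dots,\mB\}$ of buyers whose value is $X_2$. Using \cref{lem:reduce} (equivalently, unwinding the single-seller description of BTR directly), I would check that BTR attains the optimal GFT of the realization in every one of these cases except $s=X_1$, $k=1$: in all other cases either the optimal GFT is already $0$ (e.g.\ when $k=0$, or when $s=X_0$ and all buyers are at $X_0$), or the $(\mS{+}1)$st highest among the $\mB+1$ realized values is attained by a buyer (e.g.\ when $k\ge2$, or when $s=X_0$ and $k\ge1$, using that there is then a buyer at $X_0$). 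Since the realized GFT of \emph{any} mechanism on a realization is at most that realization's optimal GFT (welfare cannot exceed the optimum, and the pre-trade welfare is fixed), $M$ cannot strictly beat BTR on any profile on which BTR is already optimal. Hence the profile promised by the hypothesis must be the $s=X_1$, $k=1$ profile; there BTR's GFT equals $0$, so $M$'s GFT there is strictly positive, which is exactly what is needed to apply \cref{lem:anon-problem-many}.

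I do not expect a genuine obstacle: the content is entirely in the short, finite case analysis of BTR's behaviour on these two-point supports, where the only points to watch are tie-breaking (buyers at $X_0$ versus a seller at $X_0$), the split between $k=1$ and $k\ge2$, and --- in the degenerate case $\mB=1$, if it must be covered at all --- handling the missing second-highest buyer value by hand rather than through the generic order-statistic formula. Everything past that is an immediate appeal to the already-established \cref{lem:anon-problem-many}.
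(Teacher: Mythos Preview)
Your proposal is correct and follows essentially the same approach as the paper: identify that the only profile on the given support where BTR is suboptimal is the one with $s=X_1$, exactly one buyer at $X_2$, and the rest at $X_0$, conclude that $M$ must have positive GFT there, and then invoke \cref{lem:anon-problem-many} with $\mS=1$. The paper's proof is terser (it states the key claim about BTR's optimality in one sentence and immediately appeals to \cref{lem:anon-problem-many}), but your more explicit case enumeration and your flagging of the $\mB=1$ edge case are, if anything, more careful than the paper's own treatment.
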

\begin{proof}
	For any setting with one seller with value in $\{X_0,X_1\}$  and $\mB$ buyers each with value in $\{X_0,X_2\}$, 
	the only profile in which BTR does not have optimal GFT is the profile in which the seller has value $s=X_1$, the highest buyer has value $b^{(1)}=X_2$, and all other buyers have value $X_0$, that is $b^{(2)}=b^{(3)}=\ldots b^{(\mB)}=X_0$. The lemma then follows from \cref{lem:anon-problem-many}.
\end{proof}

\section{Missing proofs from Section \ref{sec:iid}}\label{app:iid}

\subsection{Proof of Theorem~\ref{thm:iid}}

In this section we restate and prove \cref{thm:iid}.

\begin{theorem}[\cref{thm:iid}]
	Consider any setting with $\mS$ sellers and $\mB$ buyers whose values are sampled identically and independently from some distribution $F$ (that is, $F=F_S=F_B$).
	Then the Buyer Trade Reduction (BTR) mechanism with one additional buyer has expected GFT that is at least as large as the expected optimal GFT before adding the extra buyer. \\
	That is, for any setting with i.i.d.\ buyers and sellers it holds that  \[\btr(\mS,\mB+1)\geq \opt(\mS,\mB).\]
\end{theorem}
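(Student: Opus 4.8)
The plan is to prove the equivalent inequality
\[
\opt(\mS,\mB+1)-\opt(\mS,\mB)\ \geq\ \opt(\mS,\mB+1)-\btr(\mS,\mB+1),
\]
i.e.\ to bound the loss of BTR in the augmented market by the decrease in the optimum caused by removing one buyer. First I would couple the two random markets: draw $\mS+\mB+1$ values i.i.d.\ from $F$; for the augmented market, declare a uniformly random $\mS$-subset of these values to be the seller values; for the smaller market, first delete one value uniformly at random and then declare a uniformly random $\mS$-subset of the remaining $\mS+\mB$ values to be the seller values. It then suffices to prove the inequality conditionally on any fixed realization $v_1\geq v_2\geq\cdots\geq v_{\mS+\mB+1}$ of the drawn values, with the expectations taken only over the random role assignment (and, for the smaller market, the random deletion).

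Next I would evaluate both sides. By exchangeability each value is a seller value with probability $\mS/(\mS+\mB+1)$ in \emph{both} markets, so the expected pre-trade welfares coincide and the left-hand side equals the difference of expected post-trade (optimal) welfares. The post-trade welfare of the augmented market is always $v_1+\cdots+v_{\mS}$; that of the smaller market is $v_1+\cdots+v_{\mS}$ unless the deleted value is among the top $\mS$, in which case it drops by (deleted value) $-v_{\mS+1}$. Hence the left-hand side is $\tfrac{1}{\mS+\mB+1}\sum_{i=1}^{\mS}(v_i-v_{\mS+1})$. For the right-hand side I invoke \cref{lem:reduce}: in the augmented market BTR loses exactly when the agent holding the $(\mS+1)^{\text{st}}$ realized highest value is a seller, an event of probability $\mS/(\mS+\mB+1)$, and in that case the loss is $b^{(q)}-v_{\mS+1}$ where $q\geq 1$ is the efficient trade size and $b^{(q)}$ the $q^{\text{th}}$ highest buyer value. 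Conditioned on this event, $q$ is the number of buyers among realized ranks $1,\ldots,\mS$, these $q$ buyers are precisely the top $q$ buyers overall (ranks are sorted by value), and $b^{(q)}$ is the smallest of their values.

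The crux is the symmetry observation that, conditioned on rank $\mS+1$ being a seller and on the value of $q$, the $q$ buyer-occupied ranks among $\{1,\ldots,\mS\}$ form a uniformly random $q$-subset of $\{1,\ldots,\mS\}$ (by exchangeability of the i.i.d.\ draw together with the uniform role assignment). Therefore $b^{(q)}$ is the minimum of $q\geq 1$ values sampled without replacement from $\{v_1,\ldots,v_{\mS}\}$, which is at most a single uniformly random element of that set, whose expectation is $\tfrac{1}{\mS}\sum_{i=1}^{\mS}v_i$. Plugging this in yields right-hand side $\leq\tfrac{\mS}{\mS+\mB+1}\bigl(\tfrac{1}{\mS}\sum_{i=1}^{\mS}v_i-v_{\mS+1}\bigr)=\tfrac{1}{\mS+\mB+1}\sum_{i=1}^{\mS}(v_i-v_{\mS+1})$, which is exactly the left-hand side, completing the proof.

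I expect the main obstacle to be the careful treatment of ties when $F$ has atoms: the event ``rank $\mS+1$ is a seller'' must be handled through the precise statement of \cref{lem:reduce} (``no agent with value $x^{(\mS+1)}$ is a buyer''), and one must verify that the exchangeability and uniform-subset arguments still deliver the same probabilities and the same conditional structure. A secondary routine point is confirming, from the definition of the optimal trade size, that the buyers occupying ranks $\leq\mS$ are exactly the top $q$ buyers, so that $b^{(q)}$ is indeed their minimum value.
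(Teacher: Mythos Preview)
Your proposal is correct and follows essentially the same approach as the paper: both argue the equivalent inequality $\opt(\mS,\mB+1)-\opt(\mS,\mB)\geq\opt(\mS,\mB+1)-\btr(\mS,\mB+1)$ by fixing the $\mS+\mB+1$ drawn values, computing the left-hand side via the event that the removed value lies among the top $\mS$, and bounding the right-hand side via \cref{lem:reduce} together with the observation that, conditioned on rank $\mS+1$ being a seller, the buyer ranks in $\{1,\ldots,\mS\}$ form a uniformly random subset so that $b^{(q)}$ is dominated by a single uniform draw. The only cosmetic difference is that the paper shares the seller set $I$ between the two markets (choosing the extra buyer afterward from the non-sellers) whereas you delete first and assign sellers independently, but since both sides are compared only in expectation over the role assignment this is immaterial; your flagged tie-handling concern is exactly the point the paper addresses by using $\{\mS+1\in I\}$ as merely a necessary condition for BTR to lose and then dropping the indicator.
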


We stress that \cref{thm:iid} places no assumptions whatsoever on the distribution $F$, and in particular, does \emph{not} assume the regularity condition. This is in contrast to the celebrated result of \citeN{BK} that shows that for buyers with values sampled i.i.d.\ from a \emph{regular} distribution and a seller with no cost,  
the \emph{revenue} of a second-price auction (with no reserve) with one additional buyer independently sampled from the same distribution is at least as high the revenue of the optimal \citeN{Myerson81} mechanism before adding that buyer. That result indeed does not hold for some distributions that are not regular, while our result does. While the results are analogous (for different settings and different objectives, of course), the proofs of the two results are very different and do not seem to be related.

Before proving \cref{thm:iid}, we will prove a lemma that will be useful not only for the proof of that theorem, but also in additional proofs in this paper. The lemma essentially says (and when there are no ties, precisely says) that the only realizations for which BTR does not achieve the optimal welfare (equivalently, does not achieve the optimal GFT), are those in which a seller has the $(\mS\!+\!1)^{\text{st}}$ realized highest value. 

\begin{lemma}[\cref{lem:reduce}]
	In a double-auction  market with $\mS$ sellers and $\mB$ buyers, the Buyer Trade Reduction (BTR) mechanism achieves optimal GFT if and only if some buyer's value equals $x^{(\mS+1)}$, the $(\mS\!+\!1)^{\text{st}}$ realized highest value.\footnote{We emphasize that in the case of ties between values, the condition is that one of the agents whose realized values equal the $(\mS\!+\!1)^{\text{st}}$ realized highest value is a buyer.} Furthermore, if BTR does not achieve optimal GFT, then the efficient trade size $q$ is positive, and the GFT of BTR are lower than optimal by $b^{(q)}-x^{(\mS+1)}$.
\end{lemma}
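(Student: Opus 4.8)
The plan is to compare \emph{welfares} rather than gains from trade: since the pre-trade welfare $\sum_{i\in S}s_i$ does not depend on the mechanism, the welfare loss of BTR relative to an efficient allocation equals its GFT loss, so it suffices to analyze when BTR's allocation is welfare-optimal and, when it is not, by how much it falls short. First I would record the structure of the efficient allocation in terms of the efficient trade size $q=q(\vecs,\vecb)$. Recall (from the preliminaries) that $q$ is both the largest $k$ with $b^{(k)}\ge s^{(k)}$ and the number of buyers among $x^{(1)},\dots,x^{(\mS)}$; correspondingly the efficient allocation has buyers $b^{(1)},\dots,b^{(q)}$ holding items and sellers $s^{(q+1)},\dots,s^{(\mS)}$ keeping items, so the $\mS$ highest values are exactly $\{b^{(1)},\dots,b^{(q)}\}\cup\{s^{(q+1)},\dots,s^{(\mS)}\}$ (this uses $b^{(q)}\ge s^{(q)}$ and, when $q<\mB$, that $s^{(q+1)}>b^{(q+1)}$ because $q+1$ is not a feasible trade size). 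Hence the values that do \emph{not} make the top $\mS$ are $b^{(q+1)},\dots,b^{(\mB)}$ together with $s^{(1)},\dots,s^{(q)}$, and therefore
\[x^{(\mS+1)}=\max\bigl\{b^{(q+1)},\,s^{(q)}\bigr\},\]
with the boundary readings $x^{(\mS+1)}=s^{(q)}$ when $q=\mB$ (no $b^{(q+1)}$) and $x^{(\mS+1)}=b^{(1)}$ when $q=0$ (no $s^{(q)}$, and the whole top $\mS$ then consists of the $\mS$ sellers).

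Next I would split on the BTR rule. If $q=0$, both the efficient allocation and BTR trade nothing, BTR attains the optimal GFT of $0$, and $x^{(\mS+1)}=b^{(1)}$ is a buyer's value; both sides of the equivalence hold. If $q\ge1$ and $b^{(q+1)}\ge s^{(q)}$, then BTR trades the pairs $(b^{(i)},s^{(i)})$ for $i\le q$, so the resulting allocation is efficient and BTR attains the optimal GFT; here $x^{(\mS+1)}=b^{(q+1)}$ is again a buyer's value, so both sides hold. In the remaining (``reduced'') case, namely $q\ge1$ with $b^{(q+1)}<s^{(q)}$, together with $q=\mB$ under the convention that BTR reduces there (treating the absent $b^{(\mB+1)}$ as $-\infty$), BTR trades only the pairs $(b^{(i)},s^{(i)})$ for $i\le q-1$, so its welfare, hence its GFT, is below the optimum by exactly $b^{(q)}-s^{(q)}$; since $q$ is feasible we have $b^{(q)}\ge s^{(q)}$, so this is nonnegative, and in this case $x^{(\mS+1)}=s^{(q)}$, so the shortfall is precisely $b^{(q)}-x^{(\mS+1)}$, with $q$ positive as required.

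It then remains to verify the ``only if'' direction, i.e.\ that in the reduced case BTR attains the optimum exactly when some buyer has value $x^{(\mS+1)}=s^{(q)}$. By the shortfall formula this happens iff $b^{(q)}=s^{(q)}$; if $b^{(q)}=s^{(q)}$ then a buyer whose value is $b^{(q)}$ has value $x^{(\mS+1)}$, and conversely a buyer with value $s^{(q)}$ cannot be among $b^{(q+1)},\dots,b^{(\mB)}$ (all of which are $<s^{(q)}$ in this case), hence is one of the top $q$ buyers, which together with $b^{(q)}\ge s^{(q)}$ forces $b^{(q)}=s^{(q)}$. I expect the main obstacle to be bookkeeping around ties and the boundary conventions: because ties between a buyer and a seller are broken in favour of the buyer, the split between ``$b^{(q+1)}\ge s^{(q)}$'' and ``$b^{(q+1)}<s^{(q)}$'' must be tracked exactly, and it is precisely the correct handling of a buyer whose value equals $s^{(q)}$ that upgrades ``if'' to ``iff''; likewise one must check that BTR reduces when $q=\mB$, since otherwise, when $s^{(q)}=0$, BTR would hit the optimum even though no buyer attains $x^{(\mS+1)}=0$, and the stated equivalence would fail.
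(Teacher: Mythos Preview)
Your proposal is correct and follows essentially the same route as the paper: both arguments first identify $x^{(\mS+1)}=\max\{b^{(q+1)},s^{(q)}\}$, then case-split on whether BTR reduces (i.e., on $b^{(q+1)}\ge s^{(q)}$ versus $b^{(q+1)}<s^{(q)}$), and in the reduced case compute the loss as $b^{(q)}-s^{(q)}=b^{(q)}-x^{(\mS+1)}$. Your treatment of the boundary cases $q=0$ and $q=\mB$, and your explicit argument for the ``only if'' direction under ties (showing that a buyer with value $s^{(q)}$ must be among the top $q$ and hence forces $b^{(q)}=s^{(q)}$), are spelled out a bit more carefully than in the paper, but the substance is the same.
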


\begin{proof}
	Recall that $q= q(\vecs, \vecb)$ denotes the efficient trade size for realization $(\vecs, \vecb)$ when ties are broken in favor of buyers. 
	If $q=0$ then by definition all seller values are strictly greater than all buyer values, so the highest buyer value equals $x^{(\mS+1)}$, and we are done with this case since when $q=0$, BTR achieves optimal gains from trade. We will assume henceforth that $q>0$.
	We will show that if $q>0$ then if BTR achieves optimal gains from trade then some buyer's value equals $x^{(\mS+1)}$, and if BTR does not achieve optimal gains from trade then no buyer's value equals $x^{(\mS+1)}$ and the loss is $b^{(q)}-x^{(\mS+1)}$.

	By definition of the efficient trade size $q$, the agents with the lowest $\mB$ values (with ties broken in favor of having as many sellers as possible among these $\mB$ agents) contain $q$ sellers: $s^{(q)} \geq \cdots \geq s^{(1)}$, and the agents with the top $\mS$ values (with ties broken in favor of having as many buyers as possible among these $\mS$ agents) contain $\mS-q$ sellers and $q$ buyers: $b^{(1)} \geq \cdots \geq b^{(q)}$.  Then $b^{(q+1)}$ has the highest value of any buyer among the $\mB$ agents with the lowest values, and $s^{(q)}$ has the highest value of any seller among these $\mB$ agents with lowest values. Therefore, $x^{(\mS+1)}=\max\{b^{(q+1)},s^{(q)}\}$.
	
	BTR offers a price of $b^{(q+1)}$ to all $b^{(i)}$ and $s^{(i)}$ for $1 \leq i \leq q$, and no reduction occurs if and only if $b^{(i)} \geq b^{(q+1)} \geq s^{(i)}$ for all $i$. This occurs precisely when $b^{(q+1)} \geq s^{(q)}$.  So, a trade is reduced if and only if $s^{(q)}>b^{(q+1)}$, in which case also $s^{(q)}=\max\{b^{(q+1)},s^{(q)}\}=x^{(\mS+1)}$. In this case, the reduced trade is of $s^{(q)}$ with $b^{(q)}$, and the gains of this trade are $b^{(q)}-s^{(q)}=b^{(q)}-x^{(\mS+1)}$.
	
	By the above, if BTR does not achieve optimal gains from trade, then a trade with positive gains $b^{(q)}-x^{(\mS+1)}$ is reduced, and therefore and by the above, $b^{(q)}>x^{(\mS+1)}>b^{(q+1)}$. In this case, therefore no buyer's value equals $x^{(\mS+1)}$, proving the first direction of the claim. Conversely, if BTR does achieve optimal gains from trade, then either no trade is reduced, or a trade with zero gains is reduced. In the former case, $b^{(q+1)} \geq s^{(q)}$ and therefore $b^{(q+1)}=\max\{b^{(q+1)},s^{(q)}\}=x^{(\mS+1)}$. In the latter case, $b^{(q)}=s^{(q)}=x^{(\mS+1)}$. So in either case there is a buyer whose value equals $x^{(\mS+1)}$, proving the second direction of the claim.
\end{proof}

\begin{proof}[Proof of \cref{thm:iid}]
	We prove the claim by comparing each side of the inequality to the optimal gains from trade in the augmented market with $\mS$ sellers and $\mB+1$ buyers, and showing that $\opt(\mS,\mB+1) - \opt(\mS,\mB) \geq \opt(\mS,\mB +1) - \btr(\mS,\mB+1)$, which clearly implies the claim.
	
	We will compare the above expectations by coupling them within a single probability space as follows: we first draw $\mS+\mB+1$ values for the ``augmented'' market from $F$, denoting the ordered values by $\ZZZ{1}{\mS}{\mB+1}\ge\ZZZ{2}{\mS}{\mB+1}\ge\cdots\ge\ZZZ{\mS+\mB+1}{\mS}{\mB+1}$. We then uniformly choose $\mS$ indices $I\subset\{1,\ldots,\mS+\mB+1\}$ and label the values $\ZZZ{i}{\mS}{\mB+1}$ for all $i\in I$ as sellers, denoting the remaining indices by $J$, and labeling the corresponding values as buyers. We then finally uniformly choose one index $j^*\in J$ to be labeled as the additional buyer who is not part of the market with only $\mB$ buyers.
	We denote the ordered values in the ``original'' market (that is, with $\ZZZ{j^*}{\mS}{\mB+1}$ removed) by $\ZZZ{1}{\mS}{\mB}\ge\ZZZ{2}{\mS}{\mB}\ge\cdots\ge\ZZZ{\mS+\mB}{\mS}{\mB}$.
	Note that this sampling process indeed generates the same probability spaces as sampling the values of the agents i.i.d. from $F$ in both the original market and the augmented market.  
	
	We will prove the desired inequality for every realization of $\ZZZ{1}{\mS}{\mB+1}\ge\ZZZ{2}{\mS}{\mB+1}\ge\cdots\ge\ZZZ{\mS+\mB+1}{\mS}{\mB+1}$, in expectation over the draws of seller/buyer indices and of the additional buyer index. So, for the remainder of this proof let us fix a realized value vector $\ZZZ{1}{\mS}{\mB+1}\ge\ZZZ{2}{\mS}{\mB+1}\ge\cdots\ge\ZZZ{\mS+\mB+1}{\mS}{\mB+1}$, and whenever we write an expectation either explicitly or implicitly (e.g., implicitly as in $\opt(\mS,\mB)$), let us mean an expectation over the draws of seller/buyer indices and additional buyer index, but not over the realized value vector, which we already fixed.
	
	First, we compare the expected optimum gains from trade in the augmented market, with $\mS$ sellers and $\mB+1$ buyers, to those in the original market, with only $\mS$ sellers and $\mB$ buyers, that is, we compute $\opt(\mS,\mB+1) - \opt(\mS,\mB)$.
	We can write the optimum gains from trade as 
	\[\opt(\mS, \mB+1)= \E\left[\sum_{j=1}^{\mS} \ZZZ{j}{\mS}{\mB+1} - \sum_{i\in I} \ZZZ{i}{\mS}{\mB+1}\right] \]
	and
	\[\opt(\mS, \mB)= \E\left[\sum_{j=1}^{\mS} \ZZZ{j}{\mS}{\mB} - \sum_{i\in I} \ZZZ{i}{\mS}{\mB+1}\right].\]
	We note that the second terms in the expectations, i.e., the welfares before trade, are identical in both markets.  
	
	In the event that the additional buyer is one of the $\mB+1$ lower-indexed agents, that is, $j^*>\mS$, the optimal GFT with and without the additional buyer are the same.
	On the other hand, we note that without conditioning on the choice of seller/buyer indices, $j^*$ is uniformly distributed in $\{1,\ldots,\mS+\mB+1\}$, and thus
	with probability $\frac{\mS}{\mS+\mB+1}$ the additional buyer is one of the $\mS$ higher-indexed values, that is, $j^*\le\mS$.  In this case, the difference between the optimal GFT in the augmented market and the original market is the additional buyer's value compared to the value of the $\mS$\textsuperscript{th} highest-valued agent in the original market $\ZZZ{\mS}{\mS}{\mB}$, which in this case is $\ZZZ{\mS+1}{\mS}{\mB+1}$.  
	Hence 
	\begin{multline*}
	\opt\left(\mS,\mB+1 ~\middle|~ j^*\le\mS\right) - \opt\left(\mS,\mB ~\middle|~ j^*\le\mS\right) =\\
	= \E\left[\ZZZ{j^*}{\mS}{\mB+1} - \ZZZ{\mS+1}{\mS}{\mB+1} ~\middle|~ j^*\le\mS\right] =\\
	= \E\left[\ZZZ{j^*}{\mS}{\mB+1} ~\middle|~ j^*\le\mS\right] - \ZZZ{\mS+1}{\mS}{\mB+1} =\\
	=\E_{i\in U\{1,\ldots,\mS\}}\left[\ZZZ{i}{\mS}{\mB+1}\right] - \ZZZ{\mS+1}{\mS}{\mB+1}.
	\end{multline*}
	
	Now we compare Buyer Trade Reduction on $\mS$ sellers and $\mB+1$ buyers with the optimal expected gains from trade on the same number of sellers and buyers.  
	By \cref{lem:reduce}, BTR is optimal if and only if some buyer's value equals $\ZZZ{j}{\mS}{\mB+1}$.
	So, a necessary condition for BTR not to be optimal (this condition is also sufficient unless there are ties between values) is for $\mS+1$ to be chosen as an index of a seller, which occurs with probability $\frac{\mS}{\mS+\mB+1}$ since we draw the seller indices uniformly.
	Recall that we use $q$ to denote the efficient trade size for the given realization. 
	In the event that $\mS+1$ is chosen as a seller index (and hence $q>0$, and hence $b^{(q)}$ is well defined and $b^{(q)}\ge\ZZZ{\mS+1}{\mS}{\mB+1}$), by \cref{lem:reduce} we have that\footnote{The inequality may only be strict in certain cases with ties between values.}
	\begin{multline*}
	\opt\left(\mS,\mB+1 ~\middle|~ \mS+1 \in I\right) - \btr\left(\mS,\mB+1 ~\middle|~ \mS+1 \in I\right) =\\
	= \E\left[\left(b^{(q)}- \ZZZ{\mS+1}{\mS}{\mB+1}\right)\cdot\mathbbm{1}[\text{BTR is not optimal}] ~\middle|~ \mS+1 \in I \right] \le\\
	\le \E\left[b^{(q)}- \ZZZ{\mS+1}{\mS}{\mB+1} ~\middle|~ \mS+1 \in I \right] =\\
	= \E\left[b^{(q)} ~\middle|~ \mS+1 \in I \right] - \ZZZ{\mS+1}{\mS}{\mB+1} =\\
	= \E\left[\ZZZ{\max(J\cap\{1,\ldots,\mS\})}{\mS}{\mB+1} ~\middle|~ \mS+1 \in I\right] - \ZZZ{\mS+1}{\mS}{\mB+1}.
	\end{multline*}
	
	To summarize,
	we find that 
	\[
	\opt(\mS,\mB+1) - \opt(\mS,\mB) = \frac{\mS}{\mS+\mB+1} \left(\E_{i\in U\{1,\ldots,\mS\}}\left[\ZZZ{i}{\mS}{\mB+1}\right] - \ZZZ{\mS+1}{\mS}{\mB+1}\right)
	\]
	and
	\begin{multline*}
	\opt(\mS,\mB+1) - \btr(\mS,\mB+1) \le \\
	\le \frac{\mS}{\mS+\mB+1} \left(\E\left[\ZZZ{\max(J\cap\{1,\ldots,\mS\})}{\mS}{\mB+1} ~\middle|~ \mS+1 \in I\right] - \ZZZ{\mS+1}{\mS}{\mB+1}\right).
	\end{multline*}
	
	The expectation in the first equation involves uniformly drawing an index $i$ between $1$ and $\mS$ and taking $\ZZZ{i}{\mS}{\mB+1}$. In the expectation in the second equation, the random variable $q=|J\cap\{1,\ldots,\mS\}|$ is not fixed but is always positive (since $\mS+1\in I$), and the expectation involves uniformly drawing $q$ indices between $1$ and $\mS$ and taking $\ZZZ{i}{\mS}{\mB+1}$, where $i$ is the \emph{highest} drawn index. Since the values are indexed in decreasing order, the second expectation no higher than the first (this is immediate if we additionally condition on any fixed positive $q$, and therefore holds even without such conditioning). Therefore,
	\[\opt(\mS,\mB+1) - \opt(\mS,\mB) \geq \opt(\mS,\mB+1) - \btr(\mS,\mB+1).\tag*{\qedhere}\]
\end{proof}

\subsection{Missing proofs: Lower bounds without the FSD assumption}\label{app:reg-not-suff}
\subsubsection{Proof of  Proposition~\ref{obs-reg-not-suff}}

We prove that for BTR, even for regular distributions there is no upper bound on the number of agents that we need to add for BTR to beat the optimum in the absence of any assumptions on the relation of the distributions $F_S$ and $F_B$. 
Formally we show that for any $\mS$, $\mB$, $\ell$, and $k$, there exist two regular distributions $F_S$ and $F_B$ (that does \emph{not} FSD $F_S$) such that when $\mS$ sellers are sampled i.i.d.\ from $F_S$ and $\mB$ buyers are sampled i.i.d. from $F_B$, it holds that adding $k$
buyers (sampled i.i.d.\ from $F_B$) as well as $\ell$ sellers (samples i.i.d.\ from $F_S$) is not enough for BTR to beat the optimal feasible mechanism (and thus also the optimum). 
The claim we prove next implies \cref{obs-reg-not-suff}, showing that BTR fails even if we compare to the optimum with only one seller and one buyer. 

\begin{proposition}\label{obs-reg-not-suff-app}
	For every $\varepsilon>0$ and every positive integers $\mS,\mB$, $\ell$, and $k$,
	there exist two regular distributions $F_S$ and $F_B$ such that
	\begin{multline*}
	\btr(\mS+\ell,\mB+k)
	<\varepsilon\cdot \sbopt(1,1) \le \\
	\leq\varepsilon\cdot \sbopt(\mS,\mB)= \varepsilon\cdot \opt(\mS,\mB).
	\end{multline*}
\end{proposition}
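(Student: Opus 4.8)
The plan is to exhibit a one-parameter family of distributions, indexed by a small probability $p>0$ to be fixed at the end depending on $\varepsilon,\mS,\mB,\ell,k$, for which $\btr$ in the augmented market almost never trades no matter how many agents are added, while the base market already has a $\Theta(p)$ amount of GFT coming entirely from a rare high-value event. Concretely, take $F_S$ to be the point mass at $1$ and $F_B$ the two-point distribution that assigns value $2$ with probability $p$ and value $0$ with probability $1-p$. Both are regular (their quantile--revenue curves are piecewise linear and concave), and $F_B$ does not FSD $F_S$, consistent with the ``no assumptions'' regime; this family is exactly the one from the footnote after \cref{obs:one-sample}, specialized to small $p$.

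First I would evaluate $\opt$ and $\sbopt$ on this family. Since every seller has value exactly $1$, for any numbers of sellers $m$ and buyers $r$ the welfare-maximizing allocation matches each buyer of value $2$ to a distinct seller (up to the number of sellers) for a gain of $1$ per matched pair, and the fixed-price mechanism that posts the price $1$ to all agents is IR, truthful, and weakly budget-balanced and implements exactly this allocation. Hence $\sbopt(m,r)=\opt(m,r)=\E\bigl[\min\{\mathrm{Bin}(r,p),m\}\bigr]$. In particular $\sbopt(1,1)=\opt(1,1)=p$, and $\sbopt(\mS,\mB)=\opt(\mS,\mB)\ge\Pr[\mathrm{Bin}(\mB,p)\ge 1]\ge p=\sbopt(1,1)$, which already establishes the chain of (in)equalities on the right-hand side of the claim.

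Next I would bound $\btr(\mS+\ell,\mB+k)$ from above using \cref{lem:reduce}. Let $N$ be the (random) number of the $\mB+k$ buyers whose value is $2$. If $N\le 1$, the efficient trade size is at most $1$, and when it equals $1$ the price $b^{(2)}=0$ that BTR offers is below the seller value $s^{(1)}=1$, so BTR reduces the single pair and trades nothing; either way its GFT is $0$. If $N\ge 2$, the trade size of BTR is at most the number of sellers $\mS+\ell$ and each traded pair contributes exactly $1$, so its GFT is at most $\mS+\ell$. Therefore, by a union bound over pairs of buyers, $\btr(\mS+\ell,\mB+k)\le(\mS+\ell)\cdot\Pr[N\ge 2]\le(\mS+\ell)\binom{\mB+k}{2}p^{2}$. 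Choosing $p<\min\bigl\{\tfrac12,\ \varepsilon/\bigl((\mS+\ell)\binom{\mB+k}{2}\bigr)\bigr\}$ makes $(\mS+\ell)\binom{\mB+k}{2}p^{2}<\varepsilon p=\varepsilon\cdot\sbopt(1,1)$, which completes the proof.

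The only genuinely delicate points are (i)~the observation, via \cref{lem:reduce}, that on this family BTR extracts positive GFT only when at least two buyers clear the sellers, so that the entire contribution of the (arbitrarily large) augmented market to $\btr$ is governed by the $O(p^{2})$-probability event ``$N\ge 2$'', whereas the base market's $\sbopt$ is already $\Theta(p)$; and (ii)~checking that posting the price $1$ is not merely feasible but optimal in every market here, which is what yields $\sbopt(\mS,\mB)=\opt(\mS,\mB)$ — both reduce to the seller side being a point mass. Verifying regularity of the two chosen distributions is routine, and everything else is a union bound plus the choice of $p$.
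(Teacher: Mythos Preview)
Your proof is correct and follows essentially the same approach as the paper's: a point-mass seller distribution together with a buyer distribution that is rarely high, so that $\opt(1,1)=\Theta(p)$ while BTR in any augmented market needs at least two high-value buyers and hence has GFT $O(p^{2})$, with $\sbopt=\opt$ via a posted price. The only cosmetic difference is that the paper takes $F_B=U[0,1/\tilde\varepsilon]$ and $F_S$ a point mass at $1/\tilde\varepsilon-2$, whereas you use the discrete two-point buyer distribution $\{0,2\}$ and seller point mass at $1$; your choice yields slightly cleaner arithmetic but the structure and key observations are identical.
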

\begin{proof}
	For $\tilde{\varepsilon}>0$ we define the two distributions as follows.  
	The buyer distribution $F_B$ is uniform on $[0, \nicefrac{1}{\tilde{\varepsilon}}]$ while the seller $F_S$ distribution is a point mass at $\nicefrac{1}{\tilde{\varepsilon}}-2$, so sellers have this value for sure.
	We will later fix $\tilde{\varepsilon}$ to be some function of $\mS$, $\mB$, $\ell$, $k$, and $\varepsilon$,
	to make the claim hold for these two regular distributions.   
	
	We first observe that the realized optimal GFT can be obtained by the mechanism (that depends on the distribution) that simply posts a price of  $\nicefrac{1}{\tilde{\varepsilon}}-2$ and in which the size of trade is the maximal number of possible trades at this price (which is the minimum between $\mS$ and the number of buyers with value of at least  $\nicefrac{1}{\tilde{\varepsilon}}-2$). 
	Trade happens between this number of sellers, and the same number of buyers with the highest values (each of which has a value of at least  $\nicefrac{1}{\tilde{\varepsilon}}-2$). This is clearly the optimal feasible mechanism, as it obtains the optimal realized GFT, and is feasible (and in particular budget-balanced).
	Thus the bound for the GFT of BTR that we will prove with respect to the expected optimal GFT also holds with respect to the optimal feasible mechanism.
	
	We first observe that probability that a buyer has value at least $\nicefrac{1}{\tilde{\varepsilon}}-1$ is $\tilde{\varepsilon}$, and since any seller has value 
	$\nicefrac{1}{\tilde{\varepsilon}}-2$, when such a buyer trades the GFT from that trade is at least 1, thus $\opt(1,1)\geq 1\cdot \tilde{\varepsilon}= \tilde{\varepsilon}$. 
	
	We next look at BTR with $\mS+\ell$ sellers and $\mB+k$ buyers. 
	We observe that the probability that a buyer has value at least $\nicefrac{1}{\tilde{\varepsilon}}-2$ is $2\tilde{\varepsilon}$, and when such a buyer trades the GFT from that trade is at most 2.
	It therefore holds that $\btr(\mS+\ell,\mB + k)\leq 2\cdot (\mS+\ell) \cdot (2\tilde{\varepsilon})^2 \cdot (\mB+k)^2=8\cdot (\mS+\ell) \cdot \tilde{\varepsilon}^2 \cdot (\mB+k)^2$ since by a simple union bound, the probability that there are at least two buyers out of $\mB+k$ with values of at least $\nicefrac{1}{\tilde{\varepsilon}}-2$ is no more than $(2\tilde{\varepsilon})^2 \cdot (\mB+k)^2$.
	
	We conclude that  $\varepsilon\cdot \opt(1, 1) - \btr(\mS+\ell,\mB + k) \geq \varepsilon\cdot \tilde{\varepsilon} - 8\cdot (\mS+\ell) \cdot \tilde{\varepsilon}^2 \cdot (\mB+k)^2$, which is positive if
	$\tilde{\varepsilon} < \frac{\varepsilon}{8\cdot(\mS+\ell)\cdot(\mB+k)^2}$, and the claim follows.
\end{proof}

\subsubsection{Proof of  Proposition~\ref{obs-all-mech-reg-not-suff}}

To prove  \cref{obs-all-mech-reg-not-suff}, we first reprove \cref{obs-reg-not-suff-app} for a different pair of (in fact, irregular\footnote{Therefore, solely the statement of \cref{prop-FSD-nec-app} is strictly weaker than that of \cref{obs-all-mech-reg-not-suff}, however some features of the proof of \cref{obs-reg-not-suff-app}, which we will reuse later, in fact provide greater flexibility.}) distributions, and then use \cref{lem:anon-problem-many} to leverage properties of these particular distributions to generalize the claim for any anonymous robust deterministic mechanism.

\begin{proposition} \label{prop-FSD-nec-app}
	For every $\varepsilon>0$ and every positive integers $\mS$, $\mB$, $\ell$, and $k$,
	there exist two distributions $F_S$ and $F_B$ such that
	\begin{multline*}
	\btr(\mS+\ell,\mB+k)
	<\varepsilon\cdot \sbopt(1,1)\le\\
	\leq\varepsilon\cdot \sbopt(\mS,\mB)= \varepsilon\cdot \opt(\mS,\mB).
	\end{multline*}
\end{proposition}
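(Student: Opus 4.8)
The plan is to follow the template of the proof of \cref{obs-reg-not-suff-app}, but to replace the uniform/point-mass pair used there by a \emph{discrete} pair of distributions supported on four distinct values $0\le X_0<X_1<X_2<X_3$ (for concreteness one may take $1<2<3<4$), chosen so that these very distributions can later be fed into \cref{lem:anon-problem-many} to derive \cref{obs-all-mech-reg-not-suff}. For a small parameter $\tilde{\varepsilon}>0$, to be fixed at the end, let $F_S$ put mass $\tilde{\varepsilon}$ on $X_1$ and mass $1-\tilde{\varepsilon}$ on $X_3$, and let $F_B$ put mass $\tilde{\varepsilon}$ on $X_2$ and mass $1-\tilde{\varepsilon}$ on $X_0$. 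Then $F_B$ does \emph{not} stochastically dominate $F_S$ (for quantiles in $(\tilde{\varepsilon},1-\tilde{\varepsilon}]$ we have $v_{F_B}(q)=X_0<X_3=v_{F_S}(q)$), consistent with the ``no FSD'' regime, and no realization ever produces a tie between a buyer value and a seller value.

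First I would pin down the right-hand side. Since buyers and sellers are drawn independently, the only realization of one seller and one buyer admitting a profitable trade is (seller at $X_1$, buyer at $X_2$), of probability $\tilde{\varepsilon}^2$ and gains $X_2-X_1$; hence $\opt(1,1)=(X_2-X_1)\,\tilde{\varepsilon}^2$. This optimal gains-from-trade is attained ex post, for every realization and every $\mS,\mB$, by the feasible (IR, truthful, strongly budget-balanced) mechanism that posts a fixed price $p\in(X_1,X_2)$ and matches accepting buyers with accepting sellers: the accepting sellers are precisely those valued $X_1$, the accepting buyers precisely those valued $X_2$, and matching them is exactly the efficient trade. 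Therefore $\opt(\mS,\mB)=\sbopt(\mS,\mB)$ for all $\mS,\mB$ and $\opt(1,1)=\sbopt(1,1)$; moreover running the optimal feasible $1$-$1$ mechanism on one designated seller--buyer pair (and never trading with anyone else) is feasible on the $(\mS,\mB)$ market, so $\sbopt(\mS,\mB)\ge\sbopt(1,1)$. This already yields the chain $\varepsilon\cdot\sbopt(1,1)\le\varepsilon\cdot\sbopt(\mS,\mB)=\varepsilon\cdot\opt(\mS,\mB)$.

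The heart of the matter is bounding $\btr(\mS+\ell,\mB+k)$. Since the only buyer--seller pair with $b\ge s$ is an $X_2$-buyer paired with an $X_1$-seller, BTR produces positive gains only when at least \emph{two} buyers are valued $X_2$ and at least one seller is valued $X_1$: indeed, by \cref{lem:reduce} (or by a direct check of the efficient trade size $q$), if at most one buyer is valued $X_2$ then $q\le 1$, and if $q=1$ then $b^{(q+1)}=X_0<X_1\le s^{(q)}$ so the lone candidate trade is reduced and BTR trades nobody (and if $q=0$, trivially so), while $q\ge 1$ forces some seller to be valued $X_1$. Writing $N_B$ for the number of buyers valued $X_2$ and $N_S$ for the number of sellers valued $X_1$, a union bound together with the independence of the buyer and seller draws gives
\[
\Pr[\btr\text{ has positive GFT}]\;\le\;\Pr[N_B\ge 2]\cdot\Pr[N_S\ge 1]\;\le\;(\mB+k)^2\tilde{\varepsilon}^2\cdot(\mS+\ell)\tilde{\varepsilon},
\]
while whenever BTR does trade its gains are at most $(\mS+\ell)(X_2-X_1)$ (every trading pair contributes exactly $X_2-X_1$, and there are at most $\mS+\ell$ of them). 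Hence $\btr(\mS+\ell,\mB+k)\le(\mS+\ell)^2(\mB+k)^2(X_2-X_1)\,\tilde{\varepsilon}^3$, so $\btr(\mS+\ell,\mB+k)/\opt(1,1)\le(\mS+\ell)^2(\mB+k)^2\,\tilde{\varepsilon}$, and choosing $\tilde{\varepsilon}<\varepsilon/\bigl((\mS+\ell)^2(\mB+k)^2\bigr)$ yields $\btr(\mS+\ell,\mB+k)<\varepsilon\cdot\opt(1,1)=\varepsilon\cdot\sbopt(1,1)$, which completes the proof.

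I expect the main obstacle to be exactly this BTR bound: one must verify that the trade reduction of BTR genuinely destroys \emph{all} of BTR's gains in the ``typical'' rare event where a single high buyer meets a single low seller, so that BTR's expected gains are $O(\tilde{\varepsilon}^3)$ and not merely $O(\tilde{\varepsilon}^2)$ --- the latter would match $\opt(1,1)$ and the argument would collapse. This is precisely what dictates giving $F_B$ an atom at $X_2$ of \emph{small} mass $\tilde{\varepsilon}$ (rather than constant mass), making a second $X_2$-buyer ``doubly rare''. The remaining ingredients --- computing $\opt$, exhibiting the feasible optimum, and the $1$-$1$ benchmark --- are routine, mirroring the proof of \cref{obs-reg-not-suff-app}.
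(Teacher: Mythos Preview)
Your proposal is correct and essentially identical to the paper's own proof: the paper uses the concrete values $0,1,2,3$ in place of your abstract $X_0<X_1<X_2<X_3$, but the distributions, the posted-price argument for $\sbopt=\opt$, the computation $\opt(1,1)=\tilde{\varepsilon}^2$, the observation that BTR needs at least two high buyers and one low seller, the union bound yielding $\btr\le(\mS+\ell)^2(\mB+k)^2\tilde{\varepsilon}^3$, and the final choice of $\tilde{\varepsilon}$ all match.
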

\begin{proof}
	Consider the following distributions. 
	Each buyer value is sampled from the distribution $F_B$: the buyer has value $2$ with probability $\tilde{\varepsilon}$ (to be chosen later), and value $0$ otherwise. 
	Each seller value is sampled from the distribution $F_S$: the seller has value $1$ with probability $\tilde{\varepsilon}$, and value $3$ otherwise.  
	
	We first observe that for any number of buyers and any number of sellers, the realized optimum GFT can be obtained by the mechanism (that depends on the distributions) that simply posts a price of 1.5 and in which the size of trade is the maximal number of possible trades at this price (which is the minimum between the number of sellers with value $1$ and the number of buyers with value $2$.) 
	Trade happens between this number of buyers with value $2$ and this number of sellers with value $1$.
	This is clearly the optimal feasible mechanism, as it obtains the optimal GFT, and does so with a strongly budget-balanced mechanism.  
	Thus, the bound for the GFT of BTR that we will prove with respect to the expected optimal GFT also holds with respect to the optimal feasible mechanism.
	We conclude that for this distribution $ \sbopt(1,1)=  \opt(1,1)\leq  \sbopt(\mS,\mB)= \opt(\mS,\mB)$.
	
	Clearly $\opt(1,1)=\tilde{\varepsilon}^2$ as there is a GFT of $1$ 
	in the case that one buyer has value of $2$ and one seller has value of $1$, which happens with probability $\tilde{\varepsilon}^2$. 
	
	Next consider running BTR with $\mS+\ell$ sellers and $\mB+k$ buyers. 
	For this mechanism to have any positive GFT for a realization, it must be the case that at least one seller out of $\mS+\ell$ has value~$1$, and at least two buyers out of $\mB+k$ have values of $2$. In any such case the GFT is at most $\mS+\ell$. 
	Thus by a simple union bound for each of these probabilities we have that $\btr(\mS+\ell,\mB + k)\leq (\mS+\ell) \cdot ((\mS+\ell)\cdot\tilde{\varepsilon}) \cdot ((\mB+k)^2\cdot\tilde{\varepsilon}^2)=(\mS+\ell)^2\cdot(\mB+k)^2\cdot\tilde{\varepsilon}^3$, and so
	for $\tilde{\varepsilon} < \frac{\varepsilon}{(\mS+\ell)^2\cdot(\mB+k)^2}$ we have that $\btr(\mS+\ell,\mB+k)<\varepsilon\cdot\opt(1,1)$.
\end{proof}

We next strengthen \cref{prop-FSD-nec-app}, and show that the same negative result holds for \emph{any} anonymous robust deterministic mechanism. 
The claim we prove next implies \cref{obs-all-mech-reg-not-suff}, showing that the mechanism fails even if we compare to the optimum with only one seller and one buyer. 
The proof uses the distributions presented in \cref{prop-FSD-nec-app}, together with \cref{lem:anon-problem-many}.

\begin{proposition}
	For any prior-independent mechanism  $M$ that is deterministic, IR, truthful, weakly budget-balanced, and anonymous the following holds. 
	For every $\varepsilon>0$ and every positive integers $\mS$, $\mB$, $\ell$, and $k$,
	there exist two distributions $F_S$ and $F_B$ such that
	\[
	M(\mS+\ell,\mB+k)
	<\varepsilon\cdot \sbopt(1,1)
	\leq\varepsilon\cdot \sbopt(\mS,\mB)= \varepsilon\cdot \opt(\mS,\mB).
	\]
\end{proposition}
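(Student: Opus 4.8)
The plan is to fix the mechanism $M$ together with $\varepsilon>0$ and the positive integers $\mS,\mB,\ell,k$, and then to split into two cases according to whether $M$ trades on one distinguished value profile, invoking \cref{lem:anon-problem-many} in the first case and the distributions from the proof of \cref{prop-FSD-nec-app} in the second. Concretely, set $X_0=0<X_1=1<X_2=2<X_3=3$ and, working in a market with $\mS+\ell$ sellers and $\mB+k$ buyers (note $\mS+\ell\ge 2$ since $\mS,\ell\ge 1$), let $P^{*}$ be the profile in which one seller has value $X_1$, the other $\mS+\ell-1$ sellers have value $X_3$, one buyer has value $X_2$, and the other $\mB+k-1$ buyers have value $X_0$.

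First I would handle the case in which $M$ has positive GFT on $P^{*}$. Then $P^{*}$ is exactly a profile of the form hypothesized in \cref{lem:anon-problem-many} (with the lemma's ``$\mS$'' and ``$\mB$'' instantiated as $\mS+\ell$ and $\mB+k$, and with the $X_0<X_1<X_2<X_3$ above). Applying that lemma with our $\varepsilon$ produces distributions $F_S,F_B$ with $M(\mS+\ell,\mB+k)<\varepsilon\cdot\opt(1,1)$ for which moreover $\opt(\mS',\mB')=\sbopt(\mS',\mB')$ for all $\mS',\mB'$. Since $\opt$ is monotone nondecreasing in the number of agents --- the minimum of $\mS$ i.i.d.\ seller draws is stochastically below a single draw and the maximum of $\mB$ i.i.d.\ buyer draws is stochastically above a single draw, so $\opt(\mS,\mB)\ge\E[\max\{0,b^{(1)}-s^{(1)}\}]\ge\opt(1,1)$ --- we obtain $\sbopt(1,1)=\opt(1,1)\le\opt(\mS,\mB)=\sbopt(\mS,\mB)$, and hence the full displayed chain
\[
M(\mS+\ell,\mB+k)<\varepsilon\cdot\sbopt(1,1)\le\varepsilon\cdot\sbopt(\mS,\mB)=\varepsilon\cdot\opt(\mS,\mB).
\]

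Otherwise $M$ does not trade on $P^{*}$, and I would use the distributions from the proof of \cref{prop-FSD-nec-app}: each seller independently has value $1$ with probability $\tilde\varepsilon$ and value $3$ otherwise, and each buyer independently has value $2$ with probability $\tilde\varepsilon$ and value $0$ otherwise, for a small $\tilde\varepsilon>0$ chosen at the end. By anonymity, ``no trade on $P^{*}$'' extends to ``no trade on any profile obtained from $P^{*}$ by permuting seller values among the sellers and buyer values among the buyers'' --- that is, on every realization in which exactly one seller has value $1$ (the rest value $3$) and exactly one buyer has value $2$ (the rest value $0$). Because $(\text{seller at }1,\text{buyer at }2)$ is the only buyer--seller pair with positive gains, any realization on which $M$ achieves positive GFT must have at least one seller at value $1$ and at least one buyer at value $2$, yet cannot be of the ``exactly one, exactly one'' type; hence it has at least two sellers at value $1$ or at least two buyers at value $2$. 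Using the independence of the seller-side and buyer-side randomness, a union bound makes the probability of this event at most $c\cdot\tilde\varepsilon^{3}$ for a constant $c=c(\mS,\mB,\ell,k)$ (each one-coincidence-on-a-side probability being $O(\tilde\varepsilon)$ and each two-coincidence-on-a-side probability $O(\tilde\varepsilon^{2})$); since every realization has GFT at most $\mS+\ell$, this yields $M(\mS+\ell,\mB+k)\le(\mS+\ell)\,c\,\tilde\varepsilon^{3}$. On the other hand $\opt(1,1)=\tilde\varepsilon^{2}$, and \cref{prop-FSD-nec-app} already records that for these distributions $\sbopt(1,1)=\opt(1,1)\le\sbopt(\mS,\mB)=\opt(\mS,\mB)$. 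Taking $\tilde\varepsilon<\varepsilon/((\mS+\ell)c)$ then gives $M(\mS+\ell,\mB+k)<\varepsilon\cdot\opt(1,1)=\varepsilon\cdot\sbopt(1,1)$, and again the full chain follows.

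The main obstacle is the second case, and within it the step that leverages anonymity of $M$: one must verify carefully that ``$M$ does not trade on the single representative profile $P^{*}$'' really does force the GFT of $M$ to vanish on \emph{every} realization except those in which two agents on the same side happen to coincide at the relevant value, and that the probability of the latter is genuinely $O(\tilde\varepsilon^{3})$ rather than only $O(\tilde\varepsilon^{2})$ --- the latter being too weak, since $\opt(1,1)=\tilde\varepsilon^{2}$. The dichotomy itself, the appeal to \cref{lem:anon-problem-many}, and the monotonicity of $\opt$ are routine.
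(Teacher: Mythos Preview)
Your proposal is correct and follows essentially the same approach as the paper: both split into the case where $M$ has positive GFT on the ``exactly one seller at $1$, exactly one buyer at $2$'' profile (invoking \cref{lem:anon-problem-many}) versus the case where it does not (using the $\{0,2\}/\{1,3\}$ distributions of \cref{prop-FSD-nec-app} and a union bound to show $M$'s GFT is $O(\tilde\varepsilon^3)$ against $\opt(1,1)=\tilde\varepsilon^2$). The only cosmetic difference is that you fix a single representative profile $P^*$ and invoke anonymity explicitly, whereas the paper phrases the dichotomy as ``some realization of this type,'' which amounts to the same thing.
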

\begin{proof}
	We consider the distributions used in the proof of \cref{prop-FSD-nec-app}.
	Each buyer value is sampled from the distribution $F_B$: the buyer has value $2$ with probability $\tilde{\varepsilon}$ (to be chosen later), and value $0$ otherwise. 
	Each seller value is sampled from the distribution $F_S$: the seller has value $1$ with probability $\tilde{\varepsilon}$, and value $3$ otherwise.  
	
	As already shown in the proof of that proposition, $ \sbopt(1,1)=  \opt(1,1)\leq\linebreak\sbopt(\mS,\mB)= \opt(\mS,\mB)$ and $\opt(1,1)=\tilde{\varepsilon}^2$. 
	Consider any mechanism $M'$ that runs on $\mS+\ell$ sellers and $\mB+k$ buyers, and that 
	has optimal GFT on all value profiles except for profiles in which exactly one buyer has value $2$ and exactly one seller has value $1$. 
	We therefore have that mechanism $M'$ has GFT of at most $(\mS+\ell)\cdot({(\mS+\ell)\cdot(\mB+k)^2\cdot\tilde{\varepsilon}^3}+{(\mS+\ell)^2\cdot(\mB+k)\cdot\tilde{\varepsilon}^3})=(\mS+\ell)^2\cdot(\mB+k)\cdot(\mS+\ell+\mB+k)\cdot\tilde{\varepsilon}^3$, as at least two buyers must have value $2$ and one seller must have value $1$, or alternatively at least two sellers must have value $1$ and one buyer must have value $2$ (we bound the probability of each of these events by a simple union bound), and whenever there is positive GFT, the GFT is $1$.
	Therefore, for $\tilde{\varepsilon}<\frac{\varepsilon}{(\mS+\ell)^2\cdot(\mB+k)\cdot(\mS+\ell+\mB+k)}$ we have that $M'(\mS+\ell,\mB+k)<\varepsilon\cdot \opt(1,1)$.
	Thus for the given mechanism $M$ that runs on $\mS+\ell$ sellers and $\mB+k$ buyers to get GFT large enough to beat $\varepsilon\cdot \opt(1,1)$ for $F_B$ and $F_S$, it must have positive GFT for at least one realization in which exactly one buyer has value $2$ and exactly one seller has value $1$. Now we can use \cref{lem:anon-problem-many} for $X_3=3>X_2=2>X_1=1>X_0=0$, $\mS+\ell$ sellers, and $\mB+k$ buyers, to derive the claim. 
\end{proof}

\section{Missing proofs from Section \ref{sec-11}}\label{app:11}

\subsection{Proof of Proposition~\ref{prop:sd-1-1-lb}}

In this section we restate and prove \cref{prop:sd-1-1-lb}.
 
\begin{proposition} [\cref{prop:sd-1-1-lb}]
	There exist two distributions, a seller distribution $F_S$ and a buyer distribution $F_B$, such that $F_B~\text{FSD}~F_S$ and for which \[\btr(1,2)<\sbopt(1,1)= \opt(1,1).\]
\end{proposition}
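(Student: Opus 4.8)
The plan is to exhibit an explicit pair of two‑atom distributions. Let $F_B$ put mass $p$ on the value $2$ and mass $1-p$ on $0$, and let $F_S$ put mass $q$ on the value $1$ and mass $1-q$ on $0$, for parameters $0<q\le p<1$ to be fixed at the end. The first step is to check that $F_B~\text{FSD}~F_S$: comparing cumulative distribution functions, the only value at which something nontrivial happens is $x\in[0,1)$, where $F_B(x)=1-p$ and $F_S(x)=1-q$, so $F_B(x)\le F_S(x)$ exactly when $q\le p$; on $x\in[1,2)$ we have $F_B(x)=1-p\le 1=F_S(x)$, and elsewhere the comparison is trivial. Hence stochastic dominance holds under the assumption $q\le p$.

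Next I would argue that $\sbopt(1,1)=\opt(1,1)$ for these distributions by exhibiting a \emph{feasible} mechanism that attains the realization‑wise optimal gains from trade. Post any fixed price $p^\star\in[1,2)$ and have trade occur iff both agents accept $p^\star$. This mechanism is deterministic, individually rational, (even strongly) budget balanced, and truthful. Moreover, in every realization it achieves the optimal gains from trade: a beneficial trade is possible precisely when $b=2$ (so $b\ge p^\star$), and then $s\in\{0,1\}$ so $s\le p^\star$, which means the posted price never blocks a beneficial trade and never forces an unprofitable one; when $b=0$ no trade happens and the optimal gains are $0$ anyway. Since no feasible mechanism can exceed the realization‑wise optimum in expectation, this shows $\sbopt(1,1)=\opt(1,1)$.

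Then comes the computation. On the one hand, $\opt(1,1)=\E\bigl[(b-s)^+\bigr]=2p(1-q)+pq=p(2-q)$. On the other hand, using the single‑seller description of BTR — run a second‑price auction among the two buyers and let the highest buyer trade with the seller at price $b^{(2)}$ iff $b^{(2)}\ge s$ — we get $\btr(1,2)=\E\bigl[(b^{(1)}-s)\,\mathbbm{1}[b^{(2)}\ge s]\bigr]$. Conditioning on the seller's value: if $s=0$ (probability $1-q$) trade always occurs and contributes $\E[b^{(1)}]=2\bigl(1-(1-p)^2\bigr)=2(2p-p^2)$; if $s=1$ (probability $q$) trade occurs only if \emph{both} buyers have value $2$ (probability $p^2$), contributing $1$. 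Thus $\btr(1,2)=(1-q)\cdot 2(2p-p^2)+q p^2$, and a short simplification gives $\opt(1,1)-\btr(1,2)=p(1-p)(3q-2)$, which is strictly positive exactly when $p<1$ and $q>\tfrac23$. Choosing, say, $p=q=\tfrac{9}{10}$ satisfies $q\le p<1$ and $q>\tfrac23$, so $\btr(1,2)<\opt(1,1)=\sbopt(1,1)$, as required.

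The only place any thought is needed is managing the tension among the three requirements at once. First‑order stochastic dominance with a non‑degenerate $F_S$ forces the bottom of $F_B$'s support to sit at or below part of $F_S$'s support, and that is precisely the feature that makes BTR underperform — when $b^{(2)}$ lands in that low region while the seller's realized value is high, no trade clears even though the high buyer $b^{(1)}$ would have been glad to trade. At the same time we must keep $\sbopt(1,1)=\opt(1,1)$, which requires every trade that a fixed posted price would block (or wrongly force) to have zero gains. The two‑atom construction threads this needle because the single beneficial‑trade event is "$b=2$", in which the buyer's value dominates both possible seller values, so a price in $[1,2)$ is simultaneously high enough to be acceptable to every seller realization and low enough to be acceptable to the buyer whenever a beneficial trade exists.
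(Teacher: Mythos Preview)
Your proof is correct and follows essentially the same approach as the paper: the same two-atom distributions (the paper's parametrization takes $p=q=1-\varepsilon$ with $\varepsilon\in(0,\tfrac13)$, which is your $p=q\in(\tfrac23,1)$), the same posted-price argument for $\sbopt(1,1)=\opt(1,1)$, and the same direct computation of $\opt(1,1)$ and $\btr(1,2)$. Your factorization $\opt(1,1)-\btr(1,2)=p(1-p)(3q-2)$ is a slightly cleaner way to read off the threshold than the paper's side-by-side comparison, but the content is identical.
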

\begin{proof}
	We consider the following distributions for the buyer value and seller value, where $\varepsilon\in(0,1)$ will be chosen later in this proof (for any such $\varepsilon$, the former distribution FSD the latter).

	\[b = \begin{cases} 0 & w.p.\,\, \vareps \\ 2 & w.p.\,\, 1-\vareps \end{cases} \quad\quad s = \begin{cases}0 & w.p.\,\, \vareps \\ 1 & w.p.\,\, 1-\vareps  \end{cases}\]
	
	We first observe that the optimal GFT can be obtained by the feasible mechanism (that depends on the distribution) that simply posts a price of 1.5 and trades the item whenever both agents agree to this price. So this is the optimal feasible mechanism and indeed $\sbopt(1,1)= \opt(1,1)$, and
	thus the bound for the GFT of BTR that we will prove with respect to the optimal GFT also holds with respect to the optimal feasible mechanism.
	
	We observe that \[\opt(1,1)= 2\cdot \varepsilon \cdot (1-\varepsilon) + 1\cdot (1-\varepsilon)^2 = (1-\varepsilon)\cdot (1+\varepsilon),\]  
	while 
	\[\btr(1,2) = \underbrace{(1-\varepsilon)^2}_{b_1=b_2=2}\cdot(2\cdot \varepsilon + 1\cdot (1-\varepsilon) ) + \underbrace{2\cdot\varepsilon\cdot (1-\varepsilon)}_{\{b_1,b_2\}=\{0,2\}} \cdot 2\cdot \varepsilon = (1-\varepsilon)\cdot(1+3\cdot\varepsilon^2).\]
	So, taking any $\varepsilon\in (0,\frac{1}{3})$, the claim follows as $3\cdot\varepsilon^2<\varepsilon$ for such $\varepsilon$.
\end{proof}

\subsection{Proof of Lemma~\ref{obs-all-mech-reg-not-suff-12}}

In this section we restate and prove \cref{obs-all-mech-reg-not-suff-12}.

\begin{lemma}[\cref{obs-all-mech-reg-not-suff-12}]
	For any prior-independent mechanism $M$ for the setting of one seller and two buyers,  
	that is deterministic, IR, truthful, weakly budget-balanced, and anonymous the following holds. 
	If there is any value profile for which the welfare of $M$ is higher than the welfare of $BTR$, then 
	for every $\varepsilon>0$ there exist two distributions $F_S$ and $F_B$ such that $F_B~\text{FSD}~F_S$ and for which
	\[M(1,2)<\varepsilon\cdot \sbopt(1,1)= \varepsilon\cdot \opt(1,1).\]
\end{lemma}
\begin{proof}
	For any profile with the seller value being no larger than both of the buyers values, BTR has optimal GFT. So the only cases in which $M$ can have higher GFT than BTR is if $b_1>s>b_2$ or $b_2>s>b_1$ and there is trade in $M$ between the high-value buyer and the seller. Assume that this happen for $b_2>s>b_1$ (the proof for the other case is analogous). We now observe that this setting satisfies the conditions of \cref{lem:anon-problem-many} from \cref{app:lower-bounds} for $X_2=b_2>X_1=s>X_0=b_1\geq 0$. Thus by the lemma 
	there exist two distributions $F_S$ and $F_B$ such that $F_B~\text{FSD}~F_S$ and for which 
	\[\varepsilon\cdot \opt(1,1) = \varepsilon\cdot \sbopt(1,1)> M(1,2)\] as needed.
\end{proof}

\subsection{Proof of Theorem~\ref{cor:iid-sample-half}}

\begin{theorem}[\cref{cor:iid-sample-half}]
	For a single buyer and a single seller with values drawn i.i.d.\ from $F=F_S=F_B$, pricing at a single independent sample drawn from $F$ obtains expected GFT at least $\frac{1}{2} \opt(1,1)$. Furthermore, if $F$ is atomless, then the obtained expected GFT is exactly $\frac{1}{2} \opt(1,1)$ (and no more than that).
\end{theorem}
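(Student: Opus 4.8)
I would prove the two assertions separately. The lower bound $\tfrac12\opt(1,1)$ holds for \emph{every} $F$ and follows immediately by chaining two results already established. First instantiate \cref{thm:iid} with $\mS=\mB=1$ and $F=F_S=F_B$ to get $\btr(1,2)\ge\opt(1,1)$. Since this is exactly the hypothesis of \cref{obs:one-sample} (the forward direction of which needs no atomlessness), we conclude that in a one-seller–one-buyer market, pricing the item at a single fresh sample drawn from $F_B=F$ obtains expected GFT at least $\tfrac12\opt(1,1)$. This is the promised ``reduction in the spirit of \citeN{DhangwatnotaiRY10}''.

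For tightness when $F$ is atomless, I would argue directly by a three-draw symmetry coupling. Let $b,s,p$ be i.i.d.\ from $F$, playing the roles of the buyer's value, the seller's value, and the sample, respectively. Pricing at $p$ results in trade precisely when $s\le p\le b$, and then the realized GFT is $b-s$; thus the mechanism's expected GFT is $\E\bigl[(b-s)\cdot\mathbbm{1}[s\le p\le b]\bigr]$, whereas $\opt(1,1)=\E\bigl[\max\{b-s,0\}\bigr]$. Since $F$ is atomless, $b,s,p$ are pairwise distinct almost surely; condition on the (a.s.\ strict) order statistics $x_{(1)}>x_{(2)}>x_{(3)}$ of the three draws. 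By exchangeability, the induced assignment of the three values to the ordered triple $(b,s,p)$ is uniform over the $6$ permutations and independent of $x_{(1)},x_{(2)},x_{(3)}$. The event $s\le p\le b$ occurs for exactly one of these permutations, namely $(b,s,p)=(x_{(1)},x_{(3)},x_{(2)})$, in which case the GFT is $x_{(1)}-x_{(3)}$; hence the mechanism's expected GFT equals $\tfrac16\,\E\bigl[x_{(1)}-x_{(3)}\bigr]$. Summing $\max\{b-s,0\}$ over the same $6$ assignments of $\{x_{(1)},x_{(2)},x_{(3)}\}$ to $(b,s)$ gives $(x_{(1)}-x_{(2)})+(x_{(1)}-x_{(3)})+(x_{(2)}-x_{(3)})=2(x_{(1)}-x_{(3)})$, so $\opt(1,1)=\tfrac13\,\E\bigl[x_{(1)}-x_{(3)}\bigr]$. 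Comparing the two, the mechanism's expected GFT is exactly $\tfrac12\opt(1,1)$, which proves the second assertion (and, incidentally, re-derives the $\ge\tfrac12$ bound for atomless $F$).

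\textbf{Main obstacle.} Honestly there is little obstacle here: the substantive work is already packaged in \cref{thm:iid} and \cref{obs:one-sample}, and the atomless computation is a one-line symmetry argument. The only point requiring a moment's care is the tie-breaking convention of the sample-pricing mechanism (whether the buyer and the seller accept the price at equality), but since $F$ is atomless all relevant ties have probability zero, so it is immaterial. As an alternative to the self-contained symmetry argument, one could derive ``exactly $\tfrac12$'' for atomless $F$ from the converse direction of \cref{obs:one-sample} together with the observation that the inequality $\btr(1,2)\ge\opt(1,1)$ from \cref{thm:iid} is in fact an \emph{equality} for atomless i.i.d.\ agents (one checks this by inspecting the coupling in the proof of \cref{thm:iid}, where for atomless $F$ the only inequality — the one introduced by the indicator ``BTR is not optimal'' — becomes tight); I would present the direct symmetry argument, as it is cleaner and more transparent.
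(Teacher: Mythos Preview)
Your proposal is correct and follows essentially the same approach as the paper. For the lower bound you chain \cref{thm:iid} with \cref{obs:one-sample}, which is precisely what the paper states is possible before nonetheless giving a direct argument; for the atomless tightness you use the same three-i.i.d.-draws coupling and permutation symmetry as the paper, the only difference being that you compute $\textsc{Sample}$ and $\opt(1,1)$ separately in terms of $\E[x_{(1)}-x_{(3)}]$ and compare, whereas the paper decomposes $\opt(1,1)$ by the position of $p$ relative to $b,s$ and shows the two ``missing'' pieces sum to at most $\textsc{Sample}$ (with equality when atomless).
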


\begin{proof}
As discussed in \cref{sec-11}, the first part follows directly from \cref{obs:one-sample,thm:iid}. We will nonetheless give a direct proof of both parts of the \lcnamecref{cor:iid-sample-half} at once.

Denote the expected GFT in bilateral trade when pricing at a sample drawn independently from $F$ by $\textsc{Sample}$.
The realized optimal GFT is $b-s$ whenever $b > s$ (and it is $0$ otherwise).
So,
\[\opt(1,1) = \E\left[(b-s) \cdot \mathbbm{1}[b\ge p\ge s]\right] + \E\left[(b-s) \cdot \mathbbm{1}[p>b>s]\right] + \E\left[(b-s) \cdot \mathbbm{1}[b>s>p]\right].\]
On the other hand, the realized GFT of pricing at the sample is $b-s$ \emph{only} in the event that the sample price falls weakly below the buyer and weakly above the seller (and it is $0$ otherwise).
So,
\[\textsc{Sample} =  \E\left[(b-s) \cdot \mathbbm{1}[b\ge p\ge s]\right].\]
To prove both parts of the theorem, it therefore suffices to prove that 
\[\textsc{Sample} \ge \E\left[(b-s) \cdot \mathbbm{1}[p>b>s]\right] + \E\left[(b-s) \cdot \mathbbm{1}[b>s>p]\right],\]
with equality for atomless distributions. To see this, consider the following equivalent process for drawing $b$, $s$, and $p$. We first draw three values independently and uniformly from $F$. We call these values, after ordering them, $x^{(1)}\ge x^{(2)}\ge x^{(3)}$. After drawing the values, we draw a permutation $\{i,j,k\}$ of $\{1,2,3\}$ uniformly at random, and set $b=x^{(i)}$, $s=x^{(j)}$, and $p=x^{(k)}$. It is straightforward that this procedure is indeed equivalent to drawing $b,s,p\sim F$ independently. Our proof will use the fact that each permutation is equally likely, independently of the drawn values:
\begin{align*}
\textsc{Sample} &= \E\left[(b-s) \cdot \mathbbm{1}[b\ge p\ge s]\right] \ge \\
&\ge\E\left[(b-s) \cdot \mathbbm{1}[b>p>s]\right] = \\
&=\E\left[(x^{(1)}-x^{(3)}) \cdot \mathbbm{1}[x^{(1)}>x^{(2)}>x^{(3)} \And i=1 \And k=2 \And j=3]\right] = \\
&=\frac{1}{6}\cdot\E\left[(x^{(1)}-x^{(3)}) \cdot \mathbbm{1}[x^{(1)}>x^{(2)}>x^{(3)}]\right] = \\
&=\frac{1}{6}\cdot\left(\E\left[(x^{(1)}-x^{(2)}) \cdot \mathbbm{1}[x^{(1)}>x^{(2)}>x^{(3)}]\right]+\E\left[(x^{(2)}-x^{(3)}) \cdot \mathbbm{1}[x^{(1)}>x^{(2)}>x^{(3)}]\right]\right) = \\
&=\E\left[(x^{(1)}-x^{(2)}) \cdot \mathbbm{1}[x^{(1)}>x^{(2)}>x^{(3)} \And k=1 \And i=2 \And j=3]\right]+ \\* &\qquad\qquad+\E\left[(x^{(2)}-x^{(3)}) \cdot \mathbbm{1}[x^{(1)}>x^{(2)}>x^{(3)}] \And i=1 \And j=2 \And k=3\right] = \\
&= \E\left[(b-s) \cdot \mathbbm{1}[p>b>s]\right] + \E\left[(b-s) \cdot \mathbbm{1}[b>s>p]\right],
\end{align*}
where the second and fourth equalities are since any permutation of $i,j,k$ is equally likely, and
where the only inequality becomes an equality if $F$ is atomless, as required.
\end{proof}

\subsection{Proof of Theorem~\ref{thm:FSD-11-sample}}

In this section we restate and prove \cref{thm:FSD-11-sample}.

\begin{theorem} [\cref{thm:FSD-11-sample}]
	For a single buyer with value drawn from $F_B$ and a single seller with value drawn from $F_S$ where $F_B~\text{FSD}~F_S$, pricing at a single independent sample drawn from $F_B$ obtains expected GFT at least $\frac{1}{4} \opt(1,1)$.\end{theorem}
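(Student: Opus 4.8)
Write $b\sim F_B$, $s\sim F_S$ and $t\sim F_B$ for the independent fresh sample used as the take-it-or-leave-it price; trade occurs (with both agents weakly accepting) exactly when $b\ge t\ge s$, so, denoting by $\textsc{Sample}$ the expected GFT of this mechanism, we have $\textsc{Sample}=\E\bigl[(b-s)\cdot\mathbbm{1}[b\ge t\ge s]\bigr]$ and $\opt(1,1)=\E[(b-s)^+]$. The plan is to reduce the theorem to a single inequality about $\max(b,t)$, $\min(b,t)$ and $s$, and then prove that inequality by conditioning on the seller's quantile and applying Chebyshev's integral (correlation) inequality twice, spending the FSD assumption. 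I would first reduce to the case in which $F_B$ and $F_S$ are atomless (approximating them by atomless distributions that still satisfy the FSD relation and noting that both $\opt(1,1)$ and $\textsc{Sample}$ behave continuously under such an approximation), so that ties happen with probability zero; alternatively one keeps atoms and replaces $F_B$ by its left limit in the threshold events below.

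\emph{Symmetrization.} Exchanging the roles of the i.i.d.\ draws $b$ and $t$ also gives $\textsc{Sample}=\E\bigl[(t-s)\cdot\mathbbm{1}[t\ge b\ge s]\bigr]$; averaging, and using that $\{b\ge t\ge s\}\cup\{t\ge b\ge s\}=\{\min(b,t)\ge s\}$ up to a null set, yields $\textsc{Sample}=\tfrac12\E\bigl[(\max(b,t)-s)\cdot\mathbbm{1}[\min(b,t)\ge s]\bigr]$. The same exchangeability applied to $\opt(1,1)=\tfrac12\E[(b-s)^++(t-s)^+]$, together with the pointwise identity $(b-s)^++(t-s)^+=(\max(b,t)-s)^++(\min(b,t)-s)^+$, gives $\opt(1,1)=\tfrac12\E[(\max(b,t)-s)^+]+\tfrac12\E[(\min(b,t)-s)^+]$. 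Since $\min(b,t)-s\le\max(b,t)-s$ on $\{\min(b,t)\ge s\}$, the second term is at most $2\,\textsc{Sample}$; splitting the first term over $\{\min(b,t)\ge s\}$ and $\{\min(b,t)<s\le\max(b,t)\}$, the theorem will follow once we establish
\[
\E\bigl[(\max(b,t)-s)\cdot\mathbbm{1}[\min(b,t)<s\le\max(b,t)]\bigr]\;\le\;4\,\textsc{Sample}\;=\;2\,\E\bigl[(\max(b,t)-s)\cdot\mathbbm{1}[\min(b,t)\ge s]\bigr],
\]
because then $\opt(1,1)\le\tfrac12\bigl(2\,\textsc{Sample}+4\,\textsc{Sample}\bigr)+\tfrac12\cdot 2\,\textsc{Sample}=4\,\textsc{Sample}$.

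\emph{Proving the displayed inequality.} Pass to quantiles: $b=v_{F_B}(\beta)$, $t=v_{F_B}(\tau)$, $s=v_{F_S}(\sigma)$ with $\beta,\tau,\sigma$ i.i.d.\ uniform on $[0,1]$, and put $\mu=\max(\beta,\tau)$, $\nu=\min(\beta,\tau)$, so $(\mu,\nu)$ has density $2$ on $\{0\le\nu\le\mu\le1\}$ and, by monotonicity of $v_{F_B}$, $\max(b,t)=v_{F_B}(\mu)$ and $\min(b,t)=v_{F_B}(\nu)$. Condition on $\sigma$ and set $\rho:=F_B(s)=F_B(v_{F_S}(\sigma))$; atomlessness gives $v_{F_B}(\mu)>s\iff\mu>\rho$, and FSD gives $\rho\le\sigma$ since $v_{F_S}(\sigma)\le v_{F_B}(\sigma)$ and $F_B(v_{F_B}(\sigma))=\sigma$. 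A short integration over $\nu$ shows that, conditioned on $\sigma$, the left-hand side equals $2\rho\,R(s)$ and the right-hand side equals $4\int_\rho^1(v_{F_B}(\mu)-s)(\mu-\rho)\,d\mu$, where $R(s):=\int_\rho^1(v_{F_B}(\mu)-s)\,d\mu=\E_{b\sim F_B}[(b-s)^+]\ge0$. Applying Chebyshev's integral inequality on $[\rho,1]$ to the two nondecreasing functions $\mu\mapsto v_{F_B}(\mu)-s$ (nonnegative there) and $\mu\mapsto\mu-\rho$ gives $\int_\rho^1(v_{F_B}(\mu)-s)(\mu-\rho)\,d\mu\ge\tfrac{1-\rho}{2}\,R(s)$, so the right-hand side conditioned on $\sigma$ is at least $2(1-\rho)R(s)$. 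Hence the difference, in expectation over $\sigma$, is at least $2\,\E_\sigma\bigl[(1-2\rho)\,R(s)\bigr]$; using $\rho\le\sigma$ and $R(s)\ge0$ this is at least $2\,\E_\sigma\bigl[(1-2\sigma)\,R(v_{F_S}(\sigma))\bigr]$, and since $\sigma\mapsto R(v_{F_S}(\sigma))$ is nonincreasing (composition of the nonincreasing $R$ with the nondecreasing $v_{F_S}$), one more application of Chebyshev's integral inequality on $[0,1]$, to $\sigma\mapsto1-2\sigma$ and $\sigma\mapsto R(v_{F_S}(\sigma))$, bounds this below by $2\bigl(\int_0^1(1-2\sigma)\,d\sigma\bigr)\cdot\E_\sigma[R(v_{F_S}(\sigma))]=0$.

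\emph{Where the difficulty lies.} The main obstacle is the final aggregation over $\sigma$: the inequality does \emph{not} hold seller-quantile by seller-quantile — when $\rho>\tfrac12$ the per-seller contribution is negative — so the FSD hypothesis must be used globally, to make the surplus of ``low'' sellers (small $\rho$) pay for the deficit of ``high'' sellers, and the substitution $\rho\le\sigma$ together with the two Chebyshev/correlation steps is exactly the mechanism that achieves this cancellation. (This also explains why the constant degrades from the $\tfrac12$ available for i.i.d.\ agents in \cref{cor:iid-sample-half}, where the seller could be symmetrized with the buyers, to $\tfrac14$ here; consistent with \cref{prop:FSD-11-sample-nohalf}, we do not claim $\tfrac14$ is tight.) A secondary, purely technical point is carrying out the reduction to atomless distributions while preserving the FSD relation, or, equivalently, tracking the left limits $F_B^-$ in the $\min$-side threshold events.
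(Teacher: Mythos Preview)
Your argument is correct and takes a genuinely different route from the paper's.

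\textbf{The paper's approach.} The paper decomposes $\opt(1,1)$ according to the three orderings of $(b,p,s)$ and proves two inequalities by \emph{quantile-permutation coupling}: for $\E[(b-s)\mathbbm{1}[p>b>s]]\le\textsc{Sample}$ it simply swaps the (i.i.d.) roles of $b$ and $p$; for $\E[(b-s)\mathbbm{1}[b>s>p]]\le\textsc{Sample}+\E[(b-s)\mathbbm{1}[p>b>s]]$ it swaps the quantiles of $s$ and $p$, using FSD only to guarantee that the ``swapped seller'' $s'=s^{q^{(k)}}$ is no larger than the original $p$ and $s$. Combining the two yields $\opt\le 4\,\textsc{Sample}$.

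\textbf{What you do differently.} You symmetrize $b$ and $t$ up front, recasting everything in terms of $\max(b,t)$, $\min(b,t)$, and $s$, and reduce to a single inequality comparing $\E[(\max-s)\mathbbm{1}[\min<s\le\max]]$ with $2\,\E[(\max-s)\mathbbm{1}[\min\ge s]]$. You then prove this by conditioning on the seller quantile and applying Chebyshev's correlation inequality twice: once on $[\rho,1]$ to lower-bound the ``sample'' side by $\tfrac{1-\rho}{2}R(s)$, and once on $[0,1]$ to show the resulting signed average $\E_\sigma[(1-2\sigma)R(v_{F_S}(\sigma))]$ is nonnegative. FSD enters only through $\rho\le\sigma$.

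\textbf{Comparison.} Both proofs hinge on the same quantile coupling, but the paper's is more combinatorial (pure permutation swaps, no named inequality) and plugs directly into the same machinery used elsewhere in the paper (e.g., the analysis of $\btr(1,m_B+k)$). Yours is more analytic: it cleanly isolates where FSD is spent and makes explicit that the per-$\sigma$ inequality fails for $\rho>\tfrac12$, so a global correlation argument is needed---a point the paper's coupling handles implicitly. One minor wording issue: when you say ``the second term is at most $2\,\textsc{Sample}$'' you mean $\E[(\min(b,t)-s)^+]\le 2\,\textsc{Sample}$ (so half of it is at most $\textsc{Sample}$); your final arithmetic $\tfrac12(2S+4S)+\tfrac12\cdot 2S=4S$ is consistent with this reading.
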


\begin{proof}
Denote the expected GFT in bilateral trade when pricing at a sample drawn independently from $F_B$ by $\textsc{Sample}$.
The realized optimal GFT is $b-s$ whenever $b > s$ (and it is $0$ otherwise).
So,
\[\opt(1,1) = \E\left[(b-s) \cdot \mathbbm{1}[b\ge p\ge s]\right] + \E\left[(b-s) \cdot \mathbbm{1}[p>b>s]\right] + \E\left[(b-s) \cdot \mathbbm{1}[b>s>p]\right].\]
On the other hand, the realized GFT of pricing at the sample is $b-s$ \emph{only} in the event that the sample price falls weakly below the buyer and weakly above the seller (and it is $0$ otherwise).
So,
\[\textsc{Sample} =  \E\left[(b-s) \cdot \mathbbm{1}[b\ge p\ge s]\right].\]

To prove the claim, we will prove the following inequalities:
\begin{align}
\E\left[(b-s) \cdot \mathbbm{1}[p>b>s]\right]&\le\E\left[(b-s) \cdot \mathbbm{1}[b\ge p\ge s]\right], \label{pbs}\\
\E\left[(b-s) \cdot \mathbbm{1}[b>s>p]\right]&\le\E\left[(b-s) \cdot \mathbbm{1}[b\ge p\ge s]\right] + \E\left[(b-s) \cdot \mathbbm{1}[p>b> s]\right]. \label{bsp}
\end{align}
Combining \cref{bsp,pbs} indeed gives that $\textsc{Sample}$ is a $\frac{1}{4}$-approximation to $\opt(1,1)$:
\begin{align*}
\opt(1,1) &= \E\left[(b-s) \cdot \mathbbm{1}[b\ge p\ge s]\right] + \E\left[(b-s) \cdot \mathbbm{1}[p>b>s]\right] + \E\left[(b-s) \cdot \mathbbm{1}[b>s>p]\right]\le\\
&\le 2\cdot\E\left[(b-s) \cdot \mathbbm{1}[b\ge p\ge s]\right] + 2\cdot\E\left[(b-s) \cdot \mathbbm{1}[p>b>s]\right]\le\tag*{By \cref{bsp}}\\
&\le 4\cdot\E\left[(b-s) \cdot \mathbbm{1}[b\ge p\ge s]\right] = 4\cdot\textsc{Sample}.\tag*{By \cref{pbs}}
\end{align*}

We will prove each of \cref{bsp,pbs} by considering the following equivalent process for drawing $b$, $s$, and $p$. We first draw three quantiles (see \cref{sec:prelim-quantiles}) independently and uniformly from $(0,1)$. We call these draws, after ordering them, $q^{(1)}\ge q^{(2)}\ge q^{(3)}$. For any quantile $q'\in(0,1)$, we denote $b^{q'}=v_{F_B}(q')$
and similarly $s^{q'} = v_{F_S}(q')$.
Since $F_B$ FSD $F_S$, we have that $b^{q'} = v_{F_B}(q') \geq v_{F_S}(q') = s^{q'}$ for any $q'$.
After drawing the quantiles, we draw a permutation $\{i,j,k\}$ of $\{1,2,3\}$ uniformly at random, and set $b=b^{q^{(i)}}$, $s=s^{q^{(j)}}$, and $p=b^{q^{(k)}}$.
As noted in \cref{sec:prelim-quantiles}, generating a value according to a quantile distributed uniformly in $(0,1)$ and then taking the value that corresponds to that quantile for some distribution results in a value distributed according to that distribution. As such, our procedure is indeed equivalent to drawing $s\sim F_S$ and $b,p\sim F_B$ independently. The proofs of both \cref{pbs,bsp} will use the fact that any permutation of $i,j,k$ is equally likely, independently of the drawn quantiles.

We start by proving \cref{pbs}:
\[
\E_{i,j,k}\left[(b-s) \cdot \mathbbm{1}[p>b>s]\right]\le\E_{i,j,k}\left[(p-s) \cdot \mathbbm{1}[b\ge p\ge s]\right]\le\E_{i,j,k}\left[(b-s) \cdot \mathbbm{1}[b\ge p\ge s]\right],
\]
where the first equality is since any permutation of $i,j,k$ is equally likely.

We conclude the proof by proving \cref{bsp}.  Recall that $b=b^{q^{(i)}}$, $s=s^{q^{(j)}}$, and $p=b^{q^{(k)}}$. We denote $p'=b^{q^{(j)}}$ and $s'=s^{q^{(k)}}$, corresponding to switching the quantiles of the price and the seller value. We note that whenever the value order is $b>s>p$, by stochastic dominance necessarily $k=3$ (recall that $k$ is defined such that $p=b^{q^{(k)}}$) and so $s'=s^{q^{(3)}}\le s$ and by stochastic dominance $s'\le p<s\le p'$ and $s'\le s< b$. Therefore:
\begin{align*}
\E_{i,j,k}\left[(b-s) \cdot \mathbbm{1}[b>s>p]\right] &\le\E_{i,j,k}\left[(b-s') \cdot \mathbbm{1}[b>s>p]\right]=\\
&= \E_{i,j,k}\left[(b-s') \cdot \mathbbm{1}[b>s>p \And b \ge p']\right] + \\*
&\qquad\qquad+\E_{i,j,k}\left[(b-s') \cdot \mathbbm{1}[b>s>p \And b < p']\right]\le\\
&\le \E_{i,j,k}\left[(b-s') \cdot \mathbbm{1}[b\ge p'\ge s']\right]+\E_{i,j,k}\left[(b-s') \cdot \mathbbm{1}[p'>b>s']\right]=\\
&= \E_{i,j,k}\left[(b-s) \cdot \mathbbm{1}[b\ge p\ge s]\right]+\E_{i,j,k}\left[(b-s) \cdot \mathbbm{1}[p>b>s]\right],
\end{align*}
where the last equality is since any permutation of $i,j,k$ is equally likely.
\end{proof}

\subsection{Proof of Proposition~\ref{prop:FSD-11-sample-nohalf}}

\begin{proposition} [\cref{prop:FSD-11-sample-nohalf}]
	For every $\alpha>\frac{7}{16}$ there exist two distributions, a seller distribution $F_S$ and a buyer distribution $F_B$, such that $F_B~\text{FSD}~F_S$ and for which for a single buyer with value drawn from $F_B$ and a single seller with value drawn from $F_S$, pricing at a single independent sample drawn from $F_B$ obtains expected GFT of less than $\alpha\cdot\sbopt(1,1)=\alpha\cdot\opt(1,1)$.
\end{proposition}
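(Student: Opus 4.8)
The plan is to exhibit, for each $\alpha>\tfrac{7}{16}$, an explicit pair $(F_S,F_B)$ with $F_B~\text{FSD}~F_S$ on which pricing at a fresh sample from $F_B$ captures only a $\approx\tfrac{7}{16}$ fraction of $\opt(1,1)$. The construction lives at two scales: fix $Q=\tfrac34$ and $m=2$, let $\eta>0$ be small and $M$ large (both to be chosen in terms of $\alpha$ at the end), and write $h(\xi)=1+\eta\xi$. Let $F_S$ put mass $\tfrac14$ on value $1$ and mass $\tfrac34$ on value $m$; let $F_B$ put mass $\tfrac14$ on value $1$ and spread mass $\tfrac34$ uniformly over $[M,M(1+\eta)]$ (equivalently, with probability $\tfrac34$ the buyer's value is $Mh(\xi)$ for $\xi$ uniform on $(0,1)$). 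Since $v_{F_B}(q)=1=v_{F_S}(q)$ for $q\le\tfrac14$ and $v_{F_B}(q)\ge M>m=v_{F_S}(q)$ for $q>\tfrac14$, we have $F_B~\text{FSD}~F_S$.

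Next I would establish $\sbopt(1,1)=\opt(1,1)$ for this pair. Whenever the buyer's value equals $1$ it is at most the seller's value (which is always $\ge 1$), so no trade is efficient; whenever the buyer's value is "high" (at least $M$) it exceeds the seller's value (at most $m<M$), so trade is efficient. Thus the efficient outcome is exactly ``trade iff the buyer is high,'' which is implemented by the feasible fixed-price mechanism that posts any price $p^{\ast}\in(m,M)$: a buyer accepts iff high, a seller always accepts, the mechanism is IR, DSIC and strongly budget balanced, and it realizes the first-best GFT on every realization. Hence $\sbopt(1,1)=\opt(1,1)$.

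The heart of the proof is two elementary computations. First, only the (probability-$\tfrac34$) high-buyer realizations contribute to $\opt(1,1)$, and there the realized GFT is $Mh(\xi)-s$, so $\opt(1,1)=\tfrac34\bigl(M\bar h-\bar s\bigr)$, where $\bar h=\int_0^1 h(\xi)\,d\xi=1+\tfrac{\eta}{2}$ and $\bar s=\E[s]$. Second, I would compute the expected GFT of pricing at a fresh sample $p\sim F_B$ by conditioning on $p$: when $p=1$ (probability $\tfrac14$), trade happens iff the seller's value is $1$ (probability $\tfrac14$), contributing $\tfrac14\cdot\tfrac14\cdot\E[b-1]=\tfrac{3}{64}(M\bar h-1)$; when $p$ is high (probability $\tfrac34$, with within-block quantile $\xi_p$ uniform), trade happens iff the buyer is high and $h(\xi_p)\le h(\xi_b)$, i.e.\ $\xi_p\le\xi_b$, and since $\Pr[\xi_p\le\xi_b\mid\xi_b]=\xi_b$ this contributes $\tfrac34\cdot\tfrac34\cdot\E_{\xi_b}[(Mh(\xi_b)-\bar s)\xi_b]=\tfrac{9}{16}(M\mu-\bar s/2)$ with $\mu=\int_0^1\xi h(\xi)\,d\xi=\tfrac12+\tfrac{\eta}{3}$. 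Dividing and letting $M\to\infty$, the ratio $\mathrm{Sample}/\opt(1,1)$ tends to $\tfrac{1}{16}+\tfrac34\cdot\tfrac{\mu}{\bar h}$, and $\tfrac{\mu}{\bar h}=\tfrac{1/2+\eta/3}{1+\eta/2}\to\tfrac12$ as $\eta\to 0$, so this limiting value tends to $\tfrac1{16}+\tfrac38=\tfrac7{16}$. Therefore, given $\alpha>\tfrac7{16}$, I would first pick $\eta$ small enough that $\tfrac1{16}+\tfrac34\cdot\tfrac{1/2+\eta/3}{1+\eta/2}<\alpha$, then pick $M$ large enough that the exact ratio (which converges to this value, both numerator and denominator being affine in $M$ with positive leading coefficient) falls below $\alpha$; the resulting $(F_S,F_B)$ proves the proposition.

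The only genuinely non-routine step is finding the construction; everything after is arithmetic and a limit. Conceptually, within this family the limiting ratio equals $(1-Q)(1-R)+Q\cdot\tfrac{\mu}{\bar h}$ (here $R=\tfrac34$ is the seller's high-value probability), which is $\ge(1-Q)^2+\tfrac{Q}{2}$ because $R\le Q$ is forced by FSD and $\tfrac{\mu}{\bar h}\ge\tfrac12$ is forced by monotonicity of $h$; the choice $Q=\tfrac34$ is exactly the minimizer of $(1-Q)^2+\tfrac Q2$, with minimum value $\tfrac7{16}$. Spreading the high buyer values over an interval rather than placing them at a single atom is precisely what pushes $\tfrac{\mu}{\bar h}$ down toward $\tfrac12$ (a single atom would instead give a ratio bounded away from $\tfrac7{16}$, namely $\tfrac34$), so this feature of the construction is essential.
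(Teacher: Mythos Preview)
Your proof is correct and takes essentially the same approach as the paper: both construct a two-point seller distribution and a buyer distribution with a low atom (probability $\tfrac14$) plus a high uniform part (probability $\tfrac34$), then send the high buyer value to infinity and the width of the uniform interval to zero. Your construction is an affine shift of the paper's (you put the low atom at $1$ and the seller's high value at $2$; the paper uses $0$ and $1$), and your parametrization via $M,\eta$ plays the same role as the paper's $\gamma,\delta$; the arithmetic and the limiting ratio $\tfrac{7}{16}$ come out identically. Your final paragraph explaining \emph{why} $Q=\tfrac34$ is the minimizer of $(1-Q)^2+\tfrac{Q}{2}$, and why the high buyer values must be spread rather than atomic, is a nice conceptual addition that the paper's proof does not make explicit.
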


\begin{proof}
	We use a construction similar to the one from the proof of \cref{prop:sd-1-1-lb}, albeit smoothing out the atom at the high buyer value to avoid unnecessary revenue due to the fact that ties between the sample and buyer are by definition broken in favor of the buyer. (This is the same reason for which the converse of \cref{obs:one-sample} fails to hold with distributions with atoms, as discussed in \cref{footnote-sampling}.) We will furthermore optimize over the high buyer value.
	We therefore consider the following distributions for the buyer value and seller value, where $\delta>0$ should be thought of as small, $\gamma$ should be though of as large, and $\varepsilon\in(0,1)$ will be chosen later in this proof (for any such $\varepsilon$ and $\delta<\gamma$, the former distribution FSD the latter).

	\[b = \begin{cases} 0 & w.p.\,\, \vareps \\ U[1+\gamma-\delta,1+\gamma+\delta] & w.p.\,\, 1-\vareps \end{cases} \quad\quad s = \begin{cases}0 & w.p.\,\, \vareps \\ 1 & w.p.\,\, 1-\vareps  \end{cases}\]
	
	We first observe that the optimal GFT can be obtained by the feasible mechanism (that depends on the distribution) that simply posts a price of $1+\frac{\gamma-\delta}{2}$ and trades the item whenever both agents agree to this price. So this is the optimal feasible mechanism and indeed $\sbopt(1,1)= \opt(1,1)$, and
	thus the bound for the GFT of BTR that we will prove with respect to the optimal GFT also holds with respect to the optimal feasible mechanism.
	
	We observe that \[\opt(1,1)= (1+\gamma)\cdot \varepsilon \cdot (1-\varepsilon) + \gamma\cdot (1-\varepsilon)^2 = (1-\varepsilon)\cdot (\gamma+\varepsilon),\]  
	while pricing at a random sample from $F_B$ the expected revenue is:
	\begin{multline*}
	\underbrace{\frac{(1-\varepsilon)^2}{2}}_{b\ge p>0}\cdot\bigl((1+\gamma+\nicefrac{\delta}{3})\cdot \varepsilon + (\gamma+\nicefrac{\delta}{3})\cdot (1-\varepsilon)\bigr) + \underbrace{\varepsilon\cdot (1-\varepsilon)}_{b>p=0} \cdot (1+\gamma)\cdot \varepsilon = \\
	=\frac{1}{2}\cdot(1-\varepsilon)\cdot(\varepsilon+(1-\varepsilon)\cdot\gamma+(1+2\gamma)\cdot\varepsilon^2+(1-\varepsilon)\cdot\nicefrac{\delta}{3}) <
	\frac{1}{2}\cdot(1-\varepsilon)\cdot(\varepsilon+(1-\varepsilon)\cdot\gamma+(1+2\gamma)\cdot\varepsilon^2+\delta).
	\end{multline*}
	\noindent The ratio between the former and the latter is:
	\[2\cdot\frac{\gamma+\varepsilon}{\varepsilon+(1-\varepsilon)\cdot\gamma+(1+2\gamma)\cdot\varepsilon^2+\delta}.\]
	Taking $\varepsilon=\nicefrac{1}{4}$, the ratio becomes:
	\[2\cdot\frac{\gamma+\nicefrac{1}{4}}{\nicefrac{7}{8}\cdot\gamma+\nicefrac{5}{16}+\delta},\]
	which for any fixed $\delta$ can be made arbitrarily close to \nicefrac{16}{7} by taking $\gamma$ large enough, completing the proof.
\end{proof}

\subsection{Proof of Lemma~\ref{obs-all-mech-reg-not-suff-12}}

In this section we restate and prove \cref{obs-all-mech-reg-not-suff-12}.

\begin{lemma}[\cref{obs-all-mech-reg-not-suff-12}]
	For any prior-independent mechanism $M$ for the setting of one seller and two buyers,  
	that is deterministic, IR, truthful, weakly budget-balanced, and anonymous the following holds. 
	If there is any value profile for which the welfare of $M$ is higher than the welfare of $BTR$, then 
	for every $\varepsilon>0$ there exist two distributions $F_S$ and $F_B$ such that $F_B~\text{FSD}~F_S$ and for which
	\[M(1,2)<\varepsilon\cdot \sbopt(1,1)= \varepsilon\cdot \opt(1,1).\]
\end{lemma}
\begin{proof}
	For any profile with the seller value being no larger than both of the buyers values, BTR has optimal GFT. So the only cases in which $M$ can have higher GFT than BTR is if $b_1>s>b_2$ or $b_2>s>b_1$ and there is trade in $M$ between the high-value buyer and the seller. Assume that this happen for $b_2>s>b_1$ (the proof for the other case is analogous). We now observe that this setting satisfies the conditions of \cref{lem:anon-problem-many} from \cref{app:lower-bounds} for $X_2=b_2>X_1=s>X_0=b_1\geq 0$. Thus by the lemma 
	there exist two distributions $F_S$ and $F_B$ such that $F_B~\text{FSD}~F_S$ and for which 
	\[\varepsilon\cdot \opt(1,1) = \varepsilon\cdot \sbopt(1,1)> M(1,2)\] as needed.
\end{proof}

\subsection{Proof of Lemma~\ref{obs:k-samples}}

In this section we restate and prove \cref{obs:k-samples}.

\begin{lemma}[\cref{obs:k-samples}]
	For any seller distribution $F_S$ and any buyer distribution $F_B$, if the following holds:
	\begin{itemize}
		\item
		$\btr(1, 1+k) \geq \opt(1,1)$,
	\end{itemize}
	then the following also holds:
	\begin{itemize}
		\item
		In a setting with one seller and one buyer, pricing at the maximum of $k$ samples drawn independently from $F_B$ obtains expected GFT at least $\frac{1}{1+k} \opt(1,1)$.
	\end{itemize}
	Furthermore, if $F_B$ is atomless, then the converse implication is also true: if the latter statement holds, then so does the former.\footnote{Yet, for some distributions with atoms, the converse implication is actually false, see \cref{footnote-sampling}. 	}
\end{lemma}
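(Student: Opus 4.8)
The plan is to prove, by a symmetrization argument over the $k+1$ buyer draws, the single relation
\[(k+1)\cdot\textsc{Sample}_k \;\ge\; \btr(1,1+k),\]
with equality when $F_B$ is atomless, where $\textsc{Sample}_k$ denotes the expected GFT of pricing at the maximum of $k$ fresh i.i.d.\ samples from $F_B$ in the one-seller-one-buyer market; both implications of the lemma then follow immediately. To set this up I would couple the two scenarios within one probability space: draw a seller value $s\sim F_S$ and $k+1$ i.i.d.\ buyer values $b_0,b_1,\dots,b_k\sim F_B$, view $b_0$ as the ``real'' buyer and $b_1,\dots,b_k$ as the $k$ samples, and simultaneously view $b_0,b_1,\dots,b_k$ as the $k+1$ buyers of the $(1,1+k)$-market. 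Under this coupling, since trade at price $p=\max\{b_1,\dots,b_k\}$ in the bilateral market occurs exactly when the buyer (value $b_0$) and the seller (value $s$) both accept $p$, we have $\textsc{Sample}_k=\E[(b_0-s)\cdot\mathbbm{1}[\,b_0\ge\max\{b_1,\dots,b_k\}\ge s\,]]$.

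For the core step, for each $j\in\{0,1,\dots,k\}$ define $G_j:=(b_j-s)\cdot\mathbbm{1}[\,b_j\ge\max_{i\ne j}b_i\ge s\,]$ (the index $i$ ranging over $\{0,\dots,k\}\setminus\{j\}$), which is the realized GFT of ``price at the maximum of the $k$ buyer values other than $b_j$''. By exchangeability of $b_0,\dots,b_k$ (and independence from $s$), $\E[G_j]=\E[G_0]=\textsc{Sample}_k$ for every $j$. I would then verify the pointwise inequality $\sum_{j=0}^k G_j\ge (b^{(1)}-s)\cdot\mathbbm{1}[b^{(2)}\ge s]$, where $b^{(1)}\ge b^{(2)}$ are the two largest among $b_0,\dots,b_k$: when the maximum is attained by a unique index, only that single $G_j$ can be nonzero and it equals exactly $(b^{(1)}-s)\cdot\mathbbm{1}[b^{(2)}\ge s]$; when the maximum is attained by $t\ge 2$ indices one has $\max_{i\ne j}b_i=b^{(1)}=b^{(2)}$ for each such $j$, so the sum equals $t\cdot(b^{(1)}-s)\cdot\mathbbm{1}[b^{(1)}\ge s]\ge (b^{(1)}-s)\cdot\mathbbm{1}[b^{(2)}\ge s]$ (the summand being nonnegative whenever its indicator is $1$). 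Finally, a direct case analysis on the efficient trade size $q\le 1$ (or an appeal to \cref{lem:reduce}) shows that the realized GFT of BTR in the $(1,1+k)$-market is exactly $(b^{(1)}-s)\cdot\mathbbm{1}[b^{(2)}\ge s]$. Taking expectations yields $(k+1)\textsc{Sample}_k=\E\!\left[\sum_{j=0}^k G_j\right]\ge \btr(1,1+k)$.

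To finish: the forward implication uses only the inequality --- if $\btr(1,1+k)\ge\opt(1,1)$ then $\textsc{Sample}_k\ge\frac{1}{1+k}\btr(1,1+k)\ge\frac{1}{1+k}\opt(1,1)$. For the converse, when $F_B$ is atomless the maximum of $b_0,\dots,b_k$ is almost surely attained by a unique index (and almost surely no $b_i$ equals $s$, so the distinction between $\ge$ and $>$ is immaterial in all the indicators above), hence the pointwise inequality is almost-surely an equality and $(k+1)\textsc{Sample}_k=\btr(1,1+k)$; thus $\textsc{Sample}_k\ge\frac{1}{1+k}\opt(1,1)$ forces $\btr(1,1+k)\ge\opt(1,1)$. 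The one delicate point in the argument is precisely this handling of ties at the top buyer value: it is what separates the always-valid inequality (which suffices for the forward direction) from the equality that the converse requires, and it is the reason the converse needs atomlessness (cf.\ the counterexample in \cref{footnote-sampling}). Specializing to $k=1$ recovers \cref{obs:one-sample}.
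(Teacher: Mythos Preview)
Your proof is correct and follows essentially the same approach as the paper: couple the $(1,1{+}k)$ market with the bilateral-sample scenario by drawing $s$ and $b_0,\dots,b_k$ once, and use exchangeability of the $k{+}1$ buyer draws to relate $(k{+}1)\,\textsc{Sample}_k$ to $\btr(1,1{+}k)$. Your pointwise comparison $\sum_{j} G_j \ge (b^{(1)}-s)\,\mathbbm{1}[b^{(2)}\ge s]$, with equality when the maximum among $b_0,\dots,b_k$ is attained uniquely, makes both the tie-handling (hence the direction of the inequality) and the atomless converse especially transparent.
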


\begin{proof}
	Denoting the expected GFT in bilateral trade when pricing at the maximum of $k$ samples drawn independently from $F_B$ by $\textsc{Sample}_k$, we calculate:
	\begin{align*}
	\btr(1,1+k) & = \E_{(b_{1}, \ldots, b_{k+1})\sim F_B^{k+1}, s \sim F_S} \left[ \sum _{i=1} ^{1+k} (b_i -s) \cdot \mathbbm{1} \left[i=\min\bigl\{j~\middle|~b_j\!=\!\max\{s,b_1,\ldots,b_{k+1}\}\bigr\}\right]\right] \le\\
	&\le \E_{(b_{1}, \ldots, b_{k+1})\sim F_B^{k+1}, s \sim F_S} \left[ \sum _{i=1} ^{1+k} (b_i -s) \cdot \mathbbm{1} \left[b_i=\max\{s,b_1,\ldots,b_{k+1}\}\right]\right] =\\
	&= (1+k)\cdot\E_{b \sim F_B, s \sim F_S, (p_1,\ldots,p_k)\sim F_B^k} \left[ (b-s) \cdot \mathbbm{1} \left[ b\ge s \And b \ge \max\{p_1,\ldots,p_k\} \right] \right] =\\
	&= (1+k)\cdot\textsc{Sample}_k,
	\end{align*}
	with the inequality becoming an equality is $F_B$ is atomless, since in that case with probability $1$ there are no ties between the drawn buyer values and so the expressions in the expectation in the two sides of the inequality are equal.
\end{proof}

\section{Missing proofs from Section \ref{sec:1r}}
\subsection{Proof of Theorem~\ref{thm:sd-1-r-lb}}
In this section we restate and prove \cref{thm:sd-1-r-lb}.

\begin{theorem}[\cref{thm:sd-1-r-lb}]
	For any prior-independent mechanism $M$ that is deterministic, IR, truthful, weakly budget-balanced, and anonymous the following holds. 
	For any positive number $\mB$ there exist two distribution $F_B$ and $F_S$ such that
	$F_B$ FSD $F_S$ and for which 
	\[M(1,\mB+\lfloor\log_2 \mB\rfloor)<\sbopt(1,\mB)= \opt(1,\mB).\]
\end{theorem}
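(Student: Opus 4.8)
The plan is to first prove the bound for $\btr$ on an explicit pair of two-point distributions, and then bootstrap it to an arbitrary anonymous robust deterministic mechanism $M$ using \cref{lem:anon-problem}. Fix $X_0=0$, $X_1=1$, $X_2=M$ for a suitably large $M$, and a parameter $p=p(\mB)\in(0,1)$ of order $\tfrac{\log_2\mB}{\mB}$ (concretely, $p=\tfrac{\lfloor\log_2\mB\rfloor+1}{\mB}$, or $p=\tfrac{2\lfloor\log_2\mB\rfloor}{\mB}$ once $\mB$ is large enough that this is in $(0,1)$). Let $F_S$ place mass $1-p$ on $0$ and $p$ on $1$, and let $F_B$ place mass $1-p$ on $0$ and $p$ on $M$. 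Since $X_2>X_1>X_0$ and the two distributions put the same mass on $0$, we have $F_B~\text{FSD}~F_S$; moreover posting any price in $[1,M]$ achieves the optimal GFT on every realization (trade occurs exactly when a buyer drew $M$), so this is a feasible and optimal mechanism, giving $\opt(1,\mB')=\sbopt(1,\mB')$ for all $\mB'$.

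Next I would carry out the (routine) computation for these distributions. Here $\opt$ trades precisely when some buyer draws $M$, so $\opt(1,m)=(1-(1-p)^m)(M-p)$. For $\btr$ with $n$ buyers, \cref{lem:reduce} does all the work: if the seller drew $0$ then $\btr$ is optimal and trades iff some buyer drew $M$, contributing $(1-p)M\bigl(1-(1-p)^n\bigr)$; if the seller drew $1$ then $\btr$ loses the entire GFT exactly when only one buyer drew $M$, so it trades (and is optimal) only when at least two buyers drew $M$, contributing $p(M-1)\Pr[\mathrm{Bin}(n,p)\ge 2]$. Retaining only the leading terms in $M$, the desired inequality $\btr\bigl(1,\mB+\lfloor\log_2\mB\rfloor\bigr)<\opt(1,\mB)$ reduces to
\[
(1-p)^{\lfloor\log_2\mB\rfloor}\Bigl(1+\tfrac{(\mB+\lfloor\log_2\mB\rfloor)\,p^2}{1-p}\Bigr)>1 .
\]
Writing $k=\lfloor\log_2\mB\rfloor$ and using $(1-p)^k\approx 1-kp$ together with $1-e^{-x}\approx x$, the left side is $\approx 1+\mB p^2-kp=1+p(\mB p-k)$: the two $\Theta\bigl((\log_2\mB)^2/\mB\bigr)$ terms $\mB p^2$ and $kp$ nearly cancel, leaving a strictly positive residual $p(\mB p-k)=\Theta(\log_2\mB/\mB)$ by the choice of $p$, and a short estimate confirms the discarded higher-order terms are $o(p)$. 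Picking $M$ large (e.g.\ $M=\mB$) makes the $O(1/M)$ corrections dropped above negligible relative to this residual, and the finitely many small values of $\mB$ (where the asymptotics are not yet in force, or where the bound is essentially trivial since $\btr(1,1)=0$) are checked by hand.

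Finally, the bootstrap: apply \cref{lem:anon-problem} with $X_0=0$, $X_1=1$, $X_2=M$ and with $\mB+\lfloor\log_2\mB\rfloor$ buyers. Either $M$ never has higher GFT than $\btr$ on any profile supported in $\{0,1\}\times\{0,M\}^{\mB+\lfloor\log_2\mB\rfloor}$ --- in which case, on the distributions above, $M\bigl(1,\mB+\lfloor\log_2\mB\rfloor\bigr)\le\btr\bigl(1,\mB+\lfloor\log_2\mB\rfloor\bigr)<\opt(1,\mB)=\sbopt(1,\mB)$ --- or it does, and then \cref{lem:anon-problem} yields (possibly different) distributions $F_S',F_B'$ with $F_B'~\text{FSD}~F_S'$ and $\opt(1,\mB')=\sbopt(1,\mB')$ for all $\mB'$, and with $M\bigl(1,\mB+\lfloor\log_2\mB\rfloor\bigr)<\opt(1,1)\le\opt(1,\mB)=\sbopt(1,\mB)$; either way the claimed strict inequality holds. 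The main obstacle is the computation in the middle step: everything hinges on tuning $p$ so that $\mB p$ exceeds $\lfloor\log_2\mB\rfloor$ by only a bounded amount, which forces the near-cancellation in the governing inequality, so the lower-order error analysis (and the handling of small $\mB$) must be done with care; the first and third steps are bookkeeping given \cref{lem:reduce} and \cref{lem:anon-problem}.
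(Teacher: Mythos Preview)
Your high-level plan---prove the inequality for $\btr$ on explicit two-point distributions, then invoke \cref{lem:anon-problem} to extend to an arbitrary anonymous robust mechanism---is exactly the paper's strategy, and your bootstrap step matches the paper's verbatim.

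The difference is only in the construction for the $\btr$ step. The paper takes the high-value probability to be a fixed $\tfrac12$ (buyers uniform on $\{0,2\}$, seller uniform on $\{0,1\}$) and argues that both $\opt(1,\mB)$ and $\btr(1,\mB+k)$ equal $\tfrac32$ times the probability that at least one, respectively at least two, of the buyers draw the high value; the comparison then collapses to the single arithmetic inequality $2^k<\mB+k+1$, which holds for $k=\lfloor\log_2\mB\rfloor$ and every $\mB\ge2$ with no asymptotics, no auxiliary large parameter $M$, and no small-$\mB$ case analysis. Your construction instead drives $p$ down to $\Theta\bigl(\tfrac{\log\mB}{\mB}\bigr)$ and sends the buyer high value to infinity; this works in outline, but it forces the Taylor-expansion argument you sketch, with a $\Theta(p)$ residual to protect against $O(k^2p^2)$ error terms plus a separate large-$M$ limit. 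One caution: your concrete choice $p=(k+1)/\mB$ already fails the governing inequality $(1-p)^k\bigl(1+\tfrac{(\mB+k)p^2}{1-p}\bigr)>1$ at $\mB=4$ (there one needs $p\in(\tfrac12,\tfrac23)$), so the ``checked by hand'' clause is doing nontrivial work that would have to be spelled out.
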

\begin{proof}
	The claim is for $\mB=1$ clearly follows form the \citeN{MyersonS83} impossibility result, 
	so we will prove it for $\mB\ge2$.
	We first prove the claim for the BTR mechanism, that is, we show that there exist a seller distribution $F_S$ and a buyer distribution $F_B$ such that $F_B~\text{FSD}~F_S$ for which 
	\[\btr(1,\mB+\lfloor\log_2 \mB\rfloor)<\sbopt(1,\mB)= \opt(1,\mB)\]
	After proving this claim we will show that it implies the result for the given anonymous and robust deterministic mechanism~$M$. 
	
	To prove the claim for BTR we use the following distributions: 
	\[b = \begin{cases} 2 & w.p.\,\, 1/2 \\ 0 & w.p.\,\, 1/2 \end{cases} \quad\quad s = \begin{cases} 1 & w.p.\,\, 1/2 \\ 0 & w.p.\,\, 1/2 \end{cases}\]
	
	The optimal feasible mechanism (which depends on the distributions) can realize the optimal GFT by posting a price of 1.5. So the bound we prove with respect to the expected optimal GFT will also hold for the optimal feasible mechanism. 
	
	Let
	\[p = \Pr [\text{at least $1$ of $\mB$ buyers has value $2$}] = 1 - 2^{-\mB}\]
	be the probability that at least one out of $\mB$ buyer values sampled i.i.d.\ from the buyer values distribution is $2$.
	It holds that
	\[\opt(1,\mB) = \frac{1}{2}\cdot (2-1) \cdot p + \frac{1}{2}\cdot (2-0) \cdot p = \frac{3p}{2}.\]
	For any $k$, let
	\[q=  \Pr [\text{at least $2$ of $\mB+k$ buyers has value $2$}] = 1 - 2^{-(\mB+k)} - (\mB+k) 2^{-(\mB+k)}\]
	be the probability that at least two out of $\mB+k$ buyers values sampled i.i.d.\ from the buyer values distribution are $2$.
	It holds that
	\[\btr(1,\mB+k) = \frac{1}{2}\cdot (2-1) \cdot q + \frac{1}{2}\cdot (2-0) \cdot q = \frac{3q}{2}.\]
	
	Now, $\opt(1,\mB)>\btr(1,\mB+k)$ if and only if $p>q$, or equivalently,
	\[1 - 2^{-\mB} > 1 - 2^{-(\mB+k)} - (\mB+k) 2^{-(\mB+k)},\]
	which is equivalent to $2^k < \mB+k+1$.
	Specifically, for $k=\lfloor\log_2\mB\rfloor$ we have $2^k\le \mB < \mB+k+1$, and so $\btr(1,\mB+k)<\opt(1,\mB)$, as required.
	
	Now, we observe that the above distributions satisfy the condition of \cref{lem:anon-problem} from \cref{app:lower-bounds} (for $X_2=2>X_1=1>X_0=0$). By that lemma,
	if there is some profile of values in which the given mechanism $M$ has higher GFT than BTR, then there exist distributions $F'_S$ and $F'_B$ such that $F'_B~\text{FSD}~F'_S$ for which 
	$\opt(1,1) > M(1,\mB+\lfloor\log_2\mB\rfloor)$ and also 
	$\opt(1,\mB)= \sbopt(1,\mB)$. 
	Thus, for any such mechanism $M$ it holds for $F'_B$ and $F'_S$ that 
	\[\sbopt(1,\mB)= \opt(1,\mB) \geq \opt(1,1) > M(1,\mB+\lfloor\log_2\mB\rfloor)\]
	as needed. 
	If on the other hand there is no such profile, then clearly the GFT of $M$ is at most the GFT of BTR on the original distributions defined above, and the claim for $M$ also follows, from the claim for $\btr$.
\end{proof}

\subsection{Proof of Proposition~\ref{prop:sd-1-r-ub}}
In this section we restate and prove \cref{prop:sd-1-r-ub}.

\begin{proposition}[\cref{prop:sd-1-r-ub}]
	For any number of buyers $\mB$, any seller distribution $F_S$ and any buyer distribution $F_B$ that FSD $F_S$, it holds that \[\btr(1,\mB+4\sqrt{\mB})\geq\opt(1,\mB).\]
\end{proposition}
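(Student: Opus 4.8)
The plan is to prove the inequality through a coupling argument in ``quantile space'' that generalizes the coupling used for \cref{thm:iid}. Because the seller's value is no longer drawn from the same distribution as the buyers', we cannot appeal to the symmetry ``every assignment of \emph{values} to agents is equally likely''; instead we draw $\mB+k+1$ quantiles i.i.d.\ uniformly from $(0,1)$, where $k=4\sqrt{\mB}$, assign one of them uniformly at random to the single seller and the remaining $\mB+k$ to the buyers of the augmented market, and then delete a uniformly random $k$-subset of the buyer-quantiles to obtain the original market. By the quantile representation of distributions this couples the augmented market (one seller, $\mB+k$ buyers) with the original market (one seller, $\mB$ buyers), with all values having the correct marginals. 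The single place in which first-order stochastic dominance enters is the observation that for \emph{every} quantile $q$ we have $v_{F_B}(q)\ge v_{F_S}(q)$, so the quantile assigned to the seller yields a weakly smaller value than it would have as a buyer.

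As in the proof of \cref{thm:iid}, it then suffices to prove $\btr(1,\mB+k)\ge\opt(1,\mB)$ for each fixed realization of the sorted drawn quantiles $Q_1\ge\cdots\ge Q_{\mB+k+1}$, in expectation over the random assignment and deletion. Write $b_i=v_{F_B}(Q_i)$ and $\hat s_i=v_{F_S}(Q_i)$, so $b_i\ge\hat s_i$ and both sequences are nonincreasing. With a single seller, $\btr$'s realized gains-from-trade in the augmented market equal (highest augmented-buyer value) $-$ (seller value), but only when the second-highest augmented-buyer value is at least the seller value; and $\opt$'s realized gains-from-trade in the original market equal $\bigl((\text{highest surviving buyer value})-(\text{seller value})\bigr)^+$. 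Condition on the rank $\sigma\in\{1,\dots,\mB+k+1\}$ of the seller's quantile, which is uniform. If $\sigma\ge 3$, then the buyers occupying ranks $1$ and $2$ have values $b_1,b_2\ge\hat s_2\ge\hat s_\sigma$, so by \cref{lem:reduce} $\btr$ achieves the \emph{optimal} GFT of the augmented market, namely $b_1-\hat s_\sigma$, which is at least $\opt(1,\mB)$ for that realization --- a \emph{gain}. Only when $\sigma\in\{1,2\}$ is a \emph{loss} possible, and even then only in the regime where $b_3$ lies below the seller's value $\hat s_\sigma$ (otherwise $\btr$ still trades, and since the augmented market contains all original buyers its GFT weakly beats $\opt(1,\mB)$); moreover, in the original market such a loss materializes only when the single original buyer lying above the seller is not among the deleted ones, which has probability $\frac{\mB}{\mB+k}$.

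It remains to show that, summed over the $\mB+k-1$ ``good'' ranks $\sigma\ge 3$, the gains dominate the at-most-two losses. For a good $\sigma$ the gain is $(b_1-\hat s_\sigma)-\E_{\mathrm{del}}\bigl[(\text{highest surviving buyer value}-\hat s_\sigma)^+\bigr]$, which, since deleting the top buyer has probability $\frac{k}{\mB+k}$, equals $\frac{k}{\mB+k}$ times the expected drop in GFT caused by deleting the top buyer; and the loss at $\sigma\in\{1,2\}$ is $\frac{\mB}{\mB+k}$ times the corresponding optimal GFT, incurred precisely when $b_3<\hat s_\sigma$, which is exactly the regime in which, for most good $\sigma$, deleting the top buyer \emph{does} cause a large drop. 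Carrying out this accounting carefully --- and optimizing the constant --- is the crux of the proof, and I expect this quantitative balancing across the $\Theta(\mB)$ good ranks, rather than the coupling itself, to be the main obstacle; it is the source of the $\sqrt{\mB}$. As a sanity check on the order of magnitude, one may instead note that $\btr$'s realized GFT is always at least $\bigl(b^{(2)}_{\mB+k}-s\bigr)^+$, so it would suffice to show $\E_{s\sim F_S}\bigl[(b^{(2)}_{\mB+k}-s)^+\bigr]\ge\E_{s\sim F_S}\bigl[(b^{(1)}_{\mB}-s)^+\bigr]$; comparing the integrands $\Pr[b^{(2)}_{\mB+k}>x]$ and $\Pr[b^{(1)}_{\mB}>x]$ shows their difference is nonnegative except for $x$ in the top $\Theta(1/k)$-quantile range of $F_B$, and using that $s$ is drawn from a distribution dominated by $F_B$ one checks that $k$ of order $\sqrt{\mB}$ is enough to absorb that range.
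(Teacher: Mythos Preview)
Your approach is essentially the paper's: the quantile coupling is exactly the right idea, and your conditioning on the seller's quantile rank $\sigma$ amounts to the same decomposition via $\opt(1,\mB+k)$ that the paper uses. The step you flag as ``the crux'' is also where the paper does the real work, and the trick it uses is worth knowing: both the gain side and the loss side are bounded in terms of the \emph{same} quantity $\E_{\vecq}\bigl[b^{q^{(1)}}-b^{q^{(3)}}\bigr]$, so the comparison reduces to one between coefficients. For the loss, $b^{(1)}>s>b^{(2)}$ forces the seller's quantile rank to be $\le 2$ and forces $b^{(2)}=b^{q^{(3)}}$, giving $\opt(1,\mB+k)-\btr(1,\mB+k)\le\frac{2}{\mB+k+1}\,\E_{\vecq}\bigl[b^{q^{(1)}}-b^{q^{(3)}}\bigr]$. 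For the gain, rather than asking only that the top quantile go to a new buyer (as in your sketch), one restricts to the event that the seller has rank $>2$ \emph{and both} of the top two quantiles go to new buyers; then the highest original buyer has quantile rank $\ge 3$, and (using FSD to replace $s$ by $b^{q^{(i)}}$) one gets $\opt(1,\mB+k)-\opt(1,\mB)\ge\frac{1}{\mB+k+1}\cdot\frac{k(k-1)}{\mB+k}\,\E_{\vecq}\bigl[b^{q^{(1)}}-b^{q^{(3)}}\bigr]$. The sufficient condition is therefore $\tfrac{k(k-1)}{\mB+k}\ge 2$, which $k=4\sqrt{\mB}$ satisfies for all $\mB\ge 1$. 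Your accounting, which only extracts one factor of $\tfrac{k}{\mB+k}$ from the gain, would not close; it is the second restriction (rank $2$ also a new buyer) that produces the extra factor $\tfrac{k-1}{\mB+k-1}$ and makes both sides commensurable.
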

\begin{proof}
	 As we will momentarily show in \cref{lem:1-sqrt-ub} below, if $\frac{k(k-1)}{\mB + k} \geq 2$, then  $\opt(1,\mB) \leq \btr(1,\mB+k)$. For $\mB\geq  1$, for any $k\geq 4\sqrt{\mB}$ it holds that $k(k-1) - 2k = k^2-3k \geq  16\mB-12\sqrt{\mB}\geq 2\mB$ and thus 	 
	  $\frac{k(k-1)}{\mB + k} \geq 2$, and the results follows.
\end{proof}

\begin{lemma} \label{lem:1-sqrt-ub}
	If $\frac{k(k-1)}{\mB + k} \geq 2$, then  $\opt(1,\mB) \leq \btr(1,\mB+k)$.
\end{lemma}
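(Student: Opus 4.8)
Here is the plan. I would couple the original market (one seller, $\mB$ buyers drawn from $F_B$) and the augmented market (one seller, $\mB+k$ buyers drawn from $F_B$) through a common pool of i.i.d.\ quantiles, and then prove the inequality conditionally on the realized pool of quantiles. Concretely: draw $\mB+k+1$ i.i.d.\ uniform quantiles; declare one of them (chosen uniformly at random) to be the seller, evaluated through $v_{F_S}$; let the remaining $\mB+k$ quantiles be the buyers of the augmented market, evaluated through $v_{F_B}$; and let the $\mB$ buyers of the original market be a uniformly random $\mB$-subset of those $\mB+k$. One checks this reproduces the correct marginal distributions in both markets. The reason to work with quantiles rather than values (as in the proof of \cref{thm:iid}) is that, by stochastic dominance (\cref{sec:prelim-quantiles}), $v_{F_B}(q)\ge v_{F_S}(q)$ for every quantile $q$, so \emph{any buyer whose quantile is at least the seller's automatically has value at least $s$}.

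With a single seller, the GFT of the optimal mechanism in the original market is $(b^{(1)}-s)^+$ (top original buyer value minus seller value, truncated at zero), while by the single-seller description of BTR — equivalently, by \cref{lem:reduce} with $\mS=1$ — the GFT of BTR in the augmented market is $(b^{(1)}-s)\cdot\mathbbm{1}[b^{(2)}\ge s]$, where $b^{(1)}\ge b^{(2)}$ are now the top two of the $\mB+k$ augmented buyer values (on the indicator event one has $b^{(1)}\ge b^{(2)}\ge s$). I would then fix the sorted pool $q^{(1)}>q^{(2)}>\cdots>q^{(\mB+k+1)}$, set $\beta_i=v_{F_B}(q^{(i)})$, and condition further on the rank $r\in\{1,\dots,\mB+k+1\}$ of the seller in the pool (uniform), so the seller value is $\sigma_r=v_{F_S}(q^{(r)})$ and $\beta_i\ge\beta_r\ge\sigma_r$ for all $i\le r$. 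It then suffices to show, for every fixed sorted pool, that the conditional expectation (over $r$ and over the random original-buyer subset) of the augmented BTR contribution is at least that of the original OPT contribution.

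For $r\ge 3$ the two top pool quantiles are augmented buyers, with values $\beta_1\ge\beta_2\ge\sigma_r$, so BTR trades and contributes $\beta_1-\sigma_r$; the original contribution is at most $\beta_1-\sigma_r$, but it drops to at most $\beta_3-\sigma_r$ on the event that neither of the two top pool quantiles lands in the original buyer subset, an event of probability $c=\binom{\mB+k-2}{\mB}\big/\binom{\mB+k}{\mB}=\frac{k(k-1)}{(\mB+k)(\mB+k-1)}$. A one-line computation then gives that for each $r\ge 3$ the augmented-minus-original conditional contribution is at least $c(\beta_1-\beta_3)$. Summing over the $\mB+k-1$ values $r\ge 3$ produces a surplus of at least $(\mB+k-1)\,c\,(\beta_1-\beta_3)=\frac{k(k-1)}{\mB+k}(\beta_1-\beta_3)\ge 2(\beta_1-\beta_3)$, where the last inequality is exactly the hypothesis. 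The only ranks where the augmented-minus-original contribution can be negative are $r=1$ and $r=2$; there I would bound that deficit below by $-(\beta_1-\beta_3)$ in each case, using that whenever BTR fails to trade at such an $r$ it is because $\beta_3<\sigma_r$, which caps how much the original market can gain at that configuration. Adding up, the surplus $2(\beta_1-\beta_3)$ from $r\ge 3$ dominates the two deficits of at most $\beta_1-\beta_3$ each, so the conditional difference is nonnegative; averaging over the pool of quantiles then yields $\btr(1,\mB+k)\ge\opt(1,\mB)$. Ties (between quantiles, between buyer values, or between a buyer value and $s$) are handled by the ``ties in favor of buyers'' convention and leave every inequality intact.

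The main obstacle is the case analysis and bookkeeping of this last step: pinning down the probability $c$ exactly, checking that the harmful seller ranks are precisely $r\in\{1,2\}$ and that each resulting deficit is at most $\beta_1-\beta_3$, and recognizing that $\frac{k(k-1)}{\mB+k}$ is — after the cancellation of the $\mB+k-1$ factor — exactly the total ``extra above-seller buyer mass'' that the $\mB+k$-buyer market enjoys over the $\mB$-buyer market. Everything else is a routine consequence of stochastic dominance and \cref{lem:reduce}. (Note also that the hypothesis forces $k\ge 2$, so the sum over $r\ge 3$ is nonempty and $\mB+k+1\ge 3$, making the references to $\beta_3$ legitimate.)
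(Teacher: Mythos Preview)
Your proposal is correct and follows essentially the same approach as the paper: the same quantile coupling, the same key events (seller rank $r\le 2$ versus $r\ge 3$, and whether the top two quantiles land among the $k$ additional buyers), and the same pivotal quantity $\beta_1-\beta_3=b^{q^{(1)}}-b^{q^{(3)}}$. The only cosmetic difference is that you compare $\btr(1,\mB+k)$ with $\opt(1,\mB)$ directly rank-by-rank, whereas the paper routes both through $\opt(1,\mB+k)$ as an intermediate before arriving at the identical condition $\tfrac{k(k-1)}{\mB+k}\ge 2$.
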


\begin{proof} 
We describe an equivalent process to drawing $\mB + k$ buyer values from $F_B$ and 1 seller value from $F_S$, generalizing the process from the proof of \cref{thm:FSD-11-sample}.
First, we draw $\mB+k+1$ quantiles (see \cref{sec:prelim-quantiles}) independently and uniformly from $(0,1)$.  We call these draws, after ordering them, $q^{(1)} \geq q^{(2)}\geq   \ldots \geq  q^{(\mB+k+1)}$.  
For any quantile $q'\in(0,1)$, we denote $b^{q'}=v_{F_B}(q')$
and similarly $s^{q'} = v_{F_S}(q')$.
Since $F_B$ FSD $F_S$, we have that $b^{q'} = v_{F_B}(q') \geq v_{F_S}(q') = s^{q'}$ for any $q'$.

\begin{sloppypar}
After drawing $\vecq$, the sorted vector of quantiles $q^{(1)}, \ldots, q^{(\mB+k+1)}$,
we then draw $i \in \{1, \ldots, \mB+k+1\}$ to be the index of the seller's quantile,
so the seller's value will be $s=s^{q^{(i)}}=v_{F_S}(q^{(i)})$.
The rest of the quantiles will determine the buyer quantiles, that is, we draw $j_1<\ldots<j_{\mB}$ from $\{1, \ldots, \mB+k+1\} \setminus \{i\}$ to be the quantiles of the original buyers, so the values of the original buyers will be $b^{q^{(j_1)}},\ldots,b^{q^{(j_{\mB})}}$. We will let $j^*=j_1$ be the quantile corresponding to the original buyer with the highest value. The remainder of the quantiles belong to the additional buyers; we refer to the set of the indices of the (ordered) quantiles of the additional buyers as $K$.
\end{sloppypar}

As noted in \cref{sec:prelim-quantiles}, generating a value according to a quantile distributed uniformly in $(0,1)$ and then taking the value that corresponds to that quantile for some distribution results in a value distributed according to that distribution.  As such, our procedure is equivalent to drawing $1$ seller value, $\mB$ original buyer values, and $k$ additional buyer values.

As before, we use $b^{(1)} \geq b^{(2)} \geq \cdots\ge b^{(\mB+k)}$ to denote the buyers' values in order.  We sometimes abuse notation by writing $b^{(\ell)}\in K$ to mean that the buyer with the $\ell$\textsuperscript{th} highest value is a new buyer.

In our proof we will reason about the ordering of the seller' and buyers' values along with the ordering of the corresponding quantiles.  For example, the first (highest) quantile may belong to a buyer or a seller.  We might have $b^{(1)} > s$ and yet the first quantile belongs to the seller and the second to the buyer, such that $s = s^{q^{(1)}}$ and $b^{(1)} = b^{q^{(2)}}$ due to the fact that $F_B$ FSD $F_S$.  However, we observe two facts.  First, if the seller has the highest value, then she has the highest quantile.  Second, if a buyer has the highest quantile, then she has the highest value.  We use similar reasoning throughout the proof to move between ordered values, denoted $b^{(\ell)}$, and values ordered quantiles, denoted $b^{q^{(\ell)}}$ (or $s^{q^{(\ell)}}$).

As in the i.i.d.\ case, we compare to the optimal GFT in the augmented market, now with $k$ additional buyers.  The GFT changes in the augmented market only when the buyer with the highest value is a new buyer and has value higher than the seller.  In this case, the difference is the new buyer's gain over the agent who was holding the item post-optimal-trade in the smaller market, i.e., either the highest-value original buyer or the seller, depending on who had the larger value. Therefore,
\begin{align*}
\opt(1, \mB+k) - & \opt(1,\mB)=\\
&= \E_{\vecq,i}\left[(b^{(1)} - \max\{b^{q^{(j^*)}},s\}) \cdot \mathbbm{1}[ b^{(1)} \ge s\ \&\ b^{(1)} \in K \right] \ge\\
& \geq \E_{\vecq,i}\left[(b^{(1)} - \max\{b^{q^{(j^*)}},s\}) \cdot \mathbbm{1}[i > 2\ \&\ 1 \in K] \right] \ge\\
& \geq \E_{\vecq,i}\left[(b^{q^{(1)}} - b^{q^{(\min\{i,j^*\})}}) \cdot \mathbbm{1}[i > 2\ \&\ 1 \in K] \right] \ge\\
&\geq \frac{\mB + k -1}{\mB+k+1} \cdot \frac{k}{\mB + k} \cdot\E_{\vecq}\left[b^{q^{(1)}} - \frac{\mB}{\mB+k-1} b^{q^{(2)}} - \frac{k-1}{\mB + k - 1} b^{q^{(3)}} \right] \ge\\
&\geq \frac{\mB + k -1}{\mB+k+1} \cdot \frac{k}{\mB + k} \cdot\E_{\vecq}\left[\frac{k-1}{\mB + k - 1}b^{q^{(1)}} - \frac{k-1}{\mB + k - 1}b^{q^{(3)}} \right]=\\
&=\frac{\mB + k -1}{\mB+k+1} \cdot \frac{k}{\mB + k} \cdot \frac{k-1}{\mB + k - 1}\cdot\E_{\vecq}\left[b^{q^{(1)}} - b^{q^{(3)}} \right].
\end{align*}

The first inequality follows since $i>2$ implies that $b^{(1)}\ge s$, and $1\in K$ implies that $b^{(1)}\in K$.
The second inequality follows since $i>2$ implies that $b^{(1)} = b^{q^{(1)}}$, and since $s \leq b^{q^{(i)}}$ by FSD hence $\max\{b^{q^{(j^*)}}, s\} \leq \max\{b^{q^{(j^*)}}, b^{q^{(i)}}\} = b^{q^{(\min\{i,j^*\})}}$.

Since $i$ is chosen uniformly at random from $\{1, \ldots, \mB +k+1\}$, then $i > 2$ with probability $\frac{\mB + k -1}{\mB+k+1}$.  Conditioned on this event, 1 is a buyer, and since the original buyers are chosen uniformly from $\{1, \ldots, \mB +k+1\} \setminus \{i\}$, then the probability that $1 \in K$ conditioned on this event is precisely $\frac{k}{\mB + k}$.  Conditioned on the events that $i > 2$ and $1 \in K$, then $\min\{i,j^*\} = 2$ only when $j^* = 2$.  Since $1 \in K$ and $i>2$, this occurs with probability $\frac{\mB}{\mB+k-1}$: the number of original buyers to select over the number of original and remaining additional agents left.  The remaining $\frac{k-1}{\mB + k - 1}$ fraction of the time, $\min\{i,j^*\} \geq 3$.  All of these are independent of $\vecq$. This accounts for the third inequality.
The fourth inequality follows because $b^{q^{(2)}}\le b^{q^{(1)}}$.

Now we compare the optimal GFT in the augmented market to BTR in the augmented market.  Note that BTR only loses when the highest valued agent is a buyer and the second-highest valued agent is a seller, in which case it loses this one (and only) trade.
Therefore,
\begin{align*}
\opt(1,\mB+k)-\btr(1,\mB+k) &= \E_{\vecq,i}\left[(b^{(1)}-s) \cdot \mathbbm{1}[b^{(1)} > s > b^{(2)}] \right] \le\\
&\leq \E_{\vecq,i}\left[(b^{q^{(1)}} - b^{q^{(3)}}) \cdot \mathbbm{1}[b^{(1)} > s > b^{(2)}] \right] \le\\
&\leq \E_{\vecq,i}\left[(b^{q^{(1)}} - b^{q^{(3)}}) \cdot \mathbbm{1}[i \leq 2] \right] \le\\
&\leq \E_{\vecq}\left[b^{q^{(1)}} - b^{q^{(3)}} \right] \cdot \frac{2}{\mB+k+1}.
\end{align*}

We know that $b^{(1)} \le b^{q^{(1)}}$.  When $ s > b^{(2)}$ it must be the case that either $s=s^{q^{(1)}}$ or $s=s^{q^{(2)}}$, and therefore $b^{(2)} = b^{q^{(3)}}$ and $i\le 2$.  The first inequality thus follows since $s > b^{(2)} = b^{q^{(3)}}$, and the second one follows since $ s > b^{(2)}$ implies $i\le2$ (and since $b^{q^{(1)}} - b^{q^{(3)}}$ is always nonnegative).
The third inequality comes from the probability that the seller $i$ is chosen as one of the first two quantiles, which is with probability precisely $\frac{2}{\mB+k+1}$, independently of $\vecq$.

Then for $\opt(1, \mB+k) - \opt(1,\mB) \geq \opt(1,\mB+k)-\btr(1,\mB+k)$ it suffices that
\[\frac{\mB + k -1}{\mB+k+1} \cdot \frac{k}{\mB + k} \cdot \frac{k-1}{\mB + k - 1} \geq \frac{2}{\mB+k+1}, \quad \quad \text{or} \quad \quad \frac{k(k-1)}{\mB + k} \geq 2.\tag*{\qedhere}\]
\end{proof}

\subsection{Proof of Theorem~\ref{btr-1r-converge}}
\label{app:btr-converge}

In this section we prove \cref{btr-1r-converge}. Note that \cref{intro-1r-approx} directly follows from \cref{btr-1r-converge}.
\begin{theorem}[\cref{btr-1r-converge}]\label{btr-1r-approx} 
	For any number of buyers $\mB$, any seller distribution $F_S$, and any buyer distribution $F_B$ such that $F_B~\text{FSD}~F_S$ it holds that 
	\[\btr(1,\mB)\ge\frac{\mB-1}{\mB+1} \cdot \opt(1,\mB).\] 
\end{theorem}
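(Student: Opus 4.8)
The plan is to prove the equivalent statement $\opt(1,\mB)-\btr(1,\mB)\le\frac{2}{\mB+1}\opt(1,\mB)$; the case $\mB=1$ is trivial since then $\btr(1,1)=0$ and the bound is vacuous, so I would assume $\mB\ge 2$. I would reuse the quantile-coupling device from the proof of \cref{prop:sd-1-r-ub}: draw $\mB+1$ quantiles i.i.d.\ uniformly from $(0,1)$, sort them $q^{(1)}\ge\dots\ge q^{(\mB+1)}$, and then pick an index $i\in\{1,\dots,\mB+1\}$ uniformly at random to be the seller's quantile, assigning the remaining $\mB$ quantiles to the buyers. Writing $b_j=v_{F_B}(q^{(j)})$ and $\sigma_j=v_{F_S}(q^{(j)})$, stochastic dominance gives $b_j\ge\sigma_j$ for every $j$, and both the $b_j$'s and the $\sigma_j$'s are listed in decreasing order. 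This procedure reproduces the correct joint law of one seller value $\sim F_S$ and $\mB$ buyer values $\sim F_B$.

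For a fixed realization $\vecq$ of the sorted quantiles, I would record, for each seller index $i$, the realized optimal GFT $G_i$ and the realized BTR loss $L_i$, reading off the behavior of $\btr$ from \cref{lem:reduce}. The key structural facts are: (a) if $i\ge 3$ then the two highest buyer quantiles are $q^{(1)},q^{(2)}$, the seller's value is dominated by at least two buyer values, so $\btr$ is optimal, $L_i=0$ and $G_i=b_1-\sigma_i$; (b) for $i=2$, $G_2=b_1-\sigma_2$ and $L_2=(b_1-\sigma_2)\cdot\mathbbm 1[b_3<\sigma_2]$; (c) for $i=1$, $G_1=(b_2-\sigma_1)^+$ and $L_1=(b_2-\sigma_1)\cdot\mathbbm 1[b_3<\sigma_1\le b_2]$. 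Hence $\opt(1,\mB)=\E_{\vecq}\bigl[\tfrac1{\mB+1}\sum_{i=1}^{\mB+1}G_i\bigr]$ and $\opt(1,\mB)-\btr(1,\mB)=\E_{\vecq}\bigl[\tfrac1{\mB+1}(L_1+L_2)\bigr]$, so it suffices to prove the pointwise inequality $(\mB+1)(L_1+L_2)\le 2\sum_{i=1}^{\mB+1}G_i$ for every admissible $\vecq$.

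To prove that pointwise inequality I would isolate three elementary observations. First, $\sum_{i=1}^{\mB+1}G_i\ge L_1+\mB L_2$: indeed $G_1=(b_2-\sigma_1)^+\ge L_1$, and for each of the $\mB$ indices $i\ge 2$ we have $\sigma_i\le\sigma_2$, hence $G_i=b_1-\sigma_i\ge b_1-\sigma_2\ge L_2$. Second, whenever $L_2>0$ (so $L_2=b_1-\sigma_2$) one automatically gets $L_1\le(b_2-\sigma_1)^+\le b_1-\sigma_2=L_2$, using $b_1\ge b_2$ and $\sigma_1\ge\sigma_2$. Third, in the remaining regime $L_2=0<L_1$ one checks $\sigma_2\le b_3<\sigma_1\le b_2$, so $G_i=b_1-\sigma_i\ge b_1-b_3\ge b_2-\sigma_1=L_1$ for every $i\ge 2$ while $G_1=b_2-\sigma_1=L_1$, giving $\sum_iG_i\ge(\mB+1)L_1$. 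Combining these: if $L_1\le L_2$, then the first observation gives $2\sum_iG_i\ge 2L_1+2\mB L_2\ge(\mB+1)(L_1+L_2)$ because $(\mB-1)L_1\le(\mB-1)L_2$ for $\mB\ge 2$; otherwise $L_2=0$, $L_1+L_2=L_1$, and the third observation gives $2\sum_iG_i\ge 2(\mB+1)L_1\ge(\mB+1)(L_1+L_2)$. Taking expectations over $\vecq$ then yields $\btr(1,\mB)\ge\frac{\mB-1}{\mB+1}\opt(1,\mB)$.

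The main obstacle I anticipate is that no slack can be wasted: the constant $\frac{\mB-1}{\mB+1}$ is tight — for instance the two-point distributions used in \cref{thm:sd-1-r-lb} achieve it at $\mB=2$ — so the case split on whether $L_2$ vanishes, and the automatic inequality $L_1\le L_2$ in the generic case, are exactly what is needed to avoid losing an $\tfrac{\mB}{\mB+1}$-versus-$1$ factor. A secondary nuisance is tie-breaking (equal values coming from flat regions of $v_{F_B}$ or $v_{F_S}$), which I would dispose of by invoking \cref{lem:reduce} rather than re-deriving the allocation of $\btr$ by hand.
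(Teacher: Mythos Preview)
Your proof is correct and complete, but it takes a genuinely different route from the paper's. Both arguments start from the same quantile coupling (draw $\mB+1$ uniform quantiles, assign one uniformly at random to the seller), but they diverge in how they close the gap.

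The paper bounds the loss by $\btr$ rather than by $\opt$: it shows
\[
\opt(1,\mB)-\btr(1,\mB)\;\le\;\frac{2}{\mB+1}\,\E_{\vecq}\bigl[b^{q^{(1)}}-b^{q^{(3)}}\bigr]
\;=\;\frac{2}{\mB-1}\,\E_{\vecq,i}\bigl[(b^{q^{(1)}}-b^{q^{(3)}})\cdot\mathbbm{1}[i>2]\bigr]
\;\le\;\frac{2}{\mB-1}\,\btr(1,\mB),
\]
the last step using that when $i>2$, BTR is optimal and its GFT is $b^{q^{(1)}}-\sigma_i\ge b^{q^{(1)}}-b^{q^{(3)}}$. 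Rearranging gives the result. This is a short expectation-level chain with one clever ``unwind the probability $\tfrac{\mB-1}{\mB+1}$ back into an indicator'' trick.

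Your argument instead bounds the loss by $\opt$ directly, and does so \emph{pointwise} in the realized quantile vector: you show $(\mB+1)(L_1+L_2)\le 2\sum_i G_i$ for every $\vecq$, via a case split on whether $L_2>0$ (forcing $L_1\le L_2$, so the first observation $\sum_i G_i\ge L_1+\mB L_2$ suffices) or $L_2=0<L_1$ (forcing $\sigma_2\le b_3<\sigma_1<b_2$, which yields the sharper $\sum_i G_i\ge(\mB+1)L_1$). This is more elementary---no tricks with re-expressing probabilities as indicators---but requires more casework, and the pointwise inequality you prove is strictly stronger than what is needed.

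One small correction to a side remark: the two-point distributions from \cref{thm:sd-1-r-lb} do \emph{not} make the bound tight at $\mB=2$; a direct computation gives $\btr(1,2)/\opt(1,2)=7/9$, well above $1/3$. This does not affect your proof, only the motivation you gave for why no slack can be wasted.
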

\begin{proof}
	We use similar sampling process and notations as in the proof of  \cref{lem:1-sqrt-ub}.
	First, draw quantiles $q^{(1)}, \ldots, q^{(\mB+1)}$ each i.i.d., uniformly at random from $(0,1)$ and w.l.o.g. assume they are sorted from quantiles corresponding the high values, to quantiles corresponding to low values.  Let $\vecq$ be the vector of the $\mB+1$ quantiles. 
	Then, draw $i \in \{1, \ldots, \mB+1\}$ to be the seller.  
	Note that $i$ denotes the index of the quantile, not the value of the seller. 
	
	The proof is based on the fact that for trade to be reduced it must be the case that $b^{(1)} > s > b^{(2)}$, and since $F_B$ FSD $F_S$ it holds that is such a case, seller $s$ must have the highest or the second-highest quantiles, as otherwise $b^{(2)}\geq s$.
	The expectations in the following are all with respect to the sampling process we have defined above. Starting similarly to the second calculation in proof of  \cref{lem:1-sqrt-ub}, we have that:
	\begin{align*}
	\opt(1,\mB)-\btr(1,\mB) &= \E_{\vecq,i}\left[(b^{(1)}-s) \cdot \mathbbm{1}[b^{(1)} > s > b^{(2)}] \right] \le\\
	&\leq \E_{\vecq,i}\left[(b^{q(1)} - b^{q(3)}) \cdot \mathbbm{1}[b^{(1)} > s > b^{(2)}] \right] \le\\
	&\leq \E_{\vecq,i}\left[(b^{q(1)} - b^{q(3)}) \cdot \mathbbm{1}[i \leq 2] \right] =\\
	&= \E_{\vecq}\left[(b^{q(1)} - b^{q(3)}) \right] \cdot \frac{2}{\mB+1} =\\
	&= \frac{2}{\mB-1}\cdot\E_{q}\left[(b^{q(1)} - b^{q(3)}) \right] \cdot \frac{\mB-1}{\mB+1} =\\
	&= \frac{2}{\mB-1}\cdot\E_{q,i}\left[(b^{q(1)} - b^{q(3)})\cdot \mathbbm{1}[i > 2] \right] \le\\
	&\leq \frac{2}{\mB-1}\cdot\E_{q,i}\left[(b^{(1)}-s)\cdot \mathbbm{1}[i > 2] \right] \le\\
	&\leq \frac{2}{\mB-1}\cdot\btr(1,\mB). \\
	\end{align*}
	Now the claim follows by rearranging, as $1+\frac{2}{\mB-1}= \frac{\mB+1}{\mB-1}$. 
\end{proof}

\section{Missing proofs from Section \ref{sec:mr}}

\subsection{Proof of Theorem~\ref{thm:sd-m-r-ub}}

In this section we restate and prove \cref{thm:sd-m-r-ub}.

\begin{theorem} [\cref{thm:sd-m-r-ub}]
	For any number of sellers $\mS$, any number of buyers $\mB$, any seller distribution $F_S$ and any buyer distribution $F_B$ that FSD $F_S$, it holds that
	\[\btr(\mS,\mS\cdot (\mB+4\sqrt{\mB}))\geq\opt(\mS,\mB).\]
\end{theorem}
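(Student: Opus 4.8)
The plan is to derive \cref{thm:sd-m-r-ub} from the single-seller result \cref{prop:sd-1-r-ub} via a short chain of three inequalities,
\[
\opt(\mS,\mB)\;\le\;\mS\cdot\opt(1,\mB)\;\le\;\mS\cdot\btr\bigl(1,\mB+4\sqrt{\mB}\bigr)\;\le\;\btr\bigl(\mS,\mS(\mB+4\sqrt{\mB})\bigr),
\]
where the middle inequality is \cref{prop:sd-1-r-ub} applied to a single seller, and the outer two are a sub-additivity property of $\opt$ and a super-additivity property of $\btr$, respectively. So the real work is to establish those two structural properties; the latter is exactly \cref{lem:btr-m-sum-unified}, which I would take as given.

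For the first inequality I would argue pointwise and then take expectations. Fix a realization $(\vecs,\vecb)$ with $\mS$ sellers and $\mB$ buyers, let $q$ be the efficient trade size, and recall $\opt(\vecs,\vecb)=\sum_{j=1}^{q}\bigl(b^{(j)}-s^{(j)}\bigr)$. For each individual seller $i$, consider the single-seller market consisting of seller $i$ together with the \emph{same} $\mB$ buyers; its optimal GFT is $\max\{b^{(1)}-s_i,\,0\}$. Since the multiset $\{s_i\}_{i\in[\mS]}$ is exactly $\{s^{(1)},\dots,s^{(\mS)}\}$, and since $s^{(j)}\le b^{(j)}\le b^{(1)}$ for every $j\le q$, we get $\sum_{i=1}^{\mS}\max\{b^{(1)}-s_i,0\}\ge\sum_{j=1}^{q}\bigl(b^{(1)}-s^{(j)}\bigr)\ge\sum_{j=1}^{q}\bigl(b^{(j)}-s^{(j)}\bigr)=\opt(\vecs,\vecb)$. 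Taking expectations over $\vecs\sim F_S^{\mS}$ and $\vecb\sim F_B^{\mB}$, and using that the $s_i$ are i.i.d.\ so each summand has expectation $\opt(1,\mB)$, yields $\opt(\mS,\mB)\le\mS\cdot\opt(1,\mB)$. This is the ``property of $\opt$'' deferred in the excerpt; note it uses only that the sellers are i.i.d.\ and needs no FSD hypothesis.

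For the third inequality I would invoke \cref{lem:btr-m-sum-unified} with $t=\mS$, $m_i=1$, and $r_i=\mB+4\sqrt{\mB}$ for every $i$, so that $\mS\cdot\btr(1,\mB+4\sqrt{\mB})=\sum_{i=1}^{\mS}\btr(1,r_i)\le\btr(\mS,\mS(\mB+4\sqrt{\mB}))$. The content of that lemma --- that running BTR on a disjoint union of markets is never worse than running it separately on each --- is the genuinely non-trivial ingredient and the step I expect to be the main obstacle; morally it should follow from \cref{lem:reduce}, since in the unified market BTR loses (relative to the unified $\opt$, which is itself at least the sum of the sub-market optima) the gains of \emph{at most one} reduced trade, whereas the separate runs can each incur a reduced-trade loss --- but turning this intuition into a clean proof, especially controlling the loss term $b^{(q)}-x^{(\mS+1)}$ across the different order statistics of the merged market, needs care. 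Chaining the three inequalities then proves the theorem; the only cosmetic point is that $4\sqrt{\mB}$ should be read as $\lceil 4\sqrt{\mB}\rceil$ additional buyers, which weakens none of the inequalities.
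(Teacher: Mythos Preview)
Your proposal is correct and follows essentially the same approach as the paper: the identical three-step chain using the sub-additivity of $\opt$ in the seller count, \cref{prop:sd-1-r-ub}, and \cref{lem:btr-m-sum-unified}. Your proof of the first inequality is a slightly coarser variant of the paper's (you bound each trading buyer's value by $b^{(1)}$, whereas the paper uses that the $k$\textsuperscript{th} lowest seller in any subgroup has overall rank at least $k$ to pair it with $b^{(k)}$), but your argument is perfectly valid and arguably cleaner for the special case $m_i=1$ needed here.
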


As noted in the introduction, \cref{thm:sd-m-r-ub} should indeed first and foremost be viewed as a qualitative result --- that some \emph{finite} number of additional buyers suffices uniformly over all distributions (given stochastic domination) --- and as the first step in quantifying the number of added buyers that is necessary and sufficient to beat the optimum for any pair of distributions (under first-order stochastic dominance). As noted in the introduction, we thus leave the problem of lowering this bound and getting tight quantitative results as our main open problem.

We will prove \cref{thm:sd-m-r-ub} by carefully reducing to the single-seller case of \cref{prop:sd-1-r-ub}. Before we can spell out this reduction, we will need to develop a property of $\opt$ and a property of $\btr$. We believe both of these properties, and especially the latter one (the property of $\btr$), to be also of independent interest.

\begin{lemma}\label{lem:opt-m-1-sum}
	For every $F_B$ and $F_S$ it holds that for every positive integers $m_1,\ldots,m_t$ and $r$:
	\[\opt\left(\sum_{i=1}^t m_i,r\right) \leq  \sum_{i=1}^t \opt(m_i,r).\]
\end{lemma}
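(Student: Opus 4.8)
The plan is to reduce this probabilistic statement to a purely combinatorial inequality about a single realization, exploiting the fact that a subvector of i.i.d.\ random variables is again i.i.d. Concretely, I would first prove the following deterministic claim: for any market with seller value vector $\vecs$ and buyer value vector $\vecb$, and any partition of the sellers into $t$ groups with value vectors $\vecs_1,\ldots,\vecs_t$ (so that $\vecs$ is the concatenation of the $\vecs_i$), it holds that $\opt(\vecs,\vecb)\le\sum_{i=1}^t\opt(\vecs_i,\vecb)$, where each sub-market $i$ retains the \emph{entire} buyer vector $\vecb$. Granting this claim, the lemma follows at once: draw $\vecs$ consisting of $\sum_i m_i$ values i.i.d.\ from $F_S$ and $\vecb$ consisting of $r$ values i.i.d.\ from $F_B$, let $\vecs_i$ be the block of $m_i$ seller values assigned to group $i$, apply the deterministic claim pointwise, and take expectations. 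Since $(\vecs_i,\vecb)$ is distributed exactly as a market with $m_i$ i.i.d.\ sellers and $r$ i.i.d.\ buyers, one has $\E[\opt(\vecs_i,\vecb)]=\opt(m_i,r)$, while $\E[\opt(\vecs,\vecb)]=\opt(\sum_i m_i,r)$, giving the desired inequality.

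To prove the deterministic claim, I would fix a welfare-maximizing allocation in the big market $(\vecs,\vecb)$: it matches some buyers to some sellers, and its gains from trade equal $\sum_{(b,s)}(b-s)$ summed over its trading pairs, each of which satisfies $b\ge s$ (a trading pair with $b<s$ could be dissolved to strictly increase welfare, so no welfare-maximizing allocation contains one). Partition these trading pairs according to which group the \emph{selling} agent belongs to, obtaining sets $P_1,\ldots,P_t$, so that $\opt(\vecs,\vecb)=\sum_{i=1}^t\sum_{(b,s)\in P_i}(b-s)$. For each $i$, the pairs in $P_i$ use only sellers of group $i$ together with distinct buyers, hence constitute a valid allocation in sub-market $i$ (the remaining group-$i$ sellers simply keep their items); therefore $\opt(\vecs_i,\vecb)\ge\sum_{(b,s)\in P_i}(b-s)$. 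Summing over $i$ yields $\sum_{i=1}^t\opt(\vecs_i,\vecb)\ge\opt(\vecs,\vecb)$, as claimed.

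The conceptual heart of the argument is that each sub-market is allowed to use the \emph{whole} buyer population $\vecb$ rather than a partition of it: a buyer who, in the big market's optimal trade, is matched to a seller of group $i$ remains available to \emph{every} other sub-market as well, so the restriction of the big-market trade to group $i$ is a legal allocation there. I do not expect a serious obstacle; the two points needing a little care are (i) verifying that every trading pair of the big-market optimum has $b\ge s$, so that the per-group sums on the right-hand side are not wastefully negative, and (ii) the routine bookkeeping that a subvector of i.i.d.\ draws is i.i.d., so that the expectation of each per-group term is genuinely $\opt(m_i,r)$. Ties between values require no special treatment here, since only the \emph{value} of a welfare-maximizing allocation enters the argument, not which allocation the tie-breaking rule selects.
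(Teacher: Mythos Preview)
Your proof is correct and uses essentially the same coupling as the paper: draw one buyer vector $\vecb$ and share it across all sub-markets, then compare pointwise. The only difference is presentational---the paper writes the optimal GFT via the explicit greedy formula $\sum_j(b^{(j)}-s^{(j)})_+$ and uses the index inequality $j(i,k)\ge k$ to bound each summand, whereas you argue more abstractly that restricting the big-market optimal trade to the sellers of group $i$ yields a feasible allocation in sub-market $i$. Both arguments encode the same combinatorial fact, and your version is arguably cleaner since it does not rely on the specific form of the optimal matching.
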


\begin{proof}
	We will work in the probability space defined by (independently) drawing $t$ vectors of seller values $\vecs_1,\ldots,\vecs_t$ of respective sizes $m_1,\ldots,m_t$, as well as a vector $\vecb$ of $r$ buyer values. Let $m=\sum_{i=1}^t m_i$. Considering all $m$ sellers, we denote the ordered seller values by $s^{(1)}\le\cdots\le s^{(m)}$. We denote the (reverse-)ordered buyer values by $b^{(1)}\ge\cdots\ge b^{(r)}$. For ease of presentation, if $r<m$ we set $b^{(r+1)}=b^{(r+2)}=\cdots=b^{(m)}=-\infty$. For every $1\le i\le t$ and $1\le k\le m_i$ we let $j(i,k)$ be the index such that the ordered seller values in $\vecs_i$ are $s^{(j(i,1))}\le\cdots\le s^{(j(i,m_i))}$. We observe that $j(i,k)\ge k$ for every $i$ and~$k$. With expectations defined with respect to the above probability space, we then have:
	\begin{multline*}
	\opt\left(\sum_{i=1}^t m_i,r\right)=
	\opt(m,r)=
	\E\left[\sum_{j=1}^m\left(b^{(j)}-s^{(j)}\right)_+\right]=
	\E\left[\sum_{i=1}^t\sum_{k=1}^{m_i}\left(b^{(j(i,k))}-s^{(j(i,k))}\right)_+\right]=\\=
	\sum_{i=1}^t\E\left[\sum_{k=1}^{m_i}\left(b^{(j(i,k))}-s^{(j(i,k))}\right)_+\right]\le
	\sum_{i=1}^t\E\left[\sum_{k=1}^{m_i}\left(b^{(k)}-s^{(j(i,k))}\right)_+\right]=
	\sum_{i=1}^t\opt(m_i,r).\tag*{\qedhere}
	\end{multline*}
\end{proof}

\noindent
From \cref{lem:opt-m-1-sum} we immediately conclude:

\begin{lemma}\label{lem:opt-m-1}
	For every $F_B$ and $F_S$ it holds that for every positive integers $\mB,\mB$:
	\[\opt(\mS,\mB)\leq \mS\cdot \opt(1,\mB).\]
\end{lemma}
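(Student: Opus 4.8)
The plan is to derive \cref{lem:opt-m-1} as an immediate special case of \cref{lem:opt-m-1-sum}. I would instantiate \cref{lem:opt-m-1-sum} with $t=\mS$, with $m_1=m_2=\cdots=m_t=1$, and with $r=\mB$. Then its left-hand side $\opt(\sum_{i=1}^t m_i,r)$ becomes $\opt(\mS,\mB)$, while its right-hand side $\sum_{i=1}^t\opt(m_i,r)$ becomes $\sum_{i=1}^{\mS}\opt(1,\mB)=\mS\cdot\opt(1,\mB)$. Thus the inequality of \cref{lem:opt-m-1-sum} reads exactly $\opt(\mS,\mB)\le\mS\cdot\opt(1,\mB)$, which is the claim.

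Since \cref{lem:opt-m-1-sum} is already established above, there is no real obstacle here: the statement is a one-line corollary, and this specialization is the proof I would present. For intuition, one can also see it directly by coupling the $\mS$-seller market with $\mS$ independent single-seller markets --- assign the $i$-th of the $\mS$ i.i.d.\ seller values to the $i$-th single-seller market, with all markets sharing the same $\mB$ buyer draws, write the optimal GFT of the joint market as $\E\bigl[\sum_{j=1}^{\mS}(b^{(j)}-s^{(j)})_+\bigr]$ with $s^{(j)}$ the $j$-th lowest seller value and $b^{(j)}$ the $j$-th highest buyer value (padding with $-\infty$ if $\mB<\mS$), and bound the rank-$j$ term via $b^{(j)}\le b^{(1)}$ so that the joint GFT is at most $\sum_{j=1}^{\mS}\E\bigl[(b^{(1)}-s^{(j)})_+\bigr]=\mS\cdot\E\bigl[(b^{(1)}-v)_+\bigr]=\mS\cdot\opt(1,\mB)$ for a fresh $v\sim F_S$. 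This is exactly the argument of \cref{lem:opt-m-1-sum} in the singleton case, so invoking that lemma is the cleanest route.
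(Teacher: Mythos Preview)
Your proposal is correct and matches the paper's approach exactly: the paper also states that \cref{lem:opt-m-1} follows immediately from \cref{lem:opt-m-1-sum}, and your instantiation $t=\mS$, $m_i=1$, $r=\mB$ is precisely the intended specialization. The added intuitive coupling argument you sketch is just the proof of \cref{lem:opt-m-1-sum} restricted to singleton $m_i$, so nothing more is needed.
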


\noindent
We next relate BTR in many small markets to BTR in a unified market. 

\begin{lemma}[\cref{lem:btr-m-sum-unified}]\label{lem:btr-m-sum}
	For every $F_B$ and $F_S$ it holds that for every $t>0$ and positive integers $m_1,m_2,\ldots,m_t$ and $r_1,r_2,\ldots,r_t$:
	\[\sum_{i=1}^t \btr(m_i,r_i) \leq  \btr\left(\sum_{i=1}^t m_i,\sum_{i=1}^t r_i\right).\]
\end{lemma}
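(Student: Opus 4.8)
The plan is to prove the inequality \emph{pointwise}, for every realization of the $\sum_i m_i$ seller values and $\sum_i r_i$ buyer values; taking expectations then yields the lemma, and no induction on $t$ is needed. Fix such a realization. By \cref{lem:reduce}, in any market with $m$ sellers and $r$ buyers the $\btr$ allocation trades the top $\ell$ buyers with the bottom $\ell$ sellers, where $\ell=q$ if no reduction occurs and $\ell=q-1$ if one does, $q$ being the efficient trade size; in either case the realized gains of $\btr$ equal $\sum_{j=1}^{\ell}(b^{(j)}-s^{(j)})$, where $b^{(j)}$ is the $j$-th highest buyer value and $s^{(j)}$ the $j$-th lowest seller value of that market. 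Write $\ell_i$ for the ``$\btr$ trade size'' of market $i$ and $\ell$ for the $\btr$ trade size of the merged market, and set $\kappa:=\sum_i\ell_i$.

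The first step is to note that $\sum_i\btr(m_i,r_i)$ is exactly the realized gains, \emph{in the merged market}, of the allocation that for each $i$ trades the top $\ell_i$ buyers of market $i$ with the bottom $\ell_i$ sellers of market $i$; this equals the sum of the values of these $\kappa$ traded buyers minus the sum of the values of these $\kappa$ traded sellers. Sorting the traded buyer values as $\tilde b^{(1)}\ge\cdots\ge\tilde b^{(\kappa)}$ and the traded seller values as $\tilde s^{(1)}\le\cdots\le\tilde s^{(\kappa)}$, and now writing $b^{(j)},s^{(j)}$ for the \emph{merged-market} order statistics, a size-$\kappa$ sub-multiset has $j$-th largest at most (resp.\ $j$-th smallest at least) that of the whole, so $\tilde b^{(j)}\le b^{(j)}$ and $\tilde s^{(j)}\ge s^{(j)}$ for all $j\le\kappa$, giving $\sum_i\btr(m_i,r_i)=\sum_{j=1}^{\kappa}(\tilde b^{(j)}-\tilde s^{(j)})$ while $\btr(\sum_i m_i,\sum_i r_i)=\sum_{j=1}^{\ell}(b^{(j)}-s^{(j)})$.

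The second step is the elementary observation that the merged efficient trade size is at least $\sum_i q_i$ (the union of the per-market efficient trades is a matching in the merged market all of whose pairs are individually profitable, and re-matching any such collection of pairs in sorted order keeps all pairs profitable), hence $\ell\ge q_{\mathrm{merged}}-1\ge\sum_i q_i-1\ge\kappa-1$. If $\ell\ge\kappa$ we are done at once, since $\btr(\sum_i m_i,\sum_i r_i)-\sum_i\btr(m_i,r_i)$ equals $\sum_{j=1}^{\kappa}\bigl[(b^{(j)}-s^{(j)})-(\tilde b^{(j)}-\tilde s^{(j)})\bigr]+\sum_{j=\kappa+1}^{\ell}(b^{(j)}-s^{(j)})$, where each bracketed term is nonnegative and each of the remaining terms is nonnegative because $j\le\ell\le q_{\mathrm{merged}}$. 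The only remaining case is $\ell=\kappa-1$, which by tracing the chain of inequalities back to equality forces the merged market to reduce, every individual market \emph{not} to reduce (so $\ell_i=q_i$ for all $i$), and $q_{\mathrm{merged}}=\kappa$; then the difference equals $\sum_{j=1}^{\kappa-1}\bigl[(b^{(j)}-s^{(j)})-(\tilde b^{(j)}-\tilde s^{(j)})\bigr]+\bigl(\tilde s^{(\kappa)}-\tilde b^{(\kappa)}\bigr)$, and since $\tilde b^{(\kappa)}=\min_i b_i^{(q_i)}$ and $\tilde s^{(\kappa)}=\max_i s_i^{(q_i)}$, it remains only to prove $\min_i b_i^{(q_i)}\le\max_i s_i^{(q_i)}$.

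This last inequality is where the real work lies, and it is the step I expect to be the main obstacle. I would prove it by contradiction with a counting argument. Let $\theta:=s^{(\kappa)}$ be the $\kappa$-th smallest merged seller value. The $\kappa$ seller values $\{s_i^{(j)}:1\le j\le q_i\}$ are all at most $\max_i s_i^{(q_i)}$, so $\theta\le\max_i s_i^{(q_i)}$; fix $i^\ast$ with $s_{i^\ast}^{(q_{i^\ast})}\ge\theta$. If $\min_i b_i^{(q_i)}>\theta$, then all $\kappa$ values $\{b_i^{(j)}:1\le j\le q_i\}$ strictly exceed $\theta$; moreover, since market $i^\ast$ does not reduce, it has a ``next buyer in line'' with $b_{i^\ast}^{(q_{i^\ast}+1)}\ge s_{i^\ast}^{(q_{i^\ast})}\ge\theta$. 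This exhibits $\kappa+1$ distinct merged buyers with value at least $\theta$, so $b^{(\kappa+1)}\ge\theta=s^{(\kappa)}$, contradicting that the merged market reduces (which by \cref{lem:reduce} means $b^{(\kappa+1)}<s^{(\kappa)}$). Throughout I would be careful with the tie-breaking conventions and with the degenerate subcases ($q_i=0$, $q_{\mathrm{merged}}=0$, or a market with $q_i=r_i$ where no ``next buyer'' exists), but these are routine and reduce to $0\le 0$ type statements.
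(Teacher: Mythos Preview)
Your pointwise-coupling setup and the first step (the order-statistic bounds $\tilde b^{(j)}\le b^{(j)}$, $\tilde s^{(j)}\ge s^{(j)}$) are fine, but the second step contains a genuine error: the claim that ``re-matching any such collection of pairs in sorted order keeps all pairs profitable,'' and hence that $q_{\mathrm{merged}}\ge\sum_i q_i$, is false. The rearrangement fact you are implicitly invoking says that if each pair $(b,s)$ in a matching has $b\ge s$, then sorting both sides in the \emph{same} direction preserves $b\ge s$ coordinatewise. But $\btr$ (and the definition of $q$) pairs the $j$-th \emph{highest} buyer with the $j$-th \emph{lowest} seller, i.e.\ sorts in \emph{opposite} directions, and for that pairing the claim fails. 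Concretely: market~1 has seller~$3$ and buyer~$4$, market~2 has seller~$1$ and buyer~$2$; the per-market efficient pairs $(4,3)$ and $(2,1)$ are profitable, but in the merged market $b^{(2)}=2<3=s^{(2)}$, so $q_{\mathrm{merged}}=1<2=\sum_i q_i$.

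This breaks your chain $\ell\ge q_{\mathrm{merged}}-1\ge\sum_i q_i-1\ge\kappa-1$, and the gap is not merely off by a constant: with $t$ markets where market $i$ has seller $2i-1$ and buyers $2i,\,2i-\tfrac12$, each small market has $\ell_i=q_i=1$ so $\kappa=t$, but a direct check gives $q_{\mathrm{merged}}$ (and $\ell$) of order $2t/3$. For $t=6$ one gets $\kappa=6$ while $\ell=4<\kappa-1$, so even the weaker conclusion $\ell\ge\kappa-1$ fails, and your case analysis for $\ell=\kappa-1$ (which further assumed the merged market reduces) is not exhaustive. The trade-size comparison route therefore does not go through as stated; the paper sidesteps this entirely by comparing post-trade \emph{welfares} rather than trade sizes: when the merged market reduces, a short pigeonhole argument locates a buyer $\tilde b$ among the top $m$ merged values who does not trade in her small market, and since the reduced buyer $b^-$ is the lowest-valued among the top-$m$ buyers, $b^-\le\tilde b$, which is exactly what is needed.
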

\begin{proof}
	We will prove this inequality using a pointwise coupling argument: for every $1\le i\le t$ let $\vecs_i$ be a vector of $m_i$ seller values and let $\vecb_i$ be a vector of $r_i$ buyer values. Let $m=\sum_{i=1}^t m_i$ and $r=\sum_{i=1}^t r_i$. Let $\vecs$ be the vector of all $m$ seller values and let $\vecb$ be the vector of all $r$ buyer values. It is enough to show for every such choice of value vectors that:
	\[\sum_{i=1}^t \btr(\vecs_i,\vecb_i)\le\btr(\vecs,\vecb).\]
	We observe that the pre-trade welfare in the ``unified market'' $(\vecs,\vecb)$ equals the \emph{sum} of the pre-trade welfares in the ``small markets'' $(\vecs_i,\vecb_i)$, as both equal $S=\sum_{i=1}^m s^{(i)}$. We will prove the desired inequality by reasoning by cases based on whether a trade is or is not reduced by the $\btr$ mechanism in the unified market.

	We first consider the case in which no trade is reduced by the $\btr$ mechanism in the unified market, that is, the case in which $\btr(\vecs,\vecb)=\opt(\vecs,\vecb)$. In this case, the post-unified-market-optimal-trade \emph{welfare}, i.e., $\opt(\vecb,\vecs)+S$, is the sum of the \emph{$m$ highest values} out of the $m+r$ values in $(\vecs,\vecb)$, while the \emph{sum} of post-small-markets-optimal-trades \emph{welfares}, i.e., $\sum_{i=1}^t\opt(\vecb_i,\vecs_i)+S$, is the sum of \emph{some $m$ values} out of the $m+r$ values in $(\vecs,\vecb)$, and so the former is at least the latter. Therefore, when no trade is reduced in the unified market then:
	\[\sum_{i=1}^t \btr(\vecs_i,\vecb_i)\le\sum_{i=1}^t\opt(\vecs_i,\vecb_i)\le\opt(\vecs,\vecb)=\btr(\vecs,\vecb),\]
	as required.
	
	We now consider the ``interesting'' case, in which some trade is reduced by the $\btr$ mechanism in the unified market.\footnote{The challenge in this case is that it does not necessarily hold that some trade is also reduced by the $\btr$ mechanism in one or more of the small markets.}
	We will first show that in this case, the $m$ highest values out of the $m+r$ values in $(\vecs,\vecb)$ include the value of some buyer $\tilde{b}$ who does not trade in the $\btr$ mechanism in its small market.
	
	Since a trade is reduced by the $\btr$ mechanism in the unified market, the $(m\!+\!1)$st highest value out of the $m+r$ values in $(\vecs,\vecb)$ is a seller value (by \cref{lem:reduce}) --- let us denote this seller by $\tilde{s}$ and let $\tilde{\imath}$ be the index of the small market to which $\tilde{s}$ belongs. We will complete the argument for the existence of $\tilde{b}$ by reasoning by cases based on whether $\tilde{s}$ does or does not trade in small market $\tilde{\imath}$. If $\tilde{s}$ trades in small market $\tilde{\imath}$, then we take $\tilde{b}$ to be the price-setting buyer from small market $\tilde{\imath}$ (which by definition is of higher value than $\tilde{s}$ and does not trade in its small market, $\tilde{\imath}$). If $\tilde{s}$ does not trade in small market $\tilde{\imath}$, then it is one of the $m$ agents who have an item post-small-market-$\btr$-trade; by the pigeonhole principle, therefore one of the $m$ highest values out of the $m+r$ values in $(\vecs,\vecb)$ is of an agent who has no item post-small-market-$\btr$-trade. If this agent is a buyer, then we take it to be $\tilde{b}$ and are done. Otherwise, this agent is a seller who trades in its small market, and we take $\tilde{b}$ to be the price-setting buyer from this agent's market. Either way, we have shown that the $m$ highest values out of the $m+r$ values in $(\vecs,\vecb)$ include the value of some buyer $\tilde{b}$ who does not trade in its small market.
	
	Let $b^-$ be the lowest buyer value of the $m$ highest values out of the $m+r$ values in $(\vecs,\vecb)$, that is, the value of the buyer whose trade with $\tilde{s}$ is reduced by the $\btr$ mechanism in the unified market. We conclude the proof by noting that the post-unified-market-$\btr$-trade \emph{welfare}, i.e., $\btr(\vecb,\vecs)+S$, is the sum of the \emph{$m$ highest values} out of the $m+r-1$ values in $(\vecs,\vecb)\setminus\{b^-\}$, while the sum of post-small-markets-$\btr$-trades welfares, i.e., $\sum_{i=1}^t\btr(\vecb_i,\vecs_i)+S$, is the sum of \emph{some $m$ values} out of the $m+r-1$ values in $(\vecs,\vecb)\setminus\{\tilde{b}\}$. Since by definition $b^-\le\tilde{b}$, we have that the former is at least the latter, and so $\sum_{i=1}^t \btr(\vecs_i,\vecb_i)\le\btr(\vecs,\vecb)$, as required.
\end{proof}

\noindent
From \cref{lem:btr-m-sum} we immediately conclude:

\begin{lemma}\label{lem:btr-m}
	For every $F_B$ and $F_S$ it holds that for every $\mB,\mB$:
	\[\mS\cdot \btr(1,\mB) \leq  \btr(\mS,\mS\cdot \mB).\]
\end{lemma}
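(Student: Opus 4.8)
The plan is to derive this statement immediately from \cref{lem:btr-m-sum}, which has already been established, by a suitable choice of parameters. Concretely, I would invoke \cref{lem:btr-m-sum} with $t=\mS$ and with $m_i=1$ and $r_i=\mB$ for every $i\in\{1,\ldots,\mS\}$. Under this substitution we have $\sum_{i=1}^t m_i=\mS$ and $\sum_{i=1}^t r_i=\mS\cdot\mB$, while $\sum_{i=1}^t\btr(m_i,r_i)=\sum_{i=1}^{\mS}\btr(1,\mB)=\mS\cdot\btr(1,\mB)$. Thus the inequality $\sum_{i=1}^t\btr(m_i,r_i)\le\btr\bigl(\sum_{i=1}^t m_i,\sum_{i=1}^t r_i\bigr)$ supplied by \cref{lem:btr-m-sum} reads exactly $\mS\cdot\btr(1,\mB)\le\btr(\mS,\mS\cdot\mB)$, which is the claim.

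There is essentially no obstacle here: all the real work is contained in \cref{lem:btr-m-sum}, whose proof (a pointwise coupling in which the $\mS$ small markets are unified into a single market, followed by a case analysis on whether the unified market's BTR reduces a trade) has already been carried out, and which applies to arbitrary fixed distributions $F_B$ and $F_S$ with no further assumptions. The only items to verify are bookkeeping: that $\mS$ disjoint copies of a one-seller/$\mB$-buyer market aggregate precisely to an $\mS$-seller/$(\mS\cdot\mB)$-buyer market, and that summing $\mS$ identical terms $\btr(1,\mB)$ yields $\mS\cdot\btr(1,\mB)$. Both are trivial.

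Accordingly, the write-up should be a one-line corollary: ``Apply \cref{lem:btr-m-sum} with $t=\mS$, $m_i=1$, and $r_i=\mB$ for all $i$.'' (This specialization is the form in which the result is most convenient to quote when reducing the many-sellers case to the single-seller bound \cref{prop:sd-1-r-ub}.)
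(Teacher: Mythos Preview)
Your proposal is correct and matches the paper's approach exactly: the paper simply states that the lemma follows immediately from \cref{lem:btr-m-sum}, and your instantiation with $t=\mS$, $m_i=1$, $r_i=\mB$ is precisely the intended specialization.
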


\noindent
Using \cref{lem:opt-m-1,lem:btr-m} we can now prove \cref{thm:sd-m-r-ub} by reducing to the single-seller case of \cref{prop:sd-1-r-ub}:

\begin{proof}[Proof of \cref{thm:sd-m-r-ub}]
	By \cref{prop:sd-1-r-ub} it holds that $\btr(1,\mB+4\sqrt{\mB})\geq\opt(1,\mB)$. We combine this with \cref{lem:opt-m-1,lem:btr-m}
	to prove the theorem:
	\[\opt(\mS,\mB)\leq \mS\cdot \opt(1,\mB)\leq \mS\cdot \btr(1,\mB+4\sqrt{\mB}) \leq  \btr(\mS,\mS(\mB+4\sqrt{\mB})).\tag*{\qedhere}\]
\end{proof}

\bibliographystyle{abbrvnat}
\bibliography{refsgft}

\begin{thebibliography}{56}
\providecommand{\natexlab}[1]{#1}
\providecommand{\url}[1]{\texttt{#1}}
\expandafter\ifx\csname urlstyle\endcsname\relax
  \providecommand{\doi}[1]{doi: #1}\else
  \providecommand{\doi}{doi: \begingroup \urlstyle{rm}\Url}\fi

\bibitem[Alon et~al.(2017)Alon, Babaioff, Gonczarowski, Mansour, Moran, and
  Yehudayoff]{abgmmy17}
N.~Alon, M.~Babaioff, Y.~A. Gonczarowski, Y.~Mansour, S.~Moran, and
  A.~Yehudayoff.
\newblock Submultiplicative glivenko-cantelli and uniform convergence of
  revenues.
\newblock In \emph{NIPS}, pages 1656--1665, 2017.

\bibitem[Azar and Micali(2013)]{AzarM13}
P.~D. Azar and S.~Micali.
\newblock Parametric digital auctions.
\newblock In \emph{Innovations in Theoretical Computer Science, {ITCS}}, pages
  231--232, 2013.

\bibitem[Azar et~al.(2013)Azar, Daskalakis, Micali, and Weinberg]{AzarDMW13}
P.~D. Azar, C.~Daskalakis, S.~Micali, and S.~M. Weinberg.
\newblock Optimal and efficient parametric auctions.
\newblock In \emph{SODA}, pages 596--604, 2013.

\bibitem[Azar et~al.(2014)Azar, Kleinberg, and Weinberg]{azar2014prophet}
P.~D. Azar, R.~Kleinberg, and S.~M. Weinberg.
\newblock Prophet inequalities with limited information.
\newblock In \emph{SODA}, pages 1358--1377, 2014.

\bibitem[Babaioff and Nisan(2004)]{BabaioffN04}
M.~Babaioff and N.~Nisan.
\newblock Concurrent auctions across the supply chain.
\newblock \emph{J. Artif. Intell. Res. (JAIR)}, 21:\penalty0 595--629, 2004.

\bibitem[Babaioff and Walsh(2005)]{BabaioffW05}
M.~Babaioff and W.~E. Walsh.
\newblock Incentive-compatible, budget-balanced, yet highly efficient auctions
  for supply chain formation.
\newblock \emph{Decision Support Systems}, 39\penalty0 (1):\penalty0 123--149,
  2005.

\bibitem[Babaioff et~al.(2009)Babaioff, Nisan, and Pavlov]{BabaioffNP09}
M.~Babaioff, N.~Nisan, and E.~Pavlov.
\newblock Mechanisms for a spatially distributed market.
\newblock \emph{Games and Economic Behavior}, 66\penalty0 (2):\penalty0
  660--684, 2009.

\bibitem[Babaioff et~al.(2018{\natexlab{a}})Babaioff, Cai, Gonczarowski, and
  Zhao]{BCGZ18}
M.~Babaioff, Y.~Cai, Y.~A. Gonczarowski, and M.~Zhao.
\newblock The best of both worlds: Asymptotically efficient mechanisms with a
  guarantee on the expected gains-from-trade.
\newblock In \emph{ACM Conference on Economics and Computation (ACM-EC)},
  2018{\natexlab{a}}.

\bibitem[Babaioff et~al.(2018{\natexlab{b}})Babaioff, Gonczarowski, Mansour,
  and Moran]{BGMM18}
M.~Babaioff, Y.~A. Gonczarowski, Y.~Mansour, and S.~Moran.
\newblock Are two (samples) really better than one? on the non-asymptotic
  performance of empirical revenue maximization.
\newblock In \emph{ACM Conference on Economics and Computation (ACM-EC)},
  2018{\natexlab{b}}.

\bibitem[Balcan et~al.(2016)Balcan, Sandholm, and Vitercik]{bsv16}
M.-F. Balcan, T.~Sandholm, and E.~Vitercik.
\newblock Sample complexity of automated mechanism design.
\newblock In \emph{Proceedings of the 30th Conference on Neural Information
  Processing Systems (NIPS)}, pages 2083--2091, 2016.

\bibitem[Balcan et~al.(2018)Balcan, Sandholm, and Vitercik]{bsv17}
M.-F. Balcan, T.~Sandholm, and E.~Vitercik.
\newblock A general theory of sample complexity for multi-item profit
  maximization.
\newblock In \emph{Proceedings of the 19th ACM Conference on Economics and
  Computation (EC)}, pages 173--174, 2018.

\bibitem[Baliga and Vohra(2003)]{baliga2003market}
S.~Baliga and R.~Vohra.
\newblock Market research and market design.
\newblock \emph{Advances in theoretical Economics}, 3\penalty0 (1), 2003.

\bibitem[Beyhaghi and Weinberg(2018)]{BW19}
H.~Beyhaghi and S.~M. Weinberg.
\newblock Optimal (and benchmark-optimal) competition complexity for additive
  buyers over independent items.
\newblock \emph{arXiv preprint arXiv:1812.01794}, 2018.

\bibitem[Blumrosen and Dobzinski(2016)]{BlumrosenD16}
L.~Blumrosen and S.~Dobzinski.
\newblock (almost) efficient mechanisms for bilateral trading.
\newblock arXiv preprint arXiv:1604.04876, 2016.

\bibitem[Blumrosen and Mizrahi(2016)]{BlumrosenM16}
L.~Blumrosen and Y.~Mizrahi.
\newblock Approximating gains-from-trade in bilateral trading.
\newblock In \emph{WINE}, pages 400--413, 2016.

\bibitem[Bredin and Parkes(2005)]{Bredin05}
J.~Bredin and D.~C. Parkes.
\newblock Models for truthful online double auctions.
\newblock In \emph{UAI}, pages 50--59, 2005.

\bibitem[Brustle et~al.(2017)Brustle, Cai, Wu, and Zhao]{BrustleCFM17}
J.~Brustle, Y.~Cai, F.~Wu, and M.~Zhao.
\newblock Approximating gains from trade in two-sided markets via simple
  mechanisms.
\newblock In \emph{Proceedings of the 18th ACM Conference on Economics and
  Computation (EC)}, pages 589--590, 2017.

\bibitem[Bulow and Klemperer(1996)]{BK}
J.~Bulow and P.~Klemperer.
\newblock Auctions versus negotiations.
\newblock \emph{The American Economic Review}, 86\penalty0 (1):\penalty0
  180--194, 1996.

\bibitem[Cai and Daskalakis(2017)]{cai2017learning}
Y.~Cai and C.~Daskalakis.
\newblock Learning multi-item auctions with (or without) samples.
\newblock In \emph{Foundations of Computer Science (FOCS), 2017 IEEE 58th
  Annual Symposium on}, pages 516--527. IEEE, 2017.

\bibitem[Cole and Roughgarden(2014)]{cr14}
R.~Cole and T.~Roughgarden.
\newblock The sample complexity of revenue maximization.
\newblock In \emph{STOC}, pages 243--252, 2014.

\bibitem[Colini{-}Baldeschi et~al.(2016)Colini{-}Baldeschi, de~Keijzer,
  Leonardi, and Turchetta]{Colini-Baldeschi16}
R.~Colini{-}Baldeschi, B.~de~Keijzer, S.~Leonardi, and S.~Turchetta.
\newblock Approximately efficient double auctions with strong budget balance.
\newblock In \emph{The Annual ACM-SIAM Symposium on Discrete Algorithms
  (SODA)}, pages 1424--1443, 2016.

\bibitem[Colini-Baldeschi et~al.(2017{\natexlab{a}})Colini-Baldeschi, Goldberg,
  de~Keijzer, Leonardi, Roughgarden, and Turchetta]{Colini-Baldeschi17}
R.~Colini-Baldeschi, P.~W. Goldberg, B.~de~Keijzer, S.~Leonardi,
  T.~Roughgarden, and S.~Turchetta.
\newblock Approximately efficient two-sided combinatorial auctions.
\newblock In \emph{The ACM Conference on Economics and Computation (EC)}, pages
  591--608, 2017{\natexlab{a}}.

\bibitem[Colini-Baldeschi et~al.(2017{\natexlab{b}})Colini-Baldeschi, Goldberg,
  de~Keijzer, Leonardi, Roughgarden, and Turchetta]{Colini-BaldeschiGKSRT17}
R.~Colini-Baldeschi, P.~W. Goldberg, B.~de~Keijzer, S.~Leonardi,
  T.~Roughgarden, and S.~Turchetta.
\newblock Approximately efficient two-sided combinatorial auctions.
\newblock In \emph{The ACM Conference on Economics and Computation}, pages
  591--608, 2017{\natexlab{b}}.

\bibitem[Deshmukh et~al.(2002)Deshmukh, Goldberg, Hartline, and
  Karlin]{deshmukh2002truthful}
K.~Deshmukh, A.~V. Goldberg, J.~D. Hartline, and A.~R. Karlin.
\newblock Truthful and competitive double auctions.
\newblock In \emph{European Symposium on Algorithms}, pages 361--373. Springer,
  2002.

\bibitem[Devanur et~al.(2011)Devanur, Hartline, Karlin, and
  Nguyen]{devanur2011prior}
N.~Devanur, J.~Hartline, A.~Karlin, and T.~Nguyen.
\newblock Prior-independent multi-parameter mechanism design.
\newblock In \emph{International Workshop on Internet and Network Economics},
  pages 122--133, 2011.

\bibitem[Devanur et~al.(2016)Devanur, Huang, and Psomas]{dhp16}
N.~R. Devanur, Z.~Huang, and C.-A. Psomas.
\newblock The sample complexity of auctions with side information.
\newblock In \emph{STOC}, pages 426--439, 2016.

\bibitem[Dhangwatnotai et~al.(2010)Dhangwatnotai, Roughgarden, and
  Yan]{DhangwatnotaiRY10}
P.~Dhangwatnotai, T.~Roughgarden, and Q.~Yan.
\newblock Revenue maximization with a single sample.
\newblock In \emph{ACM Conference on Electronic Commerce}, pages 129--138,
  2010.

\bibitem[Dughmi et~al.(2009)Dughmi, Roughgarden, and Sundararajan]{DRS09}
S.~Dughmi, T.~Roughgarden, and M.~Sundararajan.
\newblock Revenue submodularity.
\newblock In \emph{Proceedings of the 10th ACM Conference on Electronic
  Commerce}, EC '09, pages 243--252, 2009.

\bibitem[D{\"u}tting et~al.(2014)D{\"u}tting, Roughgarden, and
  Talgam{-}Cohen]{DuttingRT14}
P.~D{\"u}tting, T.~Roughgarden, and I.~Talgam{-}Cohen.
\newblock Modularity and greed in double auctions.
\newblock In \emph{ACM-EC}, pages 241--258, 2014.

\bibitem[Eden et~al.(2017)Eden, Feldman, Friedler, Talgam-Cohen, and
  Weinberg]{EdenFFTW16b}
A.~Eden, M.~Feldman, O.~Friedler, I.~Talgam-Cohen, and S.~M. Weinberg.
\newblock The competition complexity of auctions: A bulow-klemperer result for
  multi-dimensional bidders.
\newblock In \emph{The ACM Conference on Economics and Computation}, pages
  343--343, 2017.

\bibitem[Feldman et~al.(2018)Feldman, Friedler, and Rubinstein]{FFR18}
M.~Feldman, O.~Friedler, and A.~Rubinstein.
\newblock 99\% revenue via enhanced competition.
\newblock In \emph{Proceedings of the 2018 ACM Conference on Economics and
  Computation}, pages 443--460. ACM, 2018.

\bibitem[Fu et~al.(2015)Fu, Immorlica, Lucier, and Strack]{FuILS15}
H.~Fu, N.~Immorlica, B.~Lucier, and P.~Strack.
\newblock Randomization beats second price as a prior-independent auction.
\newblock In \emph{ACM-EC}, pages 323--323, 2015.

\bibitem[Fu et~al.(2019)Fu, Liaw, and Randhawa]{fu2019vickrey}
H.~Fu, C.~Liaw, and S.~Randhawa.
\newblock The vickrey auction with a single duplicate bidder approximates the
  optimal revenue.
\newblock \emph{arXiv preprint arXiv:1905.03773}, 2019.

\bibitem[Goldner and Karlin(2016)]{goldner2016prior}
K.~Goldner and A.~R. Karlin.
\newblock A prior-independent revenue-maximizing auction for multiple additive
  bidders.
\newblock In \emph{International Conference on Web and Internet Economics},
  pages 160--173. Springer, 2016.

\bibitem[Gonczarowski and Nisan(2017)]{gn17}
Y.~A. Gonczarowski and N.~Nisan.
\newblock Efficient empirical revenue maximization in single-parameter auction
  environments.
\newblock In \emph{Proceedings of the 49th Annual ACM Symposium on Theory of
  Computing (STOC)}, pages 856--868, 2017.

\bibitem[Gonczarowski and Weinberg(2018)]{gw18}
Y.~A. Gonczarowski and S.~M. Weinberg.
\newblock The sample complexity of up-to-$\varepsilon$ multi-dimensional
  revenue maximization.
\newblock In \emph{Proceedings of the 59th Annual IEEE Symposium on Foundations
  of Computer Science (FOCS)}, pages 416--426, 2018.

\bibitem[Hartline and Taggart(2016)]{ht16}
J.~Hartline and S.~Taggart.
\newblock Non-revelation mechanism design.
\newblock Unpublished manuscript available on arXiv.org, 2016.

\bibitem[Hartline(2017)]{HartlineBook}
J.~D. Hartline.
\newblock \emph{Mechanism Design and Approximation}.
\newblock 2017.
\newblock Book draft.

\bibitem[Hartline and Roughgarden(2009)]{HartlineR09}
J.~D. Hartline and T.~Roughgarden.
\newblock Simple versus optimal mechanisms.
\newblock In \emph{ACM Conference on Electronic Commerce}, pages 225--234,
  2009.

\bibitem[Huang et~al.(2015)Huang, Mansour, and Roughgarden]{HuangMR15}
Z.~Huang, Y.~Mansour, and T.~Roughgarden.
\newblock Making the most of your samples.
\newblock In \emph{ACM-EC}, pages 45--60, 2015.

\bibitem[Liu and Psomas(2016)]{LiuP16}
S.~Liu and C.-A. Psomas.
\newblock On bulow-klemperer type theorems for dynamic auctions.
\newblock \emph{Manuscript}, 2016.

\bibitem[McAfee(1992)]{McAfee92}
R.~P. McAfee.
\newblock A dominant strategy double auction.
\newblock \emph{Journal of Economic Theory}, 56\penalty0 (2):\penalty0
  434--450, 1992.

\bibitem[McAfee(2008)]{McAfee08}
R.~P. McAfee.
\newblock The gains from trade under fixed price mechanisms.
\newblock \emph{Applied Economics Research Bulletin}, 1\penalty0 (1):\penalty0
  1--10, 2008.

\bibitem[Morgenstern and Roughgarden(2015)]{mr15}
J.~Morgenstern and T.~Roughgarden.
\newblock On the pseudo-dimension of nearly optimal auctions.
\newblock In \emph{NIPS}, pages 136--144, 2015.

\bibitem[Morgenstern and Roughgarden(2016)]{mr16}
J.~Morgenstern and T.~Roughgarden.
\newblock Learning simple auctions.
\newblock In \emph{COLT}, pages 1298--1318, 2016.

\bibitem[Myerson(1981)]{Myerson81}
R.~B. Myerson.
\newblock {Optimal Auction Design}.
\newblock \emph{Mathematics of Operations Research}, 6\penalty0 (1):\penalty0
  58--73, 1981.

\bibitem[Myerson and Satterthwaite(1983)]{MyersonS83}
R.~B. Myerson and M.~A. Satterthwaite.
\newblock Efficient mechanisms for bilateral trading.
\newblock \emph{Journal of Economic Theory}, 29\penalty0 (2):\penalty0
  265--281, 1983.

\bibitem[Roughgarden and Schrijvers(2016)]{rs16}
T.~Roughgarden and O.~Schrijvers.
\newblock Ironing in the dark.
\newblock In \emph{ACM-EC}, pages 1--18, 2016.

\bibitem[Roughgarden et~al.(2012)Roughgarden, Talgam-Cohen, and
  Yan]{RoughgardenTY12}
T.~Roughgarden, I.~Talgam-Cohen, and Q.~Yan.
\newblock Supply-limiting mechanisms.
\newblock In \emph{13th ACM Conference on Electronic Commerce (EC)}, 2012.

\bibitem[Rustichini et~al.(1994)Rustichini, Satterthwaite, and
  Williams]{RustichiniSW94}
A.~Rustichini, M.~A. Satterthwaite, and S.~R. Williams.
\newblock Convergence to efficiency in a simple market with incomplete
  information.
\newblock \emph{Econometrica}, 62\penalty0 (5):\penalty0 1041--1063, 1994.

\bibitem[Satterthwaite and Williams(1989)]{SatterthwaiteW89}
M.~A. Satterthwaite and S.~R. Williams.
\newblock The rate of convergence to efficiency in the buyer's bid double
  auction as the market becomes large.
\newblock \emph{The Review of Economic Studies}, 56\penalty0 (4):\penalty0
  477--498, 1989.

\bibitem[Satterthwaite and Williams(2002)]{SatterthwaiteW2002}
M.~A. Satterthwaite and S.~R. Williams.
\newblock The optimality of a simple market mechanism.
\newblock \emph{Econometrica}, 70\penalty0 (5):\penalty0 1841--1863, 2002.

\bibitem[Segal(2003)]{segal2003optimal}
I.~Segal.
\newblock Optimal pricing mechanisms with unknown demand.
\newblock \emph{American Economic Review}, 93\penalty0 (3):\penalty0 509--529,
  2003.

\bibitem[Sivan and Syrgkanis(2013)]{sivan2013vickrey}
B.~Sivan and V.~Syrgkanis.
\newblock Vickrey auctions for irregular distributions.
\newblock In \emph{International Conference on Web and Internet Economics},
  pages 422--435. Springer, 2013.

\bibitem[Syrgkanis(2017)]{s17}
V.~Syrgkanis.
\newblock A sample complexity measure with applications to learning optimal
  auctions.
\newblock In \emph{Proceedings of the 31th Annual Conference on Neural
  Information Processing Systems (NIPS)}, pages 5352--5359, 2017.

\bibitem[Wilson(1985)]{Wilson85}
R.~Wilson.
\newblock Game-theoretic analysis of trading processes.
\newblock page~68, August 1985.

\end{thebibliography}

\end{document}